\documentclass[svgnames, 11pt]{article}

\usepackage[latin2]{inputenc}
\usepackage[english]{babel}
\usepackage[all=normal,paragraphs=tight]{savetrees}
\usepackage{caption}
\captionsetup{margin=10pt,font=small,labelfont=bf}
\usepackage{subcaption}
\usepackage{float}
\usepackage{enumitem}

\usepackage{amsfonts,amsthm,amsmath}
\usepackage{graphicx,tikz}
\RequirePackage{fancyhdr}
\usepackage{xcolor}
\usepackage{boxedminipage}

\usepackage{vmargin}
\setmarginsrb{1in}{1in}{1in}{1in}{0mm}{0mm}{0mm}{7mm}


\usepackage{tikz}
\usepackage{color}
\usepackage{tkz-tab}
\usepackage{subcaption}

\definecolor{Yellow}{cmyk}{0,0,1,0}
\definecolor{Magenta}{cmyk}{0,1,0,0}
\definecolor{Green}{cmyk}{1,0,1,0.25}
\definecolor{Blue}{cmyk}{1,1,0,0}
\definecolor{Orange}{cmyk}{0,0.5,1,0}
\definecolor{Red}{cmyk}{0,1,1,0}
\definecolor{Black}{cmyk}{0,0,0,1}

\newcommand{\onesafe}{$1$-safe}
\newcommand{\alphasafe}{$\alpha$-safe}

 \usepackage{algorithmic}


\newtheorem{theorem}{Theorem}
\newtheorem{lemma}{Lemma}[section]
\newtheorem{claim}{Claim}[section]

\newtheorem{definition}{Definition}[section]
\newtheorem{observation}{Observation}[section]
\newtheorem{proposition}{Proposition}[section]
\newtheorem{redrule}{Reduction Rule}[section]

\newtheorem{fact}{Fact}

\usepackage{amsthm}
\usepackage{amsmath}
\usepackage{amsfonts}
\usepackage{graphicx}
\RequirePackage{fancyhdr}
\usepackage{xcolor}
\usepackage{boxedminipage}
\usepackage{todonotes}
\usepackage{cite}
\usepackage[bookmarks=false,colorlinks,breaklinks]{hyperref}
\hypersetup{linkcolor=blue,citecolor=blue,filecolor=blue,urlcolor=blue}

\usepackage[font={scriptsize,it}]{caption}

\newcommand{\disfac}{\sc Disjoint Factors}

\newcommand{\defproblem}[3]{
  \vspace{1mm}
\noindent\fbox{
  \begin{minipage}{0.96\textwidth}
  \begin{tabular*}{\textwidth}{@{\extracolsep{\fill}}lr} #1 \\ \end{tabular*}
  {\bf{Input:}} #2  \\
  {\bf{Output:}} #3
  \end{minipage}
  }
  \vspace{1mm}
}

\usepackage{lmodern}


\usepackage[ruled,vlined,linesnumbered]{algorithm2e}

\pagestyle{plain}
\usepackage{amstext}

\newcommand{\coNPbypoly}{${\sf co\- NP/Poly}$}

\newcommand{\NP}{\textsf{NP}}
\newcommand{\NPp}{${\sf NP}$}
\newcommand{\classP}{\textsf{P}}
\newcommand{\APX}{\textsf{APX}}
\newcommand{\FPT}{FPT}

\newcommand{\WOne}{\textsf{W[1]}}

\newcommand{\Yes}{{\sc Yes}}
\newcommand{\No}{{\sc No}}
\newcommand{\FF}{{\cal F}}
\newcommand{\BB}{{\cal B}}
\newcommand{\CC}{{\cal C}}
\newcommand{\OO}{{\cal O}}
\newcommand{\PP}{{\cal P}}
\newcommand{\QQ}{{\cal Q}}
\newcommand{\SSS}{{\cal S}}
\newcommand{\AAA}{{\cal A}}
\newcommand{\coNPsubNPbyPoly}{${\sf co\-NP\subseteq NP/Poly}$}



\newcommand{\cP}{\mathcal{P}}

\newcommand{\cQ}{\mathcal{Q}}

\newcommand{\PVC}{{\sc Partial Vertex Cover}} 
\newcommand{\df}{{\sc Disjoint Factors}}
\newcommand{\cvc}{{\sc Connected Vertex Cover}}

\newcommand{\ULI}[1]{#1-universal labelled independent set covering}
\newcommand{\scULI}[1]{{\sc #1-Universal Labelled Independent Set Covering}}
\newcommand{\shortuli}[1]{#1-ulisc}
\newcommand{\shortULI}[1]{{\sc #1-ULISC}}

\newcommand{\CP}{{\sc Disjoint Cycle Packing}} 
\newcommand{\lop}{{\sc Longest Path}}

\newcommand{\STREE}{{\sc Steiner Tree}}
\newcommand{\Setcover}{{\sc Set Cover}}
\newcommand{\SC}{{\sc SC}}
\newcommand{\Hittingset}{{\sc Hitting Set}}
\newcommand{\HS}{{\sc HS}}

\pagestyle{plain}

\begin{document}


\title{Lossy Kernelization}
 \author{Daniel Lokshtanov\thanks{University of Bergen, Norway.
     \texttt{\{daniello,fahad.panolan\}@ii.uib.no}} 
	  \and Fahad Panolan\addtocounter{footnote}{-1}\footnotemark 
     \and M. S. Ramanujan\thanks{Technische Universit\"{a}t Wien, Vienna, Austria. 
     \texttt{ramanujan@ac.tuwien.ac.at}}\addtocounter{footnote}{-2}
 \and  Saket Saurabh\footnotemark\addtocounter{footnote}{1}~\thanks{The Institute of Mathematical Sciences, HBNI, Chennai, India. ~\texttt{saket@imsc.res.in} } 
 }

\maketitle

\thispagestyle{empty}


\begin{abstract}
In this paper we propose a new framework for analyzing the performance of preprocessing algorithms. Our framework builds on the notion of kernelization from parameterized complexity. However, as opposed to the original notion of kernelization, our definitions combine well with approximation algorithms and heuristics. The key new definition is that of a polynomial size $\alpha$-approximate kernel. Loosely speaking,
a polynomial size $\alpha$-approximate kernel is a polynomial time pre-processing algorithm that takes as input an instance $(I,k)$ to a parameterized problem, and outputs another instance $(I',k')$ to the same problem, such that $|I'|+k' \leq k^{\OO(1)}$. Additionally, for every $c \geq 1$, a $c$-approximate solution $s'$ to the pre-processed instance $(I',k')$ can be turned in polynomial time into a $(c \cdot \alpha)$-approximate solution $s$ to the original instance $(I,k)$. 

Our main technical contribution are $\alpha$-approximate kernels of polynomial size for three problems, namely {\cvc}, {\CP} and {\df}. These problems are known not to admit any polynomial size kernels unless \NP{} $\subseteq$ \coNPbypoly{}. Our approximate kernels simultaneously beat both the lower bounds on the (normal) kernel size, and the hardness of approximation lower bounds for all three problems. On the negative side we prove that {\lop} parameterized by the length of the path and \Setcover{} parameterized by the universe size do not admit even an $\alpha$-approximate kernel of polynomial size, for any $\alpha \geq 1$, unless \NP{} $\subseteq$ \coNPbypoly{}. In order to prove this lower bound we need to combine in a non-trivial way the techniques used for showing kernelization lower bounds with the methods for showing hardness of approximation. 

\end{abstract}

\newpage
\pagestyle{plain}
\setcounter{page}{1}


\section{Introduction}
Polynomial time preprocessing is one of the widely used methods to tackle \NP-hardness in practice.  However, for decades there was no mathematical framework to analyze the performance of preprocessing heuristics. The advent of parameterized complexity made such an analysis possible. In parameterized complexity every instance $I$ comes with an integer {\em parameter} $k$, and the goal is to efficiently solve the instances whose parameter $k$ is small. Formally a parameterized decision problem $\Pi$ is a subset of $\Sigma^* \times \mathbb{N}$, 
where $\Sigma$ is a finite alphabet. The goal of parameterized algorithms is to determine whether an instance $(I, k)$ given as input belongs to $\Pi$ or not. 

On an intuitive level, a low value of the parameter $k$ should reflect that the instance $(I,k)$ has some additional structure that can be exploited algorithmically. Consider an instance $(I,k)$ such that $k$ is very small and $I$ is very large. Since $k$ is small, the instance is supposed to be easy. If $I$ is large and easy, this means that large parts of $I$ do not contribute to the computational hardness of the instance $(I,k)$. The hope is that these parts can be identified and reduced in polynomial time. This intuition is formalized as the notion of {\em kernelization}. Let $g : \mathbb{N} \rightarrow \mathbb{N}$ be a function. A {\em kernel of size $g(k)$} for a parameterized problem $\Pi$ is a polynomial time algorithm that takes as input an instance $(I,k)$ and outputs another instance $(I',k')$ such that $(I,k) \in \Pi$ if and only if $(I',k') \in \Pi$ and $|I'|+k' \leq g(k)$. If $g(k)$ is a linear, quadratic or polynomial function of $k$, we say that this is a linear, quadratic or polynomial kernel, respectively. 

The study of kernelization has turned into an active and vibrant subfield of parameterized complexity, especially since the development of complexity-theoretic tools to show that a problem does not admit a polynomial kernel~\cite{BodlaenderDFH09,BJK11,Drucker12,FortnowS11,HermelinKSWW15}, or a kernel of a specific size~\cite{DellM12,DellM10,HermelinW12}. Over the last decade many new results and several new techniques have been discovered, see the survey articles by Kratsch~\cite{Kratsch14} or Lokshtanov {\em et al.}~\cite{LNS2012} for recent developments, or the textbooks~\cite{CyganFKLMPPS15,DF2012Book} for an introduction to the field.

Despite the success of kernelization, the basic definition has an important drawback: {\em it does not combine well with approximation algorithms or with heuristics.} 
%
%
%
%
This is a serious problem since after all the ultimate goal of parameterized algorithms, or for that matter of any algorithmic paradigm, is to eventually solve the given input instance.  Thus, the application of a pre-processing algorithm is always followed by an algorithm that finds a solution to the reduced instance. In practice, even after applying a pre-processing procedure, the reduced instance may not be small enough to be solved to optimality within a reasonable time bound. In these cases one gives up on optimality and resorts to approximation algorithms or heuristics instead. Thus it is {\em crucial} that the solution obtained by an approximation algorithm or heuristic when run on the reduced instance provides a good solution to the original instance, or at least {\em some} meaningful information about the original instance. 
The current definition of kernels allows for kernelization algorithms with the unsavory property that running an approximation algorithm or heuristic on the reduced instance provides  {\em no insight whatsoever} about the original instance.
In particular, the {\em only} thing guaranteed by the definition of a kernel is that the reduced instance $(I',k')$ is a yes instance if and only if the original instance $(I,k)$ is. If we have an $\alpha$-approximate solution to $(I',k')$ there is no guarantee that we will be able to get an $\alpha$-approximate solution to $(I,k)$, or even able to get any feasible solution to $(I,k)$. 

\medskip
There is a lack of, and a real need for, a mathematical framework for analysing the performance of preprocessing algorithms, such that the framework not only combines well with parameterized and exact exponential time algorithms, but also with 
approximation algorithms and heuristics. {\em Our main conceptual contribution is an attempt at building such a framework.}
\medskip

The main reason that the existing notion of kernelization does not combine well with approximation algorithms is that the definition of a kernel is deeply rooted in decision problems.
The starting point of our new framework is an extension of kernelization to optimization problems.
This allows us to define $\alpha$-approximate kernels. Loosely speaking an $(\alpha)$-approximate kernel of size $g(k)$ is a polynomial time algorithm that given an instance $(I,k)$ outputs an instance $(I',k')$ such that $|I'|+ k' \leq g(k)$ and any $c$-approximate solution $s'$ to the instance $(I',k')$ can be turned in polynomial time into a $(c \cdot \alpha)$-approximate solution $s$ to the original instance $(I,k)$. In addition to setting up the core definitions of the framework we demonstrate that our formalization of lossy pre-processing is {\em robust}, {\em versatile} and {\em natural}. 

To demonstrate {\em robustness} we show that the key notions behave consistently with related notions from  parameterized complexity, kernelization, approximation algorithms and \FPT{}-approximation algorithms. More concretely we show that a problem admits an $\alpha$-approximate kernel \emph{if and only if} it is \FPT-$\alpha$-approximable, mirroring the 
equivalence between \FPT{} and kernelization~\cite{CyganFKLMPPS15}. Further, we show that the existence of a polynomial time $\alpha$-approximation algorithm is equivalent to the existence of an $\alpha$-approximate kernel of constant size.  

To demonstrate {\em versatility} we show that our framework can be deployed to measure the efficiency of pre-processing heuristics both in terms of the value of the optimum solution, and in terms of structural properties of the input instance that do not necessarily have any relation to the value of the optimum. In the language of parameterized complexity, we show that framework captures approximate kernels both for problems parameterized by the value of the optimum, and for structural parameterizations.

In order to show that the notion of $\alpha$-approximate kernelization is {\em natural}, we point to several examples in the literature where approximate kernelization has already been used implicitly to design approximation algorithms and \FPT{}-approximation algorithms. 
%
%
In particular, we show that the best known approximation algorithm for {\sc Steiner Tree}~\cite{ByrkaGRS13}, and \FPT{}-approximation for {\sc Partial Vertex Cover}~\cite{MarxFPT-AS} and for {\sc Minimal Linear Arrangement} parameterized by the vertex cover number~\cite{FellowsHRS13} can be 
re-interpreted as running an approximate kernelization first and then running an \FPT-approximation algorithm on
the preprocessed instance.  
%

A common feature of the above examples of $\alpha$-approximate kernels is that they
%
%
beat both the known lower bounds on kernel size of traditional kernels and the lower bounds on approximation ratios of approximation algorithms. Thus, it is quite possible that many of the problems for which we have strong inapproximability results and lower bounds on kernel size admit small approximate kernels with approximation factors as low as $1.1$ or $1.001$. If this is the case, it would offer up at least a partial explanation of why pre-processing heuristics combined with brute force search perform so much better than what is predicted by hardness of approximation results and kernelization lower bounds. This gives another compelling reason for a systematic investigation of lossy kernelization of parameterized optimization problems.




The observation that a lossy pre-processing can simultaneously achieve a better size bound than normal kernelization algorithms as well as a better approximation factor than the ratio of the best approximation algorithms is not new. In particular, motivated by this observation Fellows {\em et al.}~\cite{FellowsKRS12} initiated the study of lossy kernelization. Fellows {\em et al.}~\cite{FellowsKRS12} proposed a definition of lossy kernelization called $\alpha$-{\em fidelity kernels}. Essentially, an $\alpha$-fidelity kernel is a polynomial time pre-processing procedure such that an {\em optimal solution} to the reduced instance translates to an {\em $\alpha$-approximate solution} to the original. 
Unfortunately this definition suffers from the same serious drawback as the original definition of kernels - it does not combine well with approximation algorithms or with heuristics. Indeed, in the context of lossy pre-processing this drawback is even more damning, as there is no reason why one should allow a loss of precision in the pre-processing step, but demand that the reduced instance has to be solved to optimality. Furthermore the definition of $\alpha$-fidelity kernels is usable only for problems parameterized by the value of the optimum, and falls short for structural parameterizations. For these reasons we strongly believe that the notion of $\alpha$-approximate kernels introduced in this work is a better model of lossy kernelization than $\alpha$-fidelity kernels are.




It is important to note that even though the definition of $\alpha$-approximate kernels crucially differs from the definition of $\alpha$-fidelity kernels~\cite{FellowsKRS12}, it seems that most of the pre-processing algorithms that establish the existence of $\alpha$-approximate kernels can be used to establish the existence of $\alpha$-fidelity kernels and vice versa. In particular, all of the $\alpha$-fidelity kernel results of Fellows {\em et al.}~\cite{FellowsKRS12} can be translated to $\alpha$-approximate kernels.


\medskip
\noindent
{\bf Our Results.} Our main technical contribution is an investigation of the lossy kernelization complexity of several parameterized optimization problems, namely {\cvc}, {\CP}, {\df}, {\lop}, \Setcover{} and \Hittingset{}. For all of these problems there are known lower bounds~\cite{BodlaenderDFH09,BodlaenderTY09,DomLS14} precluding them from admitting polynomial kernels under widely believed complexity theoretic assumtions. Indeed, all of these six problems have played a central role in the development of the tools and techniques for showing kernelization lower bounds. 

For {\cvc}, {\CP} and {\df} we give approximate kernels that beat both the known lower bounds on kernel size and the lower bounds on approximation ratios of approximation algorithms. On the other hand, for {\lop} and \Setcover{} we show that even a constant factor approximate kernel of polynomial size would imply \NP{} $\subseteq$ \coNPbypoly{}, collapsing the polynomial hierarchy. For \Hittingset{} we show that a constant factor approximate kernel of polynomial size would violate the Exponential Time Hypothesis (ETH) of Impagliazzo, Paturi and Zane~\cite{ImpagliazzoPZ01}. Next we discuss our results for each of the six problems in more detail. An overview of the state of the art, as well as the results of this paper can be found in Table~\ref{fig:summary}.



\begin{figure}[t]
\centering
\scriptsize
{
\begin{tabular}{|c|c|c|c|c|c|}
\hline
Problem Name  &  Apx.  &  Apx. Hardness & Kernel & Apx. Ker. Fact. & Appx. Ker. Size \\ \hline \hline
{\sc Connected V.C.} & $2$\cite{ArkinHH93,Savage82} & $(2-\epsilon)$~\cite{khot2008vertex} & no $k^{\OO(1)}$~\cite{DomLS14} &  $1 < \alpha$  & $k^{f(\alpha)}$  \\ \hline                                                                                                                       
{\sc Cycle Packing} & $\OO(\log n)$\cite{SalavatipourV05} & $(\log n)^{\frac{1}{2}-\epsilon}$~\cite{FriggstadS11} & no $k^{\OO(1)}$~\cite{BodlaenderTY09} & $1 < \alpha$ & $k^{f(\alpha)}$ \\ \hline                                                                                                                       
%
%
{\sc Disjoint Factors} & $2$ & no PTAS & no $|\Sigma|^{\OO(1)}$~\cite{BodlaenderTY09} & $1 < \alpha$ & $|\Sigma|^{f(\alpha)}$  \\ \hline                                                                                                                       
{\sc Longest Path} & $\OO(\frac{n}{\log n})$~\cite{AlonYZ95} & $2^{{(\log n)}^{1-\epsilon}}$~\cite{KargerMR97} & no $k^{\OO(1)}$~\cite{BodlaenderDFH09} & any $\alpha$ & no $k^{\OO(1)}$ \\ \hline                                                                                                                       
{\sc Set Cover}/n & $\ln n$~\cite{bookApprox} & $(1-\epsilon)\ln n$~\cite{Moshkovitz15} & no $n^{\OO(1)}$~\cite{DomLS14} & any $\alpha$ & no $n^{\OO(1)}$ \\ \hline   
{\sc Hitting Set}/n & $\OO({\sqrt{n}})$~\cite{Nelson07} & $2^{{(\log n)}^{1-\epsilon}}$~\cite{Nelson07} & no $n^{\OO(1)}$~\cite{DomLS14} & any $\alpha$ & no $n^{\OO(1)}$ \\ \hline                                                                                                                      \hline

{\sc Vertex Cover} & $2$\cite{bookApprox} & $(2-\epsilon)$~\cite{dinur2005hardness,khot2008vertex} & $2k$~\cite{CyganFKLMPPS15} & $1 < \alpha < 2$ & $2(2-\alpha)k$~\cite{FellowsKRS12}  \\ \hline                                                                                                                        
{\sc $d$-Hitting Set} & $d$\cite{bookApprox} & $d-\epsilon$~\cite{DinurGKR05,khot2008vertex} & $\OO(k^{d-1})$~\cite{Abu-Khzam10} & $1 < \alpha < d$ & 
$\OO((k \cdot \frac{d-\alpha}{\alpha-1})^{d-1})$~\cite{FellowsKRS12}  \\ \hline \hline                                                                                                                      
{\sc Steiner Tree} & $1.39$\cite{ByrkaGRS13} & no PTAS~\cite{ChlebikC08} & no $k^{\OO(1)}$~\cite{DomLS14} & $1 < \alpha$ & $k^{f(\alpha)}$  \\ \hline                                                                                                                       
{\sc OLA/v.c.} & $\OO(\sqrt{\log n}\log\log n)$~\cite{FeigeL07} & no PTAS~\cite{AmbuhlMS11} & $f(k)$~\cite{loksQP15} & $1 < \alpha < 2$ & $f(\alpha)2^k k^4$ \\ \hline                                                                                                                       
{\sc Partial V.C.} & $(\frac{4}{3}-\epsilon)$~\cite{FeigeL01} & no PTAS~\cite{Patrank94} & no $f(k)$~\cite{GuoNW07} & $1 < \alpha$ & $f(\alpha)k^{5}$  \\ \hline                                                                                                                       
\end{tabular}
}
\caption{\label{fig:summary} Summary of known and new results for the problems considered in this paper. The columns show respectively: the best factor of a known approximation algorithm, the best known lower bound on the approximation ratio of polynomial time approximation algorithms, the best known kernel (or kernel lower bound), the approximation factor of the relevant approximate kernel, and the size of that approximate kernel. In the problem name column, V.C. abbreviates vertex cover. For {\sc Set Cover} and {\sc Hitting Set}, $n$ denotes the size of the universe. The approximate kernelization results for the top block of problems constitute our main technical contribution. The middle block re-states the results of Fellows et al.~\cite{FellowsKRS12} in our terminology. For the bottom block, the stated approximate kernelization results follow easily by re-interpreting in our terminology a pre-processing step within known approximation algorithms (see Section~\ref{sec:previousWork}).
} 
\end{figure}

\medskip
\noindent
{\bf Approximate Kernels.} 
In the {\cvc} problem we are given as input a graph $G$, and the task is to find a smallest possible {\em connected vertex cover} $S \subseteq V(G)$. A vertex set $S$ is a connected vertex cover if $G[S]$ is connected and every edge has at least one endpoint in $S$. This problem is \NP-complete~\cite{ArkinHH93}, admits a factor $2$ approximation algorithm~\cite{ArkinHH93,Savage82}, and is known not to admit a factor $(2-\epsilon)$ approximation algorithm assuming the Unique Games conjecture~\cite{khot2008vertex}. Further, an approximation algorithm with ratio below $1.36$ would imply that \classP{} $=$ \NP{}~\cite{dinur2005hardness}. From the perspective of kernelization, it is easy to show that {\cvc} admits a kernel with at most $2^k$ vertices~\cite{CyganFKLMPPS15}, where $k$ is the solution size. On the other hand, Dom {\em et al.}~\cite{DomLS14} showed that {\cvc} does not admit a kernel of polynomial size, unless \NP{} $\subseteq$ \coNPbypoly{}. In this work we show that {\cvc} admits a {\em Polynomial Size Approximate Kernelization Scheme}, or PSAKS, the approximate kernelization analogue of a polynomial time approximation scheme (PTAS). In particular, for every $\epsilon > 0$, {\cvc} admits a simple $(1+\epsilon)$-approximate kernel of polynomial size. The size of the kernel is upper bounded by $k^{\OO(1/\epsilon)}$. Our results for {\cvc} show that 
allowing an arbitrarily small multiplicative loss in precision drastically improves the worst-case behaviour of preprocessing algorithms for this problem. 



In the {\CP} problem we are given as input a graph $G$, and the task is to find a largest possible collection ${\cal C}$ of pairwise disjoint vertex sets of $G$, such that every set $C \in {\cal C}$ induces a cycle in $G$. This problem admits a factor $\OO(\log n)$ approximation algorithm~\cite{SalavatipourV05}, and is known not to admit an approximation algorithm~\cite{FriggstadS11} with factor $\OO((\log n)^{\frac{1}{2}-\epsilon})$ for any $\epsilon > 0$, unless all problems in \NP{} can be solved in randomized quasi-polynomial time. With respect to kernelization, {\CP} is known not to admit a polynomial kernel~\cite{BodlaenderTY09} unless \NP{} $\subseteq$ \coNPbypoly{}.  We prove that \CP{} admits a PSAKS. More concretely we show that for every $\epsilon > 0$, {\CP} admits a $(1+\epsilon)$-approximate kernel of size $k^{\OO(\frac{1}{\epsilon \log \epsilon})}$. Again, relaxing the requirements of a kernel to allow an arbitrarily small multiplicative loss in precision yields a qualitative leap in the upper bound on kernel size from exponential to polynomial. Contrasting the simple approximate kernel for \cvc{}, the approximate kernel for {\CP} is quite complex.



On the way to obtaining a PSAKS for {\CP} we consider the {\df} problem. In {\df}, input is an alphabet $\Sigma$ and a string $s$ in $\Sigma^*$. For a letter $a \in \Sigma$, an $a$-{\em factor} in $s$ is a substring of $s$ that starts and ends with the letter $a$, and a {\em factor} in $s$ is an $a$-factor for some $a \in \Sigma$. Two factors $x$ and $y$ are {\em disjoint} if they do not overlap in $s$. In {\df} the goal is to find a largest possible subset $S$ of $\Sigma$ such that there exists a collection ${\cal C}$ of pairwise disjoint factors in $s$, such that for every $a \in S$ there is an $a$-factor in ${\cal C}$. This stringology problem shows up in the proof of the kernelization lower bound of Bodlaender {\em et al.}~\cite{BodlaenderTY09} for {\CP}.  Indeed, Bodlaenderr {\em et al.} first show that {\df} parameterized by alphabet size $|\Sigma|$ does not admit a polynomial kernel, and then reduce {\df} to {\CP} in the sense that a polynomial kernel for  {\CP} would yield a polynomial kernel for {\df}. Here we go in the other direction - first we obtain a PSAKS for {\df} parameterized by $|\Sigma|$, and then lift this result to {\CP} parameterized by solution size.

\medskip
\noindent
{\bf Lower Bounds for Approximate Kernels.} 
%
A {\em path} $P$ in a graph $G$ is a sequence $v_1v_2,\ldots v_t$ of distinct vertices, such that each pair of consecutive vertices in $P$ are adjacent in $G$. The {\em length} of the path $P$ is $t-1$, the number of vertices in $P$ minus one. In {\lop}, the input is a graph $G$ and the objective is to find a path of maximum length. The best approximation algorithm for {\lop}~\cite{AlonYZ95} has factor $\OO(\frac{n}{\log n})$, and the problem  cannot be approximated~\cite{KargerMR97} within a factor $2^{{(\log n)}^{1-\epsilon}}$ for any $\epsilon > 0$, unless \NP $=$ {\sf DTIME(}$2^{{\log n}^{O(1)}}${)}. Further, {\lop} is not expected to admit a polynomial kernel. In fact  it was one of the first \FPT{} problems for which the existence of a polynomial kernel was ruled out~\cite{BodlaenderDFH09}. We show that even within the realm of approximate kernelization, {\lop} remains hard. In particular we show that for any $\alpha \geq 1$, {\lop} does not admit an $\alpha$-approximate kernel of polynomial size unless \NP{} $\subseteq$ \coNPbypoly{}. 

In order to show the approximate kernelization lower bound for {\lop}, we extend the complexity-theoretic machinery for showing kernelization lower bounds~\cite{BodlaenderDFH09,BJK11,Drucker12,FortnowS11,HermelinKSWW15} to our framework of parameterized optimization problems. In particular we amalgamate the notion of {\em cross-compositions}, used to show kernelization lower bounds, with {\em gap-creating reductions}, used to show hardness of approximation bounds, and define  {\em gap creating cross-compositions}. Then, adapting the proofs of Fortnow and Santhanam~\cite{FortnowS11} and Bodlaender {\em et al.}~\cite{BJK11} to our setting, we show that this notion can be used to prove lower bounds on the size of approximate kernels. Once the framework of gap creating cross-compositions is set up, it trivially applies to {\lop}.

After setting up the framework for showing lower bounds for approximate kernelization, we consider the approximate kernelization complexity of two more problems, namely {\Setcover} and {\Hittingset}, both parameterized by universe size. In both problems input is a family ${\cal S}$ of subsets of a universe $U$. We use $n$ for the size of the universe $U$ and $m$ for the number of sets in $\SSS$. A {\em set cover} is a subfamily ${\cal F}$ of ${\cal S}$ such that $\bigcup_{S\in {\cal F}}S=U$. In the {\Setcover} problem the objective is to find a set cover ${\cal F}$ of minimum size. A {\em hitting set} is a subset $X$ of $U$ such that every $S \in \SSS$ has non-empty intersection with $X$, and in the {\Hittingset} problem the goal is to find a hitting set of minimum size.

The two problems are dual to each other in the following sense: given $(\SSS,U)$ we can define the {\em dual family} $(\SSS^*,U^*)$ as follows. $U^*$ has one element $u_X$ for every set $X \in \SSS$, and $\SSS^*$ has one set $S_v$ for every element $v \in U$. For every $X \in \SSS$ and $v \in U$ the set $S_v \in \SSS^*$ contains the element $u_X$ in $U^*$ if and only if $v \in X$. It is well known and easy to see that the dual of the dual of $(\SSS,U)$ is $(\SSS,U)$ itself, and that hitting sets of  $(\SSS,U)$ correspond to set covers in $(\SSS^*,U^*)$ and vice versa. 
This duality allows us to translate algorithms and lower bounds between {\Setcover} to {\Hittingset}. However, this translation {\em switches the roles of $n$ (the universe size) and $m$ (the number of sets)}. For example, {\Setcover} is known to admit a factor $(\ln n)$-approximation algorithm~\cite{bookApprox}, and known not to admit a  $(c\ln n)$-approximation algorithm for any $c < 1$ unless $\classP = \NP$~\cite{Moshkovitz15}. The duality translates these results to a $(\ln m)$-approximation algorithm, and a lower bound ruling out $(c\ln m)$-approximation algorithms for any $c < 1$ for {\Hittingset}. Nelson~\cite{Nelson07} gave a $O(\sqrt{m})$-approximation algorithm, as well as a lower bound ruling out a polynomial time $O(2^{(\log m)^c})$-approximation for any $c < 1$ for {\Setcover}, assuming the ETH. The duality translates these results to a $O(\sqrt{n})$-approximation algorithm, as well as a lower bound under ETH ruling out a polynomial time $O(2^{(\log n)^c})$-approximation for any $c < 1$ for {\Hittingset}. Observe that even though {\Setcover} and {\Hittingset} are dual to each other they behave very differently with respect to approximation algorithms that measure the quality of the approximation in terms of the universe size $n$.

For {\em kernelization} parameterized by universe size $n$, the two problems behave in a more similar fashion. Both problems admit kernels of size $O(2^n)$, and both problems have been shown not to admit kernels of size $n^{O(1)}$~\cite{DomLS14} unless \NP{} $\subseteq$ \coNPbypoly{}. However, the two lower bound proofs are quite different, and the two lower bounds do not follow from one another using the duality.

For {\Setcover} parameterized by $n$, we deploy the framework of gap creating cross-compositions to show that the problem does not admit an $\alpha$-approximate kernel of size $n^{O(1)}$ for any constant $\alpha$. This can be seen as a significant strengthening of the lower bound of Dom et al.~\cite{DomLS14}. While the gap creating cross-composition for {\lop{}} is very simple, the gap creating cross-composition for {\Setcover} is quite delicate, and relies both on a probabilistic construction and a de-randomization of this construction using co-non-determinism.

Our lower bound for {\Setcover} parameterized by universe size $n$ translates to a lower bound for {\Hittingset} parameterized by the number $m$ of sets, but says nothing about {\Hittingset} parameterized by $n$. We prove that for every $c < 1$, even a $O(2^{(\log n)^c})$-approximate kernel of size $n^{O(1)}$ for {\Hittingset} would imply a $O(2^{(\log n)^{c'}})$-approximation algorithm for {\Hittingset} for some $c' < 1$. By the result of Nelson~\cite{Nelson07} this would in turn imply that the ETH is false. Hence, {\Hittingset} does not admit a $O(2^{(\log n)^c})$-approximate kernel of size $n^{O(1)}$ assuming the ETH. 

We remark that the lower bounds proved using the framework of gap creating cross compositions, and in particular the lower bounds for {\lop} and {\Setcover}, also rule out approximate {\em compressions} to any other parameterized optimization problems. On the other hand, our lower bound for {\Hittingset} only rules out approximate {\em kernels}. As a consequence the  lower bounds for {\lop} and {\Setcover} have more potential as starting points for reductions showing that even further problems do not admit approximate kernels.

\medskip 
\noindent
{\bf Summary.} In this paper we set up a new framework for the study of lossy pre-processing algorithms, and demonstrate that the framework is natural, versatile and robust.  For several well studied problems, including {\sc Steiner Tree}, {\sc Connected Vertex Cover} and {\sc Cycle Packing} we show that a ``barely lossy'' kernelization can get dramatically better upper bounds on the kernel size than what is achievable with normal kernelization. We extend the machinery for showing kernelization lower bounds to the setting of approximate kernels, and use these new methods to prove lower bounds on the size of approximate kernels for {\lop} parameterized by the objective function value, and {\Setcover} and {\Hittingset} parameterized by universe size. Especially {\Setcover} parameterized by universe size has been a useful starting point for reductions showing lower bounds for traditional kernelization~\cite{BredereckCHKNSW14,DeyMN16,DomLS14,FernauFPS10,Guo2015Split,MisraPRRS13,YangG13}. We are therefore confident that our work lays a solid foundation for future work on approximate kernelization. Nevertheless, {\em this paper raises many more questions than it answers. It is our hope that our work will open the door for a systematic investigation of lossy pre-processing.} 

\medskip
\noindent
{\bf Organization of the Paper.}
In Section~\ref{sec:prelim} we set up notations. In section~\ref{sec:allTheDefinitions} we set up the necessary definitions to formally define and discuss approximate kernels, and relate the new notions to well known definitions from approximation algorithms and parameterized complexity. In section~\ref{sec:cvc} we give a PSAKS for \cvc{}. In section~\ref{sec:DfAndCp} we give PSAKSes for \df{} and \CP{}. In section~\ref{sec:previousWork} we show how (parts of) existing approximation algorithms for \PVC{}, \STREE{} and {\sc Optimal Linear Arrangement} can be re-interpreted as approximate kernels for these problems. In section~\ref{sec:sizelb} we set up a framework for proving lower bounds on the size of $\alpha$-approximate kernels for a parameterized optimization problem. In sections~\ref{sec:lp} and~\ref{sec:setCover} we deploy the new framework to prove lower bounds for approximate kernelization of {\lop} and {\Setcover}. In section~\ref{sec:hitSet} we give a lower bound for the approximate kernelization of \Hittingset{} by showing that a ``too good'' approximate kernel would lead to a ``too good'' approximation algorithm. We conclude with an extensive list of open problems in section~\ref{sec:conclusion}.

\medskip
\noindent
{\bf A Guide to the Paper.} 
In order to read any of the sections on concrete problems, as well as our lower bound machinery
(sections~\ref{sec:cvc}-\ref{sec:hitSet}) one needs more formal definitions of approximate kernelization and related concepts than what is given in the introduction. These definitions are given in section~\ref{sec:allTheDefinitions}. 

We have provided informal versions of the most important definitions in subsection~\ref{sec:quickAndDirty}. It should be possible to read subsection~\ref{sec:quickAndDirty} and then proceed directly to the technical sections (\ref{sec:cvc}-\ref{sec:hitSet}), only using the rest of section~\ref{sec:allTheDefinitions} occasionally as a reference. Especially the positive results of sections~\ref{sec:cvc}-\ref{sec:previousWork} should be accessible in this way. However, a reader interested in how approximate kernelization fits within a bigger picture containing approximation algorithms and kernelization should delve deeper into Section~\ref{sec:allTheDefinitions}.

All of the approximate kernelization results in sections~\ref{sec:cvc}-\ref{sec:previousWork} may be read independently of each other, except that the kernels for \df{} and \CP{} in section~\ref{sec:DfAndCp} are related. The approximate kernel for \cvc{} given in section~\ref{sec:cvc} gives a simple first example of an approximate kernel, in fact a PSAKS. The approximate kernels for \df{} and \CP{} given in section~\ref{sec:DfAndCp} are the most technically interesting positive results in the paper.

Section~\ref{sec:sizelb} sets up the methodology for proving lower bounds on approximate kernelization, this methodology is encapsulated in Theorem~\ref{thm:gapcrossComposition}. The statement of Theorem~\ref{thm:gapcrossComposition} together with the definitions of all objects in the statement are necessary to read the two lower bound sections (\ref{sec:lp} and~\ref{sec:setCover}) that apply this theorem. The lower bound for \lop{} in section~\ref{sec:lp} is a direct application of Theorem~\ref{thm:gapcrossComposition}. The lower bound for \Setcover{} in section~\ref{sec:setCover} is the most technically  interesting lower bound in the paper. The lower bound for \Hittingset{} in Section~\ref{sec:hitSet} does not rely on Theorem~\ref{thm:gapcrossComposition}, and may be read immediately after subsection~\ref{sec:quickAndDirty}

\medskip

\section{Preliminaries}\label{sec:prelim}

We use ${\mathbb N}$ to denote the set of natural numbers. For a graph $G$ we use $V(G)$ and $E(G)$, to denote the
vertex and edge sets of the graph $G$  respectively. 
We use
standard terminology from the book of Diestel~\cite{Diestel} for
those graph-related terms which are not explicitly defined here.
For a vertex $v$ in $V(G)$, we use $d_G(v)$ to denote the degree of $v$, i.e the number 
of edges incident on $v$, in the (multi) graph $G$. 
For a vertex subset $S\subseteq V(G)$, we use $G[S]$ and $G-S$
to be the graphs induced on $S$ and $V(G)\setminus S$ respectively. 
For a graph $G$ and an induced subgraph $G'$ of $G$, we use $G-G'$ to denote 
the graph $G-V(G')$. 
For a vertex subset $S\subseteq V(G)$, we use $N_G(S)$ and $N_G[S]$ 
to denote the open neighbourhood and closed neighbourhood of $S$ in $G$. 
That is, $N_G(S)=\{v~\,\vert\, (u,v)\in E(G), u\in S\} \setminus S$ and 
$N_G[S]=N_G(S)\cup S$. 
For a graph $G$ and an edge $e\in E(G)$, we use $G/e$ to denote the graph obtained by contracting $e$ in $G$.  
If \(P\) is a path from a vertex \(u\) to a vertex \(v\) in graph \(G\) 
then we say that 
$u,v$ are the {\em end} vertices of the path $P$ and $P$ is a $(u,v)$-path.  
For a path $P$, we use $V(P)$ to denote the set of vertices in the path $P$ 
and the length of $P$ is denoted by $\vert P\vert $
(i.e, $\vert P \vert = \vert V(P)\vert-1$).  
For a cycle $C$, we use $V(C)$ to denote the set of vertices in the cycle $C$ 
and length of $C$, denoted by $\vert C\vert$, is $\vert V(C)\vert$.
Let \(P_{1}=x_{1}x_{2}\dotso{}x_{r}\) and
\(P_{2}=y_{1}y_{2}\dotso{}y_{s}\) be two paths in a graph $G$, 
\(V(P_{1})\cap{}V(P_{2})=\emptyset\) and $x_ry_1\in E(G)$, then we use \(P_{1}P_{2}\)
to denote the path \(x_{1}x_{2}\dotso{}x_{r}y_{1}\dotso{}y_{s}\). 
We say that \(P=x_{1}x_{2}\dotso{}x_{r}\) is an induced path in a multigraph  $G$, 
if $G[V(P)]$ is same as the simple graph $P$.   
We say that a path $P$ is a non trivial path if $\vert V(P)\vert \geq 2$.
For a path/cycle $Q$ we use $N_G(Q)$ and $N_G[Q]$ to denote the 
set $N_G(V(Q))$ and $N_G[V(Q)]$ respectively. 
For a set of paths/cycles $\QQ$, we use $\vert \QQ\vert$ and $V(\QQ)$  
to denote the number of paths/cycles in $\QQ$ and the set $\bigcup_{Q\in \QQ}V(Q)$ respectively.
The chromatic 
number of a graph $G$ is denoted by $\chi(G)$. 
An undirected graph $G$ is called an interval graph, if it is formed from  set ${\cal I}$ of intervals 
by creating one vertex $v_I$ for each interval $I\in {\cal I}$ and adding edge between two vertices 
$v_I$ and $v_{I'}$, if $I\cap I'\neq \emptyset$.  An interval representation of an interval graph $G$ 
is a set of intervals from which $G$ can be formed as described above.  
The following facts are useful in later sections. 
\begin{fact}
\label{fact1}
For any positive reals $x,y,p$ and $q$, $\min\left(\frac{x}{p},\frac{y}{q}\right)\leq \frac{x+y}{p+q} \leq \max\left(\frac{x}{p},\frac{y}{q}\right)$
\end{fact}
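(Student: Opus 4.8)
The statement is the classical mediant inequality, so the plan is a short symmetry-plus-cross-multiplication argument. First I would observe that the two quantities $\tfrac{x}{p}$ and $\tfrac{y}{q}$ play symmetric roles, so without loss of generality I may assume $\tfrac{x}{p}\le\tfrac{y}{q}$; under this assumption $\min(\tfrac{x}{p},\tfrac{y}{q})=\tfrac{x}{p}$ and $\max(\tfrac{x}{p},\tfrac{y}{q})=\tfrac{y}{q}$, and it suffices to prove the two bounds $\tfrac{x}{p}\le\tfrac{x+y}{p+q}$ and $\tfrac{x+y}{p+q}\le\tfrac{y}{q}$.

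Next I would establish each bound by clearing denominators, which is legitimate because $p$, $q$, and hence $p+q$ are strictly positive. For the lower bound, $\tfrac{x}{p}\le\tfrac{x+y}{p+q}$ is equivalent to $x(p+q)\le p(x+y)$, i.e.\ to $xq\le py$, which is exactly the hypothesis $\tfrac{x}{p}\le\tfrac{y}{q}$ rewritten. For the upper bound, $\tfrac{x+y}{p+q}\le\tfrac{y}{q}$ is equivalent to $q(x+y)\le y(p+q)$, i.e.\ again to $qx\le yp$, the same inequality. Hence both bounds hold, and chaining them gives $\min(\tfrac{x}{p},\tfrac{y}{q})\le\tfrac{x+y}{p+q}\le\max(\tfrac{x}{p},\tfrac{y}{q})$.

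There is essentially no serious obstacle here; the only point requiring a (trivial) bit of care is the use of positivity of $p$, $q$, $p+q$ when multiplying through inequalities, and the invocation of the symmetry to reduce to the ordered case $\tfrac{x}{p}\le\tfrac{y}{q}$ rather than handling both orderings separately. If one preferred to avoid the case split entirely, an alternative is to write $\tfrac{x+y}{p+q}$ as the convex combination $\lambda\tfrac{x}{p}+(1-\lambda)\tfrac{y}{q}$ with $\lambda=\tfrac{p}{p+q}\in(0,1)$, from which the bound by the min and the max is immediate; but the cross-multiplication route is the most self-contained and is what I would write out.
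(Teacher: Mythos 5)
Your proof is correct. The paper states Fact~\ref{fact1} as a standard result (the mediant inequality) without supplying any proof, so there is no argument in the paper to compare against; your cross-multiplication argument, reducing both bounds to $xq\leq py$ after the WLOG normalization $\tfrac{x}{p}\leq\tfrac{y}{q}$, is a clean and complete justification, and the convex-combination remark you add is an equally valid alternative.
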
 
\begin{fact}
\label{fact2}
For any $y\leq \frac{1}{2}$, $\left(1-y\right)\geq \left(\frac{1}{4}\right)^{y}$. 
\end{fact}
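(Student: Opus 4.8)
The plan is to recognize $(1-y)-(1/4)^y$ as a concave function of $y$ that vanishes at the two endpoints of the relevant interval, and conclude nonnegativity in between.

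First I would make explicit the range of $y$ under consideration. The inequality is intended for small nonnegative $y$, and it is in fact false once $y<0$ (e.g.\ at $y=-1$ the left side is $2$ while the right side is $4$), so I would restrict to $y\in[0,\tfrac12]$; the left-hand side is in any case only real for $y<1$.

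Next I would set $f(y)=(1-y)-(1/4)^y$ and evaluate at the endpoints: $f(0)=1-1=0$ and $f(\tfrac12)=\tfrac12-(1/4)^{1/2}=\tfrac12-\tfrac12=0$. Writing $(1/4)^y=e^{-y\ln 4}$, I would compute $f''(y)=-(\ln 4)^2(1/4)^y<0$ for all $y$, so $f$ is strictly concave on $[0,\tfrac12]$. Since a concave function on a closed interval never drops below the chord through its endpoint values, and here that chord is identically $0$, we obtain $f(y)\ge 0$ on all of $[0,\tfrac12]$, i.e.\ $(1-y)\ge(1/4)^y$, with equality exactly at $y=0$ and $y=\tfrac12$. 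If one prefers to avoid explicitly invoking concavity, the same argument runs after taking logarithms: put $g(y)=\ln(1-y)+y\ln 4$, check $g(0)=g(\tfrac12)=0$ and $g''(y)=-1/(1-y)^2<0$, and conclude identically.

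I do not expect a genuine obstacle here; the only points requiring care are pinning down the domain (the statement as written omits the implicit hypothesis $y\ge 0$) and verifying the two endpoint evaluations so that the comparison really is against the zero chord.
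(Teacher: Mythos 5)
Your proof is correct. The paper states Fact~\ref{fact2} without proof, so there is no authorial argument to compare against; your concavity argument (endpoint values both zero, $f''<0$ forces $f\geq 0$ on the interval) is a clean and standard way to establish it, and the logarithmic variant is equally sound. Your observation that the statement as written omits the implicit hypothesis $y\geq 0$ is also accurate — the inequality genuinely fails for $y<0$ — and the only place the paper invokes the fact (in the proof of Claim~\ref{claim:PrFsetcover}, with $y = X_i/n$ where $X_i$ is a cardinality and $X_i\leq n/2$) does indeed satisfy $0\leq y\leq 1/2$, so the intended domain is exactly the one you pinned down.
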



\section{Setting up the Framework}\label{sec:allTheDefinitions}
For the precise definition of approximate kernels, all its nuances, and how this new notion relates to approximation algorithms, \FPT{} algorithms, \FPT{}-approximation algorithms and kernelization, one should read Subsection~\ref{sec:allTheDefinitionsTheRealDeal}. For the benefit of readers eager to skip ahead to the concrete results of the paper, we include in Subsection~\ref{sec:quickAndDirty} a ``definition'' of $\alpha$-approximate kernelization that should be sufficient for reading the rest of the paper and understanding most of the arguments.

\subsection{Quick and Dirty ``Definition'' of Approximate Kernelization}\label{sec:quickAndDirty}
Recall that we work with {\em parameterized problems}. That is, every instance comes with a parameter $k$. Often $k$ is ``the quality of the solution we are looking for''. For example, does $G$ have a connected vertex cover of size at most $k$? Does $G$ have at least $k$ pairwise vertex disjoint cycles? When we move to optimization problems, we change the above two questions to: Can you find a connected vertex cover of size at most $k$ in $G$? If yes, what is the smallest one you can find? Or, can you find at least $k$ pairwise vertex disjoint cycles? If no, what is the largest collection of pairwise vertex disjoint cycles you can find? Note here the difference in how minimization and maximization problems are handled.
For minimization problems, a  bigger objective function value is undesirable, and $k$ is an ``upper bound on the `badness' of the solution''. That is, solutions worse than $k$ are so bad we do not care precisely how bad they are. For maximization problems, a  bigger objective function value is desirable, and $k$ is an ``upper bound on how good the solution has to be before one is fully satisfied''. That is, solutions better than $k$ are so good that we do not care precisely how good they are.

In many cases the parameter $k$ does not directly relate to the quality of the solution we are looking for. Consider  for example, the following problem.  Given a graph $G$ and a set $Q$ of $k$ terminals, find a smallest possible Steiner tree $T$ in $G$ that contains all the terminals. In such cases, $k$ is called a {\em structural parameter}, because $k$ being small restricts the structure of the input instance. In this example, the structure happens to be the fact that the number of terminals is `small'.

Let $\alpha \geq 1$ be a real number. We now give an informal definition of $\alpha$-approximate kernels. The kernelization algorithm should take an instance $I$ with parameter $k$, run in polynomial time, and produce a new instance $I'$ with parameter $k'$. Both $k'$ and the size of $I'$ should be bounded in terms of just the parameter $k$. That is, there should exist a function $g(k)$ such that $|I'| \leq g(k)$ and $k' \leq g(k)$. This function $g(k)$ is the {\em size} of the kernel. Now, a solution $s'$ to the instance $I'$ should be useful for finding a good solution $s$ to the instance $I$. What precisely this means depends on whether $k$ is a structural parameter or the ``quality of the solution we are looking for'', and whether we are working with a maximization problem or a minimization problem.

\begin{itemize}
\item If we are working with a structural parameter $k$ then we require the following from $\alpha$-approximate kernels:
For every $c \geq 1$, a $c$-approximate solution $s'$ to $I'$ can be transformed in polynomial time into a $(c \cdot \alpha)$-approximate solution to $I$.

\item If we are working with a minimization problem, and $k$ is the quality of the solution we are looking for, then $k$ is an ``upper bound on the badness of the solution''. In this case we require the following from $\alpha$-approximate kernels: For every $c \geq 1$, a $c$-approximate solution $s'$ to $I'$ can be transformed in polynomial time into a $(c \cdot \alpha)$-approximate solution $s$ to $I$. However, if the quality of $s'$ is ``worse than'' $k'$, or $(c \cdot \alpha) \cdot OPT(I) > k$, the algorithm that transforms $S'$ into $S$ is allowed to fail. Here $OPT(I)$ is the value of the optimum solution of the instance $I$.

{\em The solution lifting algorithm is allowed to fail precisely if the solution $S'$ given to it is ``too bad'' for the instance $I'$, or if the approximation guarantee of being a factor of $c \cdot \alpha$ away from the optimum for $I$ allows it to output a solution that is ``too bad'' for $I$ anyway.}

\item If we are working with a maximization problem, and $k$ is the quality of the solution we are looking for, then $k$ is an ``upper bound on how good the solution has to be before one is fully satisfied''. In this case we require the following from $\alpha$-approximate kernels:
For every $c \geq 1$, if $s'$ is a $c$-approximate solution $s'$ to $I'$ {\em or} the quality of $s'$ is at least $k'/c$, then $s'$ can be transformed in polynomial time into a $(c \cdot \alpha)$-approximate solution $s$ to $I$. 
However, if $OPT(I) > k$ then instead of being a $(c \cdot \alpha)$-approximate solution $s$ to $I$, the output solution $s$ can be any solution of quality at least $k / (c \cdot \alpha)$.

{\em In particular, if $OPT(I') > k'$ then the optimal solution to $I'$ is considered ``good enough'', and the approximation ratio $c$ of the solution $s'$ to $I'$ is computed as ``distance from being good enough'', i.e as $k' / s'$. Further, if $OPT(I) > k$ then we think of the optimal solution to $I$ as ``good enough'', and measure the approximation ratio of $s$ in terms of ``distance from being good enough'', i.e as $k / s$.}
\end{itemize}

We encourage the reader to instantiate the above definitions with $c \in \{1, 2\}$ and $\alpha \in \{1, 2\}$. That is, what happens to optimal and $2$-approximate solutions to the reduced instance when the approximate kernel incurs no loss ($\alpha = 1$)? What happens to optimal and $2$-approximate solutions to the reduced instance when the approximate kernel incurs a factor $2$ loss (i.e $\alpha = 2$)?

Typically we are interested in $\alpha$-approximate kernels of {\em polynomial size}, that is kernels where the size function $g(k)$ is upper bounded by $k^{O(1)}$. Of course the goal is to design $\alpha$-approximate kernels of smallest possible size, with smallest possible $\alpha$. Sometimes we are able to obtain a $(1+\epsilon)$-approximate kernel of polynomial size for every $\epsilon > 0$. Here the exponent and the constants of the polynomial may depend on $\epsilon$. We call such a kernel a {Polynomial Size Approximate Kernelization Scheme}, and abbreviate it as PSAKS. If only the constants of the polynomial $g(k)$ and not the exponent depend on $\epsilon$,  we say that the PSAKS is {\em efficient}. All of the positive results achieved in this paper are PSAKSes, but not all are efficient.


\subsection{Approximate Kernelization, The Real Deal.}\label{sec:allTheDefinitionsTheRealDeal}
We will be dealing with approximation algorithms and solutions that are not necessarily optimal, but at the same time relatively ``close'' to being optimal.  To properly discuss these concepts they have to be formally defined. Our starting point is a parameterized analogue of the notion of an {\em optimization problem} from the theory of approximation algorithms. 
\begin{definition}\label{def:paraOptProblem}
A parameterized optimization (minimization or maximization) problem $\Pi$ is a computable function
$$\Pi~:~\Sigma^*\times \mathbb{N}\times \Sigma^*\rightarrow {\mathbb R}\cup \{\pm\infty\}.$$
\end{definition}
The {\em instances} of a parameterized optimization problem $\Pi$ are pairs $(I,k) \in \Sigma^*\times \mathbb{N}$, and a {\em solution} to  $(I,k)$ is simply a string $s \in \Sigma^*$, such that $|s| \leq |I|+k$. The {\em value} of the solution $s$ is  $\Pi(I,k,s)$. Just as for ``classical'' optimization problems the instances of $\Pi$ are given as input, and the algorithmic task is to find a solution with the best possible value, where best means minimum for minimization problems and maximum for maximization problems. 
\begin{definition}\label{def:paraOpt}
For a parameterized minimization problem $\Pi$, the {\em optimum value} of an instance  $(I,k) \in \Sigma^*\times \mathbb{N}$ is
\begin{equation*}
OPT_\Pi(I,k) = \min_{\substack{s \in \Sigma^* \\ |s| \leq |I|+k}} \Pi(I,k,s).
\end{equation*}
For a parameterized maximization problem $\Pi$, the optimum value of $(I,k)$ is
\begin{equation*}
OPT_\Pi(I,k) = \max_{\substack{s \in \Sigma^* \\ |s| \leq |I|+k}} \Pi(I,k,s).
\end{equation*}
For an instance $(I,k)$ of a parameterized optimization problem $\Pi$, an {\em optimal solution} is a solution $s$ such that $\Pi(I,k,s) = OPT_\Pi(I,k)$.
\end{definition}
When the problem $\Pi$ is clear from context we will often drop the subscript and refer to $OPT_\Pi(I,k)$ as $OPT(I,k)$. Observe that in the definition of  $OPT_\Pi(I,k)$ the set of solutions over which we are minimizing/maximizing $\Pi$ is finite, therefore the minimum or maximum is well defined. We remark that the function $\Pi$ in Definition~\ref{def:paraOptProblem} depends {\em both} on $I$ and on $k$. Thus it is possible to define parameterized problems such that an optimal solution $s$ for $(I,k)$ is not necessarily optimal for $(I,k')$.

For an instance $(I,k)$ the {\em size} of the instance is $|I|+k$ while the integer $k$ is referred to as the {\em parameter} of the instance. Parameterized Complexity deals with measuring the running time of algorithms in terms of both the input size and the parameter. In Parameter Complexity a problem is {\em fixed parameter tractable} if input instances of size $n$ with parameter $k$ can be ``solved'' in time $f(k)n^{\OO(1)}$ for a computable function $f$. For decision problems ``solving'' an instance means to determine whether the input instance is a ``yes'' or a ``no'' instance to the problem. Next we define what it means to ``solve'' an instance of a parameterized optimization problem, and define fixed parameter tractability for parameterized optimization problems.
\begin{definition}\label{def:paraOptProb}
Let $\Pi$ be a parameterized optimization problem. An {\em algorithm for} $\Pi$ is an algorithm that given as input an instance $(I,k)$, outputs a solution $s$ and halts. The algorithm {\em solves} $\Pi$ if, for every instance $(I,k)$ the solution $s$ output by the algorithm is optimal for $(I,k)$. We say that a parameterized optimization problem $\Pi$ is {\em decidable} if there exists an algorithm that solves $\Pi$.
\end{definition}
\begin{definition}\label{def:paraOptFPT}
A parameterized optimization problem $\Pi$ is {\em fixed parameter tractable} (FPT) if there is an algorithm that solves $\Pi$, such that the running time of the algorithm on instances of size $n$ with parameter $k$ is upper bounded by $f(k)n^{\OO(1)}$ for a computable function $f$.
\end{definition}

We remark that Definition~\ref{def:paraOptProb} differs from the usual formalization of what it means to ``solve'' a decision problem. Solving a decision problem amounts to always returning ``yes'' on ``yes''-instances and ``no'' on ``no''-instances. For parameterized optimization problems the algorithm has to produce an optimal solution. This is analogous to the definition of optimization problems most commonly used in approximation algorithms.

We remark that we could have built the framework of approximate kernelization on the existing definitions of parameterized optimization problems used in parameterized approximation algorithms~\cite{MarxFPT-AS}, indeed the difference between our definitions of parameterized optimization problems and those currently used in parameterized approximation algorithms are mostly notational. 


\paragraph{Parameterizations by the Value of the Solution.}
At this point it is useful to consider a few concrete examples, and to discuss the relationship between parameterized optimization problems and decision variants of the same problem. For a concrete example, consider the {\sc Vertex Cover} problem. Here the input is a graph $G$, and the task is to find a smallest possible {\em vertex cover} of $G$: a subset $S \subseteq V(G)$ is a {\em vertex cover} if every edge of $G$ has at least one endpoint in $S$. This is quite clearly an optimization problem, the feasible solutions are the vertex covers of $G$ and the objective function is the size of $S$.

In the most common formalization of the {\sc Vertex Cover} problem as a {\em decision problem} parameterized by the solution size, the input instance $G$ comes with a parameter $k$ and the instance $(G,k)$ is a ``yes'' instance if $G$ has a vertex cover of size at most $k$. Thus, the parameterized decision problem ``does not care'' whether $G$ has a vertex cover of size even smaller than $k$, the only thing that matters is whether a solution of size at most $k$ is present.

To formalize {\sc Vertex Cover} as a parameterized optimization problem, we need to determine for every instance $(G,k)$ which value to assign to potential solutions $S \subseteq V(G)$. We can encode the set of feasible solutions by giving finite values for vertex covers of $G$ and $\infty$ for all other sets. We want to distinguish between graphs that do have vertex covers of size at most $k$ and the ones that do not. At the same time, we want the computational problem of solving the instance $(G,k)$ to become easier as $k$ decreases.  A way to achieve this is to assign $|S|$ to all vertex covers $S$ of $G$ of size at most $k$, and $k+1$ for all other vertex covers. Thus, one can formalize the {\sc Vertex Cover} problem as a parameterized optimization problem as follows.
\begin{equation*}
VC(G,k,S) = \left\{
\begin{array}{rl}
\infty & \text{if } $S$ \text{ is not a vertex cover of } $G$,\\
\min(|S|,k+1) & \text{ otherwise.}
\end{array} \right.
\end{equation*}
Note that this formulation of {\sc Vertex Cover} ``cares'' about  solutions of size less than $k$. One can think of $k$ as a threshold: for solutions of size at most $k$ we care about what their size is, while all solutions of size larger than $k$ are equally bad in our eyes, and are assigned value $k+1$. 

Clearly any FPT algorithm  that solves the parameterized optimization version of {\sc Vertex Cover} also solves the (parameterized) decision variant. Using standard self-reducibility techniques~\cite{S2012book} one can make an FPT algorithm for the decision variant solve the optimization variant as well.

We have seen how a minimization problem can be formalized as a parameterized optimization problem parameterized by the value of the optimum. Next we give an example for how to do this for maximization problems. In the {\sc Cycle Packing} problem we are given as input a graph $G$, and the task is to find a largest possible collection ${\cal C}$ of pairwise vertex disjoint cycles. Here a {\em collection of vertex disjoint cycles} is a collection ${\cal C}$ of vertex subsets of $G$ such that for every $C \in {\cal C}$, $G[C]$ contains a cycle and for every $C,C' \in {\cal C}$ we have $V(C) \cap V(C') = \emptyset$. We will often refer to a collection of vertex disjoint cycles as a {\em cycle packing}.

We can formalize the  {\sc Cycle Packing} problem as a parameterized optimization problem parameterized by the value of the optimum in a manner similar to what we did for {\sc Vertex Cover}. In particular, if ${\cal C}$ is a cycle packing, then we assign it value $|{\cal C}|$ if $|{\cal C}| \leq k$ and value $k+1$ otherwise. If $|{\cal C}|$ is not a cycle packing, we give it value $-\infty$.
\begin{equation*}
CP(G,k,{\cal C}) = \left\{
\begin{array}{rl}
-\infty & \text{if } {\cal C} \text{ is not a cycle packing},\\
\min(|{\cal C}|,k+1) & \text{ otherwise.}
\end{array} \right.
\end{equation*}
Thus, the only (formal) difference between the formalization of parameterized minimization and maximization problems parameterized by the value of the optimum is how infeasible solutions are treated. For minimization problems infeasible solutions get value $\infty$, while for maximization problems they get value $-\infty$. However, there is also a ``philosophical'' difference between the formalization of minimization and maximization problems. For minimization problems we do not distinguish between feasible solutions that are ``too bad''; solutions of size more than $k$ are all given the same value. On the other hand, for maximization problems all solutions that are ``good enough'', i.e. of size at least $k+1$, are considered equal.

Observe that the ``capping'' of the objective function at $k+1$ {\em does not make sense for approximation algorithms} if one insists on $k$ being the (un-parameterized) optimum of the instance $I$. The parameterization discussed above is {\em by the value of the solution that we want our algorithms to output}, not by the unknown optimum. We will discuss this topic in more detail in the paragraph titled ``{\bf Capping the objective function at $k+1$}'', after the notion of approximate kernelization has been formally defined.


\paragraph{Structrural Parameterizations.} We now give an example that demonstrates that the notion of parameterized optimization problems is robust enough to capture not only parameterizations by the value of the optimum, but also parameterizations by structural properties of the instance that may or may not be connected to the value of the best solution. In the {\sc Optimal Linear Arrangement} problem we are given as input a graph $G$, and the task is to find a bijection $\sigma : V(G) \rightarrow \{1,\ldots,n\}$ such that $\sum_{uv \in E(G)} |\sigma(u) - \sigma(v)|$ is minimized. A bijection $\sigma : V(G) \rightarrow \{1,\ldots,n\}$ is called a {\em linear layout}, and  $\sum_{uv \in E(G)} |\sigma(u) - \sigma(v)|$ is denoted by $val(\sigma, G)$ and is called the {\em value} of the layout $\sigma$.

We will consider the {\sc Optimal Linear Arrangement} problem for graphs that have a relatively small vertex cover. This can be formalized as a parameterized optimization problem as follows:
\begin{equation*}
OLA((G,S),k,\sigma) = \left\{
\begin{array}{rl}
-\infty & \text{if } ${\cal S}$ \text{ is not vertex cover of } G \text{ of size at most } k,\\ 
\infty & \text{if } \sigma \text{ is not a linear layout},\\
val(\sigma,G) & \text{ otherwise.}
\end{array} \right.
\end{equation*}
In the definition above the first case takes precendence over the second: if $S$ is not vertex cover of $G$ of size at most $k$ and $\sigma$ is not a linear layout, $OLA((G,S),k,\sigma)$ returns $-\infty$. This ensures that malformed input instances do not need to be handled.

Note that the input instances to the parameterized optimization problem described above are pairs $((G,S),k)$ where $G$ is a graph, $S$ is a vertex cover of $G$ of size at most $k$ and $k$ is the parameter. This definition allows algorithms for  {\sc Optimal Linear Arrangement}  parameterized by vertex cover to assume that the vertex cover $S$ is given as input.

\paragraph{Kernelization of Parameterized Optimization Problems.}
The notion of a kernel (or kernelization algorithm) is a mathematical model for polynomial time pre-processing for decision problems. We will now define the corresponding notion for parameterized optimization problems. To that end we first need to define a polynomial time pre-processing algorithm. 
%

\begin{definition}\label{def:polyTimePreProcess}
A {\bf polynomial time pre-processing algorithm} ${\cal A}$ for a parameterized optimization problem $\Pi$ is a pair of polynomial time algorithms. The first one is called the {\bf reduction algorithm}, and computes a map ${\cal R}_{\cal A} : \Sigma^* \times \mathbb{N} \rightarrow \Sigma^* \times \mathbb{N}$. Given as input an instance $(I,k)$ of $\Pi$ the reduction algorithm outputs another instance $(I',k') = {\cal R}_{\cal A}(I,k)$.

The second algorithm is called the {\bf solution lifting algorithm}. This algorithm
takes as input an instance $(I,k) \in \Sigma^* \times \mathbb{N}$ of $\Pi$, the output instance $(I',k')$ of the reduction algorithm, and a solution $s'$ to the instance $(I',k')$. The solution lifting algorithm works in time polynomial in $|I|$,$k$,$|I'|$,$k'$ and $s'$, and outputs a solution $s$ to $(I,k)$. Finally, if $s'$ is an optimal solution to $(I',k')$ then $s$ is an optimal solution to $(I,k)$.
\end{definition}


Observe that the solution lifting algorithm could contain the reduction algorithm as a subroutine. Thus, on input $(I,k,I',k',s')$ the solution lifting  algorithm could start by running the reduction algorithm $(I,k)$ and produce a transcript of \emph{how} the reduction algorithm obtains $(I',k')$ from $(I,k)$. Hence, when designing the solution lifting algorithm we may assume without loss of generality that such a transcript is given as input. For the same reason, it is not really necessary to include $(I',k')$ as input to the solution lifting algorithm. However, to avoid starting every description of a solution lifting algorithm with ``we compute the instance $(I',k')$ from $(I,k)$'', we include $(I',k')$ as input. 
The notion of polynomial time pre-processing algorithms could be extended to {\em randomized} polynomial time pre-processing algorithms, by allowing both the reduction algorithm and the solution lifting algorithm to draw random bits, and {\em fail} with a small probability. With such an extension it matters whether the solution lifting algorithm has access to the random bits drawn by the reduction algorithm, because these bits might be required to re-construct the transcript of how the reduction algorithm obtained $(I',k')$ from $(I,k)$. If the random bits of the reduction algorithm are provided to the solution lifting algorithm, the discussion above applies.

A kernelization algorithm is a polynomial time pre-processing algorithm for which we can prove an upper bound on the size of the output instances in terms of the parameter of the instance to be preprocessed. Thus, the {\em size} of a polynomial time pre-processing algorithm ${\cal A}$ is a function $\text{size}_{\cal A} : \mathbb{N} \rightarrow \mathbb{N}$ defined as follows.
$$\text{size}_{\cal A}(k) = \sup\{|I'| + k' : (I',k') = {\cal R}_{\cal A}(I,k), I \in \Sigma^*\}. $$

In other words, we look at all possible instances of $\Pi$ with a fixed parameter $k$, and measure the supremum of the sizes of the output of ${\cal R}_{\cal A}$ on these instances. At this point, recall that the {\em size} of an instance $(I,k)$ is defined as $|I|+k$. Note that this supremum may be infinite; this happens when we do not have any bound on the size of ${\cal R}_{\cal A}(I,k)$ in terms of the input parameter $k$ only. Kernelization algorithms are exactly these polynomial time preprocessing algorithms whose output size is finite and bounded by a computable function of the parameter.
\begin{definition}\label{def:kernel}
A {\bf kernelization} (or {\em kernel}) for a parameterized optimization problem $\Pi$ is a polynomial time pre-processing algorithm ${\cal A}$ such that $\text{size}_{\cal A}$ is upper bounded by a computable function $g : \mathbb{N} \rightarrow \mathbb{N}$.
\end{definition}
If the function $g$ in Definition~\ref{def:kernel} is a polynomial, we say that $\Pi$ admits a {\em polynomial kernel}. Similarly, if $g$ is a linear, quadratic or cubic function of $k$ we say that $\Pi$ admits a linear, quadratic, or cubic kernel, respectively.

One of the basic theorems in Parameterized Complexity is that a decidable parameterized decision problem admits a kernel if and only if it is fixed parameter tractable. We now show that this result also holds for parameterized optimization problems. We say that a parameterized optimization problem $\Pi$ is {\em decidable} if there exists an algorithm that solves $\Pi$, where the definition of ``solves'' is given in Definition~\ref{def:paraOptProb}.

\begin{proposition}\label{prop:optKernelEquiv}
A decidable parameterized optimization problem $\Pi$ is FPT if and only if it admits a kernel.
\end{proposition}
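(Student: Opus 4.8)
The plan is to mirror the classical equivalence between fixed-parameter tractability and kernelization, adapting both directions to the optimization setting where an algorithm must output an optimal solution (Definition~\ref{def:paraOptProb}) rather than a yes/no answer, and where a kernelization must additionally supply a solution lifting algorithm (Definition~\ref{def:polyTimePreProcess}).

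For the easy direction, suppose $\Pi$ admits a kernel $\mathcal{A}$ with $\text{size}_{\mathcal{A}}$ bounded by a computable function $g$. Since $\Pi$ is decidable, fix an algorithm $\mathbb{B}$ that solves $\Pi$; as $\Pi$ is a computable function this algorithm exists, though possibly with astronomical running time. Given an instance $(I,k)$, first run the reduction algorithm $\mathcal{R}_{\mathcal{A}}$ to obtain $(I',k')$ in polynomial time; by assumption $|I'|+k' \leq g(k)$. Then run $\mathbb{B}$ on $(I',k')$ to obtain an optimal solution $s'$; the running time of $\mathbb{B}$ on $(I',k')$ is bounded by some computable function of $|I'|+k' \leq g(k)$, hence by a computable function $f$ of $k$ alone. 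Finally, feed $(I,k,I',k',s')$ to the solution lifting algorithm, which in polynomial time returns a solution $s$ to $(I,k)$; since $s'$ is optimal for $(I',k')$, the defining property of the solution lifting algorithm guarantees $s$ is optimal for $(I,k)$. The total running time is $f(k) \cdot (|I|+k)^{O(1)}$, so $\Pi$ is FPT.

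For the converse, suppose $\Pi$ is FPT via an algorithm $\mathbb{C}$ running in time $f(k)(|I|+k)^{c}$ for some computable $f$ and constant $c$; without loss of generality $f$ is nondecreasing. Given $(I,k)$, run $\mathbb{C}$ for $(|I|+k)^{c+1}$ steps. If $\mathbb{C}$ halts within this budget it has produced an optimal solution $s^\star$ to $(I,k)$; in that case the reduction algorithm outputs a trivial constant-size instance $(I_0,k_0)$ (a fixed hard-coded instance of $\Pi$), and the solution lifting algorithm simply discards its input and re-runs $\mathbb{C}$ on $(I,k)$ within the same time budget to recover $s^\star$, which is optimal. If $\mathbb{C}$ does not halt within $(|I|+k)^{c+1}$ steps, then $f(k)(|I|+k)^c > (|I|+k)^{c+1}$, i.e.\ $|I|+k < f(k)$, so the instance already has size bounded by the computable function $f(k)$; the reduction algorithm outputs $(I,k)$ unchanged and the solution lifting algorithm is the identity. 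In either case the reduction runs in polynomial time, the solution lifting algorithm runs in polynomial time (the re-run of $\mathbb{C}$ in the first case costs only $(|I|+k)^{c+1}$ steps, polynomial in the input), optimal solutions map to optimal solutions, and $\text{size}_{\mathcal{A}}(k) \leq \max\{|I_0|+k_0,\ f(k)\}$, which is a computable function of $k$. Hence $\mathcal{A}$ is a kernel.

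The main subtlety — rather than a genuine obstacle — is bookkeeping the definition of the solution lifting algorithm correctly in the FPT-to-kernel direction: in the branch where the input instance is large (so $\mathbb{C}$ is slow), we must not pretend we have already solved it, and in the branch where we trivialize the instance we must ensure the solution lifting algorithm can still reconstruct an optimal solution without the discarded kernel output carrying any information. Resolving this by letting the solution lifting algorithm re-run the FPT algorithm within the (now affordable) time budget is the key point, and it relies on the remark following Definition~\ref{def:polyTimePreProcess} that the solution lifting algorithm may recompute the transcript of the reduction. A secondary point worth a sentence is that decidability of $\Pi$ is essential only for the kernel-to-FPT direction (to have the brute-force solver $\mathbb{B}$ at all), matching the classical statement.
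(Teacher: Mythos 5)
Your proof is correct and follows essentially the same two-part argument as the paper: for the kernel-to-FPT direction, reduce and then brute-force the bounded-size reduced instance using decidability before lifting; for the FPT-to-kernel direction, run the FPT algorithm for a polynomial number of steps and either output a hard-coded constant-size instance (re-running the FPT algorithm inside the now-polynomial-time solution lifting algorithm) or output the original instance, whose size is then bounded by $f(k)$. The only addition you make is spelling out explicitly why the brute-force solver's running time on the reduced instance is a computable function of $k$ alone, which the paper leaves implicit.
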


\begin{proof}
The backwards direction is trivial; on any instance $(I,k)$ one may first run the reduction algorithm to obtain a new instance $(I',k')$ of size bounded by a function $g(k)$. Since the instance $(I',k')$ has bounded size and $\Pi$ is decidable one can find an optimal solution $s'$ to $(I',k')$ in time upper bounded by a function $g'(k)$. Finally one can use the solution lifting algorithm to obtain an optimal solution $s$ to $(I,k)$.

For the forward direction we need to show that if a parameterized optimization problem $\Pi$ is FPT then it admits a kernel. Suppose there is an algorithm that solves instances $\Pi$ of size $n$ with parameter $k$ in time $f(k)n^c$. On input $(I,k)$ the reduction algorithm runs the FPT algorithm for $n^{c+1}$ steps. If the FPT algorithm terminates after at most $n^{c+1}$ steps, the reduction algorithm outputs an instance $(I',k')$ of constant size. The instance $(I',k')$ is hard-coded in the reduction algorithm and does not depend on the input instance $(I,k)$. Thus $|I'|+k'$ is upper bounded by a constant. If the FPT algorithm does not terminate after  $n^{c+1}$ steps the reduction algorithm halts and outputs the instance $(I,k)$. Note that in this case $f(k)n^c > n^{c+1}$, which implies that $f(k) > |I|$. Hence the size of the output instance is upper bounded by a function of $k$.

We now describe the solution lifting algorithm. If the reduction algorithm output $(I,k)$ then the solution lifting algorithm just returns the same solution that it gets as input. If the reduction algorithm output $(I',k')$ this means that the FPT algorithm terminated in polynomial time, which means that the solution lifting algorithm can use the FPT algorithm to output an optimal solution to $(I,k)$ in polynomial time, regardless of the solution to $(I',k')$ it gets as input. This concludes the proof.
\end{proof}

\paragraph{Parameterized Approximation and Approximate Kernelization.} 
For some parameterized optimization problems we are unable to obtain FPT algorithms, and we are also unable to find satisfactory polynomial time approximation algorithms. In this case one might aim for FPT-approximation algorithms, algorithms that run in time $f(k)n^c$ and provide good approximate solutions to the instance. 

\begin{definition}\label{def:fptAppx}
Let $\alpha \geq 1$ be constant. A fixed parameter tractable $\alpha$-approximation algorithm for a parameterized optimization problem $\Pi$ is an algorithm that takes as input an instance $(I,k)$, runs in time $f(k)|I|^{\OO(1)}$, and outputs a solution $s$ such that $\Pi(I,k,s) \leq \alpha \cdot OPT(I,k)$ if $\Pi$ is a minimization problem, and $\alpha \cdot \Pi(I,k,s) \geq OPT(I,k)$ if $\Pi$ is a maximization problem.
\end{definition}
Note that Definition~\ref{def:fptAppx} only defines constant factor FPT-approximation algorithms. The definition can in a natural way be extended to approximation algorithms whose approximation ratio depends on the parameter $k$, on the instance $I$, or on both.

We are now ready to define one of the key new concepts of the paper - the concept of an $\alpha$-approximate kernel. We defined kernels by first defining polynomial time pre-processing algorithms (Definition~\ref{def:polyTimePreProcess}) and then adding size constraints on the output (Definition~\ref{def:kernel}). In a similar manner we will first define $\alpha$-approximate polynomial time pre-processing algorithms, and then define $\alpha$-approximate kernels by adding size constraints on the output of the pre-processing algorithm.

\begin{definition}\label{def:polyTimePreProcessAppx}
Let $\alpha \geq 1$ be a real number and $\Pi$  be a parameterized optimization problem. An {\bf $\alpha$-approximate polynomial time pre-processing algorithm} ${\cal A}$ for $\Pi$ is a pair of polynomial time algorithms. The first one is called the {\bf reduction algorithm}, and computes a map ${\cal R}_{\cal A} : \Sigma^* \times \mathbb{N} \rightarrow \Sigma^* \times \mathbb{N}$. Given as input an instance $(I,k)$ of $\Pi$ the reduction algorithm outputs another instance $(I',k') = {\cal R}_{\cal A}(I,k)$.

The second algorithm is called the {\bf solution lifting algorithm}. This algorithm
takes as input an instance $(I,k) \in \Sigma^* \times \mathbb{N}$ of $\Pi$, the output instance $(I',k')$ of the reduction algorithm, and a solution $s'$ to the instance $(I',k')$. The solution lifting algorithm works in time polynomial in $|I|$,$k$,$|I'|$,$k'$ and $s'$, and outputs a solution $s$ to $(I,k)$. If $\Pi$ is a minimization problem then
\[
\frac{\Pi(I,k,s)}{OPT(I,k)} \leq \alpha \cdot \frac{\Pi(I',k',s')}{OPT(I',k')}.
\]
If $\Pi$ is a maximization problem then
\[
\frac{\Pi(I,k,s)}{OPT(I,k)} \cdot \alpha \geq \frac{\Pi(I',k',s')}{OPT(I',k')}.
\]
\end{definition}
Definition~\ref{def:polyTimePreProcessAppx} only defines constant factor approximate polynomial time pre-processing algorithms. The definition can in a natural way be extended approximation ratios that depend on the parameter $k$, on the instance $I$, or on both. Additionally, the discussion following Definition~\ref{def:polyTimePreProcess} also applies here. In particular we may assume that the solution lifting algorithm also gets as input a transcript of how the reduction algorithm obtains $(I',k')$ from $(I,k)$. The size of an $\alpha$-approximate polynomial time pre-processing algorithm is defined in exactly the same way as the size of a polynomial time pre-processing algorithm (from Definition~\ref{def:polyTimePreProcess}). 

\begin{definition}\label{def:appxkernel}
An {\bf $\alpha$-approximate kernelization} (or {\em $\alpha$-approximate kernel}) for a parameterized optimization problem $\Pi$, and real $\alpha \geq 1$, is an $\alpha$-approximate polynomial time pre-processing algorithm ${\cal A}$ such that $\text{size}_{\cal A}$ is upper bounded by a computable function $g : \mathbb{N} \rightarrow \mathbb{N}$.
\end{definition}
Just as for regular kernels, if the function $g$ in Definition~\ref{def:appxkernel} is a polynomial, we say that $\Pi$ admits an $\alpha$-approximate polynomial kernel. If $g$ is a linear, quadratic or cubic function, then $\Pi$ admits a linear, quadratic or cubic $\alpha$-approximate kernel, respectively. 

Proposition~\ref{prop:optKernelEquiv} establishes that a parameterized optimization problem $\Pi$ admits a kernel if and only if it is FPT. Next we establish a similar equivalence between \FPT-approximation algorithms and approximate kernelization.
\begin{proposition}\label{prop:appxKernelEquiv}
For every $\alpha \geq 1$ and decidable parameterized optimization problem $\Pi$, $\Pi$ admits a fixed parameter tractable $\alpha$-approximation algorithm if and only if $\Pi$ has an $\alpha$-approximate kernel.
\end{proposition}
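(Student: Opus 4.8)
The plan is to follow the proof of Proposition~\ref{prop:optKernelEquiv} almost line by line, upgrading ``optimal solution'' to ``\(\alpha\)-approximate solution'' and checking that the ratio bookkeeping demanded by Definition~\ref{def:polyTimePreProcessAppx} survives the solution lifting step. I would prove the two implications separately. Throughout, I use the observation (made right after Definition~\ref{def:polyTimePreProcess}, and which applies equally to \(\alpha\)-approximate pre-processing) that the solution lifting algorithm may recompute the transcript of the reduction algorithm, so it ``knows'' which case the reduction algorithm landed in.

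For the backward direction (an \(\alpha\)-approximate kernel yields an FPT \(\alpha\)-approximation algorithm), let \(\mathcal{A}\) be an \(\alpha\)-approximate kernel with size bound \(g\). On input \((I,k)\), first run the reduction algorithm to obtain \((I',k') = \mathcal{R}_{\mathcal{A}}(I,k)\) with \(|I'| + k' \le g(k)\). Since \(\Pi\) is decidable there is an algorithm solving \(\Pi\) in the sense of Definition~\ref{def:paraOptProb}; run it on \((I',k')\). As \((I',k')\) has size at most \(g(k)\), this terminates within a bound depending only on \(k\) and returns an optimal solution \(s'\). Feed \((I,k)\), \((I',k')\), \(s'\) to the solution lifting algorithm to obtain \(s\). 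Because \(s'\) is optimal we have \(\Pi(I',k',s')/OPT(I',k') = 1\), so the inequality in Definition~\ref{def:polyTimePreProcessAppx} gives \(\Pi(I,k,s) \le \alpha\cdot OPT(I,k)\) when \(\Pi\) is a minimization problem, and \(\alpha\cdot\Pi(I,k,s) \ge OPT(I,k)\) when \(\Pi\) is a maximization problem. The whole procedure runs in time polynomial in \(|I|+k\) plus a function of \(k\), i.e.\ \(f(k)|I|^{\OO(1)}\), so it is an FPT \(\alpha\)-approximation algorithm in the sense of Definition~\ref{def:fptAppx}.

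For the forward direction, suppose \(\mathcal{B}\) is an FPT \(\alpha\)-approximation algorithm running in time \(f(k)n^{c}\) on instances of size \(n = |I|+k\). The reduction algorithm, on input \((I,k)\), runs \(\mathcal{B}\) for \(n^{c+1}\) steps. If \(\mathcal{B}\) halts within this budget, producing an \(\alpha\)-approximate solution \(\hat{s}\), the reduction algorithm outputs a fixed hard-coded constant-size instance \((I_0,k_0)\) of \(\Pi\). If \(\mathcal{B}\) does not halt within \(n^{c+1}\) steps then \(f(k)n^{c} > n^{c+1}\), hence \(f(k) > n = |I|+k\), so \((I,k)\) itself already has size bounded by a function of \(k\); in that case the reduction algorithm outputs \((I,k)\) unchanged. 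In both cases the output size is at most \(\max(|I_0|+k_0,\,f(k))\), a computable function of \(k\), so this is a kernelization in the sense of Definition~\ref{def:appxkernel}. The solution lifting algorithm works as follows: if the reduction algorithm returned \((I,k)\), then the given \(s'\) is already a solution to \((I,k)\), and we output \(s = s'\) — here the two ratios in Definition~\ref{def:polyTimePreProcessAppx} are literally equal, so the requirement holds since \(\alpha \ge 1\); if the reduction algorithm returned \((I_0,k_0)\), then the solution lifting algorithm re-runs \(\mathcal{B}\) on \((I,k)\), which (by the case assumption) halts within \(n^{c+1}\) steps, i.e.\ in time polynomial in \(|I|+k\), and outputs the \(\alpha\)-approximate solution \(\hat{s}\) it produces, ignoring \(s'\). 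To verify the requirement in the second case for a minimization problem, note \(\Pi(I,k,\hat s)/OPT(I,k) \le \alpha\) while \(\Pi(I_0,k_0,s')/OPT(I_0,k_0) \ge 1\), so \(\Pi(I,k,\hat s)/OPT(I,k) \le \alpha\cdot \Pi(I_0,k_0,s')/OPT(I_0,k_0)\); the maximization case is symmetric, using that a solution's value never exceeds the optimum.

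I do not expect a genuine obstacle beyond Proposition~\ref{prop:optKernelEquiv}; the only points needing care are (i) noticing that in the ``\(\mathcal{B}\) too slow'' branch the instance \((I,k)\) is small enough by itself, so the identity reduction is a valid kernelization, and (ii) that the factor \(\alpha\) and the direction of the inequalities in Definition~\ref{def:polyTimePreProcessAppx} are exactly absorbed by the trivial bounds \(\Pi(I',k',s')/OPT(I',k') \ge 1\) (minimization) and \(\le 1\) (maximization) satisfied by any solution \(s'\) to the reduced instance. As with Proposition~\ref{prop:optKernelEquiv}, the statement is for \emph{decidable} \(\Pi\) and \emph{constant} \(\alpha \ge 1\), and (as in Definition~\ref{def:fptAppx}) the argument extends verbatim to approximation ratios that depend on \(k\), on \(I\), or on both.
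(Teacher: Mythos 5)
Your proposal is correct and takes exactly the route the paper intends: the paper itself proves Proposition~\ref{prop:appxKernelEquiv} only by a one-sentence remark that the argument is identical to that of Proposition~\ref{prop:optKernelEquiv} with ``FPT algorithm'' replaced by ``FPT $\alpha$-approximation algorithm'' and ``kernel'' by ``$\alpha$-approximate kernel.'' You have simply carried out that replacement in full, and your ratio bookkeeping in both directions (using $\Pi(I',k',s')/OPT(I',k')=1$ for the optimally solved small instance, and the trivial $\ge 1$ / $\le 1$ bounds for the hard-coded instance) is exactly the verification the paper is implicitly leaving to the reader.
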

The proof of Proposition~\ref{prop:appxKernelEquiv} is identical to the proof of Proposition~\ref{prop:optKernelEquiv}, but with the \FPT{} algorithm replaced by the fixed parameter tractable $\alpha$-approximation algorithm, and the kernel replaced with the $\alpha$-approximate kernel. 
On an intuitive level, it should be easier to compress an instance than it is to solve it. For $\alpha$-approximate kernelization this intuition can be formalized.
\begin{theorem}\label{prop:appxAlgConstKernelEquiv}
For every $\alpha \geq 1$ and decidable parameterized optimization problem $\Pi$, 
$\Pi$ admits a polynomial time $\alpha$-approximation algorithm if and only if $\Pi$ has an $\alpha$-approximate kernel of constant size.
\end{theorem}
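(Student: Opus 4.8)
The plan is to adapt the proof of Proposition~\ref{prop:appxKernelEquiv}, replacing ``fixed parameter tractable $\alpha$-approximation algorithm'' by ``polynomial time $\alpha$-approximation algorithm'' and ``kernel of size $f(k)$'' by ``kernel of constant size''. The key observation driving the change is that a polynomial time algorithm halts in polynomially many steps on \emph{every} input, so — unlike in the proof of Proposition~\ref{prop:optKernelEquiv} — the reduction algorithm never needs to fall back on copying its input, and may always output a fixed hard-coded instance of constant size, while the solution lifting algorithm does all the real work (it is allowed to, since it receives $(I,k)$ as input and runs in polynomial time).

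\textbf{From a polynomial time $\alpha$-approximation algorithm to a constant-size $\alpha$-approximate kernel.} Let $B$ be the assumed polynomial time $\alpha$-approximation algorithm, and fix once and for all some instance $(I_0,k_0)$ of $\Pi$ with $0 < OPT(I_0,k_0) < \infty$. The reduction algorithm $\mathcal{R}_{\mathcal{A}}$ ignores its input $(I,k)$ and outputs $(I_0,k_0)$; since $|I_0|+k_0$ is a fixed constant, $\text{size}_{\mathcal{A}}$ is bounded by a constant. The solution lifting algorithm, on input $((I,k),(I_0,k_0),s')$, discards $s'$, runs $B$ on $(I,k)$, and returns its output $s$; this takes polynomial time. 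For a minimization problem, $\Pi(I,k,s)\le \alpha\cdot OPT(I,k)$ together with $\Pi(I_0,k_0,s')\ge OPT(I_0,k_0)$ gives $\frac{\Pi(I,k,s)}{OPT(I,k)} \le \alpha \le \alpha\cdot \frac{\Pi(I_0,k_0,s')}{OPT(I_0,k_0)}$, which is exactly the inequality required in Definition~\ref{def:polyTimePreProcessAppx}; the maximization case is symmetric, using $\Pi(I_0,k_0,s')\le OPT(I_0,k_0)$. Hence $\mathcal{A}$ is an $\alpha$-approximate kernel of constant size.

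\textbf{From a constant-size $\alpha$-approximate kernel to a polynomial time $\alpha$-approximation algorithm.} Let $\mathcal{A}$ be an $\alpha$-approximate kernel with $\text{size}_{\mathcal{A}}(k)\le c$ for a constant $c$. Given $(I,k)$, first run $\mathcal{R}_{\mathcal{A}}$ to obtain $(I',k')$ with $|I'|+k'\le c$. Since $\Pi$ is decidable and there are only finitely many instances — and finitely many candidate solution strings — of size at most $c$, an optimal solution $s'$ of $(I',k')$ can be computed in time bounded by a constant depending only on $c$ and $\Pi$ (this is the same observation that powers the backward direction of Proposition~\ref{prop:optKernelEquiv}). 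Then run the solution lifting algorithm on $((I,k),(I',k'),s')$ to obtain $s$, in polynomial time. As $s'$ is optimal, $\Pi(I',k',s')/OPT(I',k') = 1$, so Definition~\ref{def:polyTimePreProcessAppx} yields $\Pi(I,k,s)\le \alpha\cdot OPT(I,k)$ for a minimization problem (respectively $\alpha\cdot\Pi(I,k,s)\ge OPT(I,k)$ for a maximization problem), i.e.\ $s$ is an $\alpha$-approximate solution; the whole procedure runs in polynomial time.

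I do not expect a genuine obstacle here; the argument is a routine specialization of Proposition~\ref{prop:appxKernelEquiv}. The only points needing a little care are (i) ensuring the ratios $\Pi(\cdot)/OPT(\cdot)$ appearing in Definition~\ref{def:polyTimePreProcessAppx} are well defined, which is the reason for choosing the hard-coded instance $(I_0,k_0)$ to have finite nonzero optimum, and (ii) spelling out why solving a constant-size instance of a decidable problem takes constant time.
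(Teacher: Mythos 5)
Your proof is correct and takes essentially the same approach as the paper's sketch: the forward direction hard-codes a constant-size output instance and lets the solution lifting algorithm run the approximation algorithm directly on $(I,k)$, and the backward direction brute-forces the constant-size reduced instance and lifts the optimal solution. The paper states this only as a two-sentence sketch; your writeup simply spells out the ratio manipulations and the (minor) care needed to pick $(I_0,k_0)$ with finite nonzero optimum.
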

The proof of Theorem~\ref{prop:appxAlgConstKernelEquiv} is simple; if there is an $\alpha$-approximate kernel of constant size one can brute force the reduced instance and lift the optimal solution of the reduced instance to an $\alpha$-approximate solution to the original. On the other hand, if there is a factor $\alpha$ approximation algorithm, the reduction algorithm can just output any  instance of constant size. Then, the solution lifting algorithm can just directly compute an $\alpha$-approximate solution to the original instance using the approximation algorithm.

We remark that Proposition~\ref{prop:appxKernelEquiv} and Theorem~\ref{prop:appxAlgConstKernelEquiv} also applies to approximation algorithms and  approximate kernels with super-constant approximation ratio.
We also remark that with our definition of $\alpha$-approximate kernelization, by setting $\alpha = 1$ we get essentially get back the notion of kernel for the same problem. The difference arises naturally from the different goals of decision and optimization problems. In decision problems we aim to correctly classify the instance as a ``yes'' or a ``no'' instance. In an optimization problem we just want as good a solution as possible for the instance at hand.
In traditional kernelization, a yes/no answer to the reduced instance translates without change to the original instance. With our definition of approximate kernels, a sufficiently good solution (that is, a witness of a yes answer) will always yield a witness of a yes answer to the original instance. However, the {\em failure} to produce a sufficiently good solution to the reduced instance does not stop us from {\em succeeding} at producing a sufficiently good solution for the original one. From the perspective of optimization problems, such an outcome is a win. 

\paragraph{Capping the objective function at $k+1$.}
We now return to the topic of parameterizing optimization problems by the value of the solution, and discuss the relationship between (approximate) kernels for such parameterized optimization problems and (traditional) kernels for the parameterized decision version of the optimization problem. 

Consider a traditional optimization problem, say {\sc Vertex Cover}. Here, the  input is a graph $G$, and the goal is to find a vertex cover $S$ of $G$ of minimum possible size. When {\em parameterizing} {\sc Vertex Cover} by the objective function value we need to provide a parameter $k$ such that solving the problem on the same graph $G$ becomes progressively easier as $k$ decreases. In parameterized complexity this is achieved by considering the corresponding {\em parameterized decision} problem where we are given $G$ and $k$ and asked whether there exists a vertex cover of size at most $k$. Here $k$ is the parameter.
If we also required an algorithm for {\sc Vertex Cover} to produce a solution, then the above parameterization can be interpreted as follows. Given $G$ and $k$, output a vertex cover of size at most $k$ or fail (that is, return that the algorithm could not find a vertex cover of size at most $k$.) If there exists a vertex cover of size at most $k$ then the algorithm is not allowed to fail.

A $c$-approximation algorithm for the {\sc Vertex Cover} problem is an algorithm that given $G$, outputs a solution $S$ of size no more than $c$ times the size of the smallest vertex cover of $G$. So, how do  approximation and parameterization mix? For $c \geq 1$, there are {\em two} natural ways to define a parameterized $c$-approximation algorithm for {\sc Vertex Cover}.
\begin{enumerate}
\setlength{\itemsep}{-2pt}
\item[(a)] Given $G$ and $k$, output a vertex cover of size at most $k$ or fail (that is, return that the algorithm could not find a vertex cover of size at most $k$.) If there exists a vertex cover of size at most $k/c$ then the  algorithm is not allowed to fail. 
\item[(b)] Given $G$ and $k$, output a vertex cover of size at most $ck$ or fail (that is, return that the algorithm could not find a vertex cover of size at most $ck$.) If there exists a vertex cover of size at most $k$ then the algorithm is not allowed to fail. 
\end{enumerate}

Note that if we required the approximation algorithm to run in {\em polynomial time}, then both   definitions above would yield exactly the definition of polynomial time $c$-approximation algorithms, by a linear search or binary search for the appropriate value of $k$. In the parameterized setting the running time depends on $k$, and the two formalizations are different, but nevertheless equivalent up to a factor $c$ in the value of $k$. That is $f(k)\cdot n^{\OO(1)}$ time algorithms and $g(k)$ size kernels for parameterization (b) translate to $f(ck)\cdot n^{\OO(1)}$ time algorithms and $g(ck)$ kernels for parameterization (a) and vice versa.

By defining the parameterized optimization problem for {\sc Vertex Cover} in such a way that the objective function depends on the parameter $k$, one can achieve either one of the two discussed formulations. By defining $VC(G,k,S) = \min\{|S|, k+1\}$ for vertex covers $S$ we obtain formulation (a). By defining 
$VC(G,k,S) = \min\{|S|, \lceil ck \rceil + 1\}$ for vertex covers $S$ we obtain formulation (b). It is more meaningful to define the computational problem {\em independently of the (approximation factor of) algorithms for the problem.} For this reason we stick to formulation (a) in this paper.


\paragraph{Reduction Rules and Strict $\alpha$-Approximate Kernels.} 
Kernelization algorithms in the literature~\cite{CyganFKLMPPS15,DF2012Book} are commonly described as a set of {\em reduction rules}. Here we discuss reduction rules in the context of parameterized optimization problems. A reduction rule is simply a polynomial time pre-processing algorithm, see Definition~\ref{def:polyTimePreProcess}. The reduction rule {\em applies} if the output instance of the reduction algorithm is not the same as the input instance. Most kernelization algorithms consist of a set of reduction rules. In every step the algorithm checks whether any of the reduction rules apply. If a reduction rule applies, the kernelization algorithm runs the reduction algorithm on the instance and proceeds by working with the new instance. This process is repeated until the instance is {\em reduced}, i.e. none of the reduction rules apply. To prove that this is indeed a kernel (as defined in Definition~\ref{def:kernel}) one proves an upper bound on the size of any reduced instance.

In order to be able to make kernelization algorithms as described above, it is important that reduction rules can be {\em chained}. That is, suppose that we have an instance $(I,k)$ and run a pre-processing algorithm on it to produce another instance $(I',k')$. Then we run another pre-processing algorithm on $(I',k')$ to get a third instance $(I^\star, k^\star)$. Given an optimal solution $s^\star$ to the last instance, we can use the solution lifting algorithm of the second pre-processing algorithm to get an optimal solution $s'$ to the instance $(I',k')$. Then we can use the solution lifting algorithm of the first pre-processing algorithm to get an optimal solution $s$ to the original instance $(I,k)$.

Unfortunately, one can not chain $\alpha$-approximate polynomial time pre-processing algorithms, as defined in Definition~\ref{def:polyTimePreProcessAppx}, in this way. In particular, each successive application of an $\alpha$-approximate pre-processing algorithm increases the gap between the approximation ratio of the solution to the reduced instance and the approximation ratio of the solution to the original instance output by the solution lifting algorithm. For this reason we need to define {\em strict} approximate polynomial time pre-processing algorithms.

\begin{definition}\label{def:strictPreprocess} Let $\alpha \geq 1$ be a real number, and $\Pi$ be a parameterized optimization problem. An $\alpha$-approximate polynomial time pre-processing algorithm is said to be {\bf strict} if, for every instance $(I,k)$, reduced instance $(I',k') = {\cal R}_{\cal A}(I,k)$ and solution $s'$ to $(I',k')$, the solution $s$ to $(I,k)$ output by the solution lifting algorithm when given $s'$ as input satisfies the following. 
\begin{itemize}\setlength\itemsep{-.7mm}
\item If $\Pi$ is a minimization problem then $\frac{\Pi(I,k,s)}{OPT(I,k)} \leq \max\left\{\frac{\Pi(I',k',s')}{OPT(I',k')}, \alpha \right\}$.
\item If $\Pi$ is a maximization problem then  $\frac{\Pi(I,k,s)}{OPT(I,k)} \geq \min\left\{\frac{\Pi(I',k',s')}{OPT(I',k')}, \frac{1}{\alpha} \right\}$.
\end{itemize}
\end{definition}

The intuition behind Definition~\ref{def:strictPreprocess} is that an $\alpha$-strict approximate pre-processing algorithm may incur error on near-optimal solutions, but that they have to preserve factor $\alpha$-approximation. If $s'$ is an $\alpha$-approximate solution to $(I',k')$ then $s$ must be a $\alpha$-approximate solution to $(I,k)$ as well. Furthermore, if the ratio of $\Pi(I',k',s')$ to $OPT(I',k')$ is {\em worse} than $\alpha$, then the ratio of $\Pi(I,k,s)$ to $OPT(I,k)$ should not be worse than the ratio of $\Pi(I',k',s')$ to $OPT(I',k')$.  

We remark that a reduction algorithm ${\cal R}_{\cal A}$ and a solution lifting algorithm that together satisfy the conditions of Definition~\ref{def:strictPreprocess}, also automatically satisfy the conditions of Definition~\ref{def:polyTimePreProcessAppx}. Therefore, to prove that  ${\cal R}_{\cal A}$ and solution lifting algorithm constitute a strict ${\alpha}$-approximate polynomial time pre-processing algorithm it is not necessary to prove that they constitute a ${\alpha}$-approximate polynomial time pre-processing algorithm first.
The advantage of Definition~\ref{def:strictPreprocess} is that strict $\alpha$-approximate polynomial time pre-processing algorithms do chain - the composition of two strict $\alpha$-approximate polynomial time pre-processing algorithms is again a strict $\alpha$-approximate polynomial time pre-processing algorithm. 

We can now formally define what a reduction rule is. A reduction rule for a parameterized optimization problem $\Pi$ is simply a polynomial time algorithm computing a map ${\cal R}_{\cal A} : \Sigma^* \times \mathbb{N} \rightarrow \Sigma^* \times \mathbb{N}$. In other words, a reduction rule is ``half'' of a polynomial time pre-processing algorithm. A reduction rule is only useful if the other half is there to complete the pre-processing algorithm. 
\begin{definition}\label{def:safe-Rule}
A reduction rule is said to be $\alpha$-{\bf safe for $\Pi$} if there exists a solution lifting algorithm, such that the rule together with the solution lifting algorithm constitute a strict $\alpha$-approximate polynomial time pre-processing algorithm for $\Pi$. A reduction rule is {\bf safe} if it is $1$-safe.
\end{definition}

In some cases even the final kernelization algorithm is a strict $\alpha$-approximate polynomial time pre-processing algorithm. This happens if, for example, the kernel is obtained only by applying $\alpha$-safe reduction rules. Strictness yields a tigher connection between the quality of solutions to the reduced instance and the quality of the solutions to the original instance output by the solution lifting algorithms. Thus we would like to point out which kernels have this additional property. For this reason we define strict $\alpha$-approximate kernels.
\begin{definition}\label{def:appxkernelStrict}
An $\alpha$-approximate kernel ${\cal A}$ is called {\bf strict} if ${\cal A}$ is a strict $\alpha$-approximate polynomial time pre-processing algorithm.
\end{definition} 

\paragraph{Polynomial Size Approximate Kernelization Schemes.} 
In approximation algorithms, the best one can hope for is usually an {\em approximation scheme}, that is an approximation algorithm that can produce a $(1+\epsilon)$-approximate solution for every $\epsilon > 0$. The algorithm runs in polynomial time for every fixed value of $\epsilon$. However, as $\epsilon$ tends to $0$ the algorithm becomes progressively slower in such a way that the algorithm cannot be used to obtain optimal solutions in polynomial time.

In the setting of approximate kernelization, we could end up in a situation where it is possible to produce a polynomial $(1+\epsilon)$-approximate kernel for every fixed value of $\epsilon$, but that the size of the kernel grows so fast when $\epsilon$ tends to $0$ that this algorithm cannot be used to give a polynomial size kernel (without any loss in solution quality). This can be formalized as a polynomial size approximate kernelization scheme.
\begin{definition}\label{def:psaks}
A {\bf polynomial size approximate kernelization scheme} (PSAKS) for a parameterized optimization problem $\Pi$ is a family of ${\alpha}$-approximate polynomial kernelization algorithms, with one such algorithm for every $\alpha > 1$.
\end{definition}
\noindent
Definition~\ref{def:psaks} states that a PSAKS is a {\em family} of algorithms, one for every $\alpha > 1$. However, many PSAKSes are {\em uniform}, in the sense that there exists an algorithm that given $\alpha$ outputs the source code of an $\alpha$-approximate polynomial kernelization algorithm for $\Pi$. In other words, one could think of a uniform PSAKS as a single ${\alpha}$-approximate polynomial kernelization algorithm where $\alpha$ is part of the input, and the size of the output depends on $\alpha$. 
From the definition of a PSAKS it follows that the size of the output instances of a PSAKS when run on an instance $(I,k)$ with approximation parameter $\alpha$ can be upper bounded by $f(\alpha) \cdot k^{g(\alpha)}$ for some functions $f$ and $g$ independent of $|I|$ and $k$. 
\begin{definition}\label{def:sizeEffPsaks}
A {\bf size efficient} PSAKS, or simply an {\bf efficient} PSAKS (EPSAKS) is a PSAKS such that the size of the instances output when the reduction algorithm is run on an instance $(I,k)$ with approximation parameter $\alpha$ can be upper bounded by $f(\alpha) \cdot k^c$ for a function $f$ of $\alpha$ and constant $c$ independent of $I$, $k$ 
and $\alpha$.
\end{definition}
\noindent
Notice here the analogy to efficient polynomial time approximation schemes, which are nothing but $\alpha$-approximation algorithms with running time $f(\alpha) \cdot n^c$. A PSAKS is required to run in polynomial time for every fixed value of $\alpha$, but the running time is allowed to become worse and worse as $\alpha$ tends to $1$. We can define {\em time-efficient} PSAKSes analagously to how we defined EPSAKSes.
\begin{definition}\label{def:timeEffPsaks}
A PSAKS is said to be {\bf time efficient} if (a) the running time of the reduction algorithm when run on an instance $(I,k)$ with approximation parameter $\alpha$ can be upper bounded by $f(\alpha) \cdot |I|^c$ for a function $f$ of $\alpha$ and constant $c$ independent of $I$, $k$, $\alpha$, and (b) the running time of the solution lifting algorithm when run on an instance $(I,k)$, reduced instance $(I',k')$ and solution $s'$ with approximation parameter $\alpha$ can be upper bounded by $f'(\alpha) \cdot |I|^c$ for a function $f'$ of $\alpha$ and constant $c$ independent of $I$, $k$ and $\alpha$.
\end{definition}
\noindent
Just as we distinguished between normal and strict $\alpha$-approximate kernels, we say that a PSAKS is {\bf strict} if it is a strict $\alpha$-approximate kernel for every $\alpha > 1$.

A {\em quasi-polynomial time} algorithm is an algorithm with running time $\OO(2^{(\log n)^c})$ for some constant $c$. In approximation algorithms, one is sometimes unable to obtain a PTAS, but still can make a $(1+\epsilon)$-approximation algorithm that runs in quasi-polynomial time for every $\epsilon > 1$. This is called a quasi-polynomial time approximation scheme. Similarly, one might be unable to give a PSAKS, but still be able to give a $\alpha$-approximate kernel of quasi-polynomial size for every $\alpha > 1$. 
\begin{definition}\label{def:qpsaks}
A {\bf quasi-polynomial size approximate kernelization scheme} (QPSAKS) for a parameterized optimization problem $\Pi$ is a family of ${\alpha}$-approximate kernelization algorithms, with one such algorithm for every $\alpha > 1$. The size of the kernel of the  ${\alpha}$-approximate kernelization algorithm should be upper bounded by $\OO(f(\alpha)2^{(\log k)^{g(\alpha)}})$ for functions $f$ and $g$ independent of $k$.
\end{definition}

%
\section{Approximate Kernel for Connected Vertex Cover}\label{sec:cvc}
In this section we design a PSAKS for {\sc Connected Vertex Cover}. The parameterized optimization problem {\sc Connected Vertex Cover(CVC)} is defined as follows. 

\[
    CVC(G,k,S)=\left\{ 
\begin{array}{rl}
    \infty & \text{if $S$ is not a connected vertex cover of the graph $G$} \\
    \min\left\{|S|,k+1\right\} & \text{otherwise}
\end{array}\right.
\]

We show that {\sc CVC} has a polynomial size strict $\alpha$-approximate kernel for every $\alpha>1$. Let $(G,k)$ be the input instance. Without loss of generality assume that the input graph $G$ is connected. Let $d$ be the least positive integer such that $\frac{d}{d-1} \leq \alpha$. In particular, $d = \lceil \frac{\alpha}{\alpha-1} \rceil$. For a graph $G$ and an integer $k$, define $H$ to be the set of vertices of degree at least $k+1$. We define $I$ to be the set of vertices which are not in $H$ and whose neighborhood is a subset of $H$. 
That is $I =\{v\in V(G)\setminus H ~|~ N_G(v)\subseteq H\}$. 
The kernelization algorithm works by applying two reduction rules exhaustively. The first of the two rules is the following.
\begin{redrule}
\label{red:cvcdegree}
Let $v\in I$ be a vertex of degree $D \geq d$. Delete $N_G[v]$ from $G$ and add a vertex $w$ such that the neighborhood of $w$ is $N_G(N_G(v))\setminus \{v\}$. Then add $k$ degree $1$ vertices $v_1,\ldots,v_k$ whose neighbor is $w$. Output this graph $G'$, together with the new parameter $k'=k-(D-1)$.
\end{redrule}
\begin{lemma}
 Reduction Rule~\ref{red:cvcdegree} is \alphasafe.
\end{lemma}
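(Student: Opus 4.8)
The plan is to exhibit a solution lifting algorithm and verify that, together with Reduction Rule~\ref{red:cvcdegree}, it constitutes a \emph{strict} $\alpha$-approximate polynomial time pre-processing algorithm; by the remark following Definition~\ref{def:strictPreprocess} this is exactly what $\alpha$-safety requires, and it suffices to check the strictness inequality directly. Throughout write $O=OPT(G,k)$, $O'=OPT(G',k')$. The structural facts I would record first are: (i) since $v\in I$ every vertex of $N_G(v)$ lies in $H$ and so has degree $\ge k+1$, hence any vertex cover of $G$ of size $\le k$ contains all of $N_G(v)$; (ii) since $v\notin H$ we have $D=\deg_G(v)\le k$, so $k'=k-(D-1)\ge 1$ and the reduced instance is sensible; (iii) in $G'$ the vertex $w$ has degree $\ge k\ge k'+1$, so $w$ lies in every connected vertex cover of $G'$ of size $\le k'$, and in fact in every connected vertex cover of $G'$ with $\ge 2$ vertices (otherwise the pendants $v_1,\dots,v_k$ would be isolated); (iv) $G'$ with the pendants deleted is precisely $G$ with the star on $\{v\}\cup N_G(v)$ contracted to the single vertex $w$.

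Using (iv) I would set up two maps. Downward: for a connected vertex cover $S$ of $G$ with $N_G(v)\subseteq S$, let $\phi(S)=(S\setminus(\{v\}\cup N_G(v)))\cup\{w\}$. Upward: for a connected vertex cover $S'$ of $G'$ with $w\in S'$, let $\psi(S')=(S'\cap(V(G)\setminus N_G[v]))\cup\{v\}\cup N_G(v)$. The routine verifications are that $\phi(S)$ and $\psi(S')$ are connected vertex covers of $G'$ and $G$ respectively — covering is an edge count, and connectivity follows because contracting a connected subgraph, and conversely blowing a vertex up into a connected subgraph, preserves connectivity of the induced subgraph, together with the fact that deleting the degree‑$1$ pendants from a connected graph keeps it connected — and that $|\phi(S)|\le |S|-(D-1)$ and $|\psi(S')|\le |S'|+D$ (using $w\in S'$). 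The place where the rule genuinely loses something, and the main obstacle, is exactly this connectivity bookkeeping for $\psi$: a vertex of $N_G(v)$ can become isolated in $G[\psi(S')]$ if all of its neighbours other than $v$ avoid $S'$, so I must reinsert the \emph{whole} star $\{v\}\cup N_G(v)$, not just $N_G(v)$. That is why $\psi$ costs $D$ rather than $D-1$, and the resulting slack factor is $D/(D-1)\le d/(d-1)\le\alpha$ (since $D\ge d$ and $x\mapsto x/(x-1)$ is decreasing on $(1,\infty)$), which is precisely what the definition of $d$ buys us.

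The solution lifting algorithm is: given $s'$, if $s'$ is a connected vertex cover of $G'$ output $\psi(s')$ (adding $w$ to $s'$ first if it is missing, which by (iii) can only happen when $|s'|>k'$ already), and otherwise output a fixed connected vertex cover of $G$, say $V(G)$. For correctness I would first prove $O'\le O-(D-1)$: if $O\le k$ apply $\phi$ to an optimal connected vertex cover of $G$ (which by (i) contains $N_G(v)$), obtaining a connected vertex cover of $G'$ of size $\le O-(D-1)\le k'$; and if $O=k+1$ this is immediate since $k'+1=(k+1)-(D-1)$ and $O'\le k'+1$ by capping. Hence $O\ge O'+(D-1)$ in all cases.

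It then remains to check the strictness inequality $\tfrac{CVC(G,k,s)}{O}\le\max\{\tfrac{CVC(G',k',s')}{O'},\alpha\}$ by cases on $s'$. If $s'$ is a connected vertex cover with $|s'|\le k'$ then $w\in s'$, $CVC(G',k',s')=|s'|$, $CVC(G,k,\psi(s'))\le|s'|+D$, and using $O\ge O'+(D-1)$ together with Fact~\ref{fact1},
\[
\frac{CVC(G,k,\psi(s'))}{O}\ \le\ \frac{|s'|+D}{O'+(D-1)}\ \le\ \max\left\{\frac{|s'|}{O'},\frac{D}{D-1}\right\}\ \le\ \max\left\{\frac{CVC(G',k',s')}{O'},\alpha\right\}.
\]
If $s'$ is a connected vertex cover with $|s'|>k'$, then $CVC(G',k',s')=k'+1$ and the output has value at most $k+1=(k'+1)+(D-1)$, so again by $O\ge O'+(D-1)$ and Fact~\ref{fact1} the ratio is at most $\max\{(k'+1)/O',1\}\le\max\{CVC(G',k',s')/O',\alpha\}$. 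Finally, if $s'$ is not a connected vertex cover of $G'$ the right-hand side is $+\infty$ and there is nothing to prove. Both algorithms run in polynomial time, so the rule is $\alpha$-safe. The only delicate point, as flagged, is the connectivity argument forcing $\psi$ to cost $D$; everything else is Fact~\ref{fact1} applied to the "$w$ replaces the $D$-vertex star and $k'=k-(D-1)$" bookkeeping.
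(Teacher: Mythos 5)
Your proof is correct and follows essentially the same route as the paper: you use the same solution-lifting map ($\psi(S')$ is exactly the paper's $(S'\setminus\{w,v_1,\dots,v_k\})\cup N_G[v]$, and $\phi$ is the paper's $(S^*\setminus N_G[v])\cup\{w\}$), establish the same two inequalities $OPT(G',k')\le OPT(G,k)-(D-1)$ and $CVC(G,k,s)\le CVC(G',k',s')+D$, and close with Fact~\ref{fact1}. The only cosmetic deviation is that on a CVC $s'$ with $|s'|>k'$ you still apply $\psi$ (after inserting $w$) rather than falling back to $V(G)$; since the value is capped at $k+1$ in that branch anyway, this changes nothing in the analysis.
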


\begin{proof}
To show that Rule~\ref{red:cvcdegree} is \alphasafe\ we need to give a solution lifting algorithm to go with the reduction. Given a solution $S'$ to the instance $(G',k')$, if $S'$ is a connected vertex cover of $G'$ of size at most $k'$ the algorithm returns the set $S = (S' \setminus  \{w, v_1, \ldots, v_k\}) \cup N_G[v]$. Otherwise the solution lifting algorithm returns $V(G)$. We now need to show that the reduction rule together with the above solution lifting algorithm constitutes a strict $\alpha$-approximate polynomial time pre-processing algorithm. 

First we show that $OPT(G',k') \leq OPT(G, k) - (D - 1)$. Consider an optimal solution $S^*$ to $(G, k)$. 
We have two cases based on the size of $S^*$. If $|S^*| > k$  then $CVC(G,k,S)=k + 1$; in fact $OPT(G, k)=k+1$. Furthermore,  any connected vertex cover of $G'$ has value at most $k' + 1 =k-(D-1)+1 \leq  OPT(G, k) - (D - 1)$. Now we consider the case when $|S^*|\leq k$. If $|S^*| \leq k$ then $N_G(v) \subseteq S^*$, since the degree of all the vertices in $N_G(v)$ is at least $k + 1$ and $S^*$ is a vertex cover of size at most $k$. Then $(S^* \setminus N_G[v]) \cup \{w\}$ is a connected vertex cover of $G'$ of size at most $|S^*| - (D - 1)=OPT(G, k) - (D - 1)$.

Now we show that $CVC(G,k,S) \leq CVC(G',k',S') + D$. If $S'$ is a connected vertex cover of $G'$ of size strictly more than $k'$ then $CVC(G,k,S) \leq k + 1=k'+D< k' +1 + D = CVC(G',k',S') + D$. Suppose now that $S'$ is a connected vertex cover of $G'$ of size at most $k'$. Then $w \in S'$ since $w$ has degree at least $k$ in $G'$. Thus $|S| \leq |S'| - 1 + D + 1 \leq |S'| + D $. Finally, $G[S]$ is connected because $G[N_G[v]]$ is connected and $N_G(N_G[v]) = N_{G'}(w) \setminus \{v_1, \ldots, v_k\}$. Hence $S$ is a connected vertex cover of $G$. Thus $CVC(G,k,S) \leq CVC(G',k',S') + D$. Therefore,  we have that
$$\frac{CVC{}(G,k,S)}{OPT(G,k)} \leq \frac{CVC{}(G',k',S') + D}{OPT(G',k') + (D - 1)} \leq \max\Bigg( \frac{CVC{}(G',k',S')}{OPT(G',k')},\alpha \Bigg).$$
The last transition follows from Fact~\ref{fact1}. 
This concludes the proof.
\end{proof}

The second rule is easier than the first, if any vertex $v$ has at least $k+1$ false twins, then remove $v$. A {\em false twin} of a vertex $v$ is a vertex $u$ such that $uv \notin E(G)$ and $N(u) = N(v)$.
\begin{redrule}\label{red:twins}
If a vertex $v$ has at least $k+1$ false twins, then remove $v$, i.e output $G' = G - v$ and $k' = k$.
\end{redrule}

\begin{lemma}
 Reduction Rule~\ref{red:twins} is \onesafe.
\end{lemma}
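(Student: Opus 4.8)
The plan is to exhibit a solution lifting algorithm for Reduction Rule~\ref{red:twins} and show that, paired with it, the rule is a strict $1$-approximate (hence exact, in the strict sense) polynomial time pre-processing algorithm. The natural lifting map is the identity: given a solution $S'$ to $(G',k') = (G-v, k)$, return $S = S'$ if $S'$ is a connected vertex cover of $G$ (which we will argue it always is, when $S'$ is one of $G'$), and return $V(G)$ otherwise. Since $k' = k$, the capping thresholds in the definition of $CVC$ coincide for the two instances, so it suffices to relate connected vertex covers of $G'$ and $G$.

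The key structural observation is that $v$ has at least $k+1$ false twins $u_1,\ldots,u_{k+1}$, all with $N(u_i) = N(v)$ and pairwise non-adjacent. First I would show $OPT(G',k') = OPT(G,k)$. For ``$\le$'': any connected vertex cover $S$ of $G$ with $|S| \le k$ must contain all but at most... actually more cleanly, $N(v) \subseteq S$ because $S$ must cover all $k+1$ edges from $v$'s twins — wait, that only forces $N(v) \cup \{$something$\} $; the correct argument is: since $u_1,\dots,u_{k+1}$ together with $v$ form an independent set of size $k+2$, and each $u_i$ must have $u_i \in S$ or $N(v) \subseteq$... Let me restate: if $|S| \le k$ then $S$ cannot contain all $k+1$ twins $u_i$, so some $u_j \notin S$, forcing $N(u_j) = N(v) \subseteq S$; hence every edge incident to $v$ is covered by $S \setminus \{v\}$, so $S \setminus \{v\}$ is already a vertex cover of $G$, and it is connected in $G'$ because... one must check $v$ was not a cut vertex of $G[S]$ — indeed $N_{G[S]}(v) = N(v) \subseteq S\setminus\{v\}$ which still induces a connected subgraph since the twins and the rest are unaffected, so $G'[S\setminus\{v\}]$ is connected. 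This gives $OPT(G',k') \le OPT(G,k)$. For ``$\ge$'': any connected vertex cover $S'$ of $G'$ covers every edge of $G$ except possibly those incident to $v$; but $N(v) = N(u_1)$ and $u_1$'s edges force... again $S'$ with $|S'|\le k$ misses some twin, hence contains $N(v)$, hence is already a connected vertex cover of $G$ (connectivity: adding no vertex, $v \notin S'$, and $S'$ unchanged, so $G[S'] = G'[S']$ is connected). Cases where $|S'| > k$ or $|S| > k$ are handled by the $k+1$ capping exactly as in the previous lemma.

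Putting these together: if $S'$ is a connected vertex cover of $G'$ of size at most $k' = k$, then (by the twin argument) $S = S'$ is a connected vertex cover of $G$ of the same size, so $CVC(G,k,S) = CVC(G',k',S')$; if $|S'| > k'$ then the lifting algorithm returns $V(G)$ and $CVC(G,k,S) = k+1 = k'+1 = CVC(G',k',S')$ (using connectedness of $G$). In all cases $\frac{CVC(G,k,S)}{OPT(G,k)} = \frac{CVC(G',k',S')}{OPT(G',k')} \le \max\left\{\frac{CVC(G',k',S')}{OPT(G',k')}, 1\right\}$, so the rule is $1$-safe by Definition~\ref{def:safe-Rule}.

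The main obstacle I anticipate is the connectivity bookkeeping: one must be careful that deleting $v$ (resp. declining to add it back) does not disconnect the induced subgraph on the rest of the solution. The clean way around this is precisely the observation that whenever $|S| \le k$ (resp. $|S'| \le k$), the fact that some twin lies outside $S$ forces $N(v) \subseteq S$, which simultaneously (i) makes $v$ redundant for covering, and (ii) guarantees $v$ is not a cut vertex of $G[S]$ since all of $v$'s neighbours survive in $S \setminus \{v\}$ and their mutual connectivity is untouched by removing the independent vertex $v$. The ``too big'' cases are routine given the treatment of Rule~\ref{red:cvcdegree} above, so the whole proof is short.
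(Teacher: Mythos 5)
Your overall structure (identity solution lifting, showing $OPT(G',k') = OPT(G,k)$, and handling the $k+1$ cap) matches the paper's, and your argument for the lifting direction --- that any $S'$ with $|S'| \le k$ must omit some twin $u_j$, hence contain $N(v)$, hence already be a connected vertex cover of $G$ since $G[S'] = G'[S']$ --- is correct. The gap is in the other direction, where you want to show $OPT(G',k') \le OPT(G,k)$. Given a small connected vertex cover $S$ of $G$ with $v \in S$, you propose simply deleting $v$ and assert that $S \setminus \{v\}$ is still connected because $v$ is not a cut vertex of $G[S]$, ``since all of $v$'s neighbours survive in $S \setminus \{v\}$ and their mutual connectivity is untouched.'' This is false: having $N(v) \subseteq S \setminus \{v\}$ does not preclude $v$ from being a cut vertex of $G[S]$.

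A concrete counterexample: let $V(G) = \{v, u_1, \ldots, u_{k+1}, a, b\}$ with $N(v) = N(u_i) = \{a,b\}$ for all $i$ and no other edges. Every $u_i$ is a false twin of $v$, a smallest connected vertex cover is $S = \{v,a,b\}$, and $G[\{v,a,b\}]$ is a path with $v$ as its middle vertex. Here $N(v) = \{a,b\} \subseteq S \setminus \{v\}$, yet $G[\{a,b\}]$ has no edges: $v$ \emph{is} a cut vertex of $G[S]$, and $S \setminus \{v\}$ is not a connected vertex cover of $G'$. The paper sidesteps this by not deleting $v$ outright but \emph{swapping} it for an unused twin: since $|S|\le k$, some twin $u_j \notin S$, and the set $(S \setminus \{v\}) \cup \{u_j\}$ has the same size, covers the same edges of $G'$, and (since $N(u_j)=N(v)$) $u_j$ plays the identical connectivity role that $v$ did, so $G'[(S \setminus \{v\}) \cup \{u_j\}]$ is connected. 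That swap, rather than deletion, is the step your proof is missing.
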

\begin{proof}
The solution lifting algorithm takes as input a set $S'$ to the reduced instance and returns the same set $S'=S$ as a solution to the original instance. To see that $OPT(G',k) \leq OPT(G, k)$, consider a smallest connected vertex cover $S^*$ of $G$. Again, we will distinguish between two cases either 
$|S^*|>k$ or $|S^*|\leq k$. If $|S^*| > k$ then 
$OPT(G',k) \leq k+1 = OPT(G, k)$. Thus, assume $|S^*| \leq k$. Then there is a false twin $u$ of $v$ that is not in $S^*$. Then $S^* \setminus \{v\} \cup \{u\}$ is a connected vertex cover of $G - v$ of size at most $k$.

Next we show that  $CVC(G,k,S) \leq CVC(G',k',S')$. If $|S'|>k'=k$ then clearly,   
$CVC(G,k,S)\leq k+1 =k'+1 = CVC(G',k',S')$. So let us assume that $|S'|\leq k$. 
Observe that, as $v$ has $k+1$ false twins, all vertices in $N(v)$ have degree at least $k+1$ in $G - v$. Thus, $N(v) \subseteq S' = S$ and $S$ is a connected vertex cover of $G$, and hence $CVC(G,k,S) \leq CVC(G',k',S')$. As a result, $$\frac{CVC(G,k,S)}{OPT(G,k)} \leq \frac{CVC(G',k',S')}{OPT(G',k')}$$ This concludes the proof.
\end{proof}

\begin{lemma}\label{lem:cvcUpperBound}
Let $(G,k)$ be an instance irreducible by rules~\ref{red:cvcdegree} and~\ref{red:twins}, such that $OPT(G,k) \leq k$. Then $|V(G)| \leq \OO(k^{d} + k^2)$.
\end{lemma}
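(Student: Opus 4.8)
The plan is to bound the vertices of $G$ by partitioning $V(G)$ into three parts according to the sets $H$ (high-degree vertices) and $I$ (vertices all of whose neighbours lie in $H$), and bounding each part separately. Recall $H=\{v: d_G(v)\ge k+1\}$ and $I=\{v\in V(G)\setminus H: N_G(v)\subseteq H\}$. The three parts are: (i) the set $H$, (ii) the set $L=V(G)\setminus(H\cup I)$ of ``low'' vertices that are not in $I$, and (iii) the set $I$ itself. I would handle $H$ and $L$ using the hypothesis $OPT(G,k)\le k$, and handle $I$ using the irreducibility under the two reduction rules.

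\textbf{Bounding $H$.} Since $OPT(G,k)\le k$, there is a connected vertex cover $S^*$ with $|S^*|\le k$. Every vertex of degree at least $k+1$ must lie in $S^*$ (otherwise all its $\ge k+1$ neighbours would have to be in $S^*$, contradicting $|S^*|\le k$). Hence $|H|\le |S^*|\le k$.

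\textbf{Bounding $L$.} Again fix the connected vertex cover $S^*$ of size at most $k$. A vertex $v\in L$ has a neighbour $u\notin H$ (by definition of $L$, since $v\notin I$ and $v\notin H$); for the edge $uv$ to be covered, $u\in S^*$ or $v\in S^*$. I would argue that the number of such $v$ is polynomially bounded: each vertex of $S^*\setminus H$ has degree at most $k$, so the at most $k$ vertices of $S^*$ together cover at most $k\cdot(k+1)$ edges incident to low-degree endpoints, hence $|L|\le \OO(k^2)$. (More carefully: every edge with both endpoints of degree $\le k$ has an endpoint in $S^*$, and each vertex of $S^*$ is incident to at most $k$ such edges if it is low-degree, and if it is in $H$ its low-degree neighbours are in $I$, not $L$; so the low-low edges are covered by $S^*\setminus H$, giving at most $k^2$ of them and hence $\OO(k^2)$ vertices in $L$ after also counting isolated low vertices, which cannot exist as $G$ is connected and has $>1$ vertex.)

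\textbf{Bounding $I$.} This is the main obstacle and where irreducibility is used. Partition $I$ by ``type'': the type of $v\in I$ is its neighbourhood $N_G(v)\subseteq H$. For each type $T\subseteq H$, all vertices of $I$ with that type are pairwise false twins, so by irreducibility under Rule~\ref{red:twins} there are at most $k$ of them (if there were $k+1$ we could remove one). It remains to bound the number of distinct types. A vertex $v\in I$ of degree $D\ge d$ would trigger Rule~\ref{red:cvcdegree}, so every $v\in I$ has degree at most $d-1$; thus every type $T$ has $|T|\le d-1$. Since $|H|\le k$, the number of subsets of $H$ of size at most $d-1$ is at most $\sum_{i=0}^{d-1}\binom{k}{i}=\OO(k^{d-1})$. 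Therefore $|I|\le k\cdot \OO(k^{d-1})=\OO(k^d)$.

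\textbf{Combining.} Adding the three bounds gives $|V(G)|=|H|+|L|+|I|\le k+\OO(k^2)+\OO(k^d)=\OO(k^d+k^2)$, as claimed. The step I expect to require the most care is the bound on $L$: one must check that low-degree vertices outside $I$ really are controlled by $S^*$ and that the ``degree $\le k$'' bound on vertices of $S^*$ that are not in $H$ is applied to the right edges (low–low edges, as the low–$H$ edges from $S^*$-vertices in $H$ land in $I$). The other two bounds are essentially immediate from the definitions plus irreducibility.
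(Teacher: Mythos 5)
Your proof follows the same three-part decomposition as the paper ($H$, $I$, and $L = V(G)\setminus(H\cup I)$) and bounds each part by essentially the same arguments (membership of $H$ in the solution, degree bound on $I$ plus the false-twin rule, and $S^*\setminus H$ covering all $H$-free edges). Note one small off-by-one: irreducibility under Rule~\ref{red:twins} gives at most $k+1$ (not $k$) vertices of each type in $I$, since the rule fires only when a vertex has at least $k+1$ false twins, so $k+1$ vertices of a common type yield only $k$ twins each; this does not affect the $\OO(k^d)$ bound.
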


\begin{proof}
Since $OPT(G,k) \leq k$, $G$ has a connected vertex cover $S$ of size at most $k$. We analyze separately the size of the three sets $H$, $I$ and $V(G) \setminus (H \cup I)$. First $H \subseteq S$ so $|H| \leq k$. Furthermore, every vertex in $I$ has degree at most $d-1$, otherwise Rule~\ref{red:cvcdegree} applies. Thus, there are at most ${k \choose d-1}$ different subsets $X$ of $V(G)$ such that there is a vertex $v$ in $I$ such that $N(v) = I$. Since each vertex $v$ has at most $k$ false twins it follows that $|I| \leq {k \choose d-1} \cdot (k+1) = \OO(k^d)$. 

Finally, every edge that has no endpoints in $H$ has at least one endpoint in $S \setminus H$. Since each vertex in $S \setminus H$ has degree at most $k$ it follows that there are at most $k|S| \leq k^2$ such edges. Each vertex that is neither in $H$ nor in $I$ must be incident to at least one edge with no endpoint in $H$. Thus there are at most $2k^2$ vertices in $V(G) \setminus (I \cup H)$ concluding the proof.
\end{proof}

\begin{theorem}\label{thm:cvcKernel}
\cvc{} admits a strict time efficient PSAKS with $\OO(k^{\lceil \frac{\alpha}{\alpha-1} \rceil} + k^2)$ vertices.
\end{theorem}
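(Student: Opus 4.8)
The plan is to turn the two reduction rules into a kernelization algorithm by applying them exhaustively, use Lemma~\ref{lem:cvcUpperBound} to bound the size of the reduced instance, and deal with the one case the lemma does not cover by a trivial replacement. Concretely: given $(G,k)$ with $G$ connected, put $d=\lceil\frac{\alpha}{\alpha-1}\rceil$ (the least integer with $\frac{d}{d-1}\le\alpha$), and apply Rules~\ref{red:cvcdegree} and~\ref{red:twins} in any order until neither applies; call the outcome $(G^\star,k^\star)$ and let $c$ be the constant hidden in Lemma~\ref{lem:cvcUpperBound}. If $|V(G^\star)|\le c(k^{d}+k^2)$, output $(G^\star,k^\star)$. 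Otherwise, since $(G^\star,k^\star)$ is irreducible, the contrapositive of Lemma~\ref{lem:cvcUpperBound} (using $k^\star\le k$) gives $OPT(G^\star,k^\star)>k^\star$, hence $OPT(G^\star,k^\star)=k^\star+1$; in that case output a fixed constant-size instance and let the solution lifting algorithm of this last step ignore its input and return $V(G^\star)$.

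I would first check that the rules can be applied exhaustively in polynomial time, with an exponent independent of $\alpha$. Rule~\ref{red:cvcdegree} strictly decreases the parameter (by $D-1\ge d-1\ge 1$), no rule increases it, and the parameter stays at least $1$ (a vertex of $I$ has degree at most $k$ since it is not in $H$, so $k^\star=k-(D-1)\ge 1$); hence Rule~\ref{red:cvcdegree} fires at most $k$ times, each firing increasing the vertex count by at most $k$, while every application of Rule~\ref{red:twins} removes a vertex. So the number of rule applications is polynomial in $|V(G)|+k$, each takes polynomial time, and the final step is trivial. I would also note that connectivity is preserved: in Rule~\ref{red:cvcdegree} the new vertex $w$ inherits every neighbour of $N_G(v)$ except $v$, so any path through $N_G(v)$ can be rerouted through $w$, and $D\ge d\ge 2$ excludes the degenerate single-edge graph; therefore $G^\star$ is connected and $V(G^\star)$ really is a connected vertex cover of it.

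Next I would assemble correctness and strictness. Rules~\ref{red:cvcdegree} and~\ref{red:twins} are \alphasafe{} (Rule~\ref{red:twins} is \onesafe{}, hence \alphasafe{} for $\alpha>1$), so each is the reduction half of a strict $\alpha$-approximate polynomial-time pre-processing algorithm. The final step is also such an algorithm: in the case it is invoked we have $|V(G^\star)|\ge k^\star+1$ (else $V(G^\star)$ would be a connected vertex cover of value at most $k^\star$), so $CVC(G^\star,k^\star,V(G^\star))=\min\{|V(G^\star)|,k^\star+1\}=k^\star+1=OPT(G^\star,k^\star)$, i.e.\ the ratio it incurs is $1$. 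Since strict $\alpha$-approximate polynomial-time pre-processing algorithms compose, the whole pipeline is one. In both output cases the reduced instance has $\OO(k^{\lceil\alpha/(\alpha-1)\rceil}+k^2)$ vertices, so $\text{size}_{\cal A}$ is polynomially bounded in $k$ for each fixed $\alpha$; this yields a strict $\alpha$-approximate polynomial kernel, and ranging over all $\alpha>1$ gives the PSAKS. Time efficiency follows because the reduction and solution-lifting running times are polynomial with exponent independent of $\alpha$.

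The step I expect to require the most care is the case $OPT(G^\star,k^\star)>k^\star$: Lemma~\ref{lem:cvcUpperBound} gives no size guarantee there, so it must be short-circuited by emitting a constant-size instance and recovering $V(G^\star)$ through the solution-lifting algorithm, and one has to verify this keeps the pipeline strict (it does, since it incurs ratio $1$). The other point needing attention is the polynomial-time termination argument, since Rule~\ref{red:cvcdegree} can temporarily enlarge the graph; the ``parameter drops each time, so it fires at most $k$ times'' bookkeeping handles this. Neither obstacle is deep, but both are needed for a complete proof.
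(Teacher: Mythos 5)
Your proposal is correct and follows the paper's proof almost exactly: apply Rules~\ref{red:cvcdegree} and~\ref{red:twins} exhaustively, invoke Lemma~\ref{lem:cvcUpperBound} to bound the reduced instance, and short-circuit the case $|V(G^\star)|$ too large (forcing $OPT(G^\star,k^\star)=k^\star+1$) by emitting a trivial instance whose solution lifting returns $V(G^\star)$. Your extra bookkeeping on termination and connectivity preservation fills in details the paper states only implicitly, but the underlying argument is the same.
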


\begin{proof}
The kernelization algorithm applies the rules \ref{red:cvcdegree} and~\ref{red:twins} exhaustively. If the reduced graph $G$ has more than $\OO(k^{d} + k^2)$ vertices then, by Lemma~\ref{lem:cvcUpperBound}, $OPT(G,k) = k+1$ and the algorithm may return any conneccted vertex cover of $G$ as an optimal solution. Thus the reduced graph has at most $\OO(k^{d} + k^2)$ vertices, since $d = \lceil \frac{\alpha}{\alpha-1} \rceil$ the size bound follows. The entire reduction procedure runs in polynomial time (independent of $\alpha$), hence the PSAKS is time efficient.
\end{proof}



\section{Disjoint Factors and Disjoint Cycle Packing}\label{sec:DfAndCp}
In this section we give PSAKes for  \df{} and \CP. The main ingredient of our lossy kernels is a combinatorial object 
that ``preserves'' labels of all the independent sets of a labelled graph.  We will make this precise in the next section and then use this crucially to design PSAKes for both  \df{} and \CP. 

\subsection{Universal independent set covering}
We start the subsection by defining a combinatorial object, which we call,  
{\em \ULI{$\epsilon$} (\shortuli{$\epsilon$})}.  After formally defining it, we give an efficient 
construction for finding this objects when the input graph enjoys some special properties. 
Informally, \shortuli{$\epsilon$} of a labelled graph $G$ 
is an induced subgraph of $G$ which preserves approximately 
{\em all the labelled independent sets}. The formal definition is given below.  
Here, $\epsilon>0$ is a fixed constant. 

\smallskip

\defproblem{\scULI{$\epsilon$} (\shortULI{$\epsilon$})}{A graph $G$, an integer $q\in {\mathbb N}$ 
and a labelling function $\Gamma~:~V(G)\rightarrow [q]$}{A subset $X\subseteq V(G)$ such that 
for any independent set $S$ in $G$, there is an independent set $S'$ in $G[X]$ with 
$\Gamma(S')\subseteq \Gamma(S)$ and $\vert \Gamma(S')\vert\geq  (1-\epsilon)\vert \Gamma(S)\vert $. The set $X$ is called \shortuli{$\epsilon$}.}

 \smallskip
  
Obviously, for any $\epsilon>0$ and a labelled graph $G$,  the whole graph $G$ itself is an 
\shortuli{$\epsilon$}.
Our objective here is to give \shortuli{$\epsilon$} with size as small as possible. Here, we design a polynomial time algorithm  which gives an \shortuli{$\epsilon$} for an interval graph $G$ 
of size at most $(q\cdot \chi(G))^{\OO(\frac{1}{\epsilon}\log \frac{1}{\epsilon})}$.   Here, $\chi(G)$ denotes the 
chromatic number of the graph $G$. 

\paragraph{\shortuli{$\epsilon$} for Interval Graphs.}
From now onwards, in this subsection whenever we will use graph we mean {\em an interval graph}. 
We also assume that we have an interval representation of the graph we are considering. 
We use the terms {\em vertex} as well as {\em interval}, interchangeably,  to denote the  
vertex of an interval graph. 
Let $(G,q,\Gamma)$ be an input instance of 
\shortULI{$\epsilon$}. 
 We first compute a proper 
coloring $\kappa : V(G)\rightarrow [\chi(G)]$ of $G$. 
It is well known that  a proper coloring of an interval graph with the minimum number of colors 
can be computed in polynomial time~\cite{Cormen01}. 
Now using the proper coloing function $\kappa$, we refine the labelling $\Gamma$ to another labelling 
$\Lambda$ of $G$ as follows: for any $u\in V(G)$, $\Lambda(u)=(\Gamma(u),\kappa(u))$. 
An important property of the labelling 
$\Lambda$ is the following: for any $i\in \{(a,b)~|~a\in [q], b\in[\chi(G)] \}$, $\Lambda^{-1}(i)$ is an {\em independent set} 
in $G$. From now onwards, we will assume that we are working with the labelling function 
$\Lambda~:~V(G)\rightarrow [k]$, where \fbox{$k=q \cdot \chi(G)$} and 
$\Lambda^{-1}(i)$ is an independent set  in $G$ for any $i\in [k]$. 
We say that a subset of labels $Z\subseteq [k]$ is {\em realizable} in a graph $G$, if there exists 
an independent set $S\subseteq V(G)$ such that $\Lambda(S)=Z$. We first show that an 
\shortuli{$\epsilon$} for $G$ with respect to the labelling $\Lambda$ refining $\Gamma$ is also an 
\shortuli{$\epsilon$} for $G$ with respect to the labelling $\Gamma$.
\begin{lemma}
\label{lem:equivlabel}
Let $X$ be a vertex subset of $G$. If $X$ is an \shortuli{$\epsilon$} for $(G,\Lambda)$ then it is also  an \shortuli{$\epsilon$} for $(G,\Gamma)$.
\end{lemma}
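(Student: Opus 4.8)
The plan is to reduce the statement for $(G,\Gamma)$ to the hypothesis for $(G,\Lambda)$, the only catch being that one must not apply the hypothesis to a given independent set verbatim, but to a suitably pruned sub-independent-set of it. The single structural fact I would extract at the outset is that, because $\Lambda(u)=(\Gamma(u),\kappa(u))$ for every $u$, there is a map $\phi:[k]\to[q]$ (forgetting the colour coordinate) with $\Gamma=\phi\circ\Lambda$. Consequently, for every vertex set $Y$ we have $\Gamma(Y)=\phi(\Lambda(Y))$, and hence $\Lambda(Y_1)\subseteq\Lambda(Y_2)$ implies $\Gamma(Y_1)\subseteq\Gamma(Y_2)$; this takes care of the containment requirement in the definition of \shortuli{$\epsilon$} for free. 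Note that this lemma needs nothing about $\Lambda$ beyond $\Gamma=\phi\circ\Lambda$; the special property that each $\Lambda^{-1}(i)$ is independent is not used here.

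Now fix an arbitrary independent set $S$ of $G$; I must exhibit an independent set $S'$ of $G[X]$ with $\Gamma(S')\subseteq\Gamma(S)$ and $|\Gamma(S')|\geq(1-\epsilon)|\Gamma(S)|$. Here is the point where a naive argument fails: a single $\Gamma$-label appearing in $S$ may split into several $\Lambda$-labels, so $|\Lambda(S)|$ can be much larger than $|\Gamma(S)|$, and a $(1-\epsilon)$ guarantee expressed in terms of $|\Lambda(\cdot)|$ need not descend to one in terms of $|\Gamma(\cdot)|$. The fix is to prune first: for each $a\in\Gamma(S)$ choose one vertex $u_a\in S$ with $\Gamma(u_a)=a$ and set $T=\{u_a : a\in\Gamma(S)\}$. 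Then $T\subseteq S$ is independent, $\Gamma(T)=\Gamma(S)$, the vertices of $T$ have pairwise distinct $\Gamma$-labels and hence pairwise distinct $\Lambda$-labels, and therefore $\phi$ restricted to $\Lambda(T)$ is a bijection onto $\Gamma(T)$; in particular $|\Lambda(T)|=|\Gamma(T)|=|\Gamma(S)|$.

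Finally, I would invoke the hypothesis that $X$ is an \shortuli{$\epsilon$} for $(G,\Lambda)$ on the independent set $T$: it yields an independent set $S'$ in $G[X]$ with $\Lambda(S')\subseteq\Lambda(T)$ and $|\Lambda(S')|\geq(1-\epsilon)|\Lambda(T)|$. Then $\Gamma(S')=\phi(\Lambda(S'))\subseteq\phi(\Lambda(T))=\Gamma(T)=\Gamma(S)$, and since $\phi$ is injective on $\Lambda(T)\supseteq\Lambda(S')$ we get $|\Gamma(S')|=|\Lambda(S')|\geq(1-\epsilon)|\Lambda(T)|=(1-\epsilon)|\Gamma(S)|$, which is exactly what is needed; as $S$ was arbitrary, $X$ is an \shortuli{$\epsilon$} for $(G,\Gamma)$. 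I expect the only genuine obstacle to be spotting the pruning step from $S$ to $T$ (it is what forces the $\Lambda$-count and the $\Gamma$-count to coincide so the $(1-\epsilon)$ factor survives); everything after it is bookkeeping with the projection $\phi$.
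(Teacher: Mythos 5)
Your proof is correct and matches the paper's argument essentially step for step: the paper likewise first prunes the given independent set to a subset $I'$ containing one vertex per $\Gamma$-label (your $T$), applies the \shortuli{$\epsilon$} guarantee for $(G,\Lambda)$ to that pruned set, and then uses the fact that distinct $\Gamma$-labels on $I'$ force distinct $\Lambda$-labels to transfer both the containment and the $(1-\epsilon)$ cardinality bound back to $\Gamma$. Your explicit introduction of the projection $\phi$ with $\Gamma = \phi\circ\Lambda$ is a clean way to phrase what the paper does implicitly; it is a stylistic, not substantive, difference.
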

\begin{proof}
Let $I$ be an independent set of $G$ and let $\Gamma(I)$ denote the set of labels on the vertices of $I$. 
We first compute a subset $I'$ of $I$ by selecting exactly one vertex from $I$ for each label in  
$\Gamma(I)$. Clearly, $\Gamma(I)=\Gamma(I')$. Observe that since any label is used 
at most once on any vertex in $I'$ we have that $|I'|=|\Gamma(I')|=\vert \Lambda(I')\vert $. In particular,  
for any pair of vertices $u, v\in I'$, $\Gamma(u)\neq \Gamma(v)$. 
By the property of the set $X$, we have 
an independent set $S'$ in $G[X]$ with $\Lambda(S')\subseteq \Lambda(I')$ and 
$\vert \Lambda(S')\vert\geq  (1-\epsilon)\vert \Lambda(I')\vert $.  Since, for any pair of vertices 
$u, v\in I'$, $\Gamma(u)\neq \Gamma(v)$, we have that  $\Gamma(S') \subseteq \Gamma(I')=\Gamma(I)$ and $\vert \Gamma(S')\vert\geq  (1-\epsilon)\vert \Gamma(I')\vert =(1-\epsilon)\vert \Gamma(I)\vert $. 
This concludes the proof. 
 \end{proof}

Lemma~\ref{lem:equivlabel} implies that we can assume that the input labelling is also a proper coloring of $G$ by increasing the number of labels by a multiplicative factor of $\chi(G)$. We first define notions of 
 {\em rich} and {\em poor} labels; which will be crucially used in our algorithm. 
 
\begin{definition}
 For any induced subgraph $H$ of $G$  we say that a  
label $\ell \in [k]$ is {\em rich} in $H$, if there are at least $k$ vertices in $H$ that are  
labelled $\ell$. Otherwise, the label $\ell$ is called {\em poor} in $H$. 
\end{definition}

We start with a simple lemma that shows that in an interval graph all the rich labels are realizable by an independent set of $G$. In particular we show the following lemma.

\begin{lemma}
\label{obs:rich:greedy:packing}
Let  $H$ be an induced subgraph of $G$, $\Lambda$ be a labelling function as defined above 
and $R$ be the set of rich labels in $H$.
Then $R$ is realizable in $H$. Moreover, an independent set $S$ such that 
$\Lambda(S)=R$ can be computed in polynomial time. 
\end{lemma}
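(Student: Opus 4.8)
## Proof Proposal

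\medskip
\noindent\textbf{Plan.} The statement asserts that in an interval graph $G$ with the proper-coloring labelling $\Lambda$, for any induced subgraph $H$, the set $R$ of rich labels of $H$ is realizable by an independent set of $H$, and such an independent set can be found in polynomial time. The natural approach is a greedy sweep over the interval representation from left to right, picking intervals with rich labels while maintaining that labels are used at most once and that the chosen intervals are pairwise disjoint. The key structural fact I would exploit is the defining property of $\Lambda$: each color class $\Lambda^{-1}(i)$ is itself an independent set, so two intervals sharing the same label are automatically disjoint; the only real constraint is between intervals carrying \emph{different} rich labels.

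\medskip
\noindent\textbf{Key steps.} First, order all intervals of $H$ by right endpoint. Process them in this order, maintaining a set $S$ of already-chosen intervals and a set $Z \subseteq R$ of labels already ``covered''. When we encounter an interval $v$ whose label $\ell = \Lambda(v)$ lies in $R \setminus Z$ and whose left endpoint lies strictly to the right of the right endpoint of the last interval added to $S$, add $v$ to $S$ and add $\ell$ to $Z$. This is the standard earliest-deadline-first argument for interval scheduling, but restricted to rich labels. The heart of the correctness argument is: \emph{every rich label eventually gets covered}. To see this, suppose for contradiction that some rich label $\ell$ is never covered. Since $\ell$ is rich, $H$ contains at least $k$ intervals labelled $\ell$, and by the independent-set property of color classes these $k$ intervals are pairwise disjoint, hence their union stabs at least $k$ distinct ``points'' along the line. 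Each interval we add to $S$ can, by the greedy choice, ``block'' the selection of an $\ell$-interval only if it overlaps it; but the chosen intervals in $S$ are pairwise disjoint, and each one can overlap at most... here I would argue that since $|S| \le |R| \le k$ — wait, $|R|$ could be up to $k$, so I need a sharper count: each interval already in $S$ overlaps at most one of the $k$ pairwise-disjoint $\ell$-intervals that lie to its right at the moment it was added, because those $\ell$-intervals are disjoint and an interval processed earlier (smaller right endpoint) can overlap at most one interval from a disjoint family whose members all start later. Since $|S| < k$ throughout (we add at most $|R| \le k$ but actually we should track that fewer than $k$ are added before we reach the claimed contradiction), some $\ell$-interval survives unblocked and would have been chosen — contradiction. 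I would tighten this counting so the threshold ``$k$ copies'' in the definition of rich is exactly what makes the argument go through.

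\medskip
\noindent\textbf{Main obstacle.} The delicate point — and the step I expect to require the most care — is the precise counting argument showing that $k$ copies of a rich label suffice to guarantee it is picked, given that up to $|R|$ other labels (and their representative intervals) are competing for ``slots'' along the line. One must argue that each previously chosen interval in $S$ can be ``charged'' to destroy at most one of the $k$ disjoint copies of $\ell$, and that $|S|$ at the relevant moment is strictly less than $k$. Choosing the greedy order correctly (by right endpoint) and phrasing the charging so that an earlier-finishing interval meets at most one member of a later-starting disjoint family is the crux; once that is set up, the rest is bookkeeping. The polynomial running time is immediate since the procedure is a single sweep after sorting. I would also invoke Lemma~\ref{lem:equivlabel} only implicitly — it is not needed here, since we work directly with $\Lambda$ — and note that the output $S$ satisfies $\Lambda(S) = R$ by construction (every label in $S$ is rich, and every rich label is covered).
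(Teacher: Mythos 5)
Your proposal takes essentially the same approach as the paper: the paper's algorithm is your greedy phrased recursively (pick the vertex $u$ with minimum right endpoint among those labelled from $R$, delete $N[u]$, drop $\Lambda(u)$ from $R$, and recurse), and its correctness is proven by induction on $\vert R\vert$ using exactly the observation you isolate --- since same-labelled intervals are pairwise disjoint and $I_u$ has minimum right endpoint, $I_u$ meets at most one interval of each label in $R$. The counting worry you flag closes immediately: if $\ell\in R$ is never covered then $\vert S\vert=\vert Z\vert\le \vert R\vert-1\le k-1$, while $\ell$ has at least $k$ pairwise disjoint copies and each chosen interval blocks at most one of them, so some copy survives unblocked and would have been selected; this is precisely what the paper packages as the invariant that every label still in $R'$ has at least $\vert R'\vert$ surviving copies.
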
 
\begin{proof}
For our proof we will design an algorithm which constructs an independent set $S$ such that 
$\Lambda(S)=R$. We assume that we have an interval representation of $H$ and for 
any  vertex $v\in V(H)$, let $I_v$ be the interval corresponding to the vertex $v$. 
Our algorithm is recursive and as an input takes the tuple $(H, \Lambda, R)$.  In the base case it checks whether there is a vertex $w\in V(H)$ such that 
$\Lambda(w)\in R$. If there is no $w$ such that $\Lambda(w)\in R$ then the algorithm outputs 
an $\emptyset$. Otherwise, 
pick a vertex $I_{u}$  in $H'$ such that $\Lambda(u)\in R$ and the value of the 
right endpoint of the interval $I_u$ is minimum among all the vertices that are labelled with 
labels from $R$
in $H$.
Having found $I_u$ (or $u$), 
we recursively solve the problem on the input $(H'=H-N[u], \Lambda|_{V(H')}, R\setminus \{\Lambda(u)\})$. 
Here,    $\Lambda|_{V(H')}$ is the labelling $\Lambda$ restricted to the vertices in $V(H')$. 
Let $S'$ be the output of the recursive call on the input  $(H', \Lambda|_{V(H')}, R\setminus \{\Lambda(u)\})$. Our algorithm will output $S'\cup \{u\}$. 

Now we prove the correctness of the algorithm. Towards this we prove 
the following statement using induction on $\vert R \vert$: 
for any induced subgraph  $H$ of $G$ and 
$R\subseteq [k]$ 
such that for any $j\in R$, the  number of vertices in $H$ labelled with $j$ is at least $\vert R\vert$, 
then the above  algorithm on input $(H,\Lambda,R)$ will output an independent set $S$ such that $\Lambda(S)=R$. 
The base case is when $\vert R\vert=0$, and statement holds trivially.  Now consider the 
induction step. 
Let $u$ be the vertex picked by the algorithm such that $\Lambda(u)\in R$
and the value of the right endpoint of the interval $I_u$, corresponding to $u$, is the minimum among all 
such intervals. 
Since for any $j\in R$, $\Lambda^{-1}(j)$  is independent, and 
$I_u$ is an interval (vertex) 
with minimum right endpoint value, 
we have that for any $i\in R$, the number of intervals labelled with $i$ that intersects with $I_u$ is at most 
$1$. This implies that for any $i\in R \setminus \{\Lambda(u)\}$, the number of vertices labelled with $i$ in $H-N[u]$ is 
at least $\vert R\vert -1$ (because the number of vertices labelled with $i$ in $H$ is at least $\vert R\vert$). 
Hence, by the induction hypothesis, the  recursive call on the input  $(H'=H-N[u], \Lambda|_{V(H')}, R\setminus \{\Lambda(u)\})$ will output an independent set $S'$ such that $\Lambda(S')=R\setminus \{\Lambda(u)\}$. 
Since $S'\cap N[u]=\emptyset $, we have that $S'\cup \{u\}$ is the required 
independent set for the input $(H,\Lambda, R)$. This completes the correctness proof.  
\end{proof}

Before we give the formal construction for the desired \shortuli{$\epsilon$} for $G$, we first give an intuitive explanation of our strategy. 
If we are seeking for an upper bound on \shortuli{$\epsilon$} in terms of $k$, then Lemma~\ref{obs:rich:greedy:packing} suggests the following natural strategy: for rich labels, we find an independent set, say $I_{\sf rich}$,  of size at most $k$ (as the number of labels itself is upper bounded by $k$)  that realizes it and add all the vertices in this set to \shortuli{$\epsilon$} we are constructing. Let us denote the the  \shortuli{$\epsilon$} we are constructing by  $X$. 
For the poor labels, we know that by definition each label appears on at most $k$ vertices and thus in total the number of vertices that have poor labels is upper bounded by $k^2$. We include all the vertices that have poor labels to \shortuli{$\epsilon$} (the set $X$)  we are constructing. So at this stage if $G$ has an independent set 
$S$ such that all the labels used on the vertices in $S$ ($\Lambda(S)$) are rich then we can find an appropriate independent subset of $I_{\sf rich}$ that realizes all the labels of $\Lambda(S)$. On the other hand, if we have an independent set $S$ such that all the labels used on the vertices in $S$ are poor 
then it self realizes itself. That is, since we have kept all the vertices that are poor, the set $S$ itself is a contained inside the  \shortuli{$\epsilon$} we are constructing and thus it self realizes itself. The problem arises when we have an independent set $S$ that has vertices having both rich labels as well as poor labels. We deal with this case essentially by the following case distinctions. Let 
$\Lambda(S)$ denote  the set of labels used on the vertices in $S$ and $\Lambda(S)_{\sf rich}$ and $\Lambda(S)_{\sf poor}$ denote the set of rich and poor labels in $\Lambda(S)$, respectively. 

\begin{figure}[t]
\begin{center}
\includegraphics[height=10cm,width=14cm]{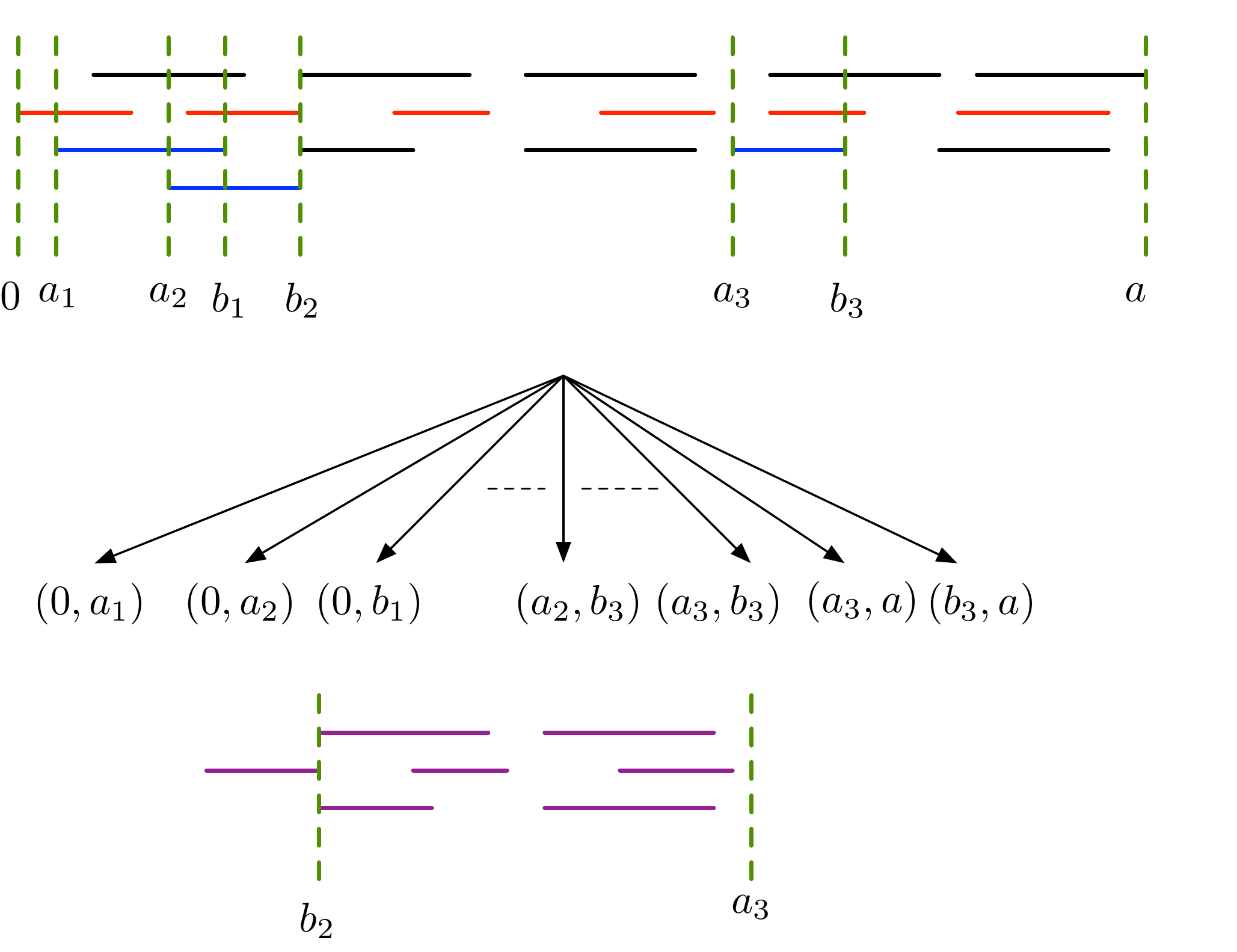}
\end{center}
\caption{An illustration for the process of obtaining \shortuli{$\epsilon$}. Intervals colored with red denote $I_{\sf rich}$ and intervals colored with blue denote vertices labelled with poor label. The instance below corresponds to branching on $(a_2,b_3)$. \label{fig:illus-ip-branchingalgo}}
\end{figure}

\begin{enumerate}
\setlength{\itemsep}{-2pt}
\item If $|\Lambda(S)_{\sf rich}| \geq (1-\epsilon) |\Lambda(S)|$, then we are again done as we can find an 
appropriate independent subset of $I_{\sf rich}$ that realizes all  the labels of $\Lambda(S)_{\sf rich}$. 
\item Since the first case does not arise we have that the number of rich labels in $\Lambda(S)$, that is, 
$|\Lambda(S)_{\sf rich}|$ is upper bounded by $(1-\epsilon) |\Lambda(S)|$  and that 
$|\Lambda(S)_{\sf poor}|\geq \epsilon |\Lambda(S)|$. Thus, in this case 
it is possible that $|\Lambda(S)_{\sf poor}|=|\Lambda(S)_{\sf rich}|=\frac{1}{2}|\Lambda(S)|$ and hence it is {\em possible} that there is no independent set  $S'$ in  $G[X]$ (the set $X$ constructed so far)  with 
$\Lambda(S')\subseteq \Lambda(S)$ and $\vert \Lambda(S')\vert\geq  (1-\epsilon)\vert \Lambda(S)\vert $. 
Thus, we need to enrich the set $X$ further. Towards this we use the following strategy. Let $Q$ be the set of endpoints of the intervals labelled with poor labels. Furthermore, assume that all the intervals of $G$ are between $(0,a)$. Now for every $p,q\in Q\cup\{0,a\}$, let $Y_{p,q}$ denote the set of intervals of $G$ which is fully contained in the open interval $(p,q)$. For every, $p,q\in Q\cup\{0,a\}$, we recursively find the desired \shortuli{$\epsilon$}  in $G[Y_{p,q}]$ and then take the union. Clearly, this is a branching algorithm with every node in the recursion tree having $\OO(k^4)$ children. See Figure~\ref{fig:illus-ip-branchingalgo} for an illustration of the process. The {\em idea} of this branching procedure is that given 
an  independent set $S$ we would like to pack all the vertices in $S$ with poor labels and then having made this choice we get disjoint induced subgraph of $G$ (by removing all the vertices in $S$ with poor labels and their neighorhiood)  where we ``would like to pack'' the vertices in $S$ that have rich labels. By our construction it is evident that the set $S'$ we will obtain by packing labels in the disjoint induced subgraphs of $G$ is compatible with the choice of packing all the vertices with poor labels in $S$. To get an upper bound on the size of the set $X$ we are constructing we show that the recursion tree can be truncated at the depth of $\lceil \OO(\frac{1}{\epsilon}\log \frac{1}{\epsilon} )\rceil$
\end{enumerate}

Now we give our main lemma that gives an algorithm for finding the desired \shortuli{$\epsilon$} for $G$ with respect to the labelling function $\Lambda$.

\begin{lemma}
\label{lemma:interval_graph}
Let $G$ be an interval graph, $k\in {\mathbb N}$ and $\epsilon'>0$. 
Let $\Lambda~:~V(G)\rightarrow [k]$ be a labelling function such that for any $i\in [k]$, 
$\Lambda^{-1}(i)$ is an independent set in $G$.  Then, there is a polynomial time algorithm that finds a set $X\subseteq V(G)$ 
of cardinality $k^{\OO(\frac{1}{\epsilon'}\log \frac{1}{\epsilon'})}$ such that for any realizable set $Z\subseteq [k]$ in $G$, 
there is a realizable subset $Z'\subseteq Z$  of cardinality at least $(1-2\epsilon')\vert Z \vert$ 
 in $G[X]$. 
\end{lemma}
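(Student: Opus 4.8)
The plan is to turn the informal branching strategy described before the lemma into a recursive algorithm and then bound (a) its running time and output size, and (b) the quality of the labels it preserves. I would define a recursive procedure $\textsc{Build}(H, \Lambda|_H, t)$ that takes an induced subgraph $H$ of $G$ (always a union of intervals lying in some open window), the restricted labelling, and a remaining depth budget $t$. The procedure first computes the set $R$ of rich labels of $H$ (labels appearing on at least $k$ vertices of $H$) and, using Lemma~\ref{obs:rich:greedy:packing}, computes in polynomial time an independent set $I_{\sf rich}$ with $\Lambda(I_{\sf rich}) = R$, adding all of $I_{\sf rich}$ to the output. It then adds to the output {\em all} vertices of $H$ carrying a poor label; there are at most $k$ poor labels each on fewer than $k$ vertices, so this contributes at most $k^2$ vertices. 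If $t = 0$ we stop. Otherwise, letting $Q$ be the set of endpoints of poor-labelled intervals and $0,a$ the extreme endpoints of the interval representation, for every pair $p,q \in Q \cup \{0,a\}$ we recurse on $H[Y_{p,q}]$ (the intervals of $H$ strictly inside $(p,q)$) with budget $t-1$, and take the union of all returned sets. The final answer is $X = \textsc{Build}(G, \Lambda, D)$ where $D = \lceil c\,\tfrac{1}{\epsilon'}\log\tfrac{1}{\epsilon'}\rceil$ for a suitable absolute constant $c$.

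For the size and time bound: each node of the recursion tree has at most $|Q\cup\{0,a\}|^2 = \OO(k^4)$ children (since $|Q| \le 2k^2$), contributes $\OO(k^2)$ vertices, and the depth is $D = \OO(\tfrac{1}{\epsilon'}\log\tfrac{1}{\epsilon'})$, so the tree has $(\OO(k^4))^{D} = k^{\OO(\frac{1}{\epsilon'}\log\frac1{\epsilon'})}$ nodes and $|X| \le k^{\OO(\frac{1}{\epsilon'}\log\frac1{\epsilon'})}$; the whole computation runs in time polynomial in this quantity (hence polynomial, for fixed $\epsilon'$), and each step is polynomial by Lemma~\ref{obs:rich:greedy:packing} and the fact that proper colourings and interval endpoints are computable in polynomial time.

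The substance is the correctness (quality) claim, which I would prove by induction on the depth budget $t$: {\em for every induced subgraph $H$ of $G$ that is the set of intervals strictly inside some window $(p_0,q_0)$, and every realizable $Z \subseteq [k]$ in $H$, the set $\textsc{Build}(H,\Lambda|_H,t)$ contains an independent set realizing a subset $Z' \subseteq Z$ with $|Z'| \ge (1 - 2\epsilon')|Z|$ whenever $t \ge D$; and more precisely, with the fraction of $Z$ lost controlled by how many levels are available.} Fix an independent set $S$ in $H$ with $\Lambda(S) = Z$, and split $Z = Z_{\sf rich} \uplus Z_{\sf poor}$ into rich and poor labels of $H$. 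Case 1: if $|Z_{\sf rich}| \ge (1-\epsilon')|Z|$, then since $I_{\sf rich} \subseteq X$ realizes all of $R \supseteq Z_{\sf rich}$, we can select the sub-independent-set of $I_{\sf rich}$ on exactly the labels $Z_{\sf rich}$ and are done with one-step loss $\le \epsilon' |Z|$. Case 2: otherwise $|Z_{\sf poor}| \ge \epsilon' |Z|$. Here we use the key geometric observation: the poor-labelled vertices of $S$ are pairwise non-adjacent intervals, so their endpoints partition the window into consecutive sub-windows $(p_1,q_1),(p_2,q_2),\dots$ (these are among the $Y_{p,q}$'s we recursed on), and every rich-labelled vertex of $S$, being an interval disjoint from all the poor-labelled intervals of $S$, lies strictly inside one of these sub-windows. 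Thus $Z_{\sf rich}$ is partitioned among the recursive calls, each of which is itself a ``window'' subgraph, so by the induction hypothesis with budget $t-1$ we recover, inside $X$, independent sets realizing a $(1-2\epsilon')$-fraction of the rich labels routed to each sub-window — and crucially these independent sets live in vertex-disjoint induced subgraphs (the $Y_{p,q}$'s are pairwise disjoint) and can be combined with the poor-labelled vertices of $S$ (all of which we kept verbatim in $X$) into a single independent set of $G[X]$. Counting the loss: in Case 2 we lose nothing on $Z_{\sf poor}$ and at most a $2\epsilon'$-fraction of $Z_{\sf rich}$ per recursive level; iterating, after $t$ levels the total fraction lost is at most $2\epsilon'$ plus a geometric tail, and choosing $D = \Theta(\frac{1}{\epsilon'}\log\frac{1}{\epsilon'})$ forces the accumulated loss below $2\epsilon'|Z|$ — this is exactly the point where the depth truncation is justified, and is the one calculation I would do carefully.

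The main obstacle, and the delicate point of the whole argument, is getting the loss bookkeeping in Case 2 to close: each descent either (Case 1) costs an $\epsilon'$-fraction and terminates that branch, or (Case 2) recurses with one less level of budget having ``used up'' at least an $\epsilon'$-fraction of the labels as poor labels — so along any root-to-leaf path the set of surviving labels shrinks by a factor $(1-\epsilon')$ each time we hit Case 2, and we hit Case 1 or the leaf before too many such shrinkages accumulate. Making ``$X$ realizes a $(1-2\epsilon')$-fraction'' come out of the recursion with a clean constant (rather than $\epsilon'$ blowing up with depth) requires stating the inductive hypothesis with the right parameter — e.g. that at depth budget $t$ we lose at most a $(2\epsilon' - \epsilon'(1-\epsilon')^{t})$-fraction, or something of that shape — and then verifying both cases preserve it; I would also need to double check that truncating a Case-2 branch at $t = 0$ (where we keep only $I_{\sf rich}$ and poor vertices, recovering all of $Z_{\sf poor}$ and all of $Z_{\sf rich}$ that happens to still be rich in that leaf subgraph) is already good enough, which it is precisely because by that depth the remaining $Z$ has been whittled down to at most an $\epsilon'$-fraction of the original.
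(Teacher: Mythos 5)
Your algorithm and inductive strategy match the paper's almost exactly: recursively mark one independent set realizing the rich labels via Lemma~\ref{obs:rich:greedy:packing}, mark every poor-labelled interval, and recurse on the $\OO(k^4)$ sub-windows cut out by poor-interval endpoints to depth $\Theta(\frac{1}{\epsilon'}\log\frac{1}{\epsilon'})$; the case split on whether $|Z_{\sf rich}| \geq (1-\epsilon')|Z|$, and the observation that the poor intervals of $S$ partition the window so that the recursive sub-answers combine disjointly with $S_{\sf poor}$, is exactly the paper's Claim~\ref{claim:markingbound}. The one piece that is genuinely not closed, and which you yourself flag as ``the one calculation I would do carefully,'' is the exact form of the inductive hypothesis, and your candidate is wrong in a way that matters. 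The loss bound $2\epsilon' - \epsilon'(1-\epsilon')^{t}$ is monotone in the wrong direction (loss should shrink, not grow, as the remaining budget $t$ grows) and fails at the base case: at a leaf in Case~2 none of $Z_{\sf rich}$ is recoverable, so the loss on the current $Z$ can be as large as $(1-\epsilon')|Z|$, far above the $\epsilon'|Z|$ your formula promises at $t=0$. Your backup reasoning --- that by the time we reach a leaf the remaining $Z$ has been ``whittled down to an $\epsilon'$-fraction of the original'' --- mixes a per-subcall hypothesis with root-level bookkeeping, and the two don't compose: the induction has to be stated and closed locally per subcall, with no reference to the root. The form that does close is the paper's: at budget $d'$ the procedure retains at least $\bigl(1 - \epsilon' - (1-\epsilon')^{d'}\bigr)|W|$ labels of any realizable $W$ in the current subgraph. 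This is vacuous at the leaf, which is the right base case, and the induction step closes precisely because $|W \setminus \Lambda(U_p)| < (1-\epsilon')|W|$ lets you upgrade the $(1-\epsilon')^{d'-1}$ factor on $|W\setminus\Lambda(U_p)|$ to $(1-\epsilon')^{d'}$ on $|W|$. Setting $d' = \lceil \frac{1}{\epsilon'}\log\frac{1}{\epsilon'}\rceil$ makes $(1-\epsilon')^{d'} \leq \epsilon'$ and gives the stated $1 - 2\epsilon'$.
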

\begin{proof}
Our  polynomial time algorithm is a bounded depth recursive procedure.
Towards that we define a recursive marking procedure 
{\sc Mark-Interval} 
which takes as input an induced subgraph of $G$ 
and a positive integer and marks intervals of $G$. Vertices corresponding to the marked intervals will correspond to the desired set $X$. Our algorithm is called {\sc Mark-Interval}. See Algorithm~\ref{alg:mark:int} for a detailed formal description of the algorithm. 
We call the procedure 
{\sc Mark-Interval} 
on input $(G,d=\lceil \frac{1}{\epsilon'}\log \frac{1}{\epsilon'}\rceil )$ to get the required set $X$, which is the set of vertices marked by the procedure. 
Without loss of generality we assume that all the intervals in $G$ are contained in $(0,a)$ for some $a\in {\mathbb N}$. 

\begin{algorithm}[t]
\If {$d'=1$}
\Return
Let $R$ be the set of  rich labels in $H$.\\
Apply the algorithm  mentioned in Lemma~\ref{obs:rich:greedy:packing} 
and let $S$ be its output (Note that $S$ is an independent set and $\Lambda(S)=R$).\label{step:rich:computing} \\
Mark all the intervals in $S$. \label{step:rich:marking} \\
Let $P$ be the set of intervals which are labelled with poor labels in $H$.\\
Mark all the intervals in $P$. \label{step:poor:marking}\\ 
Let $Q$ be the set of endpoints of the intervals in $P$. \\
\ForAll{ $p,q\in Q\cup\{0,a\}$}
{
{\sc Mark-Interval}($H[Y_{p,q}],d'-1$), where $Y_{p,q}$ is the set of intervals of $H$ which is fully contained in the open interval $(p,q)$.
} 
\caption{{\sc Mark-Interval} ($H$, $d'$), where $H$ is an induced subgraph of $G$ and  $d'\in {\mathbb N}$}
\label{alg:mark:int}
\end{algorithm}
 
We first show that the procedure {\sc Mark-Interval} on input $(G,d)$ 
marks at most $k^{\OO(d)}=k^{\OO(\frac{1}{\epsilon'}\log \frac{1}{\epsilon'})}$ intervals. Let $X$ be the set of marked intervals. 
In Step~\ref{step:rich:marking}, {\sc Mark-Interval} marks at most $k$ intervals, one 
for each rich label in $G$. In Step~\ref{step:poor:marking}, {\sc Mark-Interval} mark 
all intervals which are labelled with poor labels in $G$ and the number of such intervals 
is at most $k^2$. This implies that number of  points in $Q$ is at most $2k^2+2$. 
Hence the procedure makes at most $\binom{2k^2+2}{2}$ recursive calls. 
Thus, the total number 
of marked intervals  is bounded by the recurrence relation,   
$T(d)\leq (k^2+k)+\binom{2k^2+2}{2} T(d-1)$ and $T(1)=0$. This recurrence relation solves to 
$k^{\OO(d)}$.
This implies 
that the cardinality of the set of marked vertices by  {\sc Mark-Interval}$(G,\lceil \frac{1}{\epsilon'}\log \frac{1}{\epsilon'}\rceil)$ 
is at most $k^{\OO(\frac{1}{\epsilon'}\log \frac{1}{\epsilon'})}$.

Now we show the correctness of the algorithm. 
Towards that we first prove the following claim. 
\begin{claim}
\label{claim:markingbound}
Let $H$ be an  induced subgraph of $G$,  $d'\leq d$ be a positive integer and $X'$ be the set of 
marked vertices by the procedure  {\sc Mark-Interval} on input $(H,d')$. 
If $W\subseteq [k]$ is realizable in $H$, 
then there is a subset $W'\subseteq W$ such that $W'$ is realizable in $H[X']$ 
and $\vert W' \vert \geq (1-\epsilon'-(1-\epsilon')^{d'}) \vert W\vert$. 
\end{claim}
\begin{proof}
We prove the claim using induction on $d'$. The base case is when $d'=1$. When $d'=1$,  
$(1-\epsilon'-(1-\epsilon')^{d'}) \vert W\vert=0$ and empty set is the required set $W'$ of labels. 
Now consider the induction step. We assume that the claim is true for any $1\leq d''<d'$.  If 
at least $(1-\epsilon')\vert W\vert$ labels in $W$ are rich then in Step~\ref{step:rich:computing}, 
the procedure {\sc Mark-Interval} computes an independent set $S$ such that $\Lambda(S)$ 
is the set of all rich labels in $H$ and vertices in $S$ is marked in Step~\ref{step:rich:marking}. 
This implies that at leasts $(1-\epsilon')\vert W\vert \geq (1-\epsilon'-(1-\epsilon')^{d'}) \vert W\vert$ labels in $W$
are realizable in $H[X']$. Now we are in the case where strictly less than $(1-\epsilon')\vert W\vert$ labels in 
$W$ are rich. That is, the number of poor labels contained in $W$ appearing on the vertices of $H$ is at least $\epsilon'\vert W \vert$. Let 
$U$ be an independent set in $H$ such that $\Lambda(U)=W$ and let $U_p$ be the subset of $U$ which are 
labeled with poor labels from $H$. Notice that $\vert U_p\vert\geq \epsilon'\vert W \vert$. In Step~\ref{step:poor:marking}, procedure  {\sc Mark-Interval} 
marks all the intervals in $U_p$. Let $[a_1,b_1],\ldots,[a_{\ell},b_{\ell}]$ be the set of intervals in $U_p$ 
such that $a_1<b_1<a_2<b_2<\ldots < b_{\ell}$. All the intervals in $U\setminus U_p$ are disjoint 
from $U_p$.  That is, there exists a family of intervals $\{V_0,V_1,\ldots,V_{\ell}\}$ such that $\bigcup_{i=0}^{\ell} V_i= U \setminus U_p$
and for any $i\in \{0,\ldots,\ell\}$, the intervals in $V_i$ are contained in $(b_i,a_{i+1})$, where $b_0=0$ 
and $a_{\ell+1}=a$. The recursive procedure  {\sc Mark-Interval} on input $(H,d')$ calls recursively with 
inputs $(H[Y_{b_i,a_{i+1}}],d'-1)$, $i\in \{0,\ldots,\ell\}$. Here $V_i\subseteq Y_{b_i,a_{i+1}}$. 
Let $W_i=\Lambda(V_i)$. Notice that $W_i\cap W_j=\emptyset$ for $i\neq j$ 
and $\Lambda(U_p)\cup \bigcup_{i=0}^{\ell} W_i =W$. 
By induction hypothesis,  
for any $i\in\{0,\ldots,\ell\}$, there exists $W_i' \subseteq W_i \subseteq W$ such that 
$\vert W_i' \vert \geq (1-\epsilon'-(1-\epsilon')^{d'-1}) \vert W_i\vert$ and $W_i' $ is realizable in $H[X_i]$ 
where $X_i$ is the set of vertices marked by {\sc Mark-Interval}$(H[Y_{b_i,a_{i+1}}],d'-1)$. 
This implies that $\Lambda(U_p)\cup \bigcup_{i=0}^{\ell} W_i' $ is  realizable in $H[X]$. Now we lower bound the size of $\Lambda(U_p)\cup \bigcup_{i=0}^{\ell} W_i' $. 
\begin{eqnarray*}
\vert \Lambda(U_p)\cup \bigcup_{i=0}^{\ell} W_i'\vert &\geq& \vert \Lambda(U_p)\vert + \sum_{i=0}^{\ell}  (1-\epsilon'-(1-\epsilon')^{d'-1}) \vert W_i\vert \\
&=&\vert \Lambda(U_p)\vert + (1-\epsilon'-(1-\epsilon')^{d'-1}) \sum_{i=0}^{\ell}   \vert W_i\vert \\
&=& \vert \Lambda(U_p)\vert + (1-\epsilon'-(1-\epsilon')^{d'-1})\vert W \setminus \Lambda(U_p)\vert + \vert W \setminus \Lambda(U_p)\vert - \vert W \setminus \Lambda(U_p)\vert  \\
&\geq& \vert W \vert - (\epsilon'+(1-\epsilon')^{d'-1})\vert W \setminus \Lambda(U_p)\vert \\
&\geq& \vert W \vert - \epsilon' \vert W \vert -(1-\epsilon')^{d'-1}\vert W \setminus \Lambda(U_p)\vert \\
&\geq& \vert W \vert - \epsilon' \vert W \vert -(1-\epsilon')^{d'}\vert W \vert \qquad\quad(\mbox{Because $\vert W \setminus \Lambda(U_p)\vert < (1-\epsilon') \vert W \vert$})\\
&\geq&(1-\epsilon'-(1-\epsilon')^{d'}) \vert W\vert
\end{eqnarray*}
This completes the proof of the claim.
\end{proof}

Let $Z\subseteq [k]$ be a set of labels which is realizable in $G$. 
Now, by Claim~\ref{claim:markingbound}, we have that there exists $Z'\subseteq Z$ such that 
$Z'$ is realizable in $G[X]$ and $\vert Z'\vert \geq (1-\epsilon'-(1-\epsilon')^{\frac{1}{\epsilon'}\log \frac{1}{\epsilon'}}) \vert Z\vert\geq (1-2\epsilon')\vert Z \vert$.  This completes the proof.
\end{proof}
Now we are ready to prove the main result of this section.

\begin{lemma}
\label{lem:uni_lab_is}
Let $G$ be an interval graph,  $q\in {\mathbb N}$, $\epsilon>0$, and $\Gamma~:~V(G)\rightarrow [q]$ 
be a labelling function. Then there is a polynomial time algorithm which
finds a set $X\subseteq V(G)$ 
of cardinality $(q\cdot \chi(G))^{\OO(\frac{1}{\epsilon}\log \frac{1}{\epsilon})}$ such that $X$ 
\shortuli{$\epsilon$} of $G$.
\end{lemma}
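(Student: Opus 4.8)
The plan is to assemble this lemma from the three ingredients already established in this subsection: the colour-refinement reduction (Lemma~\ref{lem:equivlabel}), the greedy realizability of rich labels in interval graphs (Lemma~\ref{obs:rich:greedy:packing}), and the marking procedure of Lemma~\ref{lemma:interval_graph}. First I would run the polynomial-time algorithm of~\cite{Cormen01} to compute a proper colouring $\kappa : V(G)\rightarrow[\chi(G)]$, and define the refined labelling $\Lambda(u)=(\Gamma(u),\kappa(u))$, which after renaming values takes values in $[k]$ with $k=q\cdot\chi(G)$. The key structural property, already noted in the text, is that $\Lambda^{-1}(i)$ is an independent set of $G$ for every $i\in[k]$, because $\kappa$ is a proper colouring; this is precisely the hypothesis needed to invoke Lemma~\ref{lemma:interval_graph}.

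Next I would apply Lemma~\ref{lemma:interval_graph} to $(G,\Lambda)$ with parameter $\epsilon' := \epsilon/2$. This produces, in polynomial time, a set $X\subseteq V(G)$ of cardinality $k^{\OO(\frac{1}{\epsilon'}\log\frac{1}{\epsilon'})} = (q\cdot\chi(G))^{\OO(\frac{1}{\epsilon}\log\frac{1}{\epsilon})}$ (the change from $\epsilon'$ to $\epsilon$ only affects the hidden constant, since $\frac{1}{\epsilon'}\log\frac{1}{\epsilon'}=\OO(\frac1\epsilon\log\frac1\epsilon)$), such that every label set $Z\subseteq[k]$ realizable in $G$ admits a realizable subset $Z'\subseteq Z$ in $G[X]$ with $\vert Z'\vert\geq(1-2\epsilon')\vert Z\vert=(1-\epsilon)\vert Z\vert$.

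It then remains to translate ``realizable'' back into the definition of \shortuli{$\epsilon$} and to pass from $\Lambda$ to $\Gamma$. Let $S$ be an arbitrary independent set of $G$ and set $Z:=\Lambda(S)$; since $S$ witnesses it, $Z$ is realizable in $G$. By the previous paragraph there is a realizable subset $Z'\subseteq Z$ in $G[X]$ with $\vert Z'\vert\geq(1-\epsilon)\vert Z\vert$, that is, an independent set $S'$ of $G[X]$ with $\Lambda(S')=Z'\subseteq\Lambda(S)$ and $\vert\Lambda(S')\vert\geq(1-\epsilon)\vert\Lambda(S)\vert$. Hence $X$ is an \shortuli{$\epsilon$} for $(G,\Lambda)$, and by Lemma~\ref{lem:equivlabel} it is also an \shortuli{$\epsilon$} for $(G,\Gamma)$, as required. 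Both steps---computing $\kappa$ and running {\sc Mark-Interval}---run in polynomial time, so the overall algorithm is polynomial.

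There is no genuine obstacle remaining; the lemma is essentially a clean corollary of Lemmas~\ref{lem:equivlabel} and~\ref{lemma:interval_graph}. The only points requiring care are the bookkeeping between $\epsilon$ and $\epsilon'=\epsilon/2$ (so that the $1-2\epsilon'$ loss guaranteed by Lemma~\ref{lemma:interval_graph} becomes exactly the desired $1-\epsilon$), and substituting $k=q\cdot\chi(G)$ into the size bound $k^{\OO(\frac{1}{\epsilon'}\log\frac{1}{\epsilon'})}$ to obtain the stated $(q\cdot\chi(G))^{\OO(\frac{1}{\epsilon}\log\frac1\epsilon)}$. Everything else is unwinding definitions.
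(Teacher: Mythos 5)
Your proof is correct and follows essentially the same route as the paper's: refine $\Gamma$ to $\Lambda$ via a proper colouring, invoke Lemma~\ref{lemma:interval_graph} with $\epsilon'=\epsilon/2$, and transfer the guarantee back to $\Gamma$ via Lemma~\ref{lem:equivlabel}. The bookkeeping on $\epsilon'$ and the size bound matches the paper exactly.
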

\begin{proof}
We start by refining the labelling $\Gamma$ to $\Lambda$ such that $\Lambda$ is a proper coloring of $G$. As explained before, we first compute a proper 
coloring $\kappa : V(G)\rightarrow [\chi(G)]$ of $G$ in polynomial time~\cite{Cormen01}. 
Now using the proper coloing function $\kappa$ and the labelling $\Gamma$, we define labelling 
$\Lambda$ of $G$ as follows: for any $u\in V(G)$, $\Lambda(u)=(\Gamma(u),\kappa(u))$.  Now we set $\epsilon'=\frac{\epsilon}{2}$  and apply 
Lemma~\ref{lemma:interval_graph} on  $G$, 
$\Lambda$, $k=q\cdot \chi(G)$ and $\epsilon'$ to get a set $X\subseteq V(G)$ 
of cardinality $k^{\OO(\frac{1}{\epsilon'}\log \frac{1}{\epsilon'})}$ such that for any realizable set $Z\subseteq [k]$ in $G$, 
there is a realizable subset $Z'\subseteq Z$  of cardinality at least 
$(1-2\epsilon')\vert Z \vert$  in $G[X]$. That is, $X\subseteq V(G)$ is 
of cardinality $k^{\OO(\frac{1}{\epsilon}\log \frac{1}{\epsilon})}$ such that for any realizable set $Z\subseteq [k]$ in $G$, there is a realizable subset $Z'\subseteq Z$  of cardinality at least 
$(1-\epsilon)\vert Z \vert$  in $G[X]$. Note that a set $X$ is \shortuli{$\epsilon$} for $G$ if and only if for any realizable set $Z\subseteq [k]$ in $G$, there is a realizable subset $Z'\subseteq Z$  of cardinality at least 
$(1-\epsilon)\vert Z \vert$  in $G[X]$. This implies that $X$ is  \shortuli{$\epsilon$} for $(G,\Lambda)$. However, 
by Lemma~\ref{lem:equivlabel}, we know that if  $X$ is an \shortuli{$\epsilon$} for $(G,\Lambda)$ then it is also  an \shortuli{$\epsilon$} for $(G,\Gamma)$. This concludes the proof.
%
\end{proof}

\subsection{Disjoint Factors}

In this section, we give a PSAKS for the parameterized optimization problem {\sc Disjoint Factors (DF)}. 
To define this problem we first need to set up some definitions. 
For a string $L=a_1a_2\ldots a_n$ over an alphabet $\Sigma$, we use $L[i,j]$, where $1\leq i\leq j\leq n$, to denote the substring 
$a_i\ldots a_j$. In this section we would like to distinguish between two substrings 
$L[i,j]$ and $L[i',j']$, where $i\neq i'$ or $j\ne j'$, even if the string $L[i,j]$ is exactly same as 
the string $L[i',j']$. Thus we call $L'$ is a ``position substring'' of $L$, to emphasize 
$L'$ is substring of $L$ associated with two indices. We say two position substrings $L_1$ and 
$L_2$ are disjoint if they do not overlap (even at the starting or at the ending of the substrings). 
For example $L[i,j]$ and $L[j,j']$ are overlapping and not disjoint.  
We say that  a string $L'$ is a {\em string minor} of $L$, if $L'$ can be obtained from $L$ by deleting some 
position substrings of $L$. 
A {\em factor} of a string $L$ is a position substring 
of length at least $2$ which starts and ends with the same letter (symbol). A factor is called {\em $x$-factor} if the factor starts and end at a letter $x\in \Sigma$. 
Two factors are called distinct if they start at different letters. A set $\SSS$ of factors in $L$ is called 
a set of {\em disjoint factors} if each pair of factors
in $\SSS$ are {\em disjoint and distinct}. That is, no two factors in $\SSS$ start at the same letter and 
pairwise they do not overlap. This immediately implies that for any string $L$ over $\Sigma$, the 
cardinality of any set of {\em disjoint factors} is at most $|\Sigma|$. 
For a set of disjoint factors $\SSS$ of $L$ and $\Sigma'\subseteq \Sigma$, we
say that $\SSS$ is a $\Sigma'$-factor if for each
element $x$ in $\Sigma'$, there is a factor in $\SSS$, starting and ending at $x$.
%

In the {\disfac} problem, introduced in \cite{BodlaenderTY09}, input is an alphabet $\Sigma$ and a string $L$ in $\Sigma^*$, the task is to find a maximum cardinality set of disjoint factors in $L$.
%
%
Bodlaender {\em et al.}~\cite{BodlaenderTY09} proved that {\disfac} is \NP-complete by reduction from $3$-SAT, and also that {\disfac} parameterized by $|\Sigma|$ does not admit a polynomial kernel unless \coNPsubNPbyPoly. The reduction of Bodlaender {\em et al.} started from a {\em gap} variant of $3$-SAT where every variable appears in at most a constant number of clauses~\cite{Trevisan01} shows that  {\disfac} is in fact \APX{}-hard, which means that it does not admit a PTAS unless \classP{} $=$ \NP{}. We will consider {\disfac} when parameterized by the alphabet size $|\Sigma|$. Formally, the parameterized optimization problem that we consider is defined as follows. 

\[
    DF(L,|\Sigma|,\SSS)= 
\begin{cases}
    -\infty & \text{if $\SSS$ is not a set of disjoint factors of $L$} \\
    |\SSS| & \text{otherwise}
\end{cases}
\]
We remark that in the original definition of {\df} of Bodlaender {\em et al.}~\cite{BodlaenderTY09}, the objective is simply to decide whether it is possible to find $|\Sigma|$ disjoint factors in the input string $L$.
The variant of the problem discussed here is the natural maximization variant of the original problem. Next we give a PSAKS for this problem, in other words a polynomial size $\alpha$-approximate kernel for any $\alpha>1$.

\begin{definition}
 Let ${\cal S}=\{S_1,\ldots,S_t\}$ be a set of mutually disjoint position substrings of  a string $L$. 
 Then we use $L/\SSS$ to denote the string obtained from $L$ after deleting all position substrings 
 in $\SSS$. For example if $L=a_1\cdots a_{11}$ and $\SSS=\{L[2,4], L[7,9]\}$, then 
 $L/\SSS=a_1a_5a_6a_{10}a_{11}$. 
\end{definition}

The following lemma states that we can pull back a solution of a string from a solution of its 
string minor.
\begin{lemma}
 \label{prop:minorDF}
Let $L$ be a string over an alphabet $\Sigma$ and $\SSS$ be a set containing distinct position substrings (non-overlapping strings).  
Let $L'$ be a string minor of  $L$ obtained by deleting position substrings in $\SSS$.  Then, there is a polynomial time algorithm, given $L,L',\SSS$ and a solution $\FF'$ of
$(L', \vert \Sigma\vert)$, 
computes a solution $\FF$ of $(L,\vert \Sigma \vert)$ of cardinality $\vert \FF'\vert$. 
\end{lemma}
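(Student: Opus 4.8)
The plan is to exhibit an explicit order-preserving embedding of the positions of $L'$ into the positions of $L$ and to push the solution $\FF'$ through it. First I would record the structure of a string minor: since $L'$ is obtained from $L$ by deleting the pairwise non-overlapping position substrings in $\SSS$, the indices of $L'$ are in a natural order-preserving correspondence with those indices of $L$ that are not covered by any member of $\SSS$. Concretely, writing $L=a_1\cdots a_n$, let $P\subseteq\{1,\dots,n\}$ be the set of positions of $L$ that are \emph{not} inside any position substring in $\SSS$; listing $P$ in increasing order as $p_1<p_2<\cdots<p_m$, we have $L'=a_{p_1}a_{p_2}\cdots a_{p_m}$, and I define the strictly increasing map $\phi:\{1,\dots,m\}\to\{1,\dots,n\}$ by $\phi(t)=p_t$. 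This map is computable in polynomial time from $L,L'$ and $\SSS$, it is injective, and crucially the letter of $L'$ at position $t$ equals the letter of $L$ at position $\phi(t)$.

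Next I would translate each factor of $\FF'$ through $\phi$. Given a factor $f=L'[i',j']$ in $\FF'$, so $i'<j'$ and $L'$ has the same letter $x$ at positions $i'$ and $j'$, define $\psi(f)=L[\phi(i'),\phi(j')]$. Since $\phi$ is strictly increasing we get $\phi(i')<\phi(j')$, so $\psi(f)$ is a genuine position substring of $L$ of length at least $2$; and since $\phi$ preserves letters, $\psi(f)$ starts and ends with the same letter $x$. Hence $\psi(f)$ is an $x$-factor of $L$. Put $\FF=\{\psi(f):f\in\FF'\}$; this clearly takes polynomial time.

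Finally I would verify that $\FF$ is a set of disjoint factors of $L$ with $|\FF|=|\FF'|$. Distinctness and the cardinality equality are immediate: distinct factors of $\FF'$ start at different letters, and $\psi$ preserves the starting letter, so the factors of $\FF$ start at pairwise different letters; in particular $\psi$ is injective, whence $|\FF|=|\FF'|$. For disjointness, take $f_1=L'[i'_1,j'_1]$ and $f_2=L'[i'_2,j'_2]$ in $\FF'$ with, say, $j'_1<i'_2$ (recall that disjointness in $\FF'$ forbids overlap even at the endpoints, so the endpoint intervals are strictly separated); then $\phi(j'_1)<\phi(i'_2)$ by strict monotonicity of $\phi$, so $\psi(f_1)=L[\phi(i'_1),\phi(j'_1)]$ and $\psi(f_2)=L[\phi(i'_2),\phi(j'_2)]$ do not overlap in $L$. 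Therefore $\FF$ is a valid set of disjoint factors, so $DF(L,|\Sigma|,\FF)=|\FF|=|\FF'|=DF(L',|\Sigma|,\FF')$, which is exactly what is claimed.

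I do not expect a real obstacle here; the single point that needs care is that ``disjoint'' in this paper forbids even shared endpoints, so the argument must combine the \emph{strict} inequality $j'_1<i'_2$ with the \emph{strict} monotonicity of $\phi$ — mapping only the two endpoints of each factor and relying on order preservation is precisely what makes disjointness carry over, boundaries included. A minor bookkeeping point is to confirm that $\psi(f)$ still has length at least $2$, which again follows from strict monotonicity of $\phi$.
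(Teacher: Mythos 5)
Your proof is correct and follows essentially the same approach as the paper's (which is stated much more tersely): you make explicit the order-preserving position map $\phi$ that the paper alludes to when it "associates indices $i$ and $j$ in $L$" with each factor of $L'$, and you verify that pushing factors through $\phi$ preserves being an $x$-factor, pairwise disjointness, and distinctness. The extra care you take with strict monotonicity to preserve endpoint-disjointness and length $\geq 2$ is exactly the right bookkeeping.
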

\begin{proof}
The proof follows from the fact that for each string $w$ in $\FF'$ we can associate indices $i$ and $j$ in $L$ such that $L[i,j]$ is an $x$-factor if and only if $w$ is an $x$-factor in $L'$. Clearly, the algorithms runs in polynomial time. 
\end{proof}

\begin{theorem}
{\disfac} parameterized by $|\Sigma|$ admits a PSAKS. 
\end{theorem}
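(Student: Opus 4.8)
The plan is to view \df{} through the lens of labelled independent sets in interval graphs, so that Lemma~\ref{lem:uni_lab_is} can be invoked essentially as a black box. Fix $\alpha>1$ and set $\epsilon=1-\tfrac1\alpha\in(0,1)$. Given an input $(L,|\Sigma|)$ with $L=a_1\cdots a_n$, call an $x$-factor $L[i,j]$ \emph{short} if $j$ is the smallest index greater than $i$ with $a_j=x$; there are at most $n$ short factors. Replacing every factor $L[i,j]$ in a solution by the short factor $L[i,j']$ that starts at the same position only shrinks it ($j'\le j$) and keeps its starting letter, so any set of $t$ pairwise disjoint distinct factors yields a set of $t$ pairwise disjoint distinct \emph{short} factors. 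Build, in polynomial time, the interval graph $G$ whose vertices are the short factors of $L$, with $L[i,j]$ represented by the interval $[i,j]$, together with the labelling $\Gamma$ that assigns to $L[i,j]$ its common first/last letter $a_i$ (identifying $\Sigma$ with $\{1,\dots,|\Sigma|\}$). Since an independent set of $G$ is precisely a set of pairwise disjoint short factors, and keeping one vertex per label turns it into a set of disjoint \emph{distinct} factors of size $|\Gamma(S)|$, we get $OPT_{DF}(L,|\Sigma|)=\max\{\,|\Gamma(S)|: S\text{ an independent set of }G\,\}$.

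The key structural point is that $\chi(G)$ is small. As $G$ is an interval graph, $\chi(G)$ equals the maximum number of short factors sharing a common position $p$. A short $x$-factor $[a,b]$ with $a<p<b$ forces $a_p\neq x$ (the letter $x$ does not recur strictly between $a$ and $b$), and for a fixed letter $y$ two short $y$-factors cannot both strictly contain $p$; together with the at most two short factors having $p$ as an endpoint (both with letter $a_p$), this gives $\chi(G)\le|\Sigma|+1$. Now apply Lemma~\ref{lem:uni_lab_is} to $(G,|\Sigma|,\Gamma)$ with parameter $\epsilon$ to obtain, in polynomial time, a set $X\subseteq V(G)$ that is an \shortuli{$\epsilon$} of $G$, of size $(|\Sigma|\cdot\chi(G))^{\OO(\frac1\epsilon\log\frac1\epsilon)}=|\Sigma|^{\OO(\frac1\epsilon\log\frac1\epsilon)}$. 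The reduced instance is obtained from $L$ by keeping only the positions that are an endpoint of some factor in $X$ and deleting all others; call the resulting string $L'$ and output $(L',|\Sigma|)$. Each factor in $X$ contributes two endpoints, so $|L'|\le 2|X|=|\Sigma|^{\OO(\frac1\epsilon\log\frac1\epsilon)}$, which is polynomial in $|\Sigma|$ for fixed $\alpha$.

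For correctness, the solution lifting algorithm is the one from Lemma~\ref{prop:minorDF}: $L'$ is a string minor of $L$ (it is obtained by deleting a family of pairwise non-overlapping position substrings), so any set $\FF'$ of disjoint factors of $L'$ lifts in polynomial time to a set $\FF$ of disjoint factors of $L$ with $|\FF|=|\FF'|$; applying this to an optimal solution of $(L',|\Sigma|)$ shows $OPT_{DF}(L,|\Sigma|)\ge OPT_{DF}(L',|\Sigma|)$. For the reverse estimate, take an optimal solution of $(L,|\Sigma|)$ of size $t$, replace it by $t$ disjoint distinct short factors, and view this as an independent set $S$ of $G$ with $|\Gamma(S)|=t$. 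The \shortuli{$\epsilon$} property of $X$ gives an independent set $S'$ of $G[X]$ with $\Gamma(S')\subseteq\Gamma(S)$ and $|\Gamma(S')|\ge(1-\epsilon)t$; selecting one vertex per label from $S'$ yields a set $S''\subseteq X$ of pairwise disjoint distinct short factors with $|S''|\ge(1-\epsilon)t$. Every factor in $S''$ has both endpoints among the kept positions, and deleting interior positions destroys neither disjointness nor the starting letters, so $S''$ maps to a set of pairwise disjoint distinct factors of $L'$ of the same size; hence $OPT_{DF}(L',|\Sigma|)\ge(1-\epsilon)\,OPT_{DF}(L,|\Sigma|)$. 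Combining the two estimates with $|\FF|=|\FF'|$, for every solution $\FF'$ of the reduced instance the lifted solution $\FF$ satisfies $\frac{|\FF|}{OPT_{DF}(L,|\Sigma|)}\cdot\frac1{1-\epsilon}\ge\frac{|\FF'|}{OPT_{DF}(L',|\Sigma|)}$, i.e.\ the reduction together with the lifting is an $\alpha$-approximate polynomial-time pre-processing algorithm with $\alpha=\frac1{1-\epsilon}$, and its output size is polynomial in $|\Sigma|$. Since $\alpha>1$ was arbitrary, \df{} parameterized by $|\Sigma|$ admits a PSAKS.

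\noindent I expect the main obstacle to be exactly the chromatic-number bound: the interval graph of \emph{all} factors of $L$ can have chromatic number $\Theta(n)$ (take $L=xx\cdots x$), which would make Lemma~\ref{lem:uni_lab_is} vacuous; restricting to short factors, while certifying that this restriction changes neither the optimum nor the lifting, is what forces $\chi(G)=\OO(|\Sigma|)$. The second thing to get right — more routine — is the translation of the combinatorial object $X$, which is a collection of factors rather than of letters or positions, back into a genuine \df{} instance, accomplished by retaining only the $\le 2|X|$ endpoint positions and invoking the string-minor lifting lemma.
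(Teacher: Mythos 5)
Your proof is correct and follows essentially the same route as the paper's: build the interval graph on the ``short'' factors (the paper phrases these as pairs $i<j$ with $a_i=a_j$ and no occurrence of $a_i$ strictly in between, which is the same set), bound its chromatic number by $\OO(|\Sigma|)$, apply Lemma~\ref{lem:uni_lab_is} to obtain an \shortuli{$\epsilon$}, delete all positions that are not endpoints of a retained interval, and lift solutions back via Lemma~\ref{prop:minorDF}. The only cosmetic difference is your chromatic-number bound of $|\Sigma|+1$ versus the paper's $2|\Sigma|$ (via ``each point lies in at most two same-label intervals, and interval graphs are perfect''), which does not affect the kernel size.
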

\begin{proof}
We need to show that for any $\epsilon>0$, there is a polynomial sized $(1-\epsilon)$-approximate kernel 
for  {\disfac}. Towards that given an instance of {\disfac}, we will construct a labelled interval graph $G$
and use \shortuli{$\epsilon$} of $G$ 
to reduce the length of the input string. 
Let $(L,|\Sigma|)$ be an input instance of {\disfac} and  $k=\vert \Sigma \vert$. Now we construct an instance $(G, |\Sigma|,\Gamma)$ of \shortULI{$\epsilon$}. We define  the graph and the labelling function 
 $\Gamma : V(G)\rightarrow \Sigma$  as follows. 
Let $L=a_1a_2\ldots a_n$ where $a_i\in \Sigma$.   For any $i\neq j$ such that $a_i=a_j$ and 
$a_r\neq a_i$ for all $i<r<j$, we construct an interval $I_{i,j}=[i,j]$ on real line and label it with $a_i$. Observe that since $a_i=a_j$, we have that it is an $a_i$-factor.  
The set of intervals constructed form the interval representation of $G$. 
Each interval in $G$ corresponds to a factor in $L$.  
By construction, 
we have that any point belongs to at most two intervals of the same label. 
This implies that the cardinality of largest clique in $G$, and hence $\chi(G)$, 
is upper bounded by $2\vert \Sigma \vert$ (because interval graphs are perfect graphs). 
%
Now we apply Lemma~\ref{lem:uni_lab_is} on input 
$(G,\vert \Sigma\vert, \Gamma)$ and $\epsilon$. Let $X\subseteq V(G)$
be the output of the algorithm. By Lemma~\ref{lem:uni_lab_is}, we have that $\vert X \vert = \vert \Sigma\vert^{\OO(\frac{1}{\epsilon}\log \frac{1}{\epsilon})}$. Let $P=\{q : q \mbox{ is an endpoint of an interval in } $X$\}$ and ${\cal S}=\{L[j,j]~:~j\in [n]\setminus P\}$. The reduction algorithm will output $(L'=L/\SSS,\vert \Sigma\vert)$ as the 
reduced instance. Since $\vert P\vert=\vert \Sigma \vert^{\OO(\frac{1}{\epsilon}\log \frac{1}{\epsilon})}$, we have that the 
length of $L/\SSS$ is at most $\vert \Sigma \vert^{\OO(\frac{1}{\epsilon}\log \frac{1}{\epsilon})}$. 

The solution lifting algorithm is same as the one mentioned in Lemma~\ref{prop:minorDF}. 
Let ${\cal F}'$ be a set of disjoint factors of the reduced instance $(L',\vert \Sigma\vert)$ and let ${\cal F}$ be the output of 
solution lifting algorithm. By Lemma~\ref{prop:minorDF}, we have that  $\vert{\cal F}\vert=\vert{\cal F}'\vert$. 
To prove the correctness 
we need to prove the approximation guarantee of 
${\cal F}$. 
Towards that we first show that $OPT(L/\SSS,\vert \Sigma \vert) \geq (1-\epsilon)OPT(L,\vert \Sigma \vert)$. 
Let ${\cal P}$ be a set of maximum sized disjoint factors 
in $L$. Without loss of generality we can assume that for each factor $L[i,j]$ in ${\cal P}$, $L[i']\neq L[i]$ for all $i<i'<j$. 
This implies that each factor in ${\cal P}$ corresponds to an interval in $G$. Moreover, these set of intervals $U$ 
(intervals corresponding to ${\cal P}$) form 
an independent set in $G$ with distinct labels. By Lemma~\ref{lem:uni_lab_is}, there is an independent set 
$Y$ in $G[X]$ such that $\Gamma(Y)\subseteq \Gamma(U)$ and $\vert \Gamma(Y)\vert \geq (1-\epsilon) \vert \Gamma(U)\vert$. Each interval in $Y$ corresponds to a factor in $L/\SSS$ and its label corresponds to the starting symbol of the factor. This implies that  $L/\SSS$ has a set of disjoint factors of cardinality at least 
$(1-\epsilon)\vert {\cal P} \vert=(1-\epsilon)OPT(L,\vert \Sigma \vert)$.  Hence, we have 
$$\frac{\vert{\cal F}\vert}{OPT(L,\vert \Sigma \vert)}\geq (1-\epsilon) \frac{\vert{\cal F}'\vert}{OPT(L/\SSS,\vert \Sigma \vert)}.$$
This concludes the proof. 
\end{proof}

\subsection{Disjoint Cycle Packing}
In this subsection we  design a PSAKS for the {\sc Disjoint Cycle Packing ($CP$)} problem. 
The parameterized optimization problem {\sc Disjoint Cycle Packing ($CP$)} is formally defined as,
\[
    CP(G,k,P)=\left\{ 
\begin{array}{rl}
    -\infty & \text{if $P$ is not a set of vertex disjoint cycles in $G$} \\
    \min\left\{|P|,k+1\right\} & \text{otherwise}
\end{array}\right.
\]

%
%
%

We start by defining feedback vertex sets of a graph. Given a graph $G$ and a vertex subset $F\subseteq V(G)$, $F$ is called a {\em feedback veretx set} of $G$ if $G-F$ is a forest. 
We will make use of the following well-known Erd\H{o}s-P\'{o}sa Theorem relating feedback vertex set and the number of vertex disjoint  cycles in a graph. 
\begin{lemma}[\cite{ErdosPosa1965}]
\label{lemma:erdosposa}
There exists a constant $c$ such that for each positive integer $k$, every (multi) graph either contains $k$ vertex disjoint cycles or it has a feedback vertex set of size at most $ck\log k$. Moreover, there is a polynomial time algorithm that takes a graph $G$ and an integer $k$ as input, and outputs either $k$ vertex disjoint cycles or a feedback vertex set of size at most $ck\log k$. 
\end{lemma}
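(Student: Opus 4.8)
The plan is to derive this from the classical argument of Erd\H{o}s and P\'osa~\cite{ErdosPosa1965} (see also~\cite{Diestel}). The first step is a \emph{cleanup} reduction: repeatedly delete vertices of degree at most $1$ (they lie on no cycle) and suppress vertices of degree exactly $2$ (delete $v$ and join its two neighbours by a new edge, possibly a parallel edge or a loop), until the resulting multigraph $H$ has minimum degree at least $3$. This runs in polynomial time, $V(H)\subseteq V(G)$, a cycle of $H$ yields a cycle of $G$ and vertex-disjoint cycles in $H$ yield vertex-disjoint cycles in $G$ (the suppressed degree-$2$ vertices lying on different $H$-edges are distinct), and a feedback vertex set of $H$ is a feedback vertex set of $G$ of the same size. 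Hence it suffices to prove the statement for $H$, so we may assume $G$ itself has minimum degree at least $3$.

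Second, I would prove by induction on $k$ the sharper claim: \emph{every multigraph of minimum degree at least $3$ on more than $h(k)$ vertices contains $k$ vertex-disjoint cycles}, for a suitable $h(k)=O(k\log k)$. Given this, the theorem follows: if the cleaned graph $H$ has at most $h(k)$ vertices then $V(H)$ is itself a feedback vertex set of $G$ of size $O(k\log k)$; otherwise the claim produces $k$ vertex-disjoint cycles in $H$, hence in $G$.

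Third, the induction. For $k=1$, minimum degree at least $2$ already forces a cycle. For the step I would use the standard girth bound: a graph of minimum degree at least $3$ on $n$ vertices has a cycle of length $O(\log n)$, since the ball of radius $\lfloor (g-1)/2\rfloor$ about any vertex is a tree whose root has at least $3$ children and every other vertex at least $2$, forcing $n\ge 2^{\Omega(g)}$; such a short cycle $C$ is found by BFS in polynomial time. Delete $V(C)$, re-run the cleanup to restore minimum degree $3$, and apply the induction with parameter $k-1$ to get $k-1$ disjoint cycles, which together with $C$ give $k$. For the bookkeeping one needs the re-cleaned graph to still have more than $h(k-1)$ vertices; with $h(k)\sim ck\log k$ and $c$ large, the ``honest'' deletion of $O(\log n)$ vertices per level is absorbed because $h(k)-h(k-1)=\Theta(\log k)$ while $\log n$ is comparatively tiny at that scale.

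The main obstacle is precisely this last bookkeeping: deleting a short cycle can still destroy many vertices during the re-cleanup, since a short cycle may be incident to many edges and degree-$2$ suppressions can cascade through the attached forest part. The resolution is that whenever the cleanup loses a lot, the graph must contain a dense gadget — many vertices of the attached forest each sending two or more edges into the small set $V(C)$ together with the surviving core, which by pigeonhole produces either many parallel edges or many short cycles through a common vertex, and hence $k$ vertex-disjoint cycles outright (for instance, two vertices joined by $2k$ internally vertex-disjoint paths already contain $k$ disjoint cycles). Balancing the ``core small, so $V(H)$ is a small feedback vertex set'' case against the ``core large, so extract short cycles'' case, while charging cleanup losses to such gadgets, is where the real work lies. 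The algorithmic statement then comes for free: the cleanup, the BFS search for a short cycle, the gadget detection, and the recursion are all constructive and polynomial, so the procedure outputs either $k$ vertex-disjoint cycles or a feedback vertex set of size $O(k\log k)$.
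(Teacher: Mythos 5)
The paper does not prove this lemma at all: it is invoked with a citation to Erd\H{o}s and P\'osa~\cite{ErdosPosa1965}, so a citation (or a pointer to, say, Diestel's constructive proof) is really all that is required here. That said, your sketch does follow the classical route — reduce to minimum degree $3$ by repeatedly deleting degree-$\le 1$ vertices and suppressing degree-$2$ vertices, then prove by induction that a min-degree-$3$ multigraph on more than $h(k)=O(k\log k)$ vertices has $k$ disjoint cycles by extracting a short cycle of length $O(\log n)$ and recursing — and you correctly identify the algorithmic statement as following for free once the combinatorics is constructive.

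However, the proof as written has a genuine gap, precisely at the step you yourself flag. You assert that whenever the re-cleanup after deleting the short cycle $C$ loses many vertices, the graph contains a ``dense gadget'' (many parallel paths, etc.)\ that yields $k$ disjoint cycles outright, but you give no argument for that implication, and it is not obviously true. Two concrete problems with the hand-wave: first, suppression of a degree-$2$ vertex does \emph{not} change the degrees of its two neighbours, so suppressions by themselves do not cascade; the actual cascade comes from successive deletions of degree-$\le 1$ vertices, which is a different mechanism from the one you describe and which you do not analyse. Second, a min-degree-$3$ vertex that has just a \emph{single} edge into $V(C)$ already drops to degree $2$ and gets suppressed, so the cleanup can be large without any vertex ``sending two or more edges into $V(C)$''; the pigeonhole step of your proposed resolution therefore does not apply. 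Controlling this cleanup loss — for instance by exploiting that $C$ is a \emph{shortest} cycle so that (for girth $\ge 5$) each outside vertex has at most one $C$-edge, and then carefully bounding the residual shrinkage — is the entire technical content of the Erd\H{o}s--P\'osa bound, not an afterthought. As it stands, the sketch names the right ingredients but does not establish the lemma, and for the purposes of this paper you should simply cite it.
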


 The following lemma allows us to reduce the size of the input 
graph $G$ if it has a small feedback vertex set. 

\begin{lemma}
 \label{lemma:leaf}
 Let $(G,k)$ be an instance of \CP{} and $F$ be a feedback vertex set of $G$. Suppose there are strictly  
 more than $\vert F\vert^2(2\vert F\vert+1)$ vertices in $G-F$ whose degree in $G-F$ is at most $1$. 
Then there is a polynomial time 
algorithm ${\cal A}$ that, given an instance $(G,k)$ and a feedback vertex set satisfying the above properties,
  returns a graph $G'$ (which is a minor of $G$) such that $OPT(G,k)=OPT(G',k)$, $\vert V(G')\vert =\vert V(G)\vert -1$ and $F \subseteq V(G')$ is still a feedback vertex set of $G'$.  
Further, given a cycle packing ${\cal S'}$ in $G'$, there is a polynomial time algorithm ${\cal B}$ which 
outputs a cycle packing ${\cal S}$ in $G$ such that $\vert {\cal S}\vert=\vert {\cal S'}\vert$. 
\end{lemma}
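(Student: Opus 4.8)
The plan is to prove Lemma~\ref{lemma:leaf} by a case analysis on the local structure around the degree-$\leq 1$ (in $G-F$) vertices: in each case we either \emph{delete} one such vertex, \emph{suppress} one such vertex (contract an incident edge), or — in the remaining ``dense'' case, where the hypothesis on the number of such vertices is actually used — find, via a pigeonhole argument on pairs of attachment points in $F$, a vertex whose deletion changes nothing. Two structural facts drive everything. First, since $G-F$ is a forest, every cycle of $G$ meets $F$, so every cycle packing of $G$ has at most $|F|$ cycles; in particular $OPT(G,k)\le k+1$ and $\le |F|$ honest cycles. Second, a vertex $v$ of $G-F$ with $\deg_{G-F}(v)\le 1$ is a leaf or isolated vertex of the forest, so it has at most one neighbour outside $F$ and hence $|N_G(v)\cap F|\ge \deg_G(v)-1$.

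Easy cases first. If some such $v$ has $\deg_G(v)\le 1$, then $v$ lies on no cycle; take $G'=G-v$. Every cycle packing of $G$ is one of $G'$ and conversely, so $OPT(G',k)=OPT(G,k)$, and the lifting algorithm $\mathcal B$ is the identity. If some such $v$ has $\deg_G(v)=2$, obtain $G'$ by suppressing $v$, i.e.\ contracting one of its two incident edges (keeping a parallel edge if one is created — Lemma~\ref{lemma:erdosposa} already allows multigraphs). There is a bijection between the cycles of $G$ and those of $G'$ (a cycle through $v$ corresponds to the cycle through the new edge; all others are unchanged) which preserves vertex-disjointness, so cycle packings of $G$ and $G'$ correspond bijectively with equal cardinalities; hence $OPT(G',k)=OPT(G,k)$, and $\mathcal B$ undoes the suppression on the modified cycle. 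In both easy cases $G'$ is a minor of $G$ with $|V(G')|=|V(G)|-1$; since $v$ is a leaf or isolated vertex of the forest, deleting it or contracting one of its incident edges leaves $G'-F$ a forest, so $F\subseteq V(G')$ is still a feedback vertex set.

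Main case. Otherwise every one of the $>|F|^2(2|F|+1)$ vertices $v$ with $\deg_{G-F}(v)\le 1$ has $\deg_G(v)\ge 3$, hence at least two neighbours in $F$. Assign to each such $v$ the pair $\pi(v)\subseteq F$ of its two smallest $F$-neighbours in a fixed ordering of $F$. There are at most $\binom{|F|}{2}\le |F|^2$ possible pairs, so some pair $\{a,b\}$ has $\pi(v)=\{a,b\}$ for more than $2|F|+1$ of these vertices; call them $v_1,\dots,v_m$, $m\ge 2|F|+2$, each adjacent to both $a$ and $b$. Put $G'=G-v_m$; then $G'$ is a minor of $G$ with one fewer vertex, $F$ is still a feedback vertex set, $OPT(G',k)\le OPT(G,k)$ is clear, and $\mathcal B$ is the identity. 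For the reverse inequality, take a cycle packing $\mathcal S$ of $G$ with $|\mathcal S|=OPT(G,k)\le |F|$: if $v_m\notin V(\mathcal S)$ then $\mathcal S$ already lives in $G'$; otherwise $v_m$ lies on a unique cycle $C\in\mathcal S$, which traverses $v_m$ between two neighbours $p,q$, at least one of which, say $q$, is in $F$. The plan is to reroute the segment $p\,v_m\,q$ of $C$ through some $v_j$ with $j<m$ that avoids $V(\mathcal S)$ and is adjacent to the two ends of the path $C-v_m$, which — counting against the at most $|F|$ cycles of $\mathcal S$ — leaves such a $v_j$ available; this yields a cycle packing of $G'$ of the same size, so $OPT(G',k)=OPT(G,k)$. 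Every transformation is clearly polynomial-time, and $\mathcal B$, returning $\mathcal S'$ unchanged, outputs a cycle packing of $G$ of cardinality $|\mathcal S'|$.

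The main obstacle is exactly the rerouting step in the dense case: the pigeonhole classes must be coarse enough that the polynomial bound $|F|^2(2|F|+1)$ suffices, yet fine enough that whenever an optimal packing uses $v_m$, the cycle through it can be repaired with a surviving vertex of the same class. The two delicate points are (i) that $C$ may traverse $v_m$ through a pair of $F$-neighbours other than $\{a,b\}$, or through the unique neighbour of $v_m$ inside the forest $G-F$ — the latter needing a separate sub-argument (shortcutting $C$ along a second $F$-neighbour of $v_m$ that lies on $C$, or, failing that, contracting the forest-edge at $v_m$ rather than deleting $v_m$), and (ii) bounding how many of $v_1,\dots,v_{m-1}$ a single packing can use, which is where the factor $2|F|+1$ (rather than $|F|+1$) is spent.
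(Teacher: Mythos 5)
Your overall shape --- pigeonhole on pairs in $F$, a $2|F|+1$ budget, reroute the one cycle that touches the removed vertex --- is the right shape, but the pigeonhole is miscalibrated in a way that breaks the rerouting step, and you have correctly noticed but not actually resolved the resulting problem. You fix, for each leaf/isolated vertex $v$ of $G-F$, one \emph{canonical} pair $\pi(v)=\{a,b\}$ (its two smallest $F$-neighbours), then pigeonhole to find a big class $v_1,\dots,v_m$ with $\pi(v_i)=\{a,b\}$ and delete $v_m$. But when an optimal packing uses $v_m$, the cycle $C$ through $v_m$ is free to enter and leave $v_m$ through \emph{any} two of its $F$-neighbours, say $\{p,q\}\neq\{a,b\}$ (or through one $F$-neighbour and the forest edge). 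The substitutes $v_1,\dots,v_{m-1}$ produced by your pigeonhole are guaranteed adjacent to $a$ and $b$, not to $p$ and $q$, so there is no reason one of them can replace $v_m$ on $C$. This is exactly your obstacle (i), and the proposed repair (``shortcut along a second $F$-neighbour of $v_m$ on $C$'') only addresses the forest-edge sub-case, not the sub-case where $C$ uses the wrong $F$-pair; that sub-case is what actually fails.

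The paper avoids this by making the marking \emph{uniform over all pairs}: for \emph{every} $(u,v)\in F\times F$ (ordered, and allowing $u=v$, which handles the double-edge case in multigraphs created by Reduction Rule~\ref{rule:multiedge}), let $L(u,v)$ be all low-degree forest vertices adjacent to both $u$ and $v$, and mark $\min(|L(u,v)|,\,2|F|+1)$ vertices in each $L(u,v)$. Since at most $|F|^2(2|F|+1)$ vertices get marked, there is an \emph{unmarked} low-degree vertex $w$. The crucial invariant is: for \emph{whichever} pair $(x,y)$ the cycle through $w$ happens to use, $w\in L(x,y)$ and $w$ is unmarked, so $|L(x,y)|>2|F|+1$ and $L(x,y)$ contains $2|F|+1$ marked vertices, at most $2|F|$ of which a cycle packing can occupy (your charging argument, which is correct, is the same as the paper's). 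A second deviation: in the degree-$1$-in-$G-F$ case the paper \emph{contracts} the unique forest edge at $w$ rather than deleting $w$; this matters because if the cycle through $w$ uses that forest edge, deletion would destroy the cycle with no obvious substitute, whereas contraction keeps it, and the only case where the contraction is problematic (two different cycles hitting the two endpoints of the contracted edge) is precisely the case where the cycle through $w$ uses two $F$-neighbours and can be rerouted via the marking. If you replace your canonical-pair pigeonhole with the paper's per-pair marking and replace deletion by contraction in the degree-$1$ case, your argument goes through.
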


\begin{proof}
The algorithm ${\cal A}$ works as follows. 
 Let $|F|=\ell$ and  for $(u,v)\in F\times F$, let $L(u,v)$ be the set of vertices of degree at most $1$ in $G-F$ such that each $x\in L(u,v)$ is adjacent to both $u$ and $v$ (if $u=v$, then $L(u,u)$
is the set of vertices which have degree at most $1$ in $G-F$ and at least two edges to $u$). Suppose that the number of vertices of degree at most $1$ in $G-F$ is strictly more than
$\ell^2(2\ell+1)$. For each pair $(u,v)\in F\times F$, if $L(u,v)>2\ell+1$ then we mark an arbitrary  set of 
$2\ell+1$ vertices from $L(u,v)$, else we mark all the vertices in $L(u,v)$. Since there are at most $\ell^2(2\ell+1)$ marked vertices,
there exists an unmarked vertex $w$ in $G-F$ such that 
 $d_{G-F}(w)\leq 1$. 
 If $d_{G-F}(w)=0$, then algorithm ${\cal A}$
returns $(G-w,k)$.
Suppose $d_{G-F}(w)=1$. Let  $e$ be the unique edge in $G-F$ which is incident to $w$. Algorithm ${\cal A}$  
returns $(G/e,k)$. Clearly $F\subseteq V(G')$ and $F$ is a feedback vertex set of $G'$. 

 Let $(G',k)$ be the instance returned by algorithm ${\cal A}$.
Since $G'$ is a minor of $G$, $OPT(G,k)\geq OPT(G',k)$.   (A graph $H$ is called a minor of an undirected  graph $G^\star$, if we  can obtain $H$ from $G^\star$ by a sequence of edge deletions, vertex deletions and edge contractions.) 
%
Now we show that $OPT(G,k)\leq OPT(G',k)$. 
Let $G'=G/e$, $e=(w,z)$, $d_{G-F}(w)=1$ and $w$ is an unmarked vertex.   
 Let $\CC$ be a maximum set of vertex disjoint cycles in $G$. 
 Observe that if $\CC$ does not contain a pair of cycles each intersecting a different endpoint of $e$, then contracting $e$ will keep the resulting cycles vertex disjoint in $G/e$. Therefore, we may assume that $\CC$ contains 2 cycles $C_w$ and $C_z$ where $C_w$ contains $w$ and $C_z$ contains $z$. Now, the neighbor(s) of $w$ in $C_w$ must lie in $F$. Let these neighbors be $x$ and $y$ (again, $x$ and $y$ are not necessarily distinct). Since $w\in L(x,y)$ and it is unmarked, there are $2\ell+1$ vertices in $L(x,y)$ which are already marked by the marking procedure. 
Further, since for each vertex $u\in V(\CC)$, with $d_{G-F}(u)\leq 1$, 
at least one neighbour of $u$ in the cycle packing $\CC$ is from $F$  and each  
vertex  $v\in V(\CC)\cap F$ can be adjacent to at most $2$ vertices from 
$L(x,y)$, we have that at most $2\ell$ vertices from $L(x,y)$ are in $V(\CC)$. This implies that 
at least one vertex (call it $w'$), marked for $L(x,y)$ is not in   $V(\CC)$. 
Therefore we can route the cycle $C_w$ through $w'$ instead of $w$, which gives us a set of $\vert \CC\vert$ vertex disjoint cycles in $G/e$. 
Suppose $G'=G-w$ and $d_{G-F}(w)=0$. Then by similar arguments we can show that 
$OPT(G,k)=OPT(G-w,k)$. 

Algorithm ${\cal B}$ takes a solution ${\cal S'}$ of the instance $(G',k)$ and outputs a solution ${\cal S}$ of $(G,k)$ 
as follows. If $G'$ is a subgraph of $G$ (i.e, $G'$ is obtained by deleting a vertex), then ${\cal S}={\cal S'}$. Otherwise, 
let $G'=G/e$, $e=(u,v)$ and let $w$ be the vertex in $G'$ created by contracting $(u,v)$. If $w\notin V({\cal S}')$, then 
${\cal S}={\cal S}'$. Otherwise let $C=wv_1\ldots v_{\ell}$ be the cycle in ${\cal S}'$ containing $w$. 
We know that $v_1,v_\ell \in N_G(\{u,v\})$. 
If $v_1,v_{\ell}\in N_G(u)$, then $C'=uv_1\ldots v_{\ell}$ is a cycle in $G$ which is vertex disjoint 
from ${\cal S'}\setminus \{C\}$. If $v_1,v_{\ell}\in N_G(v)$,  then $C'=vv_1\ldots v_{\ell}$ is a cycle in $G$ which is vertex disjoint 
from ${\cal S'}\setminus \{C\}$. In either case ${\cal S}=({\cal S'}\setminus \{C\})\cup \{C'\}$. 
If $v_1\in N_G(u)$ and $v_{\ell}\in N_G(v)$, then $C''=uv_1\ldots v_\ell vu$ is 
a cycle in $G$ which is vertex disjoint 
from ${\cal S'}\setminus \{C\}$. In this case ${\cal S}=({\cal S'}\setminus \{C\})\cup \{C''\}$.
This completes the proof of the lemma.
\end{proof}

Lemma~\ref{lemma:leaf} leads to the following reduction rule which is \onesafe{} (follows from Lemma~\ref{lemma:leaf}). 

\begin{redrule}
\label{rule:leaf}
 Let $(G,k)$ be an instance of \CP{} and let $F$ be a feedback vertex set of $G$ 
such that the forest $G-F$ contains strictly  
 more than $\vert F\vert^2(2\vert F\vert+1)$ vertices of degree at most $1$.  
Then run the algorithm ${\cal A}$ mentioned in Lemma~\ref{lemma:leaf} 
on $(G,k)$ and $F$, and return $(G',k)$, where $G'$, a minor of $G$, is the output of the algorithm ${\cal A}$.   
\end{redrule}

The following observation follows from Lemma~\ref{lemma:leaf}. 
\begin{observation}
\label{obs:leaf_opt}
Let $(G,k)$ be an instance of \CP{} and $(G',k)$ be the instance obtained after applying  
Reduction Rule~\ref{rule:leaf}. Then $OPT(G,k)=OPT(G',k)$. 
\end{observation}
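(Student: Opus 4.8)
The plan is to derive this observation directly from Lemma~\ref{lemma:leaf}; essentially no new work is needed beyond matching up the hypotheses, so I expect a two-line argument rather than a genuine proof. The key point is that Reduction Rule~\ref{rule:leaf} is, by definition, triggered exactly in the situation covered by Lemma~\ref{lemma:leaf}: we are given an instance $(G,k)$ of \CP{} together with a feedback vertex set $F$ such that the forest $G-F$ has strictly more than $|F|^2(2|F|+1)$ vertices of degree at most $1$, and the rule outputs $(G',k)$ where $G'$ is the graph produced by algorithm ${\cal A}$ of that lemma.

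First I would note that since the precondition of Lemma~\ref{lemma:leaf} is met, algorithm ${\cal A}$ produces a minor $G'$ of $G$, and the conclusion of the lemma gives $OPT(G,k) = OPT(G',k)$ verbatim. Since Reduction Rule~\ref{rule:leaf} returns precisely this $(G',k)$, the equality $OPT(G,k)=OPT(G',k)$ claimed in the observation is exactly the $OPT$-preservation clause of Lemma~\ref{lemma:leaf}, and the proof is complete.

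The one point I would double-check, and flag with a sentence, is the capping in the definition of the \CP{} parameterized optimization problem: $OPT(G,k)$ equals $\max_P \min\{|P|,k+1\}$ over cycle packings $P$, while the internal reasoning in the proof of Lemma~\ref{lemma:leaf} manipulates the uncapped maximum number of vertex-disjoint cycles. This is not an obstacle, because that proof shows the uncapped maxima for $G$ and $G'$ coincide, and the map $a \mapsto \min\{a,k+1\}$ is monotone, so the equality survives the capping; equivalently, one may simply quote the $OPT$-equality already asserted in the statement of Lemma~\ref{lemma:leaf}. I do not anticipate any real difficulty here — the observation is isolated only because it will be convenient to cite later when bounding the size of the fully reduced instance for \CP{}.
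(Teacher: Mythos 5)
Your argument is exactly what the paper does: the observation is stated with no proof beyond the sentence that it ``follows from Lemma~\ref{lemma:leaf},'' whose conclusion already asserts $OPT(G,k)=OPT(G',k)$ verbatim. Your extra remark about the capping at $k+1$ being preserved under the monotone map $a \mapsto \min\{a,k+1\}$ is a reasonable sanity check but not something the paper bothers to spell out.
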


The Reduction Rule~\ref{rule:leaf}, may create multiple edges in the reduced instance.  
To bound the number of multi-edges between a pair of vertices, we use the following 
simple reduction rule. 
\begin{redrule}
 \label{rule:multiedge}
 Let $(G,k)$ be an instance of \CP{} and there exist two vertices $u,v\in G$ such that 
 there are at least $3$ edges between $u$ and $v$. Then delete all but two edges between $u$ and 
 $v$.  
\end{redrule} 

Since any set of vertex disjoint cycles in $G$ can use at most two edges between $u$ and $v$, 
it is safe to delete remaining edges between them and hence Reduction Rule~\ref{rule:multiedge} 
is \onesafe.  Hence, in the rest of the section we always assume that 
the number of edges between any pair of vertices is at most $2$. 
The following lemma allows us to find a subset $F'$, of a feedback vertex set $F$, of cardinality at most $OPT(G,k)$ such that the large portion of the graph $G-F$ is connected to $F'$ and not to $F\setminus F'$.

 \begin{figure}

 \begin{subfigure}[b]{0.5\textwidth}
        \centering

\begin{tikzpicture}[scale=1.3]

\node [] (a) at (0,0.75) {$\bullet$};
\node [] (a) at (1,0.75) {$\bullet$};
\node [] (a) at (2,0.75) {$\bullet$};
\node [] (a) at (2.3,0.8) {$o_2$};
\node [] (a) at (3,0.75) {$\bullet$};
\node [] (a) at (4,0.75) {$\bullet$};
\node [] (a) at (1.25,0.85) {$s_1$};

\node [] (a) at (0,1.25) {$\bullet$};
\node [] (a) at (1,1.25) {$\bullet$};
\node [] (a) at (0,1.75) {$\bullet$};
\node [] (a) at (1,1.75) {$\bullet$};
\node [] (a) at (0.5,2.25) {$\bullet$};
\node [] (a) at (0.25,2.27) {$s_2$};
\node [] (a) at (0.65,2.75) {$\bullet$};
\node [] (a) at (0.8,3.25) {$\bullet$};
\node [] (a) at (0.95,4) {$\bullet$};
\node [] (a) at (1.5,5) {$\bullet$};
\node [] (a) at (1.5,5.2) {$o_1$};

\draw (0,0.75)--
 (0,1.25) --
(0,1.75)  -- 
(0.5,2.25) -- (0.65,2.75) -- (0.8,3.25)--  (0.95,4) -- (1.3,4.5)--(1.5,5);
\draw (1,0.75) --
(1,1.25) -- (1,1.75)--(0.5,2.25);

\node [] (a) at (2.5,1.5) {$\bullet$};
\node [] (a) at (2.7,1.7) {$s_3$};
\node [] (a) at (2.3,2) {$\bullet$};

\node [] (a) at (2.1,2.5) {$\bullet$};
\node [] (a) at (1.9,3) {$\bullet$};
\node [] (a) at (1.7,3.5) {$\bullet$};
\node [] (a) at (1.5,4) {$\bullet$};
\node [] (a) at (1.3,4.5) {$\bullet$};
\node [] (a) at (1.6,4.6) {$q_1$};

\draw (2,0.75)--
(2.5,1.5) --(2.3,2) --
(2.1,2.5)--(1.9,3) -- (1.7,3.5) -- (1.5,4) -- (1.3,4.5);
\draw (3,0.75)--(2.5,1.5);


\node [] (a) at (4,1.25) {$\bullet$};
\node [] (a) at (4,1.75) {$\bullet$};
\node [] (a) at (4,2.25) {$\bullet$};
\node [] (a) at (4,2.75) {$\bullet$};
\node [] (a) at (4,3.25) {$\bullet$};
\node [] (a) at (4,4) {$\bullet$};
\node [] (a) at (4,4.5) {$\bullet$};
\node [] (a) at (4,5) {$\bullet$};

\node [] (a) at (2.75,5) {$\bullet$};

\draw (4,0.75)--
 (4,1.25) --(4,1.75) -- (4,2.25) -- (4,2.75) -- (4,3.25) -- (4,4) -- (4,4.5) -- (4,5)-- (2.75,5)  --  (1.5,5);

\node [blue] (a) at (-0.5,-0.25) {$\bullet$};
\node [blue] (a) at (0.5,-0.25) {$\bullet$};
\node [blue] (a) at (1.5,-0.25) {$\bullet$};
\node [blue] (a) at (2.5,-0.25) {$\bullet$};
\node [blue] (a) at (3.5,-0.25) {$\bullet$};

\node [blue] (a) at (-0.5,-0.5) {$f_1$};
\node [blue] (a) at (0.5,-0.5) {$z_2$};
\node [blue] (a) at (1.5,-0.5) {$z_1$};
\node [blue] (a) at (2.5,-0.5) {$f_2$};
\node [blue] (a) at (3.5,-0.5) {$z_3$};

\draw[blue] (2.5,-0.25) -- (3.5,-0.25);

\draw[red] (1.5,-0.25) to [out=150,in=270] (1,0.75);
\draw[red] (1.5,-0.25) to [out=70,in=-30] (1,0.75);
\draw[red] (3.5,-0.25) -- (4,1.75)
 (3.5,-0.25) -- (4,2.75) 
(3.5,-0.25) -- (4,4.5);
\draw[red] (1.5,-0.25) to [out=70,in=-30] (0.8,3.25);

\draw[red] 
(3.5,-0.25) --(2,0.75) 
 (3.5,-0.25) -- (3,0.75) 
 (3.5,-0.25) to [out=85,in=0] (2.5,1.5);  


\draw[red] (2,0.75) -- (2.5,-0.25);

\draw[red] (0.5,-0.25) -- (0,1.25);
\draw[red] (0.5,2.25)-- (0.5,-0.25);

\draw[red] (-0.5,-0.25)  to [out=90,in=180]  (1.5,5);

\end{tikzpicture}
\caption{$F=\{f_1,f_2,z_1,z_2,z_3\}$ is a feedback vertex set of $G$ and the tree $G-F$ is rooted 
at $o_1$. Algorithm ${\cal B}$ will output $F'=\{z_1,z_2,z_3\}$ and 
$S=\{s_1,s_2,s_3\}$ when $k>3$.} 
\label{fig:atreemarking}
\end{subfigure}
\begin{subfigure}[b]{0.5\textwidth}
        \centering

\begin{tikzpicture}[scale=1.2]


\node [] (a) at (-0.25,-1.2) {$s_2$};
\node [] (a) at (0,-1) {$\bullet$};
\node [] (a) at (1,-1) {$\bullet$};
\node [] (a) at (0.75,-1.2) {$s_1$};
\node [] (a) at (4.2,-1) {$\bullet$};
\node [] (a) at (4.4,-1.2) {$s_3$};

\draw[red] (1.5,-0.25) to [out=210,in=90] (1,-1);
\draw[red] (1.5,-0.25) to [out=-45,in=30] (1,-1);

\draw[red] (3.5,-0.25) -- (4,1.75)
 (3.5,-0.25) -- (4,2.75) 
(3.5,-0.25) -- (4,4.5);
\draw[red] (1.5,-0.25) to [out=70,in=-30] (0.8,3.25);

\draw[red] (3.5,-0.25) --(2,0.75) 
 (3.5,-0.25) -- (3,0.75) 
 (3.5,-0.25)--(4.2,-1);

\draw[red] (2,0.75) -- (2.5,-0.25);

\draw[red] (2,0.75) -- (2.5,-0.25);

\draw[red]  (0,-1) -- (0.5,-0.25);
\draw[red] (0,1.25)-- (0.5,-0.25);

\draw[red] (-0.5,-0.25)  to [out=90,in=180]  (1.5,5);

\node [] (a) at (0,0.75) {$\bullet$};
\node [] (a) at (2,0.75) {$\bullet$};
\node [] (a) at (3,0.75) {$\bullet$};
\node [] (a) at (4,0.75) {$\bullet$};
\node [] (a) at (2.75,0.75) {$w_1$};

\node [] (a) at (0,1.25) {$\bullet$};
\node [] (a) at (1,1.25) {$\bullet$};
\node [] (a) at (0,1.75) {$\bullet$};
\node [] (a) at (1,1.75) {$\bullet$};
\node [] (a) at (0.65,2.75) {$\bullet$};
\node [] (a) at (0.8,3.25) {$\bullet$};
\node [] (a) at (0.95,4) {$\bullet$};
\node [] (a) at (1.5,5) {$\bullet$};

\draw (0,0.75)-- (0,1.25)
 --(0,1.75)
(0.65,2.75) -- 
(0.8,3.25)--  
(0.95,4) -- (1.3,4.5)--(1.5,5);
\draw[green]
(0,1.75) to [out=210,in=90] (0,-1)
(1,1.75) to [out=250,in=-20] (0,-1)
(0.65,2.75) to [out=250,in=20] (0,-1);

\draw [green]
(1,-1)--
(1,1.25);
\draw (1,1.25)
 -- (1,1.75);

\node [] (a) at (2.3,2) {$\bullet$};

\node [] (a) at (2.1,2.5) {$\bullet$};
\node [] (a) at (1.9,3) {$\bullet$};
\node [] (a) at (1.7,3.5) {$\bullet$};
\node [] (a) at (1.5,4) {$\bullet$};
\node [] (a) at (1.3,4.5) {$\bullet$};

\draw[green] (2,0.75) to [out=-45,in=180]
(4.2,-1) to [out=50,in=-40] (2.3,2);
\draw
(2.3,2) --
(2.1,2.5)--(1.9,3) -- (1.7,3.5) -- (1.5,4) -- (1.3,4.5);
\draw[green] (3,0.75)  to [out=0,in=90] (4.2,-1);


\node [] (a) at (4,1.25) {$\bullet$};
\node [] (a) at (4,1.75) {$\bullet$};
\node [] (a) at (4,2.25) {$\bullet$};
\node [] (a) at (4,2.75) {$\bullet$};
\node [] (a) at (4,3.25) {$\bullet$};
\node [] (a) at (4,4) {$\bullet$};
\node [] (a) at (4,4.5) {$\bullet$};
\node [] (a) at (4,5) {$\bullet$};

\node [] (a) at (2.75,5) {$\bullet$};

\draw (4,0.75)--
 (4,1.25) --(4,1.75) -- (4,2.25) -- (4,2.75) -- (4,3.25) -- (4,4) -- (4,4.5) -- (4,5)-- (2.75,5)-- (1.5,5);

\node [blue] (a) at (-0.5,-0.25) {$\bullet$};
\node [blue] (a) at (0.5,-0.25) {$\bullet$};
\node [blue] (a) at (1.5,-0.25) {$\bullet$};
\node [blue] (a) at (2.5,-0.25) {$\bullet$};
\node [blue] (a) at (3.5,-0.25) {$\bullet$};

\node [blue] (a) at (0.25,-0.2) {$z_2$};
\node [blue] (a) at (1.75,-0.5) {$z_1$};
\node [blue] (a) at (3.5,-0.5) {$z_3$};

\draw[blue] (2.5,-0.25) -- (3.5,-0.25);

\end{tikzpicture}
\caption{The vertices of $S$ is drawn separately with edges between $S$ and $G-F$ colored green}
\label{fig:btreemarking}
\end{subfigure}

\caption{An example of Lemma~\ref{lemma:mark_in_tree}}
\label{fig:treemarking}
\end{figure}


\begin{lemma}
\label{lemma:mark_in_tree}
Let $(G,k)$ be an instance of \CP{} and let $F$ be a feedback vertex set of $G$. Then there is 
a polynomial time algorithm ${\cal B}$ that given $(G,k)$ and $F$, either outputs $k$ 
vertex disjoint cycles in $G$ or two sets $F'\subseteq F$ and $S\subseteq V(G-F)$ such that 
$(i)$ $\vert F'\vert, \vert S\vert \leq OPT(G,k)$ and $(ii)$ for any $w\in F \setminus F'$ and any connected component $C$ of $G-(F\cup S)$, 
$\vert N(w)\cap V(C)\vert \leq 1$. 
\end{lemma}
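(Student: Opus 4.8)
The plan is to realise $\mathcal B$ as a single post‑order sweep of the forest $T:=G-F$, with each tree of $T$ rooted at an arbitrary vertex. We maintain two sets $F'$ and $S$ (both initially empty) and a family $\mathcal Z$ of recorded cycles, one per vertex put into $S$. When the sweep reaches a vertex $x$, let $T_x$ be the subtree rooted at $x$ and let $C_x$ be the connected component of $x$ in $T_x-S$, where $S$ is the set of vertices marked so far; note $x\notin S$ at this moment, since all proper ancestors of $x$ are processed later and a vertex enters $S$ only when it is processed, which is exactly why we look at the component inside $T_x$ rather than the global component of $x$ in $T-S$. If some $w\in F\setminus F'$ has at least two edges to $V(C_x)$, counted with multiplicity (recall Reduction Rule~\ref{rule:multiedge} bounds multiplicities by $2$), we pick one such $w$, add it to $F'$, add $x$ to $S$, and record a cycle $Z_x$ through $w$: if the two edges hit the same vertex $a$, take the length‑$2$ cycle $wa$; otherwise take $w$ together with the path in $C_x$ between the two $G$‑neighbours of $w$ in $V(C_x)$ (which exists and is unique since $C_x$ is a subtree). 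If at any point $|F'|$ reaches $k$ we stop and output $\mathcal Z$; otherwise after the sweep we output $F'$ and $S$. The procedure is clearly polynomial time.

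First I would prove property (ii) by induction along the post‑order, in the stronger edge‑counting form: after processing $x$, for every $w\in F\setminus F'$ and every connected component $D$ of $T_x-S$, the number of edges from $w$ to $V(D)$ is at most $1$. The base case ($x$ a leaf) is immediate. In the inductive step, every component of $T_x-S$ is either contained in some $T_{c_j}$ (and hence controlled by the induction hypothesis for the child $c_j$) or equals $C_x=\{x\}\cup\bigcup_{j:\,c_j\notin S}D_j$, where $D_j$ is the component of $c_j$ in $T_{c_j}-S$; the edge count from any $w$ into $C_x$ splits additively over $\{x\}$ and the $D_j$'s, so if some $w\in F\setminus F'$ still has $\ge 2$ edges into $C_x$ we mark $x$, which reverts $C_x$ to the already‑controlled $D_j$'s. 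Since $F'$ and $S$ only grow, and enlarging $S$ only subdivides components while enlarging $F'$ only drops constraints, the invariant survives to the end of the sweep, giving (ii) for all components of $T-S=G-(F\cup S)$ (here $N(w)\cap V(C)$ has at most as many elements as there are edges from $w$ into $V(C)$).

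Next I would check that the recorded cycles are usable. By construction each marking step adds exactly one fresh vertex to $F'$ and one to $S$, so $|F'|=|S|=|\mathcal Z|$ and each recorded cycle has a distinct apex $w_x\in F'$. For pairwise disjointness, note $Z_x\subseteq\{w_x\}\cup C_x$ with $C_x\subseteq T_x$. Given two marked vertices $x,x'$: if they are incomparable in the rooted forest then $T_x\cap T_{x'}=\emptyset$; if $x'$ is a descendant of $x$ then $x'$ is processed first, so $x'\in S$ when $x$ is processed, and removing $x'$ (the unique vertex joining $T_{x'}$ to the rest of $T_x$) gives $C_x\cap T_{x'}=\emptyset$. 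Either way $C_x\cap C_{x'}=\emptyset$, and since apices lie in $F$ while the $C$'s lie in $T=G-F$ there is no further overlap, so $Z_x\cap Z_{x'}=\emptyset$.

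Finally I would combine the pieces. The family $\mathcal Z$ witnesses that $G$ has at least $|F'|=|S|$ pairwise vertex‑disjoint cycles. If the sweep was stopped because $|F'|=k$, then $\mathcal Z$ already contains $k$ vertex‑disjoint cycles and $\mathcal B$ outputs them. Otherwise $|F'|<k$, so the maximum cycle packing of $G$ has size at least $|F'|$, and since $OPT(G,k)=\min\{\,\text{max.\ packing},\,k+1\,\}$ we get $OPT(G,k)\ge|F'|=|S|$, which is (i); $\mathcal B$ then outputs $F'$ and $S$, and the algorithm $\mathcal B$ in the lemma is done. The step I expect to be the main obstacle is making (ii) and the disjointness of the recorded cycles hold at the same time with $|F'|=|S|$: forcing disjointness by deleting entire cycle‑paths from the working forest would both blow up $|S|$ and destroy the separation property. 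The post‑order sweep using the per‑subtree component $C_x$ is precisely what resolves this, because marking the single cut vertex $x$ simultaneously pays for the recorded cycle and restores the invariant on all of $T_x$.
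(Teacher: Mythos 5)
Your proposal is correct and follows essentially the same approach as the paper: the paper also processes the forest from the bottom up (it repeatedly picks a deepest remaining vertex $u$ for which some $w\in F\setminus F'$ closes a cycle in the current subtree $T'_u$, then prunes that subtree and adds $u$ to $S$ and $w$ to $F'$), which is precisely what your post-order sweep with the per-subtree component $C_x$ does. Your explicit edge-counting invariant and the monotonicity observation that enlarging $S$ only subdivides components while enlarging $F'$ only drops constraints make the argument for property (ii) a little more self-contained than the paper's one-line appeal to the deepest-first rule, but the underlying mechanism is identical.
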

\begin{proof}
We know that $G-F$ is a forest. We consider each tree in $G-F$ as a rooted tree, where the root is chosen arbitrarily. Now we create a 
{\em dummy root $r$} and connect to all the roots in $G-F$. The resulting graph $T$ with vertex set $(V(G)\cup \{r\})\setminus F$ is 
a tree rooted at $r$. The level of a vertex $v\in V(T)$ is the distance between $r$ and $v$, denoted by $d_T(r,v)$. Let $T'$ be a rooted tree, then for a vertex $v\in V(T')$  
we use $T'_v$ to denote the subtree of $T'$ rooted at $v$.  

Now we are ready to give a procedure to find the desired sets 
$F'$ and $S$. Initially we set $T':=T$, $F':=\emptyset$ and $S:=\emptyset$. Let $u\in V(T')$ such that 
$d_{T'}(r,u)$ is maximized and there is a vertex $w\in F\setminus F'$ with the property that $G[V(T'_u)\cup \{w\}]$ has a cycle.  
Then, we set $T':=T'-T'_u$, $F':=F'\cup\{w\}$ and $S:=S\cup \{u\}$. We continue this procedure until $\vert F'\vert=\vert S \vert=k$ 
or the above step is not applicable. Let $F'=\{w_1,\ldots,w_{k'}\}$.
Notice that by the above process there are vertex disjoint subtrees 
$T_1,\ldots,T_{k'}$ of $T-r$  such that for each $i\in [k']$, $G[V(T_i)\cup \{w_i\}]$ has a cycle. 
%
Thus when $k'=k$, our algorithm 
${\cal B}$ will output one cycle from each $G[V(T_i)\cup \{w_i\}]$, $i\in [k]$ as the output. Otherwise, since in each step the 
algorithm picks a vertex with highest level, each connected component $C$ of $T-S$ and $w\in F \setminus F'$,  
$\vert N(w)\cap V(C)\vert \leq 1$. 
Algorithm ${\cal B}$ will output 
$F'$ and $S$ are the required sets. Notice that, in this case $\vert F'\vert=\vert S \vert =k'<k$. We have seen that there are $\vert F'\vert$ vertex disjoint cycles in $G$. This implies that   $\vert F'\vert=\vert S \vert \leq OPT(G,k)$.  An illustration is given in Figure~\ref{fig:treemarking}.  Figure~\ref{fig:atreemarking} depicts a graph $G$ with a feedback vertex set $F$ and 
the sets $F'$ and $S$ chosen by the algorithm. In Figure~\ref{fig:btreemarking}, the graph $G-(F'\cup S)$ is drawn separately 
to see the properties mentioned in the lemma.  
 
\end{proof}

Using Lemma~\ref{lemma:mark_in_tree}, we will prove the following decomposition lemma and after this the structure of the reduced graph becomes ``nice'' and our algorithm boils down to applications of \shortULI{$\epsilon$} on multiple auxiliary interval graphs.  

\begin{lemma}
\label{lem:CP_path_creation}
Let $(G,k)$ be an instance of \CP. Then there is a polynomial time algorithm ${\cal A}$ which either 
outputs $k$ vertex disjoint cycles or 
a minor $G'$ of $G$, and $Z,R\subseteq V(G')$ with the following properties.
\begin{itemize}
\setlength{\itemsep}{-2pt}
\item[$(i)$] $OPT(G,k)=OPT(G',k)$, 
\item[$(ii)$] $\vert Z \vert \leq OPT(G,k)$, $\vert R\vert = \OO(k^4\log^4 k)$, 
\item[$(iii)$] $G'-(Z\cup R)$ is a  collection $\cP$ of  $\OO(k^4\log^4 k)$ non trivial paths, and 
\item[$(iv)$] for each path $P=u_1\cdots u_r$ in $\cP$, no internal vertex is adjacent to a vertex in $R$, \;
$d_{G'[R\cup \{u_1\}]}(u_1)  \leq 1$, and $d_{G'[R\cup \{u_r\}]}(u_r) \leq 1$.
\end{itemize}
Furthermore, given a cycle packing ${\cal S'}$ in $G'$, there is a polynomial time algorithm ${\cal D}$ which 
outputs a cycle packing ${\cal S}$ in $G$ such that $\vert {\cal S}\vert=\vert {\cal S'}\vert$. 
\end{lemma}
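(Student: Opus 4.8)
The plan is to chain together the Erd\H{o}s--P\'osa theorem, the two reduction rules, the marking lemma, and one final ``cutting'' step. First I would run the algorithm of Lemma~\ref{lemma:erdosposa} on $(G,k)$; if it returns $k$ vertex disjoint cycles we output those, so assume it returns a feedback vertex set $F$ of $G$ with $|F|\le ck\log k$. I would then apply Reduction Rules~\ref{rule:leaf} and~\ref{rule:multiedge} exhaustively, always with $F$ as the feedback vertex set: by Lemma~\ref{lemma:leaf} each application of Rule~\ref{rule:leaf} deletes one vertex, preserves $OPT$ (Observation~\ref{obs:leaf_opt}), and keeps $F$ a feedback vertex set, while Rule~\ref{rule:multiedge} is \onesafe{} and does not touch $G-F$. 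Let $G'$ be the resulting minor of $G$; then $OPT(G,k)=OPT(G',k)$, the forest $G'-F$ has at most $|F|^2(2|F|+1)=\OO(k^3\log^3 k)$ vertices of degree at most $1$ in $G'-F$, hence at most $\OO(k^3\log^3 k)$ vertices of degree at least $3$ and at most $\OO(k^3\log^3 k)$ maximal degree-$2$ paths.

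Next I would invoke Lemma~\ref{lemma:mark_in_tree} on $(G',k)$ and $F$: either it returns $k$ vertex disjoint cycles (done), or it returns $F'\subseteq F$ and $S\subseteq V(G'-F)$ with $|F'|,|S|\le OPT(G',k)=OPT(G,k)$ such that every $w\in F\setminus F'$ has at most one neighbour in each connected component of $G'-(F\cup S)$. Set $Z:=F'$, so $|Z|\le OPT(G,k)$. Let $L$ and $B$ be the sets of vertices of degree at most $1$, respectively at least $3$, in $G'-F$, and let $\cP_1$ be the collection of connected components of $G'-(F\cup S\cup L\cup B)$. Deleting $L\cup B$ from $G'-F$ leaves exactly the maximal degree-$2$ paths with their endpoints removed, and deleting $S$ further breaks these into at most $|S|$ additional pieces, so each member of $\cP_1$ is a sub-path of a maximal degree-$2$ path of $G'-F$ and $|\cP_1|=\OO(k^3\log^3 k)$. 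Crucially, each $P\in\cP_1$ lies inside a single component of $G'-(F\cup S)$, so every $w\in F\setminus F'$ has at most one neighbour on $P$; call it $\nu(w,P)$ when it exists.

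The last step is the cut: for every $w\in F\setminus F'$ and every $P\in\cP_1$ for which $\nu(w,P)$ exists, delete the vertex $\nu(w,P)$ (splitting $P$). Then put into $R$ all of $F\setminus F'$, $S$, $L$, $B$, every deleted cut vertex, and finally every surviving single-vertex component; let $\cP$ be the remaining (non-trivial) path components. Since each $P\in\cP_1$ is cut at most $|F\setminus F'|$ times, the number of cut vertices is at most $|F\setminus F'|\cdot|\cP_1|=\OO(k^4\log^4 k)$, so $|R|=\OO(k^4\log^4 k)$ and $|\cP|\le |\cP_1|(|F\setminus F'|+1)=\OO(k^4\log^4 k)$, giving $(ii)$ and $(iii)$; and $(i)$ holds because $G'$ is obtained from $G$ solely by applications of Rules~\ref{rule:leaf} (Observation~\ref{obs:leaf_opt}) and~\ref{rule:multiedge}. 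For $(iv)$: an internal vertex $u_i$ of a path $P\in\cP$ has degree exactly $2$ in $G'-F$, so its only $G'-F$-neighbours are its two path-neighbours, which lie in $P$; hence $u_i$ is adjacent in $G'$ only to those and to vertices of $F$, and it cannot be adjacent to any $w\in F\setminus F'$, for otherwise $u_i$ would equal $\nu(w,P')$ and would have been deleted. Since $R\cap V(G'-F)\subseteq L\cup B\cup S\cup\{\text{cut vertices}\}$ and $R\cap F=F\setminus F'$, this shows $u_i$ has no neighbour in $R$. For an endpoint $u_1$ of $P$, exactly one of its two $G'-F$-neighbours lies outside $P$ and that vertex belongs to $L\cup B\cup S\cup\{\text{cut vertices}\}\subseteq R$, while $u_1$ is again not adjacent to any $w\in F\setminus F'$ (else it would have been a cut vertex), so $d_{G'[R\cup\{u_1\}]}(u_1)\le 1$, and symmetrically for $u_r$. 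Finally, the lifting algorithm ${\cal D}$ is the composition of the polynomially many algorithms ${\cal B}$ of Lemma~\ref{lemma:leaf}, one per application of Rule~\ref{rule:leaf}, together with the identity map for Rule~\ref{rule:multiedge}; it turns any cycle packing of $G'$ into one of $G$ of the same size in polynomial time.

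The step needing the most care is property $(iv)$ together with the bookkeeping of the cutting step. One must cut at $\nu(w,P)$ even when it happens to be an endpoint of $P$ — otherwise an endpoint of a final path could inherit two neighbours in $R$ — yet this must not push the number of cut vertices past $|F\setminus F'|\cdot|\cP_1|$. It is exactly the guarantee of Lemma~\ref{lemma:mark_in_tree}, that each $w\in F\setminus F'$ meets every component of $G'-(F\cup S)$ at most once, that keeps this count under control; establishing that bound (via the reduction rules, which cap the number of low-degree vertices of $G'-F$) and then verifying $(iv)$ case by case is the crux of the argument.
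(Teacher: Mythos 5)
Your argument is correct and follows essentially the same route as the paper's proof: Lemma~\ref{lemma:erdosposa} for $F$, Reduction Rules~\ref{rule:leaf} and~\ref{rule:multiedge}, Lemma~\ref{lemma:mark_in_tree} for $F'$ and $S$, then assembling $R$ from the structural pieces of the forest $G'-F$. The only difference is cosmetic --- the paper sets $R=Q\cup O\cup S\cup W\cup(F\setminus F')$ where $O=\bigcup_{w\in F\setminus F'}N(w)\cap V(G'-F)$ is the full set of $G'-F$-neighbours of $F\setminus F'$, whereas you keep only the per-path ``cut'' neighbours $\nu(w,P)$ and separately add the low-degree set $L$; both variants give $|R|=\OO(k^4\log^4 k)$ and yield property~$(iv)$ by the same reasoning.
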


\begin{proof}
We first give a description of the polynomial time algorithm ${\cal A}$ mentioned in the statement of the lemma. 
It starts by running  the algorithm mentioned in Lemma~\ref{lemma:erdosposa} on input $(G,k)$ and if it returns $k$ vertex disjoint cycles,  then ${\cal A}$ returns 
$k$ vertex disjoint cycles in $G$ 
and stops. Otherwise, let $F$ be a 
feedback vertex set of $G$. 
Now  ${\cal A}$ applies Reduction Rule~\ref{rule:leaf} repeatedly using the feedback vertex set $F$ until 
Reduction Rule~\ref{rule:leaf} is no longer applicable. Let $(G',k)$ be the reduced instance after the exhaustive application of 
Reduction Rule~\ref{rule:leaf}. By Lemma~\ref{lemma:leaf}, we have that $F\subseteq V(G')$ and $G'-F$ is a forest. 
Now,  ${\cal A}$ runs the algorithm ${\cal B}$ mentioned in Lemma~\ref{lemma:mark_in_tree} on input $(G',k)$ and $F$. If ${\cal B}$ 
returns $k$ vertex disjoint cycles in $G'$, then ${\cal A}$ also returns $k$ vertex disjoint cycles in $G$. The last assertion follows from the fact that  Reduction Rule~\ref{rule:leaf} (applied to get $G'$) is \onesafe. 
Otherwise, let $F'\subseteq F$ and $S\subseteq V(G'-F)$ be the output of ${\cal B}$. Next we define a 
few sets that will be used by $\cal A$ to construct its output.
\begin{enumerate}
\setlength{\itemsep}{-2pt}
\item Let $Q$ be the set 
of vertices of $G'-F$ whose  degree in $G'-F$ is at least $3$. 
\item Let $O=\bigcup_{w\in F\setminus F'} N(w)\cap V(G'-F)$; and 
\item let $W$ be the vertices of degree $0$ in $G'-(F\cup Q\cup O \cup S)$. 
\end{enumerate}
Algorithm ${\cal A}$ returns $G'$, $Z=F'$ and 
$R=Q\cup O \cup S\cup W \cup (F\setminus F')$ as output. In the example given in Figure~\ref{fig:treemarking}, $Z=\{z_1,z_2,z_3\}, F\setminus F'=\{f_1,f_2\}, S=\{s_1,s_2,s_3\}, Q=\{q_1,s_2,s_3\}, O=\{o_1,o_2\}$ and $W=\{w_1\}$.

Now we prove the correctness of the algorithm. If ${\cal A}$ outputs $k$ vertex disjoint  cycles in $G$, then we are done. Otherwise, let $Z=F'$ and $R=Q\cup O \cup S\cup W \cup (F\setminus F')$ be the output of ${\cal A}$. 
Now we prove $G', Z$ and $R$  indeed satisfy the properties mentioned in the statement of lemma.
Since $G'$ is obtained after repeated applications of Reduction Rule~\ref{rule:leaf}, by Observation~\ref{obs:leaf_opt}, we get 
that $OPT(G,k)=OPT(G',k)$ and hence proving property $(i)$.

By Lemma~\ref{lemma:mark_in_tree}, 
we have that $\vert F'\vert= \vert S \vert \leq OPT(G,k)$. Hence the size of $Z(=F')$ is as desired. 
Next we bound the size of $R$.  
By  Lemma~\ref{lemma:erdosposa}, we have that $\vert F\vert\leq ck\log k$, where $c$ is a fixed constant.   
By Lemma~\ref{lemma:leaf}, we have that $F\subseteq V(G')$, $G'-F$ is a forest, and 
the number of vertices of  degree at most $1$ in $G'-F$ is upper bounded by $\vert F\vert^2(2\vert F\vert+1)=\OO(k^3\log^3 k)$. 
Since the number of vertices of degree at least $3$ in a forest is at most the number of leaves in the forest, we can conclude 
that cardinality of $Q$, the set of vertices of degree at least $3$ in $G'-F$ is upper bounded by $\OO(k^3\log^3 k)$. It is well-known that the number of maximal degree $2$ paths in a forest is  upper bounded by the sum of the number of leaves and the vertices of degree at least $3$ 
(for example see~\cite{RamanSS06} for a proof). This immediately implies the following claim. 

%
\begin{claim}
 \label{claim:no_of_paths}
 $G'-(F\cup Q)$ is a collection of $\OO(k^3\log^3 k)$ paths. 
\end{claim}
The following claim proves properties $(ii)$ and $(iii)$ stated in the lemma. 
\begin{claim}
\label{claim:CPboundR}
$\vert R\vert = \OO(k^4\log^4 k)$ and the number of paths in $\cP$ is at most  $\OO(k^4\log^4 k)$. 
\end{claim}
\begin{proof}
Observe that $G'-(F\cup Q)$ is a 
collection of $\OO(k^3\log^3 k)$ paths and thus it has at most  $\OO(k^3\log^3 k)$ connected components. This implies that, $G'-(F\cup Q \cup S)$ has at most  $\OO(k^3\log^3 k)$ connected components and in particular $G'-(F \cup S)$ has at most  $\OO(k^3\log^3 k)$ connected components. 
Let $\gamma_s$ denote the number of connected components of $G'-(F \cup S)$. 
By Lemma~\ref{lemma:mark_in_tree}, we have that 
for any $w\in F \setminus F'$ and any connected component $C$ of $G'-(F\cup S)$, 
$\vert N_{G'}(w)\cap V(C) \vert \leq 1$.  Thus, for every vertex  $w\in F \setminus F'$ we have that 
$|N(w)\cap V(G'-F)|\leq |S|+ \gamma_s=\OO(k^3\log^3 k)$. 
This
implies that the cardinality of $O$, the set $\bigcup_{w\in F\setminus F'} N(w)\cap V(G'-F)$,  
is upper bounded by $\OO( \vert F \setminus F'\vert \cdot k^3\log^3 k)=\OO( k^4\log^4 k)$.  
By Lemma~\ref{lemma:mark_in_tree}, we have that $\vert S\vert\leq OPT(G,k)\leq k+1$. 
Since $\vert O \cup S  \vert =\OO( k^4\log^4 k)$ and by Claim~\ref{claim:no_of_paths}, we can conclude 
that the number of paths in $G'-(F\cup Q \cup O\cup S)$ is at most $\OO( k^4\log^4 k)$. 
Notice that $W$ is the family of paths on single vertices in the collection of paths of 
$G'-(F\cup Q \cup O\cup S)$. 
Since the number of maximal paths in $G'-(F\cup Q \cup O\cup S)$ is at most $\OO( k^4\log^4 k)$, 
we have that $\vert W \vert= \OO( k^4\log^4 k)$ and 
the number of maximal paths in $G'-(F\cup Q \cup O\cup S \cup W)=G'-(Z\cup R)$ (i.e, the 
number of paths in $\cP$) is at most $\OO( k^4\log^4 k)$. 

Since $\vert S\vert\leq OPT(G,k)\leq k+1$, $\vert F\vert\leq ck\log k$, $\vert Q\vert=\OO(k^3\log^3 k)$, $\vert O \cup S \vert =\OO( k^4\log^4 k)$ and $\vert W \vert= \OO( k^4\log^4 k)$, we can conclude that the cardinality of 
$R=Q\cup O \cup S \cup W \cup (F\setminus F')$, is upper bounded by $\OO(k^4\log^4 k)$. This concludes the proof. 
\end{proof}
Finally, we will show the last property stated in the lemma. 
Since $G'-F$ is a forest and $Q$ is the set of vertices of degree at least $3$ in the forest $G'-F$, 
we have that  any internal vertex of any path in $G'-(Q\cup F)$ is not adjacent to $Q$. 
Also, since any vertex $w$, which is an internal vertex of a path in $G'-(Q\cup F)$ and 
adjacent to a vertex in $F\setminus F'$, belongs to $O$, we can conclude that no internal vertex of any path in $G'-(Q\cup O \cup F)$ is adjacent to 
$Q\cup O \cup (F\setminus F')$. This implies that  
no internal vertex of any path in $G'-(Q\cup O\cup S \cup W \cup F)= G'-(Z\cup R)$ is adjacent to 
$Q\cup O \cup S \cup W \cup (F\setminus F') = R$.  
Now we claim that an endpoint $u$ of a path $P$ in $\cP$ has at most 
 one edge between $u$ and $R$. 
Let $u$ be an endpoint of $P$. 
Since  $O=\bigcup_{w\in F\setminus F'} N(w)\cap V(G'-F)$ and $u\notin O$, we can conclude that $u$ is not adjacent to 
any vertex in $F\setminus F'$. 
Since $u\in V(G'-(F\cup Q))$, the degree of 
$u$ in $G'-F$ is at most $2$. Since $P$ is a non trivial path $\vert N(u)\cap (V(G'-F)\setminus V(P))\vert \leq 1$. 
Since $G'-F$ is a forest, $\vert N(u)\cap (V(G'-F)\setminus V(P))\vert \leq 1$, and $u$ is not adjacent to any vertex in $F\setminus F'$, 
we conclude that $d_{G'[R\cup \{u\}]}(u)\leq 1$.  

The solution lifting algorithm, ${\cal D}$,  is basically obtained by solution lifting algorithm 
used in the Reduction Rule~\ref{rule:leaf}. That is,   given a cycle packing ${\cal S'}$ in $G'$,  $\cal D$ 
repeatedly applies the solution lifting algorithm of  Reduction Rule~\ref{rule:leaf} to obtain a cycle packing ${\cal S}$ in $G$ such that $\vert {\cal S}\vert=\vert {\cal S'}\vert$. The correctness of the algorithm ${\cal D}$ follows 
from the fact that Reduction Rule~\ref{rule:leaf} is \onesafe, 
and $G'$ is obtained from $G$ by repeated application of 
Reduction Rule~\ref{rule:leaf}.  
An illustration of a path $P\in \cP$, $Z$ and $R$ can be found in Figure~\ref{figure_CP_paths}. 
This completes the proof of the lemma. 
\end{proof}

 \begin{figure}

        \centering

\begin{tikzpicture}[rotate=90, scale=1.3]

\node [green] (a) at (0.2,7) {$P$};
\node [green] (a) at (0,0.75) {$\bullet$};
\node [green] (a) at (0,1.5) {$\bullet$};
\node [green] (a) at (0,2.25) {$\bullet$};
\node [green] (a) at (0,3) {$\bullet$};
\node [green] (a) at (0,3.75) {$\bullet$};
\node [green] (a) at (0,4.5) {$\bullet$};
\node [green] (a) at (0,5.25) {$\bullet$};
\node [green] (a) at (0,6) {$\bullet$};
\node [green] (a) at (0,6.75) {$\bullet$};

\node [blue] (a) at (2.6,6) {$Z$};
\node [blue] (a1) at (2.5,2) {$\bullet$};
\node [blue] (a2) at (2.5,2.75) {$\bullet$};
\node [blue] (a) at (2.5,3.5) {$\bullet$};
\node [blue] (a) at (2.5,4.25) {$\bullet$};
\node [blue] (a) at (2.5,5) {$\bullet$};

\node [] (a) at (2.8,2) {$u_5$};
\node [] (a) at (2.8,2.75) {$u_4$};
\node [] (a) at (2.8,3.5) {$u_3$};
\node [] (a) at (2.8,4.25) {$u_2$};
\node [] (a) at (2.8,5) {$u_1$};

\draw[blue] (2.5,2)--(2.5,2.75);

 \draw[green] (0,0.75)-- (0,1.5) --(0,2.25) --
(0,3) --
 (0,3.75)-- 
 (0,4.5)-- 
 (0,5.25)--(0,6) --(0,6.75); 

 \draw[black]
(0,0.75)--(2.5,2.75)
(0,0.75)--(2.5,3.5) 
(0,1.5) --(2.5,5)
(0,2.25) --(2.5,3.5)
(0,3) --(2.5,4.25)
 (0,3.75)-- (2.5,2)
 (0,4.5)-- (2.5,5)
 (0,5.25)--(2.5,3.5)
(0,6) --(2.5,2)
(0,6.75)-- (2.5,3.5)
(0,3.75)--(2.5,3.5)
; 

\draw(0,4.5) to[out=45,in=115] (2.5,5);

\node [red] (a) at (2.5,-4) {$\bullet$};
\node [red] (a) at (2.5,-3.25) {$\bullet$};
\node [red] (a) at (2.5,-2.5) {$\bullet$};
\node [red] (a) at (2.5,-1.75) {$\bullet$};
\node [red] (a) at (2.5,-1) {$\bullet$};
\node [red] (a) at (2.5,0) {$R$};

\draw[red] (2.5,-2.5)--(0,0.75);
\draw[red] (2.5,-1)--(0,6.75);

\draw[red] (2.5,-4) to[out=0,in=45] (2.5,-2.5);

\end{tikzpicture}

\caption{An example of a path $P$ in $\cP$, $Z$ and $R$}
\label{figure_CP_paths}
\end{figure}


Observe that Lemma~\ref{lem:CP_path_creation} decomposes the graph into $k^{\OO(1)}$ simple structures, namely, paths in $\cal P$  combined together with a set of size $k^{\OO(1)}$. Note that the only {\em unbounded} objects in $G'$ are the paths in $\cal P$. The reason we can not reduce the size of $P$ is that a vertex in $Z$ can have unbounded neighbors on it. See Figure~\ref{figure_CP_paths} for an illustration. However,  
Lemma~\ref{lem:CP_path_creation} still  provides us the required decomposition which will be used to cast several instances of 
\shortULI{$\epsilon$}. In particular for every path $P \in \cal P$, we will have one instance of 
\shortULI{$\epsilon$}. We will compute \shortuli{$\epsilon$} for each of these instances and reduce the path size to get the desired kernel. 

  
\begin{theorem}
\label{thm:CPapprx}
For any $\epsilon>0$, there is polynomial sized $(1-\epsilon)$-approximate kernel for \CP. That is, \CP{} admits a PSAKS. 
\end{theorem}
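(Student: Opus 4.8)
The plan is to feed the structural decomposition of Lemma~\ref{lem:CP_path_creation} into the \shortuli{$\epsilon$} machinery of Lemma~\ref{lem:uni_lab_is}, invoking the latter once for each of the $\OO(k^4\log^4k)$ paths that the decomposition produces. First I would run algorithm $\mathcal A$ of Lemma~\ref{lem:CP_path_creation} on $(G,k)$. If it returns $k$ vertex-disjoint cycles, the reduction algorithm outputs a fixed constant-size instance (say $k+1$ pairwise disjoint triangles, parameter $k$) and the solution lifting algorithm discards its input and returns the $k$ cycles found in $G$; since then $OPT(G,k)=k+1$ equals the value of the returned packing, this branch is lossless. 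Otherwise I obtain $G'$, $Z$, $R$ and the path collection $\cP$ with properties $(i)$--$(iv)$, where $OPT(G,k)=OPT(G',k)$, $|Z|\le OPT(G,k)\le k+1$ and $|R|=\OO(k^4\log^4k)$; composing everything below with algorithm $\mathcal D$ of that lemma (which lifts any cycle packing of $G'$ back to one of $G$ of the same size) reduces the task to building a $(1-\epsilon)$-approximate kernel for $(G',k)$.

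Next, for each path $P=u_1\cdots u_r\in\cP$ I would build an auxiliary labelled interval graph $G_P$. By property $(iv)$ every internal vertex of $P$ has all its external neighbours in $Z$, so any cycle of a packing meets $P$ in a disjoint union of subpaths, each subpath serving as a ``connector'' between a pair of vertices of $Z$ (or, when it contains the path-endpoint $u_1$ or $u_r$, into $R$ via at most one edge). The vertices of $G_P$ are a polynomially bounded family of subpaths of $P$ regarded as intervals on the line, the label of a subpath recording the pair of vertices of $Z$ at its two ends (a handful of extra labels handle the $\OO(1)$ connectors per path that touch $R$, and the path-endpoints are always kept); vertex-disjointness of subpaths corresponds to disjointness of intervals. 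To keep $\chi(G_P)=k^{\OO(1)}$ --- crucial, since $r$ itself is unbounded --- I would include, for each label and each position, only the inclusion-minimal connector, in the spirit of the greedy packing of Lemma~\ref{obs:rich:greedy:packing}; the number of labels is $q=\OO(k^2)$. Applying Lemma~\ref{lem:uni_lab_is} to $(G_P,q,\Gamma_P)$ with parameter $\epsilon'=\Theta(\epsilon)$ yields an \shortuli{$\epsilon'$} $X_P\subseteq V(G_P)$ of size $k^{\OO(\frac1\epsilon\log\frac1\epsilon)}$; I then replace $P$ by the (union of short) subpath(s) $P^\star$ spanned by the interval-endpoints occurring in $X_P$ together with $u_1,u_r$. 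Since $P^\star$ is obtained from $P$ by deleting and contracting internal vertices, the resulting graph $G^\star$ is a minor of $G'$, with $|V(G^\star)|\le|Z|+|R|+|\cP|\cdot k^{\OO(\frac1\epsilon\log\frac1\epsilon)}=k^{\OO(\frac1\epsilon\log\frac1\epsilon)}$, i.e.\ polynomial in $k$ for fixed $\epsilon$. A cycle packing in $G^\star$ lifts to one of the same size in $G'$ by undoing contractions (as in algorithm $\mathcal B$ of Lemma~\ref{lemma:leaf}) and then to $G$ via $\mathcal D$; this is the solution lifting algorithm.

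The remaining and most delicate step is to prove $OPT(G^\star,k)\ge(1-\epsilon)\,OPT(G',k)$. I would start from a maximum cycle packing $\mathcal C$ of $G'$, of size $OPT(G',k)\le k+1$, normalised so as to use as few path-segments as possible (and with chordless cycles). Cycles of $\mathcal C$ avoiding every $P\in\cP$ survive verbatim in $G^\star$. For each $P$, the subpaths used by $\mathcal C$ form an independent set $S_P$ in $G_P$, and Lemma~\ref{lem:uni_lab_is} supplies an independent set $S_P'$ in $G_P[X_P]$ with $\Gamma_P(S_P')\subseteq\Gamma_P(S_P)$ and $|\Gamma_P(S_P')|\ge(1-\epsilon')|\Gamma_P(S_P)|$ --- which, label counts being integers, is in fact \emph{lossless} whenever $|\Gamma_P(S_P)|\le 1/\epsilon'$. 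Re-routing the cycles of $\mathcal C$ through the surviving connectors, a cycle is destroyed only if one of its segments carried a label dropped on some ``crowded'' path (one with $|\Gamma_P(S_P)|>1/\epsilon'$), so the number of destroyed cycles is at most $\epsilon'\sum_{P}|\Gamma_P(S_P)|$. The crux is to bound this by $\epsilon\cdot OPT(G',k)$: each connector is anchored at a vertex of the small set $Z\cup R$, which lies in at most one cycle of $\mathcal C$ and has degree two there, so crowded paths are scarce; combined with the normalisation of $\mathcal C$ (no cycle wanders through a path more than necessary) and the capping of the objective at $k+1$, this should give the desired estimate, which produces a cycle packing in $G^\star$ of size $\ge(1-\epsilon)\,OPT(G',k)=(1-\epsilon)\,OPT(G,k)$. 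This last estimate is precisely the part I expect to be hard: a single cycle can in principle enter and leave one path many times and thread through many paths at once, so arguing that dropping an $\epsilon'$-fraction of each crowded path's connectors costs only an $\epsilon$-fraction of the cycles will need the careful normalisation of $\mathcal C$ together with the losslessness of \shortuli{$\epsilon'$} on sparsely-labelled paths; a minor secondary difficulty, already noted, is arranging $G_P$ so that $\chi(G_P)$ stays polynomial in $k$.
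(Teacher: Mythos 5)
Your plan follows the paper's strategy faithfully at the top level: reduce to $(G',Z,R,\cP)$ via Lemma~\ref{lem:CP_path_creation}, encode the way an (optimal) cycle packing threads through each path $P\in\cP$ as a labelled interval graph, shrink each path by keeping only a \shortuli{$\epsilon$}, and account for the loss against $OPT$. The deviations from the paper's proof are exactly in the three places you yourself flag as uncertain, and they are genuine gaps rather than minor omissions.

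First, the single auxiliary interval graph $G_P$ per path does not reflect the paper's construction and creates a real difficulty. The paper builds one labelled interval graph $H_P^{(x,y)}$ for every pair $(x,y)\in(Z\cup\{\clubsuit\})\times(Z\cup\{\clubsuit\})$: each choice of ``window'' $P^{(x,y)}$ carves out the portion of $P$ between the first $x$-neighbour and the last $y$-neighbour, so that intervals representing $\cQ_1$-connectors are only created inside the region not occupied by the two end-connectors $F,L$ (the ones whose cycle exits $P$ into $R$). Your plan instead tries to absorb $F,L$ into the same graph with ``$\OO(1)$ extra labels''. Two things go wrong: the label of an end-connector must record which $x\in Z$ it reaches, so there are $\Theta(|Z|)$ such labels rather than $\OO(1)$, and more importantly, the \shortuli{} guarantee only preserves a $(1-\epsilon')$-fraction of \emph{labels}, not positions, so nothing stops the end-connector labels from being among the dropped ones. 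Dropping the end-connector of even a constant number of paths is fine, but over all $|\cP|=\OO(k^4\log^4 k)$ paths this could destroy $\Omega(k^4\log^4 k)\gg\epsilon\cdot OPT$ cycles. The paper sidesteps the issue by unconditionally keeping the set $K_P$ (both path endpoints and, for every $u\in Z$, the first and last neighbour of $u$ on $P$), so every $\cQ_2$-connector always survives verbatim and only $\cQ_1$-connectors need \shortuli{}. Your scheme keeps the path endpoints but not the first/last $Z$-neighbours, which is not enough.

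Second, inclusion-minimality of connectors does \emph{not} give the $k^{\OO(1)}$ chromatic-number bound you need. Pairwise non-nested intervals can still have unbounded clique number (e.g.\ $[1,5],[2,6],[3,7],\dots$). What the paper actually uses is the much stronger requirement that $u_1Q'u_2$ be an \emph{induced} path or cycle in $G'$; Claim~\ref{claim:CPpw} shows this forces at most two same-label intervals through any point (a third would force $u_1$ or $u_2$ to have a chord into the middle of another interval), giving $\chi(H_P^{(x,y)})=\OO(|Z|^2)=\OO(k^2)$. This is exactly the ``secondary difficulty'' you mention, but it needs the induced-path filter rather than minimality, and the difference is not cosmetic.

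Third, the accounting of destroyed cycles is where the actual work is, and you stop short of it. The quantitative input that makes the loss $\le\epsilon\,OPT$ is that $\cQ_1$-connectors have \emph{both} ends attached to $Z$ in $\CC$, each $z\in Z$ has degree $2$ in $\CC$, and $|Z|\le OPT(G',k)$, hence $|\cQ_1|\le 2\,OPT(G',k)$. Dropping at most an $\epsilon/2$-fraction of $\cQ_1(P)$ on each path then costs at most $\frac{\epsilon}{2}|\cQ_1|\le\epsilon\,OPT(G',k)$ cycles in total. Your sentence ``each connector is anchored at a vertex of the small set $Z\cup R$'' lumps $Z$ and $R$ together, but $|R|=\OO(k^4\log^4 k)$ is far too large for that to help; only the $Z$-anchored connectors are bounded against $OPT$, which is precisely why the $\cQ_1/\cQ_2$ split and the unconditional retention of $K_P$ are both essential. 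Your worry that ``a single cycle can enter and leave one path many times'' is legitimate, and it is the degree-$2$-in-$\CC$ bound on $Z$ together with $|Z|\le OPT$ --- not normalisation of $\CC$ --- that resolves it.
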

\begin{proof}
Let $(G,k)$ be an input instance of \CP.  
The reduction algorithm ${\cal R}$ works as follows. It first runs the algorithm ${\cal A}$ mentioned in Lemma~\ref{lem:CP_path_creation}. If the algorithm returns $k$ vertex disjoint cycles, then ${\cal R}$ return these cycles. Otherwise, let $G'$, $Z$ and $R$ be the output of ${\cal A}$, satisfying four properties mentioned in Lemma~\ref{lem:CP_path_creation}. Important properties that will be most useful in our context are: 

\begin{itemize}
\setlength{\itemsep}{-2pt}
\item $\vert Z \vert \leq OPT(G,k)$, $\vert R\vert = \OO(k^4\log^4 k)$; and 
\item  $G'-(Z\cup R)$ is a collection $\cP$ of non trivial  
paths such that for any path $P\in \cP$ we have that no internal vertex of $P$ is adjacent to  
any vertex of $R$.
\end{itemize}
\noindent 
Now ${\cal R}$ will solve several instances of  \shortULI{$\epsilon$} to bound the length of each path in 
$\cP$. Towards this we fix a path $$P=v_1v_2\ldots v_{\ell}  \mbox{ in }  
\cP.$$ 
Our objective is to apply Lemma~\ref{lem:uni_lab_is} 
to reduce the length of $P$. 
Our algorithm finds a set of small number of relevant vertices on $P$ 
and reduces $P$ in {\em a single step} even though 
we use Lemma~\ref{lem:uni_lab_is} several times to identify relevant vertices.  Next we give the  construction for applying Lemma~\ref{lem:uni_lab_is} in order to find the relevant vertices. To find relevant vertices of $P$, we create $(\vert Z \vert +1)^2$ labelled interval graphs, one for every 
$(x,y)\in Z\cup \{\clubsuit\} \times Z \cup \{\clubsuit\}$  
with $Z\times Z$ being the set of labels. That is, for  the path $P$ and $(x,y)\in Z\cup \{\clubsuit\} \times Z \cup \{\clubsuit\}$ we create a  labelled interval graph $H_P^{(x,y)}$ as follows. Our labelling function will be denoted by  $\Gamma_P^{(x,y)}$.

\begin{enumerate}
\setlength{\itemsep}{-2pt}
\item The set of labels is $\Sigma=Z \times Z$. 
\item Let $P^{(x,y)}= v_r\ldots v_{r'}$ be the subpath of $P$ such that $v_{r-1}$ is the first vertex in $P$ adjacent to $x$ and $v_{r'+1}$ is the last vertex on $P$ adjacent to $y$.  If $x=\clubsuit$, then $v_r=v_1$ and if $y=\clubsuit$, then $v_{r'}=v_\ell$. Indeed, if $x=\clubsuit$ and $y=\clubsuit$ then $v_r=v_1$ and $v_{r'}=v_\ell$. 
\item We say that a subpath $Q'$ of $P^{(x,y)}$ is a {\em potential $(u_1,u_2)$-subpath}, 
where $(u_1,u_2)\in Z \times Z$, if   
either $u_1Q'u_2$ or $u_2Q'u_1$ is an induced path (induced cycle when $u_1=u_2$) in $G'$. Essentially, the potential subpath is trying to capture the way a cycle can interact with a subpath in $P$ with its neighbors on the cycle being $u_1$ and $u_2$.
\item For each $(u_1,u_2)\in Z \times Z$ 
and a potential $(u_1,u_2)$-subpath $Q'=v_i\ldots v_j$ we create an interval  
$I_{Q'}^{(u_1,u_2)}=[i,j]$ and  label it with $(u_1,u_2)$. That is, $\Gamma_P^{(x,y)}(I_{Q'}^{(u_1,u_2)})=(u_1,u_2)$. 
We would {\em like to emphasize} that when $u_1=u_2$ and $v_i=v_j$, we create an interval $I_{Q'}^{(u_1,u_2)}=[i,j]$ only if 
there are two edges between $u_1$ and $v_i$.  
Also notice that if we have created an interval $I_{Q'}^{(u_1,u_2)}=[i,j]$ with label $(u_1,u_2)$, then we have created an interval $I_{Q'}^{(u_2,u_1)}=[i,j]$ with label $(u_2,u_1)$ as well.  
\end{enumerate}

\begin{figure}

        \centering

\begin{tikzpicture}[ scale=1.3]

\node [] (a) at (0.4,0) {$P$};
\node [] (a) at (0.75,0) {$\bullet$};
\node [] (a) at (1.5,0) {$\bullet$};
\node [] (a) at (2.25,0) {$\bullet$};
\node [] (a) at (3,0) {$\bullet$};
\node [] (a) at (3.75,0) {$\bullet$};
\node [] (a) at (4.5,0) {$\bullet$};
\node [] (a) at (5.25,0) {$\bullet$};
\node [] (a) at (6,0) {$\bullet$};
\node [] (a) at (6.75,0) {$\bullet$};
\node [] (a) at (7.5,0) {$\bullet$};

\node [blue] (a) at (1.5,1.5) {$Z$};
\node [blue] (a1) at (2,1.5) {$\bullet$};
\node [blue] (a2) at (2.75,1.5) {$\bullet$};
\node [blue] (a) at (3.5,1.5) {$\bullet$};
\node [blue] (a) at (4.25,1.5) {$\bullet$};
\node [blue] (a) at (5,1.5) {$\bullet$};

\node [] (a) at (2,1.8) {$u_1$};
\node [] (a) at (2.75,1.8) {$u_2$};
\node [] (a) at (3.5,1.8) {$u_3$};
\node [] (a) at (4.25,1.8) {$u_4$};
\node [] (a) at (5,1.8) {$u_5$};


 \draw[] (0.75,0)-- (1.5,0) --(2.25,0) --
(3,0) --
 (3.75,0)-- 
 (4.5,0)-- 
 (5.25,0)--(6,0) --(6.75,0)--(7.5,0); 

 \draw[black]
(2,1.5) -- (1.5,0)
(5,1.5)-- (6,0)
(2,1.5) -- (2.25,0)
(2,1.5) -- (4.5,0)
(2.75,1.5)-- (3.75,0)
(2.75,1.5)-- (5.25,0)
(4.25,1.5)to[out=-10,in=60](5.25,0)
(4.25,1.5)to[out=-90,in=100](5.25,0)
(3.5,1.5)--(0.75,0)
;

\node [] (a) at (3,-0.4) {$(u_1,u_2)$};
\draw[line width=0.35mm,red] (2.25,-0.53)--(3.75,-0.53); 
\node [] (a) at (4.9,-0.4) {$(u_1,u_2)$};
\draw[line width=0.35mm,red] (2.25,-0.65)--(3.75,-0.65);
\node [] (a) at (3,-0.8) {$(u_2,u_1)$};
\node [] (a) at (4.9,-0.8) {$(u_2,u_1)$};
\draw[line width=0.35mm,red] (4.5,-0.53)-- (5.25,-0.53);
\draw[line width=0.35mm,red] (4.5,-0.65)-- (5.25,-0.65);
\node [] (a) at (5.8,-1.4) {$(u_4,u_4)$};
\node [red] (a) at (5.25,-1.4) {$\bullet$};

\draw[line width=0.35mm,red] (2.25,-1.2)--(4.5,-1.2);
\node [] (a) at (3.5,-1.05) {$(u_1,u_1)$};

\node [] (a) at (4.2,-1.45) {$(u_1,u_2)$};
\draw[line width=0.35mm,red] (3.75,-1.6)--(4.5,-1.6);
\draw[line width=0.35mm,red] (3.75,-1.75)--(4.5,-1.75);
\node [] (a) at (4.2,-1.85) {$(u_1,u_2)$};

\draw[thick, dotted] (0,-2)--(8,-2);
\node [] (a) at (.75,-2.15) {$1$};
\node [] (a) at (1.5,-2.15) {$2$};
\node [] (a) at (2.25,-2.15) {$3$};
\node [] (a) at (3,-2.15) {$4$};
\node [] (a) at (3.75,-2.15) {$5$};
\node [] (a) at (4.5,-2.15) {$6$};
\node [] (a) at (5.25,-2.15) {$7$};
\node [] (a) at (6,-2.15) {$8$};

\end{tikzpicture}

\caption{An example of $H_P^{(u_1,u_5)}$.  The interval representation of $H_P^{(u_1,u_5)}$ along with labels is drawn below 
the path $P$. The real line is represented using a dotted line.}
\label{figure_HP}
\end{figure}


This completes the construction of  $H_P^{(x,y)}$   and the labelling function
 $\Gamma_P^{(x,y)}$.  
See Figure~\ref{figure_HP} for an illustration.  
The fact that $H_P^{(x,y)}$  is an interval graph follows from the fact that in fact to construct  $H_P^{(x,y)}$, we have given an interval representation for it. 
Having, created the interval graph and a labelling function $\cal R$ runs the following steps.

\begin{enumerate}
\setlength{\itemsep}{-2pt}
\item Now using Lemma~\ref{lem:CP_path_creation},  ${\cal R}$ computes 
a set $X^{(x,y)}_P$ such that $X^{(x,y)}_P$ is a \shortuli{$\frac{\epsilon}{2}$} of $H^{(x,y)}_P$. 
Now we define a few sets. 
\begin{eqnarray*}
S^{(x,y)}_P & = & \{v_i : i \mbox{ is an endpoint of an interval in } H^{(x,y)}_P \}\\
K_P& = & \{v_1,v_{\ell}\} \cup \bigcup_{u\in Z} \{ v : v \mbox{ is the first or last vertex on $P$ such that } uv\in E(G')\}\\ 
 S_P& = & \bigcup_{(x,y)\in  Z\cup \{\clubsuit\} \times Z \cup \{\clubsuit\}} S^{(x,y)}_P \\
 D_P & = & V(P)\setminus (S_P\cup K_P).
\end{eqnarray*}


\item Now, ${\cal R}$ will do the following modification to  shorten $P$: delete all the edges between 
$D_P$ and $Z$, and then contract all the remaining edges incident with vertices in $D_P$. In other words, let $\{v_{i_1},\ldots v_{i_{\ell'}}\}=S_P\cup K_P$, where $1=i_1<i_2<\ldots <i_{\ell'}=\ell$. Then delete $D_P$ and add edges $v_{i_j}v_{i_{j+1}}, j\in [\ell'-1]$. 
Let $P'$ be the path obtained from $P$, by the above process. We use the same vertex names in $P'$ as well to represent a
vertex. That is, if a vertex $u$ in $V(P)$ is not deleted to obtain $P'$, we use $u$ to represent the same vertex. 
\item Let $G''$ be the graph obtained after this modification has been done for all paths $P\in \cP$. 
Finally,  ${\cal R}$ returns $(G'',k)$ as the reduced instance. 
\end{enumerate}
\paragraph{Solution Lifting Algorithm.} 
Notice that $G''$ is a minor of $G'$ and hence a minor of $G$.  Given a set $S'$ of vertex disjoint cycles in 
$G''$, the solution lifting algorithm computes a set $S$ of vertex disjoint cycles in $G$ of cardinality 
$\vert S'\vert$ by doing reverse of the minor operations used to obtain $G''$ from $G$. All this can be done in polynomial time because  the solution lifting algorithm knows the 
minor operations done to get $G''$ from $G$.    

Next we need to prove the correctness of the algorithm. Towards that we first bound the size of $G''$.  

\medskip
\noindent 
{\bf Bounding the size of $G''$.} As a first step to bound the size of $G''$, we bound the chromatic number of $H_P^{(x,y)}$, where $P\in \cP$ and 
$(x,y)\in Z\cup \{\clubsuit\} \times Z \cup \{\clubsuit\}$. In fact what we will bound is the size of the maximum clique of $H_P^{(x,y)}$. 
\begin{claim}
\label{claim:CPpw}
For any $P\in \cP$ and $(x,y)\in Z\cup \{\clubsuit\} \times Z \cup \{\clubsuit\}$, $\chi(H_P^{(x,y)})=\OO(k^2)$. 
\end{claim}
\begin{proof}
To prove the claim, it is enough to show that the size of a maximum clique in $H_P^{(x,y)}$ is at most $\OO(k^2)$. 
Let $P^{(x,y)}= v_r\ldots v_{r'}$. We know that in the interval representation of $H_P^{(x,y)}$,  
all the intervals are contained in $[r,r']$. We claim that for any point $p\in [r,r']$ and $(u_1,u_2)\in Z \times Z$,  
the number of intervals labelled $(u_1,u_2)$ and containing the point $p$ is at most $2$. Towards a contradiction assume that there are 
three intervals $I_1=[i_1,j_1],I_2=[i_2,j_2],I_3=[i_3,j_3]$ such that $\Gamma_P^{(x,y)}(I_1)=\Gamma_P^{(x,y)}(I_2)=\Gamma_P^{(x,y)}(I_3)=(u_1,u_2)$ and all the 
intervals $I_1$, $I_2$ and $I_3$ contain the point $p$. Since for each $r\leq i,j\leq r'$ and $(u_1,u_2)\in Z\times Z$ 
we have created at most one interval $[i,j]$ with label $(u_1,u_2)$, all the intervals $I_1,I_2$ and $I_3$ are distinct intervals in the real line. 

We first claim that no interval in $\{I_1,I_2,I_3\}$ is same as $[p,p]$. Suppose $I_3=[p,p]$. 
Since all the interval in $\{I_1,I_2,I_3\}$, are different and $I_3=[p,p]$ we have that 
$I_1\neq [p,p]$, but contains $p$. This implies that either $i_1\neq p$ or $j_1\neq p$. We consider 
the case $i_1\neq p$. The case that $j_1\neq p$  is symmetric. 
Let $Q_1=v_{i_1}v_{i_1+1}\ldots v_{j_1}$. 
We know that $u_1v_pu_2$ is an induced 
path (induced cycle when $u_1=u_2$ and two edges between $u_1$ and $p$). This 
implies that neither $u_1Q_1u_2$ nor $u_2Q_1u_1$ is an induced path, 
because $v_p\in \{v_{i_1+1}\ldots v_{j_1}\}$. We would like to clarify that  
when ${j_1}=p$ and $u_1=u_2$, $u_1Q_1u_1$ is cycle 
and there are two edges between $v_{j_1}$ and $u_1$. 
This implies that $u_1Q_1u_1$ is a not an induced cycle.
 See Figure~\ref{figure_pointinterval} for illustration.  

\begin{figure}
\begin{subfigure}[b]{0.5\textwidth}
        \centering

\begin{tikzpicture}[ scale=1]

\node[blue] at (-0.8,1.5) {$u_1$};
\node [blue] (a) at (-0.5,1.5) {$\bullet$};
\node [blue] (a) at (2.5,1.5) {$\bullet$};
\node[blue] at (2.8,1.5) {$u_2$};

\draw[line width=0.35mm,red] (0,0)--(2,0);
\node [red] (a) at (0,0) {$\bullet$};
\node [red] (a) at (2,0) {$\bullet$};
\node [red] (a) at (1,0.8) {$\bullet$};
\node [red] (a) at (1,0.5) {$I_3$};
\node [red] (a) at (1,-0.3) {$I_1$};
\node [red] (a) at (1,0) {$\bullet$};

\draw[thick, dotted] (-0.5,1.5) -- (0,0);
\draw[thick, dotted] (-0.5,1.5) -- (1,0);
\draw[thick, dotted] (2.5,1.5) -- (2,0);
\draw[thick, dotted] (2.5,1.5) -- (1,0);


\node[blue] at (4.2,1.5) {$u_1$};
\node [blue] (a) at (4.5,1.5) {$\bullet$};

\draw[line width=0.35mm,red] (5,0)--(7,0);
\node [red] (a) at (5,0) {$\bullet$};
\node [red] (a) at (7,0) {$\bullet$};
\node [red] (a) at (7,0.8) {$\bullet$};
\node [red] (a) at (7,0.5) {$I_3$};
\node [red] (a) at (6,-0.3) {$I_1$};
\draw[thick, dotted] (4.5,1.5) -- (5,0);
\draw[thick, dotted] (4.5,1.5) -- (7,0);
\draw[thick, dotted] (4.5,1.5)  to[out=0,in=120] (7,0);
\end{tikzpicture}
\end{subfigure}
\begin{subfigure}[b]{0.5\textwidth}
        \centering

\begin{tikzpicture}[ scale=1]

\node[blue] at (-0.8,1.5) {$u_1$};
\node [blue] (a) at (-0.5,1.5) {$\bullet$};
\node [blue] (a) at (2.5,1.5) {$\bullet$};
\node[blue] at (2.8,1.5) {$u_2$};

\draw[line width=0.35mm,red] (0,0)--(2,0);
\draw[line width=0.35mm,red] (1,0.7)--(1.8,0.7);
\draw[line width=0.35mm,red,dotted] (1.8,0.7)--(3,0.7);
\node [red] (a) at (0,0) {$\bullet$};
\node [red] (a) at (2,0) {$\bullet$};
\node [red] (a) at (1,0.5) {$I_1$};
\node [red] (a) at (1,-0.3) {$I_2$};
\node [red] (a) at (1,0) {$\bullet$};

\draw[thick, dotted] (-0.5,1.5) -- (0,0);
\draw[thick, dotted] (-0.5,1.5) -- (1,0);
\draw[thick, dotted] (2.5,1.5) -- (2,0);

\end{tikzpicture}
\end{subfigure}
\caption{Illustration of 
proof of Claim~\ref{claim:CPpw}. The case when $I_3=[p,p]$ is drawn in the left and middle figures. 
The figure in the middle represents the 
case when $j_1=p$ and $u_1=u_2$.
The case when $I_1$ and $I_2$ intersects at strictly more than one point can be seen in the  right most figure.  
The black dotted curves represent the edges in the graph. }
\label{figure_pointinterval}
\end{figure}


Since $p$ is a common point in $I_1$, $I_2$ and $I_3$ and none of these intervals is equal to $[p,p]$, there are two intervals in 
$\{I_1,I_2,I_3\}$ such that they intersect at strictly more than one point. 
Without loss of generality we assume that the intersection of  $I_1$ and $I_2$
contains at least $2$ points. Also, since $I_1$ and $I_2$ are different intervals 
on the real line, one endpoint of an interval is fully inside another interval (not as the endpoint of the other interval). 
Let $Q_2=v_{i_2}v_{i_2+1}\ldots v_{j_2}$. 
Assume that $i_1\in (i_2,j_2)$. All other cases are symmetric to this case.  
We know that  $u_1v_{i_1}\in E(G')$ or $u_2v_{i_1}\in E(G')$.
This implies that 
neither $u_1Q_2u_2$ nor $u_2Q_2u_1$ is an induced path. This contradicts the fact that we created an interval 
$[i_2,j_2]$ with label $(u_1,u_2)$. See Figure~\ref{figure_pointinterval} for illustration.

We have proved that for any point $p\in [r,r']$, the number of intervals containing   $p$ with the same label is upper bounded by  $2$. This implies that the cardinality of a largest clique in $H_P^{(x,y)}$ is at most twice the number of labels. Thus, the size of the largest cliques is upper bounded by $\OO(|Z|^2)$. 
By  Lemma~\ref{lem:CP_path_creation}, we know that $\vert Z \vert \leq OPT(G,k) \leq k+1$ and thus 
$\OO(|Z|^2)$ is bounded by $\OO(k^2)$. Since the chromatic number of an interval graph is upper bounded by the size of a maximum clique, the proof of the claim follows.  
\end{proof}

By 
Lemma~\ref{lem:CP_path_creation}, we know that 
$\vert \cP \vert= \OO(k^4\log^4 k)$. 
For each $P\in {\cal P}$ and $(x,y)\in Z\cup \{\clubsuit\} \times Z \cup \{\clubsuit\}$, we created a subset 
$S^{(x,y)}_P$ of $V(P)$ of cardinality $(|Z|^2 \cdot \chi(H_P^{(x,y)}) ^{\OO(\frac{2}{\epsilon}\log \frac{2}{\epsilon})}=k^{\OO(\frac{1}{\epsilon}\log \frac{1}{\epsilon})}$. Hence, the cardinality of $S_P$ is also upper bounded by $k^{\OO(\frac{1}{\epsilon}\log \frac{1}{\epsilon})}$. 
The cardinality of  $K_P$ is 
at most $2|Z|+2=\OO(k)$. 
This implies that the reduced path $P'$ has at most $k^{\OO(\frac{1}{\epsilon}\log \frac{1}{\epsilon})}$ vertices. 
Also, we know that $\vert \cP\vert =\OO(k^4\log^4 k)$, hence the total number of vertices across all the paths of $\cP$ after the reduction is upper bounded by $k^{\OO(\frac{1}{\epsilon}\log \frac{1}{\epsilon})}$. 
This together with the fact that $\vert Z \vert \leq OPT(G,k)$ and  $\vert R\vert = \OO(k^4\log^4 k)$ imply that  $\vert V(G'')\vert$ is upper bounded by $k^{\OO(\frac{1}{\epsilon}\log \frac{1}{\epsilon})}$. This completes the proof of upper bound on the size of $G''$.

\medskip
\noindent 
{\bf Correctness of lossy reduction.} Finally, we show that indeed $(G'',k)$ is a  
$(1-\epsilon)$-approximate kernel for \CP. Towards this we show the following claim. 

\begin{claim}
\label{claim:optCP}
$OPT(G'',k)\geq (1-\epsilon)OPT(G',k)$. 
\end{claim}
\begin{proof}
Let $\CC$ be an optimum solution to $(G',k)$. Without loss of generality we can assume that 
each cycle in $\CC$ is a {\em chordless cycle}.  
Let $\cQ$ be the non-empty subpaths of cycles in $\CC$ induced in the graph $G'-(Z\cup R)$. 
That is, $\cQ$ is the collection of supaths in the intersection of $\CC$ and $\cP$.
For any $Q\in \cQ$, 
there exists two vertices $u,v\in R \cup Z$ such that $uQv$ is a subpath in $\CC$.
Because of property $(iv)$ of Lemma~\ref{lem:CP_path_creation}, for any $Q\in \cQ$ with $\vert V(Q)\vert =1$, at least one of the endpoint of $Q$ is connected to a vertex from $Z$ in the cycle packing $\CC$. 
We say a path $Q'$ is a {\em substitute} for $Q\in \cQ$ if $uQ'v$ is a subpath in $G''$ where 
$u,v\in R\cup Z$ and $uQv$ is a subpath in $\CC$. 
In what follows,   for at least $(1-\epsilon)\vert \cQ\vert$ paths in $\cQ$, we identify substitutes in 
the reduced graph $G''$ which are pairwise vertex disjoint.
%

We partition the paths in $\cQ$ into $\cQ_1$ and $\cQ_2$. 
Notice that $Q_i\in \cQ$ is a subpath of a cycle $C\in \CC$ and the neighbors (could be the same) 
of  both the endpoints  of $Q_i$ on $C$ are in 
$R\cup Z$. If the neighbors of both endpoints of $Q_i$  on $C$ are in $Z$, then we include $Q_i$ in $\cQ_1$.  Otherwise $Q_i$ is in $\cQ_2$. 
See Figure~\ref{figure_cyclepathintersection} for an illustration. 
For each $Q\in \cQ_2$, we give a substitute path 
as follows. 
We know that there is a path $P\in\cP$ such that  either $P=QQ'$ or $P=Q'Q$ for some $Q'$ where $V(Q')$ can be 
an empty set too.
If $Q=P$, then we replace $Q$ with $P'$ 
(Note that $P'$ is the path obtained from $P$ in the reduction process). Also, notice that end vertices of $P$ and $P'$ are same 
(because endvertices of $P$ belong to $K_P$) 
and hence $P'$ is a substitute for 
$Q$. Suppose $P=QQ'$ where $V(Q')\neq \emptyset$. Let $C_Q$ be the cycle in $\CC$ such that $Q$ is a subpath of $C_Q$. 
Let $Q=v_1\ldots v_{d}$. Let $z$ be the neighbour of $v_d$ in $C_Q$ which is from $Z$ (recall that 
 no internal vertex of $P$ is adjacent to  any vertex of $R$).  
Since 
$C_Q$ is a chordless cycle, none of $v_1,\ldots, v_{d-1}$ is adjacent to $z$. This implies that  $v_1,v_{d}\in K_P$ and hence 
$P'$ contains a subpath $P'_{Q}$ from $v_1$ to $v_{d-1}$ with internal vertices from $\{v_2,\ldots,v_{d-1}\}$. In this case 
$P'_{Q}$ is a substitute for $Q$. In a similar way, we can construct a substitute for $Q$ when  $P=Q'Q$ where $V(Q')\neq \emptyset$. 
Let $\cQ'_2$ be the set of substitute paths constructed for paths in $\cQ_2$. Notice that $\cQ'_2$ 
is a collection of vertex disjoint paths in $G''-(Z\cup R)$ and it has one substitute path for each $Q\in \cQ_2$. 
See Figure~\ref{figure_cyclepathintersection} for an illustration. 

\begin{figure}

\begin{subfigure}[b]{0.5\textwidth}
        \centering
\begin{tikzpicture}[ scale=1]

\node [] (a) at (0.4,0) {$P$};
\node [] (a) at (0.75,0) {$\bullet$};
\node [] (a) at (1.5,0) {$\bullet$};
\node [] (a) at (2.25,0) {$\bullet$};
\node [] (a) at (3,0) {$\circ$};
\node [] (a) at (3.75,0) {$\bullet$};
\node [] (a) at (4.5,0) {$\circ$};
\node [] (a) at (5.25,0) {$\bullet$};
\node [] (a) at (6,0) {$\bullet$};
\node [] (a) at (6.75,0) {$\circ$};
\node [] (a) at (7.5,0) {$\bullet$};

\node [blue] (a) at (1.5,1.5) {$Z$};
\node [blue] (a1) at (2,1.5) {$\bullet$};
\node [blue] (a2) at (2.75,1.5) {$\bullet$};
\node [blue] (a) at (3.5,1.5) {$\bullet$};
\node [blue] (a) at (4.25,1.5) {$\bullet$};
\node [blue] (a) at (5,1.5) {$\bullet$};

\node [] (a) at (2,1.8) {$u_1$};
\node [] (a) at (2.75,1.8) {$u_2$};
\node [] (a) at (3.5,1.8) {$u_3$};
\node [] (a) at (4.25,1.8) {$u_4$};
\node [] (a) at (5,1.8) {$u_5$};

 \draw[] (0.75,0)-- (1.5,0) --(2.25,0) --
(3,0) --
 (3.75,0)-- 
 (4.5,0)-- 
 (5.25,0)--(6,0) --(6.75,0)--(7.5,0); 

 \draw[black]
(2,1.5) -- (1.5,0)
(5,1.5)-- (6,0)
(2,1.5) -- (2.25,0)
(2,1.5) -- (4.5,0)
(2.75,1.5)-- (3.75,0)
(2.75,1.5)-- (5.25,0)
(4.25,1.5)to[out=-10,in=60](5.25,0)
(4.25,1.5)to[out=-90,in=100](5.25,0)
(3.5,1.5)--(0.75,0)
;

 \draw[red] (0.7,-0.2)-- (0.8,-0.2) 
(2.25,-0.2) --(3.75,-0.2) 
 (4.5,-0.2)--(5.25,-0.2)
(6,-0.2) --(6.75,-0.2)--(7.5,-0.2); 

\node [] (a) at (0.78,-0.45) {$Q_1$};
\node [] (a) at (3,-0.45) {$Q_2$};
\node [] (a) at (5,-0.45) {$Q_3$};
\node [] (a) at (6.5,-0.45) {$Q_4$};
\end{tikzpicture}
\end{subfigure}
\begin{subfigure}[b]{0.5\textwidth}
        \centering
\begin{tikzpicture}[ scale=1]

\node [] (a) at (0.4,0) {$P'$};
\node [] (a) at (0.75,0) {$\bullet$};
\node [] (a) at (1.5,0) {$\bullet$};
\node [] (a) at (2.25,0) {$\bullet$};
\node [] (a) at (3.75,0) {$\bullet$};
\node [] (a) at (5.25,0) {$\bullet$};
\node [] (a) at (6,0) {$\bullet$};
\node [] (a) at (7.5,0) {$\bullet$};

\node [blue] (a) at (1.5,1.5) {$Z$};
\node [blue] (a1) at (2,1.5) {$\bullet$};
\node [blue] (a2) at (2.75,1.5) {$\bullet$};
\node [blue] (a) at (3.5,1.5) {$\bullet$};
\node [blue] (a) at (4.25,1.5) {$\bullet$};
\node [blue] (a) at (5,1.5) {$\bullet$};

\node [] (a) at (2,1.8) {$u_1$};
\node [] (a) at (2.75,1.8) {$u_2$};
\node [] (a) at (3.5,1.8) {$u_3$};
\node [] (a) at (4.25,1.8) {$u_4$};
\node [] (a) at (5,1.8) {$u_5$};

 \draw[] (0.75,0)-- (1.5,0) --(2.25,0) --
(3,0) --
 (3.75,0)-- 
 (4.5,0)-- 
 (5.25,0)--(6,0) --(6.75,0)--(7.5,0); 

 \draw[black]
(2,1.5) -- (1.5,0)
(5,1.5)-- (6,0)
(2,1.5) -- (2.25,0)
(2.75,1.5)-- (3.75,0)
(2.75,1.5)-- (5.25,0)
(4.25,1.5)to[out=-10,in=60](5.25,0)
(4.25,1.5)to[out=-90,in=100](5.25,0)
(3.5,1.5)--(0.75,0)
;

 \draw[red] (0.7,-0.2)-- (0.8,-0.2) 
(2.25,-0.2) --(3.75,-0.2) 
(6,-0.2) --(6.75,-0.2)--(7.5,-0.2); 

\node [] (a) at (0.78,-0.45) {$Q'_1$};
\node [] (a) at (3,-0.45) {$Q'_2$};
\node [] (a) at (6.5,-0.45) {$Q'_4$};
\end{tikzpicture}
\end{subfigure}

\caption{An example of intersection of $\CC$ and a path $P\in \PP$. 
The vertices in $P$ colored black belong to $V(P')$. 
The intersection of $\CC$ and $P$ are the set of paths 
$\QQ=\{Q_1,Q_2,Q_3,Q_4\}$. Here $Q_2,Q_3\in \QQ_1$ and $Q_1,Q_4\in \QQ_2$. 
The substitute paths for $\QQ$ is in the figure at the right hand side}
\label{figure_cyclepathintersection}
\end{figure}


Now we construct substitute paths for $\cQ_1$. Here, we construct substitute paths for at least $(1-{\epsilon})\vert \cQ_1\vert$ 
paths and these paths will be vertex disjoint. Moreover, these paths will be vertex disjoint from the paths in $\cQ_2'$ as well. 
Let $P$ be a path in ${\cal P}$ such that at least one path in $\cQ_1$ is a subpath of $P$. 
Let $\cQ_1(P)\subseteq \cQ_1$ be a subset of $\cQ_1$ containing all the paths in $\cQ_1$ that is a subpath of $P$. 
There are at most two paths in $\cQ_2$ which 
are subpaths of $P$. Let $F$ and $L$ be these paths, where $F$ and $L$ could be empty too. 
 Let the neighbours of $F$ and $L$ in $Z$ in the cycle packing $\CC$ be $x$ and $y$, respectively 
(here, $x=\clubsuit$ if $F=\emptyset$ and $y=\clubsuit$ if $L=\emptyset$).  
Then, consider the  following decomposition of path $P=FP^\star L$. We claim that $P^\star=P^{(x,y)}$. That is, $P^\star$ is a path for which we would have created the interval graph $H_P^{(x,y)}$.  
Observe that, if $F$ is non-empty then $x$  does not have any neighbor on $F$ as cycles in $\cal C$ are {\em chordless}. Similarly, if $L$ is non-empty then $y$ does not have any neighbor on $L$. 
Thus, if $F$ and $L$ are both non-empty then indeed the {\em last} vertex of $F$ is the {\em first}  vertex on $P$ that is a neighbor of $x$ and the {\em first} vertex of $L$ is the {\em last} vertex on $P$ that is a neighbor of $y$. This implies that indeed we would have created the interval graph $H_P^{(x,y)}$. We can argue similarly if either $F$ is empty or $L$ is empty.  
Now consider the interval graph $H_P^{(x,y)}$. This graph is constructed from $P^{(x,y)}$. 
Since 
each cycle in $\CC$ is a chordless cycle, we have that 
each subpath $Q\in \cQ_1(P)$ is a potential $(u_1,u_2)$-subpath of $P^{(x,y)}$ where 
either $u_1Qu_2$ or $u_2Qu_1$ is a subpath of $\CC$ and $u_1,u_2\in Z$.
Since each vertex in $V(\CC)$ has degree two in $\CC$, 
for a pair $(u_1,u_2)$ we have {\em at most two potential $(u_1,u_2)$ paths} 
$Q_1$ and $Q_2$ in $\cQ_1(P)$. Also note that these supaths $Q_1$ and $Q_2$ are 
potential $(u_2,u_1)$-subpaths as well. So when there are two paths $Q_1,Q_2\in \cQ_1(P)$ 
such that $u_1Q_1 u_2$ and $u_1 Q_2 u_2$ are subpaths of $\CC$, then we consider $Q_1$ 
as a potential $(u_1,u_2)$-subpath and $Q_2$ as  a potential $(u_2,u_1)$-subpath.  
Now we can consider $\cQ_1(P)$ as a set of potential subpaths of $P^{(x,y)}$. 
That is, for each 
$Q\in \cQ_1(P)$, there is an interval $I^{(u_1,u_2)}_Q$ with label $(u_1,u_2)$ 
and $(u_1,u_2)$ is not a label of any other intervals corresponding to a subpath in 
$\cQ_1(P)\setminus \{Q\}$.
Let ${\cal I}_1(P)$ be the set of interval created for the potential subpaths in $\cQ_1(P)$. 
We have explained that for 
any $(u_1,u_2)\in Z \times Z$, there is at most one potential $(u_1,u_2)$-subpath in $\cQ_1(P)$. 
Also notice that, since $\cQ_1(P)$ is a collection of vertex disjoint paths, the interval 
constructed for corresponding potential subpaths are disjoint. This 
implies that ${\cal I}_1(P)$ is an independent set in  $H_P^{(x,y)}$ and 
$\vert \Gamma_P^{(x,y)}({\cal I}_1(P))\vert= \vert {\cal I}_1(P)\vert$. 
By Lemma~\ref{lem:uni_lab_is}, we have that 
there is a subset $\Sigma'\subseteq \Gamma_P^{(x,y)}({\cal I}_1(P))$
such that there is an independent set $S$ of cardinality $(1-\frac{\epsilon}{2})\vert {\cal I}_1(P)\vert$
in $X_P^{(x,y)}$ and $\Gamma_P^{(x,y)}(S)=\Sigma'$. This implies that there are at least 
$(1-\frac{\epsilon}{2})\vert {\cal I}_1(P)\vert = (1-\frac{\epsilon}{2})\vert \cQ_1(P)\vert$ 
of paths in $\cQ_1(P)$ has {\em substitute paths} in $P'$ which are vertex disjoint from 
$F$ and $L$, where $P'$ is the path obtained from $P$ in the reduction process using Lemma~\ref{lem:uni_lab_is}. This implies that for 
each $P\in \cP$, at least $(1-\frac{\epsilon}{2})\vert \cQ_1(P)\vert$ paths has 
substitute paths in $G''$ and they are vertex disjoint subpaths of $P'$ and does not intersect 
with $F$ and $L$. We denote the set of substitute paths in $P'$ by $\cQ_1'(P')$. 
This implies that the substitute paths for $\cQ_1$ are vertex disjoint and they are vertex disjoint from the 
substitute paths for $\cQ_2$. Let these substitute paths form a set $\cQ_1'=\cup_{P\in \cP}\cQ_1'(P')$. 
Also notice that 
since each vertex $u\in Z$, has degree at most $2$ in $\CC$ and $\vert Z \vert\leq OPT(G',k)$, 
the total number of paths in $\cQ_1$ is at most $2OPT(G',k)$.
From each $\cQ_1(P)$, at least $(1-\frac{\epsilon}{2})\vert \cQ_1(P)\vert$ paths have 
substitute paths in $G''$. Recall that, 
\[ \cQ_1 =\biguplus_{P\in \cP}\cQ_1(P) \mbox{ and }  \cQ_1' =\biguplus_{P\in \cP}\cQ_1'(P').\]
That is, $\cQ_1$ ($\cQ_1'$) is the disjoint union of $\cQ_1(P)$ ($\cQ_1'(P')$) for $P\in \cP$. Thus, 
\begin{eqnarray*}
\vert \cQ_1\vert -  \vert \cQ_1' \vert &= & \sum_{P\in \cP} \vert \cQ_1(P) \vert -  \vert \cQ_1'(P') \vert \\
& \leq & \sum_{P\in \cP} \vert \cQ_1(P) \vert - \left(1- \frac{\epsilon}{2}\right) \vert \cQ_1(P) \vert \\
& = & \left(\sum_{P\in \cP} \frac{\epsilon}{2} \vert \cQ_1(P) \vert  \right)=  \frac{\epsilon}{2} \vert \cQ_1 \vert.
\end{eqnarray*}
This implies that $\vert \cQ_1\vert-\vert \cQ_1'\vert \leq  \frac{\epsilon}{2} \vert \cQ_1 \vert \leq \epsilon OPT(G',k)$. This implies that  $\cQ_1'\cup \cQ_2'$ contains at least $(1-\epsilon)OPT(G',k)$ substitute paths. Each path in $\cQ$ for which we do not have a substitute path can destroy at most one cycle in 
$\CC$. Recall that, $\CC$  is an optimum solution to $(G',k)$. This implies that $G''$
contains at least $(1-\epsilon)OPT(G',k)$ vertex disjoin cycles. 
This completes the proof of the claim. 
\end{proof}
By Lemma~\ref{lem:CP_path_creation}, we know that $OPT(G',k)=OPT(G,k)$ and hence 
by Claim~\ref{claim:optCP} we get that $OPT(G'',k)\geq (1-\epsilon)OPT(G,k)$.   
We know that given a solution ${\cal S}'$ of $(G'',k)$ the solution lifting algorithm will output a solution ${\cal S}$ of same cardinality
for $(G,k)$. Therefore, we have  
$$\frac{\vert{\cal S}\vert}{OPT(G,k)}\geq (1-\epsilon) \frac{\vert{\cal S}'\vert}{OPT(G'',k)}.$$
This gives the desired PSAKS for \CP{} and completes the proof.
 \end{proof}


%
\section{Approximate Kernelization in Previous Work}\label{sec:previousWork}
In this section we show how some of the existing approximation algorithms and \FPT{} approximation algorithms can be re-interpreted as first computing an $\alpha$-approximate kernel, and then running a brute force search or an approximation algorithm on the reduced instance.

\subsection{Partial Vertex Cover}
In the \PVC{} problem the input is a graph $G$ on $n$ vertices, and an integer $k$. 
The task is to find a vertex set $S \subseteq V(G)$ of size $k$, maximizing the number of edges with at least one 
end-point in $S$. We will consider the problem parameterized by the solution {\em size} $k$. 
Note that the solution size is {\em not} the objective function value. We define \PVC{} as a parameterized 
optimization problem as follows. 
\[
    PVC(G,k,S)= 
\begin{cases}
    -\infty & \text{$\vert S \vert >k$} \\
    \text{Number of edges incident on } S & \text{Otherwise}
\end{cases}
\]

\PVC{} is \WOne-hard~\cite{GuoNW07}, thus we do not expect an \FPT{} algorithm or a kernel of {\em any} size to exist for this problem. On the other hand, Marx~\cite{MarxFPT-AS} gave a $(1+\epsilon)$-approximation algorithm for the problem with running time $f(k,\epsilon)n^{\OO(1)}$. We show here that the approximation algorithm of Marx~\cite{MarxFPT-AS} can be re-interpreted as a PSAKS.

\begin{theorem}
\PVC{} admits a strict time and size efficient PSAKS. 
\end{theorem}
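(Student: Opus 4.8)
The plan is to reproduce the key combinatorial observation behind Marx's FPT-approximation scheme for {\PVC} and repackage it as a reduction rule plus a solution-lifting algorithm. First I would observe that in any optimal solution $S$ of size $k$, the $k$ chosen vertices cover at most $\sum_{v\in S} d_G(v)$ edges, and that if $G$ has a vertex of degree at least $k$ then greedily taking the $k$ highest-degree vertices already covers $\Omega(OPT)$ edges; more precisely the $k$ highest-degree vertices cover at least a $(1-1/(k+1))$-fraction, or something of that flavour, of what any $k$ vertices can cover. The crucial structural fact is: either the top few (a function of $k$ and $\epsilon$) highest-degree vertices together with their neighbourhoods contain an $\epsilon$-approximately optimal solution, or the maximum degree is bounded by a function of $k$ and $\epsilon$, in which case every vertex covers few edges and the relevant part of the graph — the union of closed neighbourhoods of a bounded-degree vertex set of size $k$ — has size bounded by a function of $k$ and $\epsilon$.

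Concretely, the reduction algorithm with parameter $\alpha = 1+\epsilon$ would: let $d$ be the threshold $\lceil (1+\epsilon)/\epsilon\rceil$ or similar; compute $H$, the set of, say, $t(k,\epsilon)$ vertices of largest degree; argue that there is a $(1-\epsilon)$-approximate solution using only vertices from $H$ together with vertices whose neighbourhoods are ``mostly hit'' — this is where I would follow Marx's argument that one only needs to remember, for each small subset of the high-degree vertices, a bounded number of additional vertices. Then delete all vertices and edges not relevant to any such candidate solution and collapse large twin classes / high-multiplicity neighbourhood patterns into at most $k+1$ representatives, exactly as in the {\cvc} kernel (Reduction Rules~\ref{red:cvcdegree} and~\ref{red:twins} are a good template, and the counting in Lemma~\ref{lem:cvcUpperBound} is the model for the size bound). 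The output instance has $|V(G')| \le f(\epsilon)\cdot k^{O(1)}$ — the paper claims $f(\alpha)k^5$ in Table~\ref{fig:summary}, so I would aim for a fifth-power bound coming from: $O(k)$ high-degree vertices, times $\binom{k}{O(1)}$ choices of which high-degree vertices a solution vertex is adjacent to, times $O(k)$ representatives per pattern, times an $O(k^2)$-type bound on edges with no high-degree endpoint.

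For the solution-lifting algorithm: given a solution $S'$ to $(G',k)$, map it back by replacing any merged representative by an original vertex and padding out to size $k$ using unused high-degree vertices if $|S'|<k$ (adding vertices never decreases the number of covered edges). I would then show this is $\alpha$-\emph{strict}: if $S'$ covers $c\cdot OPT(G',k)$ edges then the lifted $S$ covers at least $\frac{1}{\alpha}\cdot c \cdot OPT(G,k)$ edges, using $OPT(G',k) \ge (1-\epsilon) OPT(G,k)$ (from the structural claim) together with the fact that the lift loses nothing, and invoking Fact~\ref{fact1} exactly as in the {\cvc} proof to combine the additive/multiplicative slacks into a clean $\min/\max$ bound matching Definition~\ref{def:strictPreprocess}. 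Time-efficiency and size-efficiency follow because all steps run in time $\mathrm{poly}(n)$ independent of $\epsilon$ and the size bound is $f(\epsilon)\cdot k^{O(1)}$ with the exponent of $k$ independent of $\epsilon$.

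The main obstacle I anticipate is the structural lemma that a $(1-\epsilon)$-approximate solution can be found among a ``bounded support'' set of candidate vertices — i.e.\ correctly extracting from Marx's analysis the precise statement that one need only keep, for each of the polynomially-many patterns of adjacency to the high-degree vertices, a bounded number of representatives, and that deleting everything else costs at most an $\epsilon$-fraction of $OPT$. The delicate point is handling solution vertices that are themselves of intermediate degree and adjacent to several high-degree vertices: one must argue their contribution is either negligible (absorbed into the $\epsilon$ slack) or captured by the bounded-degree remainder graph. Getting the bookkeeping right so that the final bound is genuinely $k^{5}$ and not, say, $k^{f(\epsilon)}$ (which would only give a PSAKS, not an efficient one) is the part that requires care; the rest is a routine adaptation of the {\cvc} argument.
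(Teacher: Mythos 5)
Your plan diverges from the paper's proof and, as written, has a genuine gap. The paper's argument is much simpler than what you sketch, and it needs neither twin-class collapsing nor any ``patterns of adjacency to high-degree vertices.'' There are exactly two cases, split on whether the maximum degree is at least $\beta\binom{k}{2}$ with $\beta = 1/\epsilon$. In the high-degree case the reduction algorithm does not shrink the graph at all: it simply observes that the $k$ highest-degree vertices already cover at least a $(1-1/\beta) = 1/\alpha$ fraction of $OPT$ (the only double-counted edges are the at most $\binom{k}{2}$ inside the chosen set, and these are swamped by $d_G(v_1) \geq \beta\binom{k}{2}$), so it can output a trivial constant-size instance and let the solution lifting algorithm just emit the top $k$ vertices. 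In the bounded-degree case the reduction takes the $k\lceil\beta\binom{k}{2}\rceil + 1$ highest-degree vertices $V'$ and outputs the induced subgraph on $N_G[V']$; here one proves $OPT(G',k) = OPT(G,k)$ \emph{exactly}, not up to a $(1-\epsilon)$ factor, by a lexicographic exchange argument: any lexicographically-smallest optimum lies inside $V'$ because $|N_G[S]| \leq k\lceil\beta\binom{k}{2}\rceil$ for any $k$-set $S$, so $V'$ always contains a fresh swap vertex. The solution lifting algorithm in this case is the identity, no merging and no padding needed. Strictness and the $\OO(k^5)$ size bound are immediate from this two-case structure, and no use of Fact~\ref{fact1} is required.

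The concrete problems with your proposal are these. First, the twin/pattern collapsing you borrow from the {\cvc} kernel does not transfer: {\PVC} is an edge-maximization problem, and false twins $u,v$ with $N(u)=N(v)$ cover the exact same edge set, so keeping ``$k+1$ representatives per pattern'' is the wrong reduction to preserve the optimum and you would need an entirely separate argument that no edges are lost in the count. Second, your structural lemma — that there is a $(1-\epsilon)$-approximate solution supported on a bounded candidate set after deletions — is never actually stated, and you flag it yourself as the main obstacle; without that lemma you don't have a proof, and the paper sidesteps the need for it entirely. Third, your size bookkeeping ($\binom{k}{\OO(1)}$ adjacency patterns times representatives per pattern, etc.) describes the anatomy of the {\cvc} kernel, not the {\PVC} one; the $k^5$ in the paper comes from $|V'| \cdot \Delta(G) = \OO(\beta k^3) \cdot \OO(\beta k^2)$, a product of ``how many vertices we keep'' and ``bounded degree,'' with nothing combinatorial about subsets of high-degree vertices. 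In short, you reached for the {\cvc} toolkit when the right move is a direct degree-threshold dichotomy where one branch gives a constant-size kernel outright and the other preserves $OPT$ with zero loss.
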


\begin{proof}
We give an $\alpha$-approximate kernelization algorithm for the problem for every $\alpha>1$. 
Let $\epsilon=1-\frac{1}{\alpha}$ and $\beta=\frac{1}{\epsilon}$. 
Let $(G,k)$ be the input instance. Let $v_1,v_2,\ldots,v_n$ be the vertices of $G$ in the non-increasing order of 
degree, i.e 
$d_G(v_i)\geq d_G(v_j)$ for all $1\geq i>j\geq n$. 
The kernelization algorithm has two cases based on degree of $v_1$. 

\noindent
\textbf{Case 1: $d_G(v_1)\geq \beta {k \choose 2}$.}
In this case $S=\{v_1,\ldots,v_k\}$ is a $\alpha$-approximate solution. 
The number of edges incident to $S$ is at least $(\sum_{i=1}^k d_G(v_i))-{k \choose 2}$, because 
at most ${k \choose 2}$ edges have both end points in $S$ and they are counted twice in the sum $(\sum_{i=1}^k d_G(v_i))$. 
The value of the optimum solution is at most $\sum_{i=1}^k d_G(v_i)$. 
Now consider the value, $PVC(G,k,S)/OPT(G,k)$. 
$$\frac{PVC(G,k,S)}{OPT(G,k)}\geq \frac{(\sum_{i=1}^k d_G(v_i))-{k \choose 2}}{\sum_{i=1}^k d_G(v_i)}\geq 1-\frac{{k \choose 2}}{d_G(v_1)}\geq 1-\frac{1}{\beta}=\frac{1}{\alpha}$$
The above inequality implies that $S$ is an $\alpha$-approximate solution. 
So the kernelization algorithm outputs a trivial instance $(\emptyset,0)$ in this case.

\noindent
\textbf{Case 2: $d_G(v_1)< \beta {k \choose 2}$.}
Let $V'=\{v_1,v_2,\ldots,v_{k \lceil \beta {k \choose 2} \rceil +1}\}$. 
In this case the algorithm outputs $(G',k)$, where $G'=G[N_G[V']]$. 
We first clam that $OPT(G',k)=OPT(G,k)$. Since $G'$ is a subgraph of $G$, 
$OPT(G',k)\leq OPT(G,k)$. Now it is enough to show that $OPT(G',k)\geq OPT(G,k)$. 
Towards that, we prove that there is an optimum solution that contains only vertices from the set $V'$. Suppose not, then 
 consider the solution $S$ which is lexicographically smallest in the ordered list $v_1,\ldots v_n$. The set $S$ contains at most $k-1$ vertices from $V'$ and at least 
 one from $V\setminus V'$. 
Since degree of each vertex in $G$ is at most $\lceil \beta {k \choose 2} \rceil-1$ and 
$\vert S \vert \leq k$, we have that $\vert N_G[S]\vert \leq k \lceil \beta {k \choose 2} \rceil$.
This implies that there exists a vertex $v\in V'$ such that $v\notin N_G[S]$. 
 Hence 
 by including the vertex $v$ and removing a vertex from $S\setminus V'$, we can cover at least as many edges as $S$ 
 can cover. 
 This contradicts our assumption that $S$ is lexicographically smallest. 
  Since $G'$ is a subgraph of $G$ any solution of $G'$ is also a solution 
 of $G$. Thus we have shown that $OPT(G',k)=OPT(G,k)$. 
 So the algorithm returns the instance $(G',k)$  as the reduced instance. 
Since $G'$ is a subgraph of $G$, in this case, the solution lifting algorithm 
takes a solution $S'$ of $(G',k)$ as input and outputs $S'$ as a solution of $(G,k)$. 
Since $OPT(G',k)=OPT(G,k)$, it follows that 
$\frac{PVC(G,k,S')}{OPT(G,k)}=\frac{PVC(G',k,S')}{OPT(G',k)}$.

 The number of vertices in the reduced instance is 
 $\OO(k\cdot \lceil \frac{1}{\epsilon}{k\choose 2}\rceil^2)=\OO(k^5)$. 
 The running time of the algorithm is polynomial in the size of $G$. 
 Since the algorithm either finds an $\alpha$-approximate solution (Case 1) or reduces the instance by a \onesafe{}  reduction rule (Case 2), 
 this kernelization scheme is strict.
\end{proof}

\subsection{Steiner Tree}
In the \STREE{} problem we are given as input a graph $G$, a subset $R$ of $V(G)$ called the {\em terminals} and a weight function $w : E(G) \rightarrow \mathbb{N}$. A {\em Steiner tree} is a subtree $T$ of $G$ such that $R \subseteq V(T)$, and the {\em cost} of a tree $T$ is defined as $w(T) = \sum_{e \in E(T)} w(e)$. The task is to find a Steiner tree of minimum cost. We may assume without loss of generality that the input graph $G$ is complete and that $w$ satisfies the triangle inequality: for all $u, v, w \in V(G)$  we have $w(uw) \leq w(uv) + w(vw)$. This assumption can be justified by adding for every pair of vertices $u$,$v$ the edge $uv$ to $G$ and making the weight of $uv$ equal the shortest path distance between $u$ and $v$. If multiple edges are created between the same pair of vertices, only the lightest edge is kept.

Most approximation algorithms for the {\sc Steiner Tree} problem rely on the notion of a $k$-restricted Steiner tree, defined as follows. A {\em component} is a tree whose leaves coincide with a subset of terminals, and a $k$-component is a component with at most $k$ leaves. A $k$-restricted Steiner tree ${\cal S}$ is a collection of $k$-components, such that the union of these components is a Steiner tree $T$. The cost of  ${\cal S}$ is the sum of the costs of all the $k$-components in ${\cal S}$. Thus an edge that appears in several different $k$-components of ${\cal S}$ will contribute several times to the cost of ${\cal S}$, but only once to the cost of $T$. The following result by Borchers and Du~\cite{BorchersD97} shows that for every $\epsilon > 0$ there exists a $k$ such that the cost of the best $k$-restricted Steiner tree ${\cal S}$ is not more than $(1+\epsilon)$ times the cost of the best Steiner tree. Thus approximation algorithms for {\sc Steiner Tree} only need to focus on the best possible way to ``piece together" $k$-components to connect all the terminals.

\begin{proposition}[\cite{BorchersD97}]\label{prop:restricted} For every $k \geq 1$, graph $G$, terminal set $R$, weight function $w : E(G) \rightarrow \mathbb{N}$ and Steiner tree $T$, there is a $k$-restricted Steiner Tree ${\cal S}$ in $G$ of cost at most $(1+\frac{1}{\lfloor \log_2 k \rfloor}) \cdot w(T)$.
\end{proposition}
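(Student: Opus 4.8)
Since $w(\cdot)$ is additive over edges and any Steiner tree decomposes, by splitting at each terminal of degree at least $2$, into edge‑disjoint \emph{full} Steiner trees (ones in which every terminal is a leaf), it suffices to prove the bound when $T$ is full: the $k$-restricted Steiner trees produced for the pieces together form a $k$-restricted Steiner tree for $T$ whose cost is the sum of the costs for the pieces. Hence the first step is to reduce to a full Steiner tree, and then, by repeatedly replacing each Steiner vertex of degree $d\ge 4$ by a small tree of $d-2$ internal vertices joined by zero‑weight edges, to reduce further to the case where $T$ is \emph{binary}, i.e. every Steiner vertex has degree exactly $3$. Introducing zero‑weight edges is only a bookkeeping device; to respect $w:E(G)\to\mathbb N$ one subdivides with an arbitrarily small uniform weight and lets it tend to $0$, or simply rescales, which is harmless since the conclusion is a multiplicative ratio.

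So fix a binary full Steiner tree $T$ with terminal set $R$ and set $q=\lfloor\log_2 k\rfloor$ (for $k=1$ the bound is vacuous, so assume $q\ge 1$). The plan is to build $q$ \emph{different} $2^q$-restricted Steiner trees $\mathcal S_0,\dots,\mathcal S_{q-1}$ of $T$ — each of which is also $k$-restricted because $2^q\le k$ — satisfying
\[
\sum_{i=0}^{q-1} w(\mathcal S_i)\ \le\ (q+1)\,w(T).
\]
The cheapest of them then costs at most $\tfrac{q+1}{q}\,w(T)=\bigl(1+\tfrac1q\bigr)w(T)=\bigl(1+\tfrac1{\lfloor\log_2 k\rfloor}\bigr)w(T)$, which is the claim. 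Equivalently, picking $i\in\{0,\dots,q-1\}$ uniformly at random it suffices to show $\mathbb E[w(\mathcal S_i)]\le(1+\tfrac1q)w(T)$ and then derandomize by trying all $q$ offsets.

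The construction of the $\mathcal S_i$ is the technical core. Root $T$ at a terminal; for a Steiner vertex $v$ let its \emph{Steiner depth} be the number of Steiner vertices strictly above it, so the Steiner‑vertex ancestors‑or‑self of any vertex have exactly the depths $0,1,\dots$ up to its own depth. For offset $i$, let $\mathcal S_i$ consist of the subtrees $T_i(v)$, one for each Steiner vertex $v$ of depth $\equiv i\pmod q$ (plus a single top component for the root region where no such ancestor exists), where $T_i(v)$ is obtained by descending from $v$ and stopping at every terminal and at every Steiner vertex of depth $\equiv i\pmod q$ other than $v$; then $T_i(v)$ is a binary tree with at most $2^q$ leaves. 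For a fixed $i$ these subtrees are edge‑disjoint and cover all of $T$ — each edge lies in exactly one of them, because in any $q$ consecutive Steiner depths exactly one is $\equiv i\pmod q$ — so \emph{before any repair} $\mathcal S_i$ accounts for cost exactly $w(T)$. The only obstruction is that a leaf of $T_i(v)$ may be a Steiner vertex, which is not allowed in a component; we repair this by augmenting $T_i(v)$ with a duplicated downward path from each such Steiner‑leaf to a nearest terminal beneath it, and these duplicated edges are the sole source of extra cost. The key combinatorial claim to establish is that, with these augmentation paths chosen greedily, every edge of $T$ lies on an augmentation path for \emph{at most one} offset $i$; summing over $i$ then gives each edge total multiplicity at most $q+1$, i.e. $\sum_i w(\mathcal S_i)\le(q+1)w(T)$, equivalently $\mathbb E[\text{multiplicity of }e]\le 1+\tfrac1q$ and hence $\mathbb E[w(\mathcal S_i)]\le(1+\tfrac1q)w(T)$ by linearity.

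I expect the heart of the difficulty, and the part requiring the most care, to be precisely this last claim together with the exact choice of augmentation paths: one must ensure the repair paths for a fixed $i$ are pairwise non‑overlapping, handle the region near the root where a boundary vertex may have no terminal ``below'' it inside its own component, and prove the ``at most one offset'' bound rigorously — for instance by charging each augmentation‑path edge to a canonically chosen nearby terminal and arguing that the relevant depth windows for distinct offsets are disjoint. The reduction to binary full trees, while routine, must also be written with attention to the zero‑weight/integrality subtlety above. Because this case analysis is lengthy and is exactly the content of Borchers and Du's paper, in the write‑up we simply invoke their result~\cite{BorchersD97} rather than reproduce it.
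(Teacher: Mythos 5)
The paper states this proposition as a cited result of Borchers and Du~\cite{BorchersD97} and gives no proof of its own, so there is no in-paper argument to compare yours against; your write-up likewise ultimately defers to~\cite{BorchersD97}. Your accompanying sketch is a faithful outline of the Borchers--Du argument (reduction to binary full Steiner trees; $q=\lfloor\log_2 k\rfloor$ shifted $2^q$-restricted decompositions of $T$; a charging argument bounding the total augmentation cost across offsets by $w(T)$, giving $\sum_{i} w(\mathcal S_i)\le (q+1)w(T)$; then averaging/taking the cheapest $\mathcal S_i$), and you correctly locate the crux in establishing that the repair paths can be chosen so that each edge of $T$ is duplicated for at most one offset.
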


Proposition~\ref{prop:restricted} can easily be turned into a PSAKS for {\sc Steiner Tree} parameterized by the number of terminals, defined below.
\[
    ST((G,R),k',T)= \left\{
\begin{array}{rl}
    -\infty & \text{if } \text{$\vert R \vert >k'$} \\
   \infty & \text{if }  T  \text{ is not a Steiner tree for } R \\
    w(T) & \text{otherwise}
\end{array}\right.
\]
To get a $(1 + \epsilon)$-approximate kernel it is sufficent to pick $k$ based on $\epsilon$, compute for each $k$-sized subset $R' \subseteq R$ of terminals an optimal Steiner tree for $R'$, and only keep vertices in $G$ that appear in these Steiner trees. This reduces the number of vertices of $G$ to $\OO(|R|^k)$, but the edge weights can still be large making the bitsize of the kernel super-polynomial in $|R|$. However, it is quite easy to show that keeping only $\OO(\log |R|)$ bits for each weight is more than sufficient for the desired precision.

\begin{theorem}
{\sc Steiner Tree} parameterized by the number of terminals admits a PSAKS.
\end{theorem}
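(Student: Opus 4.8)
The plan is to combine Proposition~\ref{prop:restricted} with a rounding of edge weights, following the informal description preceding the theorem statement. Fix $\alpha = 1 + \epsilon > 1$; we want an $\alpha$-approximate polynomial kernel. Set $k = 2^{\lceil 2/\epsilon \rceil}$ so that $1 + \frac{1}{\lfloor \log_2 k \rfloor} \leq 1 + \frac{\epsilon}{2}$. First I would perform the metric-closure preprocessing described in the paragraph before Proposition~\ref{prop:restricted}: replace $G$ by its complete graph on $V(G)$ with $w(uv)$ equal to the shortest-path distance, keeping only the lightest edge between each pair; this does not change $OPT$ and the solution lifting algorithm can expand a tree in the closure back to a tree in $G$ of the same cost in polynomial time. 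Next, for every subset $R' \subseteq R$ with $|R'| \leq k$, compute an optimal Steiner tree for $R'$ (possible in time $2^{|R'|}n^{O(1)} = n^{O(1)}$ for fixed $\epsilon$ by Dreyfus--Wagner), and let $V^\star$ be the union of all vertices appearing in these $O(|R|^k)$ trees. Restrict $G$ to $G[V^\star]$. By Proposition~\ref{prop:restricted}, the best $k$-restricted Steiner tree has cost at most $(1+\frac{\epsilon}{2}) OPT$, and each of its $k$-components is a Steiner tree for some $R' \subseteq R$ of size at most $k$, hence is no cheaper than our stored optimal tree for $R'$, which lies inside $G[V^\star]$; gluing the stored trees together gives a Steiner tree in $G[V^\star]$ of cost at most $(1+\frac{\epsilon}{2}) OPT$. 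So this reduction alone is a $(1+\frac{\epsilon}{2})$-approximate kernel with $O(|R|^k)$ vertices.

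The remaining issue is bit-size: the edge weights can be arbitrarily large integers, so $O(|R|^k)$ vertices does not bound the encoding length by a polynomial in $|R|$. To fix this I would rescale and round. Let $W$ be the minimum over all stored optimal trees $T_{R'}$ (for $|R'| \le k$) of $w(T_{R'})$; note $W \le OPT$, while $OPT \le (k-1) W$ is false in general, but $OPT \le |R| \cdot \max_{R'} w(T_{R'})$ and more usefully any optimal Steiner tree for all of $R$ has cost at least $W$ and at most $(k-1)W$... — rather than chase this, the clean approach: let $W = \max_{uv \in E(G[V^\star])} w(uv)$ restricted to edges that actually appear in some stored tree, so $W \le OPT$ and $OPT \le n \cdot W$ where $n = |V^\star|$. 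Define a new weight $w'(uv) = \lceil \frac{w(uv) \cdot n^2}{\epsilon W} \rceil$ for edges present in stored trees, and delete all other edges (or set them to $+\infty$, which we can encode by deletion). Each $w'$ value is a nonnegative integer at most $n^3/\epsilon$, so has $O(\log n + \log(1/\epsilon))$ bits, and since $n \le O(|R|^k)$ the whole instance has size polynomial in $|R|$ (with exponent depending on $\epsilon$). The scaling factor $\frac{n^2}{\epsilon W}$ is chosen so that for any tree $T$ on at most $n$ edges, $|w'(T) \cdot \frac{\epsilon W}{n^2} - w(T)| \le \frac{\epsilon W}{n} \le \frac{\epsilon}{n} OPT$, i.e. rounding distorts costs additively by at most $\epsilon \cdot OPT / n$ per tree, and a tree has at most $n-1$ edges so total distortion is at most $\epsilon \cdot OPT$.

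Finally I would assemble the approximation guarantee. The solution lifting algorithm takes a Steiner tree $T'$ in the reduced weighted instance $(G[V^\star], w')$, expands each of its edges (which correspond to shortest paths in the original $G$) to get a tree $T$ in $G$, and outputs $T$. Using the distortion bound in one direction (an optimal tree in $G[V^\star]$ with the stored-tree edges has $w'$-cost close to its $w$-cost, which is $\le (1+\frac{\epsilon}{2})OPT$) and the other direction (a $c$-approximate $T'$ in $w'$ pulls back to a tree whose $w$-cost exceeds its rescaled $w'$-cost by at most an $\epsilon OPT$ additive term), one gets $w(T) \le c \cdot (1 + O(\epsilon)) \cdot OPT$; after reparametrizing $\epsilon$ (e.g. running the whole construction with $\epsilon/10$ in place of $\epsilon$) this is $\le c \cdot \alpha \cdot OPT$, which is exactly the $\alpha$-approximate kernel condition since $ST$ is a minimization problem. (The $|R| > k'$ and ``$T$ not a Steiner tree'' corner cases of the $ST$ function are handled trivially: if $|R| > k'$ both instances have value $-\infty$ and anything works; otherwise we never produce an infeasible $T$.) The main obstacle is the bit-length argument: getting the scaling constant right so that the additive rounding error is controlled by a small multiple of $OPT$ on \emph{both} the ``compression'' side (lower-bounding $OPT$ of the reduced instance) and the ``lifting'' side (upper-bounding the pulled-back solution), while simultaneously keeping every rounded weight polynomially bounded — the interplay between $W$, $n$, and $\epsilon$ in the denominator is where care is needed. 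Everything else is routine.
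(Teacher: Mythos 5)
Your proposal is correct and follows the same overall approach as the paper's proof: metric closure, Dreyfus--Wagner on all terminal subsets of size at most $k$ to identify a polynomial-size set of vertices, application of Borchers--Du (Proposition~\ref{prop:restricted}) to show that restricting to these vertices preserves a $(1+\epsilon/2)$-approximate solution, and then a rescale-and-round step to control the bit-size of the weights. Your argument that $W \leq OPT(G,R,w)$ is sound: any Steiner tree for $R$ is a Steiner tree for any $R' \subseteq R$, so $w(T_{R'}) \leq OPT(G,R,w)$, and hence every edge weight in any stored tree is at most $OPT$. The minor per-tree/per-edge confusion in the sentence about distortion is harmless — with a $\frac{n^2}{\epsilon W}$ scaling the total additive error over a tree of at most $n$ edges is $\frac{\epsilon W}{n} \leq \epsilon\cdot OPT$, which is more than enough.

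The interesting difference from the paper is in how the rounding is anchored. The paper starts by computing a $2$-approximate Steiner tree $T_2$, deletes every non-terminal $v$ with $\min_{x \in R} w(vx) \geq w(T_2)$, and uses the triangle inequality to conclude $w(e) \leq 6\cdot OPT$ for all surviving edges; it then scales by $\frac{4|R|}{\epsilon\cdot OPT}$ and, crucially, exploits the fact that $G' = G[V^\star]$ remains a complete metric graph so a Steiner tree has at most $2|R|$ edges, giving rounded weights bounded by $\frac{24|R|}{\epsilon}$. You instead anchor on $W$, the largest stored-tree edge weight, and delete every edge not appearing in a stored tree. This discards the metric/completeness structure, so you fall back to the generic $\leq n-1$ edge bound, compensating with a scaling factor of $\frac{n^2}{\epsilon W}$. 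Both arguments are valid and yield a PSAKS, but the paper's version gives rounded weights of size $O(\log|R| + \log\frac{1}{\epsilon})$ bits each, whereas yours gives $O(\log n + \log\frac{1}{\epsilon}) = O(k\log|R| + \log\frac{1}{\epsilon})$ bits — still polynomial in $|R|$ for fixed $\epsilon$, but with an extra $\epsilon$-dependent factor in the exponent of the kernel size. Your variant does avoid having to separately run the $2$-approximation and argue about $w(e)/OPT$ bounds, so it is a fair trade; the scaling factor $\frac{n^2}{\epsilon W}$ is actually more aggressive than necessary (a factor of $\frac{n}{\epsilon W}$ already gives additive distortion $\leq \epsilon W \leq \epsilon\cdot OPT$), and tightening it would shave off one $\log n$ factor per weight.
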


\begin{proof}
Start by computing a $2$-approximate Steiner tree $T_2$ using the classic factor $2$ approximation algorithm~\cite{bookApprox}. For every vertex $v \notin R$ such that $\min_{x \in R} w(vx) \geq w(T_2)$ delete $v$ from $G$ as $v$ may never participate in any optimal solution. By the triangle inequality we may now assume without loss of generality that for every edge $uv \in E(G)$ we have  $w(uv) \leq 6OPT(G,R,w)$.

Working towards a $(1+\epsilon)$-approximate kernel of polynomial size, set $k$ to be the smallest integer such that $\frac{1}{\lfloor \log_2 k \rfloor} \leq \epsilon/2$. For each subset $R'$ of $R$ of size at most $k$, compute an optimal steiner tree $T_{R'}$ for the instance $(G,R',w)$ in time $\OO(3^k|E(G)||V(G)|)$ using the algorithm of Dreyfus and Wagner\cite{DreyfusW71}. Mark all the vertices in $V(T_{R'})$. After this process is done, some $\OO(k|R|^k)$ vertices in $G$ are marked. Obtain $G'$ from $G$ by deleting all the unmarked vertices in $V(G) \setminus R$. Clearly every Steiner tree in $G'$ is also a Steiner tree in $G$, we argue that $OPT(G',R,w) \leq (1+\frac{\epsilon}{2})OPT(G,R,w)$.

Consider an optimal Steiner tree $T$ for the instance $(G,R,w)$. By Proposition~\ref{prop:restricted} there is a $k$-restricted Steiner Tree ${\cal S}$ in $G$ of cost at most $(1+\frac{1}{\lfloor \log_2 k \rfloor}) \cdot w(T) \leq (1+\frac{\epsilon}{2})OPT(G,R,w)$. Consider a $k$-component $C \in {\cal S}$, and let $R'$ be the set of leaves of $C$ - note that these are exactly the terminals appearing in $C$. $C$ is a Steiner tree for $R'$, and so $T_{R'}$ is a Steiner tree for $R'$ with $w(T_{R'}) \leq w(C)$. Then ${\cal S}' = ({\cal S} \setminus \{C\}) \cup \{T_{R'}\}$ is a $k$-restricted Steiner Tree of cost no more than $(1+\frac{\epsilon}{2})OPT(G,R,w)$.  Repeating this argument for all $k$-components of ${\cal S}$ we conclude that there exists a $k$-restricted Steiner Tree ${\cal S}$ in $G$ of cost at most $(1+\frac{\epsilon}{2})OPT(G,R,w)$, such that all $k$-components in ${\cal S}$ only use marked vertices. The union of all of the $k$-components in ${\cal S}$ is then a Steiner tree in $G'$ of cost at most  $(1+\frac{\epsilon}{2})OPT(G,R,w)$.

We now define a new weight function $\hat{w} : E(G') \rightarrow \mathbb{N}$, by setting 
$$\hat{w}(e) = \left\lfloor w(e) \cdot \frac{4|R|}{\epsilon \cdot  OPT(G,R,w)} \right\rfloor$$
Note that since $w(e) \leq 6 \cdot OPT(G,R,w)$ it follows that $\hat{w}(e) \leq \frac{24|R|}{\epsilon}$. Thus it takes only $\OO(\log |R| + \log \frac{1}{\epsilon})$ bits to store each edge weight. It follows that the bitsize of the instance $(G',R,\hat{w})$ is $|R|^{2^{\OO(1/\epsilon)}}$.
We now argue that, for every $c \geq 1$, a $c$-approximate Steiner tree $T'$ for the instance $(G',R,\hat{w})$ is also a $c(1+\epsilon)$-approximate Steiner tree for the instance $(G,R,w)$.

First, observe that the definition of $\hat{w}$ implies that for every edge $e$ we have the inequality
$$w(e) \leq \hat{w}(e) \cdot \frac{\epsilon \cdot OPT(G,R,w)}{4|R|} + \frac{\epsilon \cdot OPT(G,R,w)}{4|R|}.$$
In a complete graph that satisfies the triangle inequality, a Steiner tree on $|R|$ terminals has at most $|R|-1$ non-terminal vertices. Thus it follows that $T'$ has at most $2|R|$ edges. Therefore,
$$w(T') \leq \hat{w}(T') \cdot \frac{\epsilon \cdot OPT(G,R,w)}{4|R|} + \frac{\epsilon}{2}OPT(G,R,w).$$
Consider now an optimal Steiner tree $Q$ for the instance $(G',R,w)$. We have that 
$$w(Q) \cdot \frac{4|R|}{\epsilon \cdot  OPT(G,R,w)} \geq \hat{w}(Q),$$
which in turn implies that
$$OPT(G',R,w) \geq OPT(G',R,\hat{w}) \cdot \frac{\epsilon \cdot  OPT(G,R,w)}{4|R|}.$$
We can now wrap up the analysis by comparing $w(T')$ with $OPT(G,R,w)$.
\begin{eqnarray*}
w(T') & \leq & \hat{w}(T') \cdot \frac{\epsilon \cdot OPT(G,R,w)}{4|R|} + \frac{\epsilon}{2}OPT(G,R,w) \\
& \leq & c \cdot OPT(G',R, \hat{w}) \cdot \frac{\epsilon \cdot OPT(G,R,w)}{4|R|} + \frac{\epsilon}{2}OPT(G,R,w) \\
& \leq & c \cdot OPT(G',R, w) + \frac{\epsilon}{2}OPT(G,R,w) \\
& \leq & c \cdot (1 + \epsilon/2) \cdot OPT(G,R, w) + \frac{\epsilon}{2}OPT(G,R,w) \\
& \leq & c \cdot (1 + \epsilon) \cdot OPT(G,R, w)
\end{eqnarray*}
This implies that a $T'$ is a $c(1+\epsilon)$-approximate Steiner tree for the instance $(G,R,w)$, concluding the proof.
\end{proof}

\subsection{Optimal Linear Arrangement}
In the {\sc Optimal Linear Arrangement} problem we are given as input an undirected graph $G$ on $n$ vertices. The task is to find a permutation $\sigma : V(G) \rightarrow \{1, \ldots, n\}$ minimizing the {\em cost} of $\sigma$. Here the cost of a permutation $\sigma$ is $val(\sigma,G) = \sum_{uv \in E(G)} |\sigma(u)-\sigma(v)|$. 
Recall the problem  {\sc Optimal Linear Arrangement} parameterized by vertex cover: 

\begin{equation*}
OLA((G,C),k,\sigma) = \left\{
\begin{array}{rl}
-\infty & \text{if } C \text{ is not vertex cover of } G \text{ of size at most } k,\\ 
\infty & \text{if } \sigma \text{ is not a linear layout},\\
val(\sigma,G) & \text{ otherwise.}
\end{array} \right.
\end{equation*}

Fellows et al.~\cite{FellowsHRS13} gave an FPT approximation scheme for {\sc Optimal Linear Arrangement} parameterized by vertex cover. 
Their algorithm can be interpreted as a $2^k \cdot (\frac{1}{\epsilon})$ size $(1+\epsilon)$-approximate kernel combined with a brute force algorithm on the reduced instance. Next we explain how to turn the approximation algorithm of Fellows et al. into a $(1+\epsilon)$-approximate kernel.

Let $((G,C),k)$ be the given instance of  {\sc Optimal Linear Arrangement} and $C$ be a vertex cover of $G$. 
The remaining set of vertices $I = V(G) \setminus C$ forms an independent set. Furthermore, $I$ can be partitioned into at most $2^k$ sets: for each subset $S$ of $C$ we define $I_S = \{v \in I ~:~ N(v) = S\}$.  Let $m=\vert E(G)\vert$. 

Based on $\epsilon$ we pick an integer $x = \lfloor \frac{\epsilon n}{4k^2 \cdot k^2 \cdot 2^{k+4}} \rfloor$. From $G$ we make a new graph $G_1$ by deleting for each $S \subseteq C$ at most $x$ vertices from $I_S$, such that the size of $I_S$ becomes divisible by $x$. Clearly $OPT(G_1) \leq OPT(G)$ since $G_1$ is an induced subgraph of $G$. Furthermore, for any ordering $\sigma_1$ of $G_1$ one can make an ordering $\sigma$ of $G$ by appending all the vertices in $V(G) \setminus V(G_1)$ at the end of the ordering. Since there are $2^k$ choices for $S \subseteq C$, each vertex in $V(G) \setminus V(G_1)$ has degree at most $k$ it follows that
\begin{equation}
val(\sigma,G) \leq val(\sigma_1,G_1) +  2^k \cdot x \cdot k \cdot n 
\label{eqn:Gbounded}
\end{equation}
One might think that an additive error of  $2^k \cdot x \cdot k \cdot n$ is quite a bit, however Fellows et al. show that the optimum value is so large that this is quite insignificant.
\begin{lemma}[\cite{FellowsHRS13}]\label{lem:olaOptBound}
$OPT(G) \geq \frac{m^2}{4k^2}$
\end{lemma}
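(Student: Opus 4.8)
The plan is to exploit the small vertex cover: since every edge is incident to one of the at most $k$ vertices of $C$, some vertex of $C$ must have large degree, and a single high-degree vertex already forces a large linear-arrangement cost regardless of the layout.

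First I would apply pigeonhole. As $C$ is a vertex cover, each of the $m$ edges contributes at least $1$ to $\sum_{c\in C} d_G(c)$, so $\sum_{c\in C} d_G(c)\ge m$ and there is a vertex $c\in C$ with $d_G(c)\ge m/|C|\ge m/k$; write $d=d_G(c)$. Now fix an \emph{arbitrary} linear layout $\sigma$ of $G$ and discard from $val(\sigma,G)=\sum_{uv\in E(G)}|\sigma(u)-\sigma(v)|$ all edges except the $d$ edges incident to $c$, which only decreases the sum: $val(\sigma,G)\ge \sum_{u\in N(c)}|\sigma(c)-\sigma(u)|$. The numbers $\{\sigma(u):u\in N(c)\}$ are $d$ distinct integers, none equal to $\sigma(c)$, so $\sum_{u\in N(c)}|\sigma(c)-\sigma(u)|$ is minimized by placing the neighbours of $c$ at positions $\sigma(c)\pm1,\sigma(c)\pm2,\dots$, i.e.\ packed as tightly as possible on both sides of $\sigma(c)$. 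Using $1+2+\cdots+t=t(t+1)/2$ together with a case distinction on the parity of $d$, this minimum is at least $d^2/4$ (indeed it equals $(d/2)(d/2+1)$ when $d$ is even and $(d+1)^2/4$ when $d$ is odd). Therefore $val(\sigma,G)\ge d^2/4\ge (m/k)^2/4=m^2/(4k^2)$, and since $\sigma$ was arbitrary this yields $OPT(G)\ge m^2/(4k^2)$.

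I do not expect a genuine obstacle here: the argument is a one-line pigeonhole step followed by an elementary minimisation of $\sum_i|x-p_i|$ over placements of $d$ distinct integer points $p_i$ on the line avoiding $x$. The only step that needs any care is the tightness-of-packing bound $\sum_{u\in N(c)}|\sigma(c)-\sigma(u)|\ge d^2/4$ and its parity case analysis, which is routine.
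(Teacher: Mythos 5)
Your proof is correct and self-contained; the paper simply cites this bound from~\cite{FellowsHRS13} without reproducing a proof, so there is no in-paper argument to compare against. Your route — pigeonhole on the vertex cover to find a vertex of degree at least $m/k$, then the elementary observation that any $d$ distinct integer positions all differing from $\sigma(c)$ must give $\sum_{u\in N(c)}|\sigma(c)-\sigma(u)|\ge d^2/4$ because the multiset of distances is termwise bounded below by $1,1,2,2,\dots$ — is the natural argument, and the parity computation checks out (the exact minimum is $(d/2)(d/2+1)$ for $d$ even and $(d+1)^2/4$ for $d$ odd, both at least $d^2/4$).
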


In the following discussion let $I^1$ be the $V(G_1) \setminus C$ and let $I^1_S = \{v \in I^1 ~:~ N(v) = S\}$ for every $S \subseteq C$. Proceed as follows for each $S \subseteq C$. Since $|I^1_S|$ is divisible by $x$, we can group $I^1_S$ into $\frac{|I^1_S|}{x}$ groups, each of size $x$. Define $\widehat{OPT}(G_1)$ to be the value of the best ordering of $G_1$ among the orderings where, for every group, the vertices in that group appear consecutively. Next we prove the following lemma, which states  that there is a near-optimal solution for $G_1$ , where vertices in the same group appear consecutively.

\begin{lemma}\label{lem:olaGroup}  $\widehat{OPT}(G_1) \leq OPT(G_1) +  (kn + m) \cdot 2^k \cdot (k+1) \cdot x$\end{lemma}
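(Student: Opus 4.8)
The plan is to start from an optimal linear layout $\sigma^*$ of $G_1$ and to massage it, in $2^k$ successive phases indexed by the sets $S\subseteq C$, into a layout in which the vertices of every group occur consecutively, while keeping track of the total increase in cost. Since the resulting layout witnesses a value at least $\widehat{OPT}(G_1)$, it suffices to bound this increase by $(kn+m)2^k(k+1)x$. Throughout write $I^1_S=\{v\in V(G_1)\setminus C: N(v)=S\}$, recall that $|I^1_S|$ is a multiple of $x$, and recall that the groups of $I^1_S$ are the $|I^1_S|/x$ blocks of $x$ vertices obtained by listing $I^1_S$ in order of position. We may assume $x\ge 2$, as otherwise every group is a single vertex and $\widehat{OPT}(G_1)=OPT(G_1)$.

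In the phase for a fixed $S$, I would process the groups $g_1,g_2,\ldots$ of $I^1_S$ from left to right, where $g_j$ consists of the vertices currently at positions $r^{(j)}_1<r^{(j)}_2<\cdots<r^{(j)}_x$. To \emph{gather} $g_j$ I would delete the $x-1$ vertices at positions $r^{(j)}_2,\ldots,r^{(j)}_x$ and reinsert them, in order, immediately after position $r^{(j)}_1$; the vertices originally strictly between $r^{(j)}_1$ and $r^{(j)}_x$ that do not belong to $g_j$ (the \emph{intruders} of $g_j$) then slide right, and nothing outside the interval $[r^{(j)}_1,r^{(j)}_x]$ moves. Each reinserted vertex moves left by at most $r^{(j)}_x-r^{(j)}_1$, and each intruder moves right by at most $x-1$. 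Three bookkeeping observations drive the estimate: (i) within a phase the intervals $[r^{(j)}_1,r^{(j)}_x]$ are pairwise disjoint, since the groups do not interleave, so their lengths sum to at most $n$ and the intruder sets are pairwise disjoint; (ii) a block of $x$ consecutive vertices lying strictly inside a gather interval is shifted uniformly, because no reinserted vertex lands between its members, so any group already made consecutive in an earlier phase stays consecutive; and (iii) the reinserted vertices all lie in $I^1_S$ and hence have degree $|S|\le k$ in $G_1$.

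With these facts the calculation is short. Writing $\Delta v$ for the displacement of a vertex $v$ during one phase, the triangle inequality applied edge by edge gives, for the layout $\sigma'$ obtained from $\sigma$ after that phase,
\begin{align*}
val(\sigma',G_1)-val(\sigma,G_1) &\le \sum_{v} d_{G_1}(v)\,|\Delta v| \\
&\le (x-1)\,k\sum_j \bigl(r^{(j)}_x-r^{(j)}_1\bigr) \;+\; (x-1)\sum_j\ \sum_{w\ \mathrm{intruder\ of}\ g_j} d_{G_1}(w) \\
&\le (x-1)\bigl(kn+2|E(G_1)|\bigr)\ \le\ x(kn+2m),
\end{align*}
using (iii) for the first sum, (i) for both sums, and $|E(G_1)|\le m$. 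Summing over the $2^k$ choices of $S$ and noting that $m\le k^2n+km$ for every $k\ge 1$, the total increase is at most $2^k x(kn+2m)\le 2^k(k+1)(kn+m)x$, which is the claimed bound.

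I expect the main obstacle to be making observation (ii) precise across phases, that is, verifying that the per‑phase displacement bounds remain valid in the running layout and not merely relative to $\sigma^*$. The key point is that a gather interval in a later phase can never straddle a previously formed group: such a group is consecutive, so straddling it would force a reinserted vertex (which lies in the current $I^1_S$) to sit strictly between two of the group's members, which those members' positions forbid. Hence each previously formed group is either disjoint from or entirely contained in every gather interval, in the latter case shifting rigidly; once this invariant is checked, the accounting above goes through verbatim, and the layout produced at the end of all $2^k$ phases is a valid layout of $G_1$ in which every group is consecutive.
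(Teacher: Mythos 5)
Your proof is correct, but it takes a genuinely different and more elementary route than the paper. The paper's proof first invokes the homogeneity lemma (Lemma~\ref{lem:olahom}) to pass to a homogeneous optimal layout: there, the vertices of each $I^1_S$ lie in at most $k+1$ contiguous runs (one between each pair of consecutive vertices of $C$, plus the ends), so only $O((k+1)x)$ vertices of $I^1_S$ sit at misaligned run/group boundaries and must be relocated. Combining this with the elementary observation that moving a single degree-$\le k$ vertex can raise the cost by at most $kn+m$ yields the bound. You instead start from an \emph{arbitrary} optimal layout and run a left-to-right ``gather'' per group within $2^k$ phases, bounding the per-phase increase by $(x-1)(kn+2m)$ via the displacement inequality $val(\sigma')-val(\sigma)\le\sum_v d(v)|\Delta v|$. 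Your bookkeeping observations (disjoint gather intervals, rigid shifting of previously-formed groups, degree bound for the gathered vertices) take the place of the homogeneity lemma entirely. The trade-off: the paper's argument is shorter given Lemma~\ref{lem:olahom} (which is cited without proof, and homogeneity also plays a role in the original FPT-AS of Fellows et al.); your argument is self-contained, dispenses with the homogeneity structure, and gives a marginally tighter constant — $2^k x(kn+2m)$ versus the paper's $2^k(k+1)(kn+m)x$ — at the cost of the more delicate invariant in your observation (ii), which you correctly identify as the main thing to check and correctly justify (a gather interval cannot straddle a previously-formed group because the interval's endpoints belong to the current $I^1_S$, disjoint from the group).
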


To prove Lemma~\ref{lem:olaGroup} we first need an intermediate result. We say that an ordering $\sigma$ is {\em homogenous} if, for every $u$, $v \in I$ such that $N(u) = N(v)$, $\sigma(u) < \sigma(v)$ and there is no $c \in C$ such that $\sigma(u) < \sigma(c) < \sigma(v)$, we have that for every $w$ such that  $\sigma(u) < \sigma(w) < \sigma(v)$, $N(w) = N(u)$. Informally this means that between two consecutive vertices of $C$, the vertices from different sets $I_{S}$ and $I_{S'}$ ``don't mix''.

\begin{lemma}[\cite{FellowsHRS13}] \label{lem:olahom}
There exists a homogenous optimal linear arrangement of $G_1$.
\end{lemma}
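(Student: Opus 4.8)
The plan is to prove Lemma~\ref{lem:olahom} by an exchange argument. Starting from an arbitrary optimal linear arrangement $\sigma$ of $G_1$, I will rearrange only the independent-set vertices, keeping the positions of the vertices of $C$ fixed, and show that a suitable such rearrangement is homogeneous and has cost no larger than that of $\sigma$, hence is itself optimal. Concretely, fix an optimal $\sigma$ and let $p_1<p_2<\cdots<p_{|C|}$ be the positions occupied by the vertices of $C$. These positions split $\{1,\dots,n\}$ into at most $|C|+1$ maximal \emph{blocks} of consecutive positions containing no vertex of $C$; each block is filled entirely by vertices of $I^1$. The goal is to reorder, inside each block, so that every class $I^1_S$ occupies a contiguous stretch of positions. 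This is exactly what homogeneity asks: if $u,v\in I$ have $N(u)=N(v)$ and no vertex of $C$ lies between them, then $u$ and $v$ are in one block, and if that class is contiguous within the block then every $w$ between $u$ and $v$ also lies in the class.

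The key observation is that, inside a single block, the placement cost of an $I^1$-vertex is an affine function of its position. Consider a block whose positions lie strictly between the $C$-positions $a$ and $b$ (with the convention $a=0$ for the leftmost block and $b=n+1$ for the rightmost one). If $w\in I^1_S$ is placed at position $p$ in this block, then for each neighbour $c\in S\subseteq C$ we have either $\sigma(c)\le a<p$ or $\sigma(c)\ge b>p$, since no vertex of $C$ sits strictly between $a$ and $b$; hence $|p-\sigma(c)|$ is $p-\sigma(c)$ or $\sigma(c)-p$, and summing over $c\in S$ the cost contributed by $w$ equals $\alpha_S\cdot p+\beta_S$, where $\alpha_S=|\{c\in S:\sigma(c)\le a\}|-|\{c\in S:\sigma(c)\ge b\}|$ is an integer depending only on $S$ and the block, and $\beta_S$ depends only on $S$. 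Because $I^1$ is an independent set there are no edges inside a block, and edges with both endpoints in $C$ are untouched, so $val(\sigma,G_1)$ equals a constant (independent of how the block vertices are permuted) plus $\sum_{\text{blocks}}\sum_{v\text{ in block}}\alpha_{S(v)}\cdot\sigma(v)$.

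Thus, within each block we must assign the block's vertices to its fixed set of positions so as to minimize $\sum_v\alpha_{S(v)}\sigma(v)$. By the rearrangement inequality this is minimized by listing the block vertices in non-increasing order of slope $\alpha_{S(v)}$ and sending them to the positions in increasing order; among vertices sharing the same slope — in particular all vertices of a fixed class, but also distinct classes that happen to share a slope in this block — the internal order is free, so we may always choose to list them class by class. Carrying out this rearrangement independently in every block produces an arrangement $\sigma'$ that is a bijection, is homogeneous, and satisfies $val(\sigma',G_1)\le val(\sigma,G_1)=OPT(G_1)$, so $\sigma'$ is an optimal homogeneous linear arrangement.

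I expect the linearity-of-cost-within-a-block observation to be the crux of the argument; once it is in place the rest is a routine rearrangement-inequality computation. The only genuine bookkeeping is handling the two degenerate boundary blocks (where $S_{<}$ or $S_{>}$ is empty, so the slope is $\pm|S|$) and verifying that ``each class is contiguous within every block'' is indeed equivalent to the stated definition of homogeneity, which follows because any two same-class vertices with no $C$-vertex between them necessarily lie in a common block.
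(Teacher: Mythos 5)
The paper states this lemma with a citation to Fellows et al.~\cite{FellowsHRS13} and does not reproduce a proof, so there is no in-paper argument to compare against; what is needed is a correct self-contained proof, and yours is one. The exchange argument is sound: holding $\sigma|_C$ fixed, every edge of $G_1$ incident to a gap vertex $w\in I^1_S$ goes to $C$ and its other endpoint lies entirely on one side of the gap, so the cost contributed by $w$ at position $p$ is an affine function $\alpha_S p+\beta_S$ with $\alpha_S$ depending only on $S$ and the gap; there are no edges internal to a gap (as $I^1$ is independent), and $C$--$C$ edges are unaffected, so the part of $val(\sigma,G_1)$ that varies under within-gap permutations is exactly $\sum_v\alpha_{S(v)}\sigma(v)$, which the rearrangement inequality minimizes gap by gap. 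Your tie-breaking remark is the step that actually closes the argument: same-class vertices share a slope, so a class-contiguous within-gap assignment is always among the minimizers (and inter-class ties can be broken to avoid interleaving), which is exactly what the definition of homogeneity requires once one notes that two same-class vertices with no $C$-vertex between them must lie in a common gap. The boundary gaps with $a=0$ or $b=n+1$ are handled by your convention, and the per-gap permutations reassemble into a genuine bijection, so $\sigma'$ is a valid optimal homogeneous arrangement. No gaps.
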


\begin{proof}[Proof of Lemma~\ref{lem:olaGroup}]
Consider an optimal linear arrangement of $G_1$ that is homogenous, as guaranteed by Lemma~\ref{lem:olahom}. From such an arrangement one can move around at most $2^k \cdot (k+1) \cdot x$ vertices (at most  $(k+1) \cdot x$ vertices for each set $I^1_S$) and make a new one where for every group, the vertices in that group appear consecutively. Since moving a single vertex of degree at most $k$ in an ordering $\sigma$ can only increase the cost of $\sigma$ by at most $(kn + m)$, this concludes the proof.
\end{proof}

In the graph $G_1$, each set $I^1_S$ is partitioned into $\frac{|I^1_S|}{x}$ groups, each of size $x$. From $G_1$ we can make a new graph $G_2$ by keeping $C$ and exactly one vertex from each group, and deleting all other vertices. Thus $G_2$ is a graph on $|C| + \frac{n-|C|}{x}$ vertices. Each ordering $\sigma_2$ of $V(G_2)$ corresponds to an ordering  $\sigma_1$ of $V(G_1)$ where for every group, the vertices in that group appear consecutively. We will say that the ordering $\sigma_1$ is the ordering of $V(G_1)$ {\em corresponding} to $\sigma_2$. Note that for every ordering of $\sigma_1$ of $V(G_1)$ where for every group, the vertices in that group appear consecutively there is an ordering $\sigma_2$ of $G_2$ such that $\sigma_1$ corresponds to $\sigma_2$. The next lemma summarizes the relationship between the cost of $\sigma_1$ (in $G_1$) and the cost of $\sigma_2$ (in $G_2$).

\begin{lemma}\label{lem:olaIneqLast}
Let $\sigma_2$ be an ordering of $G_2$ and $\sigma_1$ be the ordering of $G_1$ corresponding to $\sigma_2$. Then the following two inequalities hold.
\begin{eqnarray}
val(\sigma_1,G_1) &\leq& x^2 \cdot val(\sigma_2, G_2) \label{eqn:G1bounded}\\
val(\sigma_2,G_2)& \leq& \frac{val(\sigma_1, G_1)}{x^2} + (k+1)\frac{m}{x} + k^2\frac{n}{x}\label{eqn:G2bounded}
\end{eqnarray}
\end{lemma}

\begin{proof}
For the first inequality observe that there is a natural correspondence between edges in $G_1$ and edges in $G_2$. Each edge in $G_2$ corresponds to either $1$ or $x$ edges, depending on whether it goes between two vertices of $C$ or between a vertex in $C$ and a vertex in $V(G_2) \setminus C$. Furthermore, for any edge $uv$ in $G_1$ corresponding to an edge $u'v'$ in $G_2$ we have that $|\sigma_1(u)-\sigma_1(v)| \leq x \cdot |\sigma_2(u')-\sigma_2(v')|$. This concludes the proof of the first inequality.

For the second inequality, observe that for every edge $uv$ in $G_1$ corresponding to an edge $u'v'$ in $G_2$ we have that 
$$|\sigma_2(u')-\sigma_2(v')| \leq \frac{|\sigma_1(u)-\sigma_1(v)|}{x} + k+1.$$
For each edge $u'v'$ in $G_2$ between a vertex in $C$ and a vertex in $V(G_2) \setminus C$, there are exactly $x$ edges in $G_1$ corresponding to it. These $x$ edges contribute at least $x(|\sigma_2(u')-\sigma_2(v')| - k - 1)$ each to $val(\sigma_1,G_1)$, thus contributing at least $|\sigma_2(u')-\sigma_2(v')| - k - 1$ to $\frac{val(\sigma_1, G_1)}{x^2}$. Since there are at most $\frac{m}{x}$ edges in $G_2$, and at most $k^2$ edges between vertices in $C$ (and thus un-accounted for in the argument above), each contributing at most $\frac{n}{x}$ to $val(\sigma_2,G_2)$, the second inequality follows.
\end{proof}
Observe that the second inequality of Lemma~\ref{lem:olaIneqLast} immediately implies that 
\begin{equation}
OPT(G_2) \leq  \frac{\widehat{OPT}(G_1)}{x^2} + (k+1)\frac{m}{x} + k^2\frac{n}{x}\label{eqn:G2bounded_sec}
\end{equation}
We are now ready to state our main result.
\begin{theorem} {\sc Optimal Linear Arrangement} parameterized by vertex cover has a $(1+\epsilon)$-approximate kernel of size 
$\OO(\frac{1}{\epsilon}2^kk^4)$. 
\end{theorem}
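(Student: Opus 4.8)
The plan is to assemble the ingredients already developed in this subsection into a single chain of reductions, each of which loses at most a small factor, and then wrap the whole thing up with a brute-force computation on the reduced instance. Concretely, given $((G,C),k)$ and the target approximation ratio $1+\epsilon$, I would first replace $\epsilon$ by $\epsilon/3$ (or some similar constant split) so that the three sources of multiplicative error — passing from $G$ to $G_1$, from $G_1$ to the grouped optimum $\widehat{OPT}(G_1)$, and from $G_1$ to $G_2$ — each contribute at most a factor of roughly $1+\epsilon/3$, for a combined loss of at most $1+\epsilon$. The reduction algorithm outputs $(G_2,C)$ together with $k$; note $|V(G_2)| = |C| + \frac{n-|C|}{x} \le k + \frac{n}{x}$, and with $x = \lfloor \frac{\epsilon n}{4k^2\cdot k^2 \cdot 2^{k+4}}\rfloor$ this is $\OO(\frac{1}{\epsilon}2^k k^4)$ vertices, which is the claimed size bound (bitsize is fine since the layout values are integers bounded polynomially in $|V(G_2)|$).

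The heart of the argument is the solution-lifting step and its error analysis. Given an ordering $\sigma_2$ of $G_2$, the solution lifting algorithm forms the corresponding ordering $\sigma_1$ of $G_1$ (expanding each surviving representative back into its group of $x$ consecutive vertices) and then $\sigma$ of $G$ by appending the at most $2^k\cdot x$ deleted vertices at the end. I would chain the four inequalities from the excerpt: $val(\sigma,G)\le val(\sigma_1,G_1) + 2^k x k n$ (Eq.~\eqref{eqn:Gbounded}), $val(\sigma_1,G_1)\le x^2\, val(\sigma_2,G_2)$ (Eq.~\eqref{eqn:G1bounded}), and on the optimum side $OPT(G_2)\le \widehat{OPT}(G_1)/x^2 + (k+1)m/x + k^2 n/x$ (Eq.~\eqref{eqn:G2bounded_sec}) together with $\widehat{OPT}(G_1)\le OPT(G_1) + (kn+m)2^k(k+1)x$ (Lemma~\ref{lem:olaGroup}) and $OPT(G_1)\le OPT(G)$. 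Combining these, for a $c$-approximate $\sigma_2$ one gets $val(\sigma,G)\le c\cdot x^2\, OPT(G_2) + 2^k x k n \le c\,\widehat{OPT}(G_1) + (\text{lower-order terms}) + 2^k x k n$, and all the additive error terms are of the form $(\text{poly}(k)\cdot 2^k)\cdot x\cdot (n+m)$. Using $m\le kn$ (since $C$ is a vertex cover of size $\le k$) these are all bounded by $\OO(2^k k^3 x n)$. The key point — and this is where the choice of $x$ is engineered — is Lemma~\ref{lem:olaOptBound}, $OPT(G)\ge m^2/(4k^2)$; when $m$ is large this already dominates, and when $m$ is small one separately observes $val(\sigma,G)=OPT(G)=m$ is tiny and the instance is trivially handled. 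Plugging $x \approx \frac{\epsilon n}{4k^2\cdot k^2\cdot 2^{k+4}}$ into $\OO(2^k k^3 x n)$ gives an additive error of order $\frac{\epsilon n^2}{k^3}$, which is at most $\epsilon\cdot OPT(G)$ whenever $n^2/k^3 \lesssim m^2/k^2$, i.e. whenever $m \gtrsim n/\sqrt{k}$; the remaining regime (very few edges) is disposed of by hand. Thus $val(\sigma,G)\le c(1+\epsilon)OPT(G)$ and we have a $(1+\epsilon)$-approximate polynomial-time preprocessing algorithm, hence a $(1+\epsilon)$-approximate kernel of the stated size. (One should also verify the degenerate cases where $x=0$, which happens only when $n$ is small relative to $k$ and $\epsilon$, in which case $G$ itself already has size bounded by a function of $k$ and $\epsilon$ and can be output directly.)

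The main obstacle I anticipate is the bookkeeping to show that every additive error term is genuinely swallowed by $\epsilon\cdot OPT(G)$ with the specific constant $x = \lfloor \frac{\epsilon n}{4k^2\cdot k^2\cdot 2^{k+4}}\rfloor$ — i.e. tracking the precise powers of $k$ and factors of $2^k$ through the chain so that the product $2^k\cdot\text{poly}(k)\cdot x$ times $(n+m)$ lands below $\frac{\epsilon m^2}{4k^2}$. This is exactly the computation that Fellows et al.\ carry out in their FPT-approximation-scheme analysis, so I would essentially transcribe it, being careful to separate the ``$m$ large'' case (where Lemma~\ref{lem:olaOptBound} gives the needed lower bound) from the ``$m$ small'' degenerate case and the ``$n$ small'' degenerate case where $x$ could be $0$. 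The only genuinely new (but routine) observation beyond their paper is the reinterpretation as a pre-processing algorithm plus solution-lifting algorithm, and the verification that the bitsize of $(G_2,C)$ is polynomial in the stated parameter — both of which are immediate once the reduction is phrased as ``delete vertices from $G$ to obtain $G_2$'' and the lifting as ``re-expand groups and append deleted vertices.''
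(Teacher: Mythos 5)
Your proposal follows essentially the same chain of inequalities the paper uses — Eq.~\eqref{eqn:Gbounded}, then Eq.~\eqref{eqn:G1bounded}, then Eq.~\eqref{eqn:G2bounded_sec} plus Lemma~\ref{lem:olaGroup}, then $OPT(G_1)\le OPT(G)$, finishing with Lemma~\ref{lem:olaOptBound} and the choice of $x$ — and the kernel output, lifting step, and size bound are identical. One word of caution on a spot you gloss over: the final inequality $c\cdot\epsilon\frac{n^2}{4k^2}\le c\cdot\epsilon\, OPT(G)$ needs $OPT(G)\ge n^2/(4k^2)$, whereas Lemma~\ref{lem:olaOptBound} gives $OPT(G)\ge m^2/(4k^2)$; you correctly notice that the sparse regime ($m$ small relative to $n$) and the $x=0$ regime ($n$ small relative to $k,\epsilon$) need separate treatment, but ``disposed of by hand'' is doing nontrivial work there — you should say explicitly, e.g., that isolated vertices can be deleted first so $m\ge (n-|C|)\ge n-k$, or that when $n$ is small enough that $x=0$ the instance already has $\OO(\frac{1}{\epsilon}2^kk^4)$ vertices and can be output as is.
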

\begin{proof}
The kernelization algorithm outputs the graph $G_2$ as described above. $G_2$ has at most $k + n/x \leq \OO(\frac{1}{\epsilon}2^kk^4)$ vertices, so it remains to show how a $c$-approximate solution $\sigma_2$ to $G_2$ can be turned into a $c(1+\epsilon)$-approximate solution $\sigma$ of $G$.

Given a $c$-approximate solution $\sigma_2$ to $G_2$, this solution corresponds to a solution $\sigma_1$ of $G_1$. The ordering $\sigma_1$ of $V(G_1)$ corresponds to an ordering $\sigma$ of $V(G)$, as described in the paragraph right before Lemma~\ref{lem:olaOptBound}. We claim that this ordering $\sigma$ is in fact a $c(1+\epsilon)$-approximate solution $\sigma$ of $G$. 
\begin{eqnarray*}
val(\sigma,G) & \leq & val(\sigma_1, G_1) +  2^k \cdot x \cdot k \cdot n \qquad\qquad\qquad\qquad\qquad\qquad\qquad (\text{By Equation~}(\ref{eqn:Gbounded}))\\
& \leq & x^2 \cdot val(\sigma_2, G_2) +  2^k \cdot x \cdot k \cdot n 
\qquad\qquad\qquad\qquad\qquad\qquad (\text{By Equation~}(\ref{eqn:G1bounded}))\\
& \leq & x^2 \cdot c \cdot OPT(G_2) +  2^k \cdot x \cdot k \cdot n \\
& \leq & x^2 \cdot c \left( \frac{\widehat{OPT}(G_1)}{x^2} + (k+1)\frac{m}{x} + k^2\frac{n}{x}\right) +  2^k \cdot x \cdot k \cdot n 
\qquad (\text{By Equation~}(\ref{eqn:G2bounded_sec}))\\
& \leq & c \cdot \widehat{OPT}( G_1) + c \cdot 3k^2 \cdot n \cdot x + 2^k \cdot x \cdot k \cdot n 
\qquad\qquad\qquad\qquad (\text{Because } m\leq k\cdot n)\\
& \leq & c \cdot \left(OPT(G_1) +  (kn + m) (k+1)  2^k  \cdot x \right) + c \cdot 
2^{k+2} \cdot x \cdot k \cdot n 
\qquad(\text{By Lemma~}\ref{lem:olaGroup})\\
& \leq & c \cdot OPT(G_1) +  c \cdot k^22^{k+4} \cdot x \cdot n 
\qquad\qquad\qquad\qquad\qquad\qquad (\text{Because } m\leq k\cdot n)\\
& \leq & c \cdot OPT(G) +  c \cdot k^22^{k+4} \cdot x \cdot n \\
& \leq & c \cdot OPT(G) +  c \cdot \epsilon \frac{n^2}{4k^2 }  \\
& \leq & c \cdot (1+ \epsilon) \cdot OPT(G)
\qquad\qquad\qquad\qquad\qquad\qquad\qquad (\text{By Lemma~}(\ref{lem:olaOptBound}))\\
\end{eqnarray*}
This concludes the proof.
\end{proof}

\newcommand{\apptlong}[1]{{#1}-approximate polynomial parameter transformation}
\newcommand{\apptshort}[1]{{#1}-{\sf appt}}
\newcommand{\gappath}{$\alpha$-{\sc Gap Long Path}}

\section{Lower Bounds for Approximate Kernelization}\label{sec:sizelb}
In this section we set up a framework for proving lower bounds on the size of $\alpha$-approximate kernels for a parameterized optimization problem. For normal kernelization, the most commonly used tool for establishing kernel lower bounds is by using cross compositions~\cite{BJK11}. In particular, Bodlaender et al.~\cite{BJK11} defined cross composition and showed that if an \NP{}-hard language $L$ admits a cross composition into a parameterized (decision) problem $\Pi$ and $\Pi$ admits a polynomial kernel, then $L$ has an {\em OR-distillation} algorithm. Fortnow and Santhanam~\cite{FortnowS11} proved that if an \NP{}-hard language $L$ has an OR-distillation, then \NP{} $\subseteq$ \coNPbypoly{}. 

In order to prove a kernelization lower bound for a parameterized decision problem $\Pi$, all we have to do is to find an \NP{}-hard langluage $L$ and give a cross composition from $L$ into $\Pi$. Then, if  $\Pi$ has a polynomial kernel, then combining the cross composition and the kernel with the results of Bodlaender et al.~\cite{BJK11} and Fortnow and Santhanam~\cite{FortnowS11} would prove that \NP{} $\subseteq$ \coNPbypoly{}. In other words a cross composition from $L$ into $\Pi$ proves that $\Pi$ does not have a polynomial kernel unless \NP{} $\subseteq$ \coNPbypoly{}. 

In order to prove lower bounds on the size of $\alpha$-approximate kernels, we generalize the notion of cross compositions to  $\alpha$-gap cross compositions, which are  hybrid of cross compositions and gap creating reductions found in hardness of approximation proofs. To give the formal definition of $\alpha$-gap cross compositions, we first need to recall the definition of Bodlaender et al.~\cite{BJK11} of 
polynomial equivalence relations on $\Sigma^*$, where $\Sigma$ is a finite alphabet.
\begin{definition}[polynomial equivalence relation~\cite{BJK11}] \label{polyEquivalenceRelation}
An equivalence relation~$R$ on~$\Sigma^*$, where $\Sigma$ is a finite alphabet, is called a \emph{polynomial equivalence relation} if (i) equivalence of any~$x,y \in \Sigma^*$ can be checked in time polynomial in~$|x|+|y|$, and (ii) any finite set~$S \subseteq \Sigma^*$ has at most~$(\max _{x \in S} |x|)^{\OO(1)}$ equivalence classes.
\end{definition}
Now we define the notion of $\alpha$-gap cross composition. 
\begin{definition}[$\alpha$-gap cross composition for maximization problem] \label{def:gapcrossComposition}
Let~$L \subseteq \Sigma^*$ be a language, where $\Sigma$ is a finite alphabet and let~$\Pi$ be a parameterized maximization problem. We say 
that~$L$ \emph{$\alpha$-gap cross composes} into~$\Pi$ (where $\alpha\geq 1$), if there is a polynomial equivalence relation~$R$ and an algorithm which, 
given~$t$ strings~$x_1,  \ldots, x_t$ belonging to the same equivalence class of~$R$, computes an 
instance~$(y,k)$ of $\Pi$ and $r\in {\mathbb R}$, in time polynomial in~$\sum _{i=1}^t |x_i|$ such that the following holds: 
\begin{enumerate}[label=(\roman*)]
\setlength{\itemsep}{-2pt}
\item $OPT(y, k) \geq {r}$  if and only if $x_i \in L$ for some $1 \leq i \leq t$; 
\item $OPT(y, k) < \frac{r}{\alpha}  $  if and only if $ x_i \notin L$ for all $1 \leq i \leq t$; and 
\item  $k$ is bounded by a polynomial in $\log t+\max_{\substack{1\leq i\leq t}} |x_i|$.
\end{enumerate}
If such an algorithm exists, then we say that $L$ $\alpha$-gap cross composes to $\Pi$. 
\end{definition}
%

One can similarly define $\alpha$-gap cross compositions for minimization problems. 
\begin{definition}
 \label{def:gapcrossCompositionmin}
The definition of $\alpha$-gap cross composition for minimization problem $\Pi$ can be obtained by replacing conditions $(i)$ and 
$(ii)$ of 
Definition~\ref{def:gapcrossComposition} with the following conditions $(a)$ and $(b)$ respectively: 
$(a)$ $OPT(y, k) \leq {r}$  if and only if $x_i \in L$ for some $1 \leq i \leq t$, 
and 
$(b)$  $OPT(y, k) > {r}\cdot {\alpha} $  if and only if $ x_i \notin L$ for all $1 \leq i \leq t$.
\end{definition}

Similarly to the definition of $\alpha$-approximate kernels, Definition~\ref{def:gapcrossCompositionmin} can be extended to encompass $\alpha$-gap cross composition where $\alpha$ is not a constant, but rather a function of the (output) instance $(y,k)$. Such compositions can be used to prove lower bounds on the size of $\alpha$-approximate kernels where $\alpha$ is super-constant.

%

One of the main ingredient to prove {\em hardness} about computations in different algorithmic models is an appropriate notion of a {\em reduction} from a problem to another. Next, we define a notion of a polynomial 
time reduction appropriate for obtaining lower bounds for $\alpha$-approximate kernels.  As we will see this is very similar to the definition of $\alpha$-approximate polynomial time pre-processing algorithm (Definition~\ref{def:polyTimePreProcessAppx}).   
\begin{definition}
\label{def:approxreduction}
Let $\alpha \geq 1$ be a real number. Let $\Pi$  and $\Pi'$ be two parameterized optimization problems.  
An {\bf \apptlong{$\alpha$}} (\apptshort{$\alpha$} for short)  ${\cal A}$ from $\Pi$ to $\Pi'$ is a pair of polynomial time algorithms,  
called 
reduction algorithm ${\cal R}_{\cal A}$ and solution lifting algorithm. 
Given as input an instance $(I,k)$ of $\Pi$ the reduction algorithm
outputs an instance $(I',k')$ of $\Pi'$.
The solution lifting algorithm 
takes as input an instance $(I,k)$ of $\Pi$, the output instance $(I',k')={\cal R}_{\cal A}(I,k)$ of $\Pi'$, 
and a solution $s'$ to the instance $I'$ and 
outputs a solution $s$ to $(I,k)$. If $\Pi$ is a minimization problem then
\[
\frac{\Pi(I,k,s)}{OPT_{\Pi}(I,k)} \leq \alpha \cdot \frac{\Pi'((I',k'),s')}{OPT_{\Pi'}(I',k')}.
\]
If $\Pi$ is a maximization problem then
\[
\frac{\Pi(I,k,s)}{OPT_{\Pi}(I,k)} \cdot \alpha \geq \frac{\Pi'((I',k'),s')}{OPT_{\Pi'}(I',k')}.
\]
If there is a an \apptshort{$\alpha$} from $\Pi$ to $\Pi'$ then in short we denote it by 
$\Pi \prec_{\apptshort{\alpha}} \Pi' $. 
\end{definition}

In the standard kernelization setting lower bounds machinery also rules out existence of compression algorithms. 
Similar to this our lower bound machinery also rules out existence of  
compression algorithms. Towards that we need to generalize the definition of $\alpha$-approximate kernel
to $\alpha$-approximate compression. The only difference is that in the later case the reduced instance can be an instance of any parameterized optimization problem. 

\begin{definition}
\label{def:approxcompression}
Let $\alpha \geq 1$ be a real number. Let $\Pi$  and $\Pi'$ be two parameterized optimization problems.  
An {\bf $\alpha$-approximate compression}  from $\Pi$ to $\Pi'$ 
is an \apptshort{$\alpha$} ${\cal A}$ from $\Pi$ to $\Pi'$ such that  
$\text{size}_{\cal A}(k) = \sup\{|I'|+k'  : (I',k') = {\cal R}_{\cal A}(I,k), I \in \Sigma^*\}$, 
is upper bounded by a computable function $g : \mathbb{N} \rightarrow \mathbb{N}$, 
where  ${\cal R}_{\cal A}$ is the reduction algorithm in ${\cal A}$. 
\end{definition}

For the sake of proving approximate kernel lower bounds, it is  immaterial 
that $\Pi'$ in the Definition~\ref{def:approxcompression} is a parameterized optimization problem and in fact it can {\em also be a  unparameterized optimization problem}. However, for clarity of presentation we will stick to  parameterized optimization problem in this paper. 
Whenever we talk about an existence of an $\alpha$-approximate compression and we
do not specify the target problem $\Pi'$, we mean the existence of $\alpha$-approximate compression into any 
optimization problem $\Pi'$.
For more detailed exposition about lower bound machinery about polynomial compression 
for decision problems we refer to the textbook~\cite{CyganFKLMPPS15}. 

Towards building a framework for lower bounds we would like to prove a theorem analogous to the one by  Bodlaender et al.~\cite{BJK11}. In particular, we would like to show that  an $\alpha$-gap cross composition from an \NP{}-hard language $L$ into a parameterized optimization problem $\Pi$, together with an $\alpha$-approximate 
compression 
of polynomial size yield an OR-distillation for the language $L$. Then the result of Fortnow and Santhanam~\cite{FortnowS11} would immediately imply that any parameterized optimization problem $\Pi$ that has an  $\alpha$-gap cross composition from an \NP{}-hard language $L$ can not have an $\alpha$-approximate 
compression 
 unless \NP{} $\subseteq$ \coNPbypoly{}. Unfortunately, for technical reasons, it seems difficult to make such an argument. Luckily, we {\em can} complete a very similar argument yielding essentially the same conclusion, but instead of relying on ``OR-distillations'' and the result of Fortnow and Santhanam~\cite{FortnowS11}, we make use of the more general result of Dell and van Melkebeek~\cite{DellM10} that rules  out cheap oracle communication protocols for \NP{}-hard problems. We first give  necessary definitions that allow us to formulate our statements. 

\begin{definition}[Oracle Communication Protocol~\cite{DellM10}]
Let $L\subseteq \Sigma^*$ be a language, where $\Sigma$ is a finite alphabet. An oracle communication protocol 
for the language $L$ is a communication protocol between two players. The first player is given the input $x$
and has to run in time polynomial in the length of $x$; the second player is computationally
unbounded but is not given any part of $x$. At the end of the protocol the first player should be able
to decide whether $x \in L$. The cost of the protocol is the number of bits of communication from the
first player to the second player.
\end{definition}

\begin{lemma}[Complementary Witness Lemma~\cite{DellM10}]
\label{lemma:complementary}
Let $L$ be a language and $t : {\mathbb N} \rightarrow {\mathbb N}$ be
polynomial function such that the problem of deciding whether at least one out of $t(s)$ inputs of
length at most $s$ belongs to $L$ has an oracle communication protocol of cost $\OO(t(s) \log t(s))$, where
the first player can be conondeterministic. Then $L \in$ \coNPbypoly.
\end{lemma}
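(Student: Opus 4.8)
The plan is to recall and adapt the argument of Dell and van Melkebeek~\cite{DellM10}, which is a refinement of the Fortnow--Santhanam~\cite{FortnowS11} proof that an OR-distillation of an \NP-hard language collapses the polynomial hierarchy. It suffices to place $\overline{L}$ in \textsf{NP}/\textsf{poly}, since this is equivalent to $L\in\coNPbypoly$. We argue one input length at a time: fixing $n$, the goal is to produce advice $a_n$ of size $n^{\OO(1)}$ and a nondeterministic polynomial-time verifier that, using $a_n$, accepts exactly the strings of length at most $n$ that are \emph{not} in $L$.

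First I would set $t:=t(n)$ and consider the ``OR'' problem on $t$ inputs of length at most $n$, which by hypothesis admits an oracle communication protocol $P$ of cost at most $c\cdot t\log t$ in which the first (polynomial-time, input-holding) player may be conondeterministic. The structural observation driving everything is that the second player receives no part of the input, so every message it sends is a function only of the transcript of the first player's messages so far; hence the entire conversation, and thus the first player's final conondeterministic verdict along any fixed nondeterministic branch, is determined by the tuple $(x_1,\dots,x_t)$, that branch, and the at most $c\cdot t\log t$ bits the first player transmits. Correctness of $P$ means: some $x_i\in L$ iff the first player accepts along \emph{every} branch; equivalently, \emph{all} $x_i\notin L$ iff the first player rejects along \emph{some} branch. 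Such a rejecting branch, together with the conversation transcript it induces, is an object of size polynomial in $\sum_i|x_i|$, and whether it is a genuine, transcript-consistent rejecting computation of the first player can be checked in polynomial time by simulation; crucially, because the second player's responses depend only on the first player's messages, a recorded transcript supplies the \emph{correct} second-player responses for any run whose first-player messages match it.

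The core of the proof --- and the step I expect to be the main obstacle --- is a greedy covering argument that distills polynomial-size advice from this. One builds a set $S$ of ``no''-instances of length at most $n$, each recorded together with a completing all-``no'' $t$-tuple, a rejecting branch for it, and the induced transcript, iteratively: as long as there is a string $x\notin L$ of length at most $n$ that is not yet ``explained'' by the current $S$ --- roughly, no all-``no'' $t$-tuple formed from $S\cup\{x\}$ and containing $x$ has a rejecting branch/transcript that the verifier below could use --- one adds such an $x$ with a fresh witnessing tuple. One then shows the process halts after $n^{\OO(1)}$ rounds: once $|S|$ grows past a suitable polynomial in $t$, the number of all-``no'' $t$-tuples formable inside $S$ exceeds the number $2^{c\,t\log t}$ of possible first-player transcripts, so two such tuples share a transcript along a rejecting branch, and a coordinate-swapping argument then shows that every remaining ``no''-instance is in fact already explained --- so the process stops with $|S|=n^{\OO(1)}$ (using that $t=t(n)$ is polynomial). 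Pinning down the notions of ``explained'' and of ``usable transcript'' so that both this termination bound and the soundness of the verifier hold at once --- and deciding exactly which concrete instances enter $a_n$ versus are merely referenced --- is the delicate bookkeeping, precisely as in~\cite{DellM10}; the conondeterminism of the first player is what makes a rejecting branch serve as an \textsf{NP}-certificate rather than requiring a co-nondeterministic check.

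Finally, I would define the verifier for $\overline{L}$ at length $n$ from $a_n=S$: on input $x$ with $|x|\le n$, it nondeterministically guesses one of the recorded $t$-tuples, a coordinate to overwrite by $x$, and a nondeterministic branch of the first player, reads the associated transcript from $a_n$, and checks in polynomial time --- by simulating the polynomial-time first player against that transcript --- that the resulting run is consistent with the transcript and ends in rejection; it accepts iff the check succeeds. Completeness holds because the greedy construction ensures every $x\notin L$ of length at most $n$ can be fitted, using only instances recorded in $S$, into an all-``no'' tuple with a recorded rejecting, transcript-consistent run. Soundness holds because, by the observation on the second player, a transcript-consistent rejecting run of the first player on a tuple is a genuine run against the real second player, so by correctness of $P$ it can occur only on an all-``no'' tuple, forcing $x\notin L$. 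This yields $\overline{L}\in\textsf{NP}/\textsf{poly}$, i.e.\ $L\in\coNPbypoly$, completing the proof.
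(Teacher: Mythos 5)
The paper does not prove this lemma: it is cited as a black-box result of Dell and van Melkebeek~\cite{DellM10}, so there is no in-paper proof to compare against. Your outline does capture the correct skeleton of their argument --- work one input length at a time, exploit that the second player's responses are a function of the first player's transcript alone so that a recorded transcript lets the verifier faithfully simulate an entire run (soundness then follows from protocol correctness), and distill polynomially many ``useful'' transcripts as advice.

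The covering step you sketch, however, contains a genuine gap. You grow a set $S$ of no-instances and argue termination by pigeonholing all-no $t$-tuples drawn from $S$: once $|S|^t > 2^{O(t\log t)}$, two such tuples share a rejecting transcript, and you then invoke ``a coordinate-swapping argument.'' That step is not justified and does not go through: the first player is an arbitrary polynomial-time machine reading the entire tuple at once, so two all-no tuples $\vec{x},\vec{y}$ with the same rejecting transcript on some branch say nothing about a mixed tuple such as $(x_1,y_2,\dots,y_t)$ --- the very first message can already differ. The actual argument of Fortnow--Santhanam and Dell--van Melkebeek is a greedy cover indexed by transcripts, not a bound on a growing witness set. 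Keep a set $A$ of not-yet-covered no-instances. Every tuple in $A^t$ has a rejecting branch, hence an assigned rejecting transcript; there are at most $2^{O(t\log t)} = t^{O(t)}$ transcripts, so some $\tau^*$ is the assigned transcript for at least $|A|^t / t^{O(t)}$ tuples in $A^t$; the no-instances occurring as coordinates of those tuples number at least $|A|/t^{O(1)}$ and are covered by $\tau^*$ (the verifier guesses a tuple through $x$, a branch, and simulates against $\tau^*$; a consistent rejecting simulation certifies all coordinates avoid $L$). Iterating $\mathrm{poly}(n)$ times empties $A$, and the advice is just the chosen transcripts --- no concrete no-instances need be stored, since the verifier may guess the remaining coordinates freely. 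So the gap is twofold: the wrong objects are being pigeonholed, and the coordinate swap is unsupported.
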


Our lower bound technique for $\alpha$-approximate 
compression 
for a parameterized optimization problem $\Pi$ requires the  problem $\Pi$ to be {\em polynomial time verifiable}. By this we mean that the function $\Pi$ is computable in polynomial time. 
We call such problems {\em nice parameterized optimization problems}. 
We are now in position 
to prove the main lemma of this section. 
\begin{lemma}
\label{lem:gapcrossComposition}
Let $L$ be 
a 
language and $\Pi$ be a nice parameterized optimization problem. If $L$ $\alpha$-gap cross 
composes to $\Pi$, and $\Pi$ has a polynomial sized $\alpha$-approximate 
compression, 
then 
$L\in$ \coNPbypoly. 
\end{lemma}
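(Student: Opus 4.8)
The plan is to mimic the classical argument that a cross composition together with a polynomial compression produces an OR-distillation (and hence $\NP \subseteq \coNPbypoly$), but to route it through the oracle communication machinery of Dell and van Melkebeek so that we never need the compression target $\Pi'$ to be polynomial-time verifiable. Concretely, I would fix $t(s) = s^d$ for a sufficiently large constant $d$ and exhibit, for the language $L_\vee$ of deciding whether at least one of $t(s)$ given strings of length at most $s$ lies in $L$, an oracle communication protocol of cost $\OO(t(s)\log t(s))$ in which the first player is deterministic (a special case of the conondeterministic protocols permitted by Lemma~\ref{lemma:complementary}). The Complementary Witness Lemma then gives $L \in \coNPbypoly$.

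First I would describe the first player's computation on input $x_1,\dots,x_t$. Using the polynomial equivalence relation $R$ it splits the inputs into at most $s^{\OO(1)}$ classes in polynomial time. For each class $C_j$ it runs the $\alpha$-gap cross composition on the strings of $C_j$ to get an instance $(y_j,k_j)$ of $\Pi$ together with the threshold $r_j$, and then runs the reduction algorithm of the $\alpha$-approximate compression on $(y_j,k_j)$ to get an instance $(y_j',k_j')$ of $\Pi'$ with $|y_j'|+k_j' \le g(k_j)$. Since condition $(iii)$ of the $\alpha$-gap cross composition bounds $k_j$ by a polynomial in $\log t + s$, each $|y_j'|+k_j'$ is bounded by a fixed polynomial in $\log t + s$, so the total number of bits the first player sends, namely all pairs $(y_j',k_j')$, is $\mathrm{poly}(s,\log t)$, which is at most $\OO(t\log t)$ once $d$ exceeds the degree of that polynomial. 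Crucially the first player keeps the (possibly long) numbers $r_j$ and the instances $(y_j,k_j)$ to itself and never transmits them. The unbounded second player, receiving each $(y_j',k_j')$, computes an optimal solution $s_j'$ of $(y_j',k_j')$ with respect to $\Pi'$ — possible because $\Pi'$ is a computable function and candidate solutions have bounded length — and sends the $s_j'$ back, which costs nothing since only first-to-second bits are counted. Finally the first player feeds each $s_j'$ to the solution lifting algorithm of the compression, obtaining a solution $s_j$ of $(y_j,k_j)$, evaluates $\Pi(y_j,k_j,s_j)$ in polynomial time (using that $\Pi$ is a nice problem), and accepts iff for some $j$ this value is at least $r_j/\alpha$ when $\Pi$ is a maximization problem, respectively at most $\alpha r_j$ when $\Pi$ is a minimization problem.

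Next I would verify correctness class by class. Suppose $\Pi$ is a maximization problem. If some $x_i\in C_j$ lies in $L$ then $OPT_\Pi(y_j,k_j)\ge r_j$, and lifting the optimal $s_j'$ gives, by the defining inequality of an $\alpha$-approximate compression applied with $s_j'$ optimal, $\Pi(y_j,k_j,s_j)\ge OPT_\Pi(y_j,k_j)/\alpha \ge r_j/\alpha$, so the first player accepts. Conversely, if no $x_i\in C_j$ lies in $L$ then $OPT_\Pi(y_j,k_j) < r_j/\alpha$, so every solution of $(y_j,k_j)$, in particular $s_j$, has value strictly below $r_j/\alpha$, and this class does not trigger acceptance. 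The minimization case is symmetric, using $OPT_\Pi(y_j,k_j)\le r_j$ versus $OPT_\Pi(y_j,k_j) > \alpha r_j$ and $\Pi(y_j,k_j,s_j)\le \alpha\cdot OPT_\Pi(y_j,k_j)$. Hence the first player accepts exactly when some $x_i$ lies in $L$; the protocol decides $L_\vee$, runs in polynomial time, and has cost $\OO(t\log t)$, so Lemma~\ref{lemma:complementary} yields $L\in\coNPbypoly$.

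The main obstacle I anticipate is the bookkeeping around the gap and the one-directionality of the compression: the $\alpha$-approximate compression only transports solutions from $(y',k')$ back to $(y,k)$ and only preserves the approximation ratio, so it does \emph{not} by itself relate $OPT_{\Pi'}(y',k')$ to $OPT_\Pi(y,k)$. This is exactly why the oracle must return an optimal \emph{solution} rather than a yes/no bit, and why the argument needs the factor-$\alpha$ gap created by the cross composition to still be detectable after the compression incurs its own factor-$\alpha$ loss on that solution — the thresholds $r_j$ and $r_j/\alpha$ (respectively $r_j$ and $\alpha r_j$) are precisely tuned for this. A secondary point to get right is the size accounting: one must choose $t(s)$ as a large enough polynomial so that $g$ of a polynomial in $\log t + s$, summed over all equivalence classes, stays within the $\OO(t\log t)$ budget, and one must make sure the $r_j$'s are recomputed locally by the first player rather than transmitted, since they could be superpolynomially long.
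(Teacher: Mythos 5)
Your proof is correct and follows essentially the same route as the paper: compose, compress, have the oracle return an optimal solution to the compressed instance, lift it, and compare against $r/\alpha$ (resp.\ $\alpha r$), then invoke the Complementary Witness Lemma. The one place you are actually more careful than the paper's write-up is the explicit handling of the polynomial equivalence relation (running the composition once per class and summing the communication cost), which the paper elides but which is the standard bookkeeping needed to match the hypothesis of Lemma~\ref{lemma:complementary}.
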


\begin{proof}
We prove the theorem for the case when $\Pi$ is a maximization problem. 
The proof when $\Pi$ is a minimization problem is analogous and thus it is omitted. 
By our assumption $L$ 
$\alpha$-gap cross composes to $\Pi$. 
That is, there exists a polynomial time algorithm $\AAA$ that given $t(s)$ strings $x_1,\ldots,x_{t(s)}$, each of 
length at most $s$, outputs an instance $(y,k)$ 
of $\Pi$ and a number $r\in \mathbb{R}$ such that the following holds. 
\begin{itemize}\setlength\itemsep{-.7mm}
 \item[(i)] $OPT(y, k) \geq {r}  $  if and only if $ x_i \in L$ for some $1 \leq i \leq t(s)$
 \item[(ii)] $OPT(y, k) < \frac{r}{\alpha}  $  if and only if $ x_i \notin L$ for all $1 \leq i \leq t(s)$
 \item[(iii)] $k$ is upper bounded by a polynomial ${\sf P}_1$ of $s+\log (t(s))$. That is, 
 $k\leq {\sf P}_1(s+\log (t(s)))$. 
\end{itemize}
By our assumption $\Pi$ has a polynomial sized $\alpha$-approximate 
compression. 
That is, there 
is a pair of polynomial time algorithms $\BB$ and $\CC$, and 
an 
optimization problem $\Pi'$  with the following properties: 
(a) $\BB$ is a reduction algorithm, which takes input $(y,k)$  of $\Pi$ and outputs an instance 
$(y',k')$ 
of $\Pi'$ such that 
 $|y'| + k' \leq {\sf P}_2(k)$ 
for a polynomial ${\sf P}_2$ and (b) $\CC$ is a solution lifting algorithm, which given an instance $(y,k)$ of $\Pi$, an instance $(y',k')$of $\Pi'$ and a solution $S'$ to  $(y',k')$, 
outputs a  solution $S$ of $(y,k)$ such that 
$$\frac{\Pi(y,k,S)}{OPT_{\Pi}(y,k)}\cdot \alpha \geq \frac{\Pi'(y',k',S')}{OPT_{\Pi'}(y',k')}.$$ 
Let $t = {\sf P}_1\circ {\sf P}_2$, that is,  $t(s)={\sf P}_2({\sf P}_1(s))$. We design an oracle communication protocol for the language $L$ using algorithms $\AAA,\BB$ and $\CC$. The oracle communication protocol for $L$ works as follows. 
\begin{description}
\setlength\itemsep{-.7mm}
\item[Step 1:] The first player runs the algorithm $\AAA$ on the $t(s)$ input strings $x_1,\ldots,x_{t(s)}$, each of length at most $s$, and produces in polynomial time, an instance $(y,k)$ of $\Pi$ and a number $r\in {\mathbb R}$. Here the value $k$ is upper bounded by ${\sf P}_1(s+\log (t(s)))$ (by condition (iii)). 

\item[Step 2:]
The first player runs the reduction algorithm $\BB$ on $(y,k)$, producing an instance  $(y',k')$ 
of $\Pi'$. Then the first player sends the instance 
$(y',k')$ 
to the second player.  By the property of algorithm $\BB$, the size of the instance 
$(y',k')$ 
 is upper bounded by ${\sf P}_2(k) = {\sf P}_2({\sf P}_1(s+\log (t(s)) ) )$, which in turn is equal to $t(s+\log (t(s)))$. 

\item[Step 3:]
The (computationally unbounded) second player sends an optimum solution $S'$ of 
$(y',k')$ 
 back to the first player. 

\item[Step 4:]
The first player runs the solution lifting algorithm $\CC$ on input 
$(y,k),(y',k')$
 and $S'$, and it outputs a solution $S$ of $(y,k)$. Then, if $\Pi(y,k,S)\geq \frac{r}{\alpha}$ the first player declares that there exists an $i$ such that $x_i\in L$. Otherwise the first player declares that $x_i\notin L$ for all $i$.
\end{description}
%
%
%

\smallskip
All the actions of the first player are performed in polynomial time. The cost of communication is  $t(s+\log (t(s))) = \OO(t(s))$, since $t$ is a polynomial. We now show that the protocol is correct. Let $x_i\in L$ for some $1\leq i\leq t(s)$. Since $\AAA$ is an $\alpha$-gap cross composition we have that $OPT(y,k)\geq r$ (by condition (i)). 
Since $S'$ is an optimum solution, by the property of solution lifting  algorithm $\CC$,  $S$ is a solution of $(y,k)$ 
such that  $\Pi(y,k,S)\geq \frac{OPT(y,k)}{\alpha}\geq \frac{r}{\alpha}$. This implies that in Step 4, the first player declares that $x_i\in L$ for some $i$. 
Suppose now that $x_i\notin L$ for all $i$. Then, by the definition of $\alpha$-gap cross composition algorithms $\AAA$, we have that $OPT(y,k)< \frac{r}{\alpha}$. This implies that for any $S$, $\Pi(y,k,S)<\frac{r}{\alpha}$. Thus in \
Step 4, the first player declares that $x_i\notin L$ for all $i$. We have just verified that the described oracle communication protocol satisfies all the conditions of Lemma~\ref{lemma:complementary}. Thus, by Lemma~\ref{lemma:complementary}, we have that $L \in$ \coNPbypoly. This completes the proof.
\end{proof}

The main theorem of the section follows from Lemma~\ref{lem:gapcrossComposition}.

\begin{theorem}
\label{thm:gapcrossComposition}
Let $L$ be an \NPp-hard 
language and $\Pi$ be a nice parameterized optimization problem. If $L$ $\alpha$-gap cross 
composes to $\Pi$, and $\Pi$ has a polynomial sized $\alpha$-approximate 
compression, 
then 
\NPp\ $\subseteq$ \coNPbypoly. 
\end{theorem}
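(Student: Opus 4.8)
The plan is to derive Theorem~\ref{thm:gapcrossComposition} directly from Lemma~\ref{lem:gapcrossComposition} together with the assumed \NP-hardness of $L$. First I would invoke Lemma~\ref{lem:gapcrossComposition}: its hypotheses --- that $L$ $\alpha$-gap cross composes to the nice parameterized optimization problem $\Pi$, and that $\Pi$ admits a polynomial size $\alpha$-approximate compression --- are precisely the hypotheses of the theorem, so the lemma immediately yields $L \in \coNPbypoly$. Note also that Lemma~\ref{lem:gapcrossComposition} is proved for both maximization and minimization $\Pi$, so no case distinction is needed at this stage.

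The second step is to bootstrap this single-language statement into the class-level collapse. Since $L$ is \NP-hard, every language $L' \in \NP$ admits a polynomial-time many-one (Karp) reduction $f$ to $L$. It therefore suffices to use the standard fact that $\coNPbypoly$ is closed under polynomial-time many-one reductions. Concretely, let $p$ be a polynomial with $|f(x)| \leq p(|x|)$ for all $x$, and let $(\{a_m\}_{m \in \mathbb{N}}, V)$ be a polynomial-size advice family together with a polynomial-time co-nondeterministic verifier witnessing $L \in \coNPbypoly$. Define, for input length $n$, the advice string $a'_n := a_{p(n)}$ (padded so that its length depends only on $n$); since $|a_m|$ is polynomially bounded in $m$, the string $a'_n$ has length polynomial in $n$. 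On input $x$ of length $n$, the verifier for $L'$ first computes $f(x)$ in polynomial time and then simulates $V$ on $f(x)$ with advice $a'_n$; this accepts iff $f(x) \in L$ iff $x \in L'$. Hence $L' \in \coNPbypoly$, and as $L'$ was an arbitrary language in \NP, we conclude $\NP \subseteq \coNPbypoly$.

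There is essentially no genuine obstacle in this proof: all of the technical content is packed into Lemma~\ref{lem:gapcrossComposition}, whose argument constructs the oracle communication protocol from the gap cross composition $\AAA$, the reduction algorithm $\BB$, and the solution lifting algorithm $\CC$, and then appeals to the Complementary Witness Lemma (Lemma~\ref{lemma:complementary}). The only point in the present step requiring (routine) care is verifying the closure of $\coNPbypoly$ under Karp reductions, i.e.\ the "advice piggybacking" bookkeeping sketched above, ensuring that advice for $L'$ at length $n$ can be taken to be advice for $L$ at the polynomially larger length $p(n)$ while remaining of size polynomial in $n$. With that in hand the theorem follows.
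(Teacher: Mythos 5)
Your proposal is correct and follows exactly the route the paper takes: the paper states only that the theorem ``follows from Lemma~\ref{lem:gapcrossComposition},'' and the intended (but unstated) second step is precisely the standard closure of \coNPbypoly{} under Karp reductions that you spell out. No difference in substance; you have merely made explicit the routine advice-bookkeeping that the paper leaves implicit.
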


We note that Lemma~\ref{lemma:complementary} applies even if the first player works in co-nondeterministic polynomial time. Thus, a co-nondeterministic $\alpha$-gap cross composition together with an $\alpha$-approximate compression from an \NP-hard 
language would still yield that \NP{} $\subseteq$ \coNPbypoly. For clarity we formally define co-nondeterministic $\alpha$-gap cross composition for 
minimization problem which we later use  in this section to derive a lower bound on  \Setcover.
\begin{definition}[co-nondeterministic $\alpha$-gap cross composition for minimization problem] \label{def:conongapcrossCompositionmin}
Let~$L \subseteq \Sigma^*$ be a language, where $\Sigma$ is a finite alphabet and let~$\Pi$ be a parameterized minimization problem. We say 
that~$L$ \emph{co-nondeterministically $\alpha$-gap cross composes} into~$\Pi$ (where $\alpha\geq 1$), if there is a polynomial equivalence relation~$R$ and a nondeterministic  algorithm ${\cal A}$ which, 
given~$t$ strings~$x_1, x_2, \ldots, x_t$ belonging to the same equivalence class of~$R$, computes an 
instance~$(y,k)$ of $\Pi$ and $r\in {\mathbb R}$, in time polynomial in~$\sum _{i=1}^t |x_i|$ such that the following holds. 
\begin{itemize}
\setlength{\itemsep}{-2pt}
\item[(i)]  if $x_i \in L$ for some $i \in [t]$, then in all the computation paths of ${\cal A}$, $OPT(y, k) \leq {r}$, 
\item[(ii)]  if $ x_i \notin L$ for all $i \in [t]$, then there is a computation path in ${\cal A}$ with $OPT(y, k) > {r}\cdot {\alpha} $, and
\item[(iii)] $k$ is bounded by a polynomial in $\log t+\max_{\substack{1\leq i\leq t}} |x_i|$.
\end{itemize}
If such an algorithm exists, then we say $L$ co-nondeterministically $\alpha$-gap cross composes to $\Pi$. 
\end{definition}

\section{Longest Path}\label{sec:lp}
In this Section we show that {\sc Longest Path} does not admit an $\alpha$-approximate compression 
of polynomial size for any $\alpha\geq1$ unless \NP{} $\subseteq$ \coNPbypoly. The parameterized optimization version of   the {\sc Longest Path} problem, that we call {\sc Path}, is defined as follows. 
\[
    \mbox{\sc Path}(G,k,P)= 
\begin{cases}
    -\infty & \text{if $P$ is not a path in $G$} \\
    \min\left\{k+1,|V(P)|-1\right\} & \text{otherwise}
\end{cases}
\]

We show that {\sc Path} does not have a polynomial sized $\alpha$-approximate 
compression for any constant $\alpha\geq 1$. 
We prove this by giving an $\alpha$-gap cross composition from a \gappath. The problem \gappath\  is a promise problem which is defined as follows. 

\begin{definition}
The \gappath\ problem is to determine, given a graph $G$ and an integer $k$ whether: 
\begin{itemize}
\setlength{\itemsep}{-2pt}
\item $G$ has a path of length at least $k$, in which case we say that $(G,k)$ is a \Yes{} instance of \gappath.
\item the longest path in $G$ has length strictly less than $\frac{k}{\alpha}$, in which case we say that 
$(G,k)$ is a \No{} instance of \gappath.
\end{itemize}
\end{definition} 
It is known that \gappath\ is \NP{}-hard~\cite{KargerMR97}. 
\begin{lemma}\label{lemm:path_composition}
\gappath\ $\alpha$-gap cross composes to {\sc Path} for any $\alpha\geq 1$.
\end{lemma}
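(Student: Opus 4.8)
The plan is to carry out the classic disjoint-union OR-composition for {\sc Longest Path}, keeping track of the gap. First I would fix the polynomial equivalence relation $R$: two strings are $R$-equivalent if they both fail to encode a legal instance of \gappath, or they both encode instances $(G_1,k_1),(G_2,k_2)$ of \gappath\ with $k_1=k_2$ and $|V(G_1)|=|V(G_2)|$. This is polynomial-time decidable, and a finite set $S$ of strings admits only $O((\max_{x\in S}|x|)^2)$ values of the pair (number of vertices, parameter), so $R$ has polynomially many classes and is a polynomial equivalence relation. Given $t$ strings $x_1,\dots,x_t$ in one class, I would first handle degenerate classes: if the strings are malformed, or $k\ge|V(G_i)|$ (so no path of length $k$ exists and every $x_i$ is a \No{}-instance), or $k=0$ (every $x_i$ is a \Yes{}-instance), output a fixed constant-size instance of {\sc Path} with the corresponding threshold $r$. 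In the main case we have legal instances $(G_1,k),\dots,(G_t,k)$ with a common $k\ge 1$ and $n:=|V(G_i)|>k$, and the composition outputs $G:=G_1\uplus\cdots\uplus G_t$ (disjoint union), with output parameter $k':=k$ and threshold $r:=k$. This runs in time polynomial in $\sum_i|x_i|$, and $k'=k\le n\le\max_i|x_i|$, giving condition (iii) of Definition~\ref{def:gapcrossComposition}.

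Next I would compute $OPT(G,k)$ for {\sc Path}. Since any path in a disjoint union lies in a single component, the longest path in $G$ has length $\ell(G)=\max_i\ell(G_i)$, and as $x\mapsto\min\{k+1,x\}$ is monotone, $OPT(G,k)=\min\{k+1,\ell(G)\}$. If some $x_i\in L$ (i.e. $(G_i,k)$ is a \Yes{}-instance of \gappath), then $\ell(G_i)\ge k$, so $\ell(G)\ge k$ and hence $OPT(G,k)\ge\min\{k+1,k\}=k=r$. If all $x_i\notin L$ (every $(G_i,k)$ is a \No{}-instance), then $\ell(G_i)<k/\alpha$ for all $i$, so $\ell(G)<k/\alpha\le k<k+1$ and therefore $OPT(G,k)=\ell(G)<k/\alpha=r/\alpha$. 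Because each $x_i$ is either a \Yes{} or a \No{} instance, these two bounds give both directions of conditions (i) and (ii): $OPT(G,k)\ge r$ holds exactly when some $x_i\in L$, the converse being the contrapositive of the \No{}-case bound together with $k/\alpha\le k$; and $OPT(G,k)<r/\alpha$ holds exactly when all $x_i\notin L$, the converse being the contrapositive of the \Yes{}-case bound together with $k/\alpha\le k$. Hence \gappath\ $\alpha$-gap cross composes to {\sc Path}.

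The one point requiring care — there is no genuine combinatorial difficulty here, which is why the paper says the framework ``trivially applies'' — is the interplay of the capping $\min\{k+1,\cdot\}$ in the definition of {\sc Path} with the threshold $r$: one must take $r=k$ and not the naive $r=k+1$, since in a \Yes{}-instance the longest path may have length exactly $k$, forcing $OPT(G,k)=k$, which would violate condition (i) had we set $r=k+1$; with $r=k$ the \Yes{}-case bound $OPT\ge k$ and the \No{}-case bound $OPT<k/\alpha$ fall on the correct sides of $r$ and $r/\alpha$. Together with the \NP-hardness of \gappath~\cite{KargerMR97} and Theorem~\ref{thm:gapcrossComposition} — noting that {\sc Path} is a nice parameterized optimization problem, as {\sc Path}$(G,k,P)$ is polynomial-time computable — this lemma will immediately yield that {\sc Path} admits no polynomial-size $\alpha$-approximate compression unless \NP{} $\subseteq$ \coNPbypoly, for every constant $\alpha\ge 1$.
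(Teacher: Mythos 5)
Your proof is correct and takes essentially the same route as the paper: a polynomial equivalence relation grouping instances by parameter, followed by disjoint union of the graphs, with threshold $r=k$. You are somewhat more explicit than the paper (which leaves $r$ implicit and does not dwell on the capping or on degenerate classes), but the construction and argument are the same.
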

\begin{proof}
First we make the following polynomial equivalence relation: two instances $(G_1,k_1)$ and $(G_2,k_2)$ are in the same equivalence class if $k_1=k_2$. 
Now given $t$ instances $(G_1, k),\ldots,(G_t, k)$ of \gappath, the $\alpha$-gap cross composition algorithm $\AAA$ just outputs an instance  $(G,k)$ of {\sc Path}, where $G$ is the disjoint union of $G_1,\ldots,G_t$. 

Clearly, $G$ contains a path of length $k$ if and only if there exists an $i$ such that $G_i$ contains a path of length $k$. Thus, $OPT(G,k) \geq r$ if and only if there is an $i$ such that $(G_i, k)$ is a yes instance of \gappath. For the same reason $OPT(G,k) < \frac{r}{\alpha}$ if and only if $(G_i, k)$ is a \No{} instance of \gappath\ for every $i$. Finally the parameter $k$ of the output instance is upper bounded by the size of the graphs $G_i$. This concludes the proof.
%
%
\end{proof}

Theorem~\ref{thm:gapcrossComposition} and Lemma~\ref{lemm:path_composition} yields the following theorem. 
\begin{theorem}
{\sc Path} does not have a polynomial size $\alpha$-approximate 
compression 
 for any $\alpha \geq 1$, unless \NPp\ $\subseteq$ \coNPbypoly. 
\end{theorem}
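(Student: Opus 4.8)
The plan is to obtain the statement directly by assembling the lower bound machinery of Section~\ref{sec:sizelb} with the gap-preserving cross composition of Lemma~\ref{lemm:path_composition}. Recall that Theorem~\ref{thm:gapcrossComposition} asserts: if an \NPp-hard language $L$ $\alpha$-gap cross composes to a nice parameterized optimization problem $\Pi$, and $\Pi$ has a polynomial sized $\alpha$-approximate compression, then \NPp\ $\subseteq$ \coNPbypoly. Thus to prove the theorem it suffices to exhibit, for the problem $\Pi = {\sc Path}$, (1) that $\Pi$ is \emph{nice}, i.e. its objective function is polynomial-time computable; (2) an \NPp-hard language $L$ that $\alpha$-gap cross composes to $\Pi$; and (3) conclude by contradiction from the assumed compression.

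For (1): given $(G,k)$ and a candidate solution string $P$ of length at most $|G|+k$, one checks in polynomial time whether $P$ encodes a path in $G$; if so it outputs $\min\{k+1,\,|V(P)|-1\}$, and otherwise $-\infty$. Hence ${\sc Path}(\cdot,\cdot,\cdot)$ is computable in polynomial time, so {\sc Path} is a nice parameterized optimization problem. For (2): the promise problem \gappath\ is \NPp-hard by the result of Karger, Motwani and Ramkumar~\cite{KargerMR97}, and Lemma~\ref{lemm:path_composition} provides an $\alpha$-gap cross composition from \gappath\ into {\sc Path} (namely, take the disjoint union of the input graphs and keep the shared parameter $k$; a path of length $k$ survives iff one of the instances had one, and otherwise every path has length strictly below $k/\alpha$).

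For (3): fix any $\alpha \geq 1$ and suppose, for contradiction, that {\sc Path} admits a polynomial sized $\alpha$-approximate compression into some parameterized optimization problem $\Pi'$. Then Theorem~\ref{thm:gapcrossComposition}, instantiated with $L = \gappath$ and $\Pi = {\sc Path}$, yields \NPp\ $\subseteq$ \coNPbypoly. This establishes the claimed statement. In particular, since a polynomial size $\alpha$-approximate kernel is a special case of a polynomial size $\alpha$-approximate compression, the same conclusion rules out polynomial size $\alpha$-approximate kernels for {\sc Path}.

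The real content of this theorem lies entirely in the two prerequisites already in place — the framework (Theorem~\ref{thm:gapcrossComposition}, which routes through the Complementary Witness Lemma of Dell and van Melkebeek~\cite{DellM10}) and the gap cross composition (Lemma~\ref{lemm:path_composition}, which is essentially trivial because \gappath\ is closed under disjoint union). Consequently there is no serious obstacle in this final step; the only point demanding a brief check is that {\sc Path} meets the niceness hypothesis of Theorem~\ref{thm:gapcrossComposition}, which is immediate from the definition of the objective function above.
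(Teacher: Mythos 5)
Your proposal is correct and follows exactly the route the paper uses: invoke Theorem~\ref{thm:gapcrossComposition} with $L=\gappath$ (\NPp-hard by~\cite{KargerMR97}) together with the $\alpha$-gap cross composition of Lemma~\ref{lemm:path_composition}, after noting that {\sc Path} is a nice parameterized optimization problem. The paper simply states this as a one-line corollary of those two results; you have merely spelled out the (routine) niceness check explicitly.
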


\newcommand{\Setcoverparam}{{\sc Set Cover$/n$}}
\newcommand{\SCparam}{{\sc SC$/n$}}
\newcommand{\gapsc}{$\alpha$-{\sc Gap $d$-Set Cover}}
\section{Set Cover}\label{sec:setCover}
In this section we show that parameterized optimization version of \Setcover{} parameterized by universe size does not admit an $\alpha$-approximate 
compression 
of polynomial size for any $\alpha\geq1$ unless \NP{} $\subseteq$ \coNPbypoly. 
The input of \Setcover{} is a family ${\cal S}$ of subsets of a universe $U$ and the objective is to choose a minimum 
sized subfamily ${\cal F}$ of ${\cal S}$ such that $\bigcup_{S\in {\cal F}}S=U$.  Such a set $\FF$ 
is called a {\em set cover} of $(\SSS,U)$.  
Since 
the parameter used here is a structural parameter, both \Setcover\ (\SC) and its parameterized version 
\Setcoverparam\ (\SCparam) can be defined as follows.
\[
    \mbox{\SCparam}((\SSS,U),\vert U \vert,\FF)= \mbox{\SC}((\SSS,U),\FF)=
\begin{cases}
     \vert \FF\vert & \text{if $\FF$ is a set cover of $(\SSS,U)$} \\
    \infty & \text{otherwise}
\end{cases}
\]

We show \Setcoverparam{} does not have a polynomial sized $\alpha$-approximate compression for any constant $\alpha\geq 1$. Towards this we first define the  $d$-\Setcover\ problem, $d\in {\mathbb N}$.  The $d$-\Setcover\ problem is a restriction of \Setcover, where each set in the family $\SSS$ is bounded by $d$. 
We show the desired lower bound on  \Setcoverparam{} by giving a co-nondeterministic $\alpha$-gap cross composition from a gap version of  $d$-\Setcover. 
The problem \gapsc\  is a promise problem defined as follows. 
\begin{definition}
The \gapsc\ problem is to determine, given a set family $\SSS$ over a universe $U$, where  the size of each set in $\SSS$ is upper bounded by $d$, and an integer $r$ whether: 
\begin{itemize}
\setlength{\itemsep}{-2pt}
\item $OPT_{\SC}(\SSS,U)\leq r$, in which case we say that $((\SSS,U),r)$ is a \Yes{} instance of \gapsc.
\item $OPT_{\mbox{\SC}}(\SSS,U)> r\alpha$, in which case we say that 
$((\SSS,U),r)$ is a \No{} instance of \gapsc.
\end{itemize}
\end{definition} 

We will use the following known result regarding \gapsc\ for our purpose. 

\begin{theorem}[\cite{Trevisan01,ChlebikC08}]
\label{thm:gapSChard}
For any $\alpha\geq 1$, there is a constant $d$ such that 
\gapsc\ is \NPp-hard. 
\end{theorem}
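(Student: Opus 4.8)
The plan is to derive Theorem~\ref{thm:gapSChard} from the tight inapproximability of covering problems with bounded local structure, concretely of Minimum Dominating Set on bounded-degree graphs. First I would record the trivial optimum-preserving encoding that makes \gapsc{} essentially a restriction of bounded-degree Dominating Set: given a graph $H$ of maximum degree $B$, take the universe to be $V(H)$ and, for each $v\in V(H)$, the set $S_v=N_H[v]$; then every $S_v$ has size at most $B+1$, and a dominating set of $H$ of size $s$ is exactly a set cover of $(\{S_v:v\in V(H)\},V(H))$ of size $s$, and conversely. Consequently, if for a given $\alpha$ one has an \NPp-hard gap problem in which \Yes{}-instances have a dominating set of size at most $r$ and \No{}-instances have all dominating sets of size more than $\alpha r$, on graphs of degree at most $B$, then setting $d:=B+1$ turns this verbatim into an \NPp-hard $\alpha$-gap $d$-\Setcover{} instance with the same threshold $r$. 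So it suffices to establish such a bounded-degree Dominating Set gap hardness for every constant $\alpha$ --- which is exactly the theorem of~\cite{ChlebikC08}, refining~\cite{Trevisan01}.

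The remaining task is to sketch why that bounded-degree statement holds, since that is where the difficulty actually sits. The standard route has four ingredients. (a) The PCP theorem yields a constraint satisfaction problem (say \textsc{Max-E3-Sat}) with a fixed constant gap. (b) An expander-based occurrence-reduction gadget (Papadimitriou and Yannakakis, sharpened by Trevisan) converts this into a bounded-occurrence CSP while preserving a constant --- though smaller --- gap. (c) A local gadget reduction turns a bounded-occurrence gap-CSP into a Dominating Set (hence $d$-\Setcover{}) instance of constant maximum degree, so that a satisfying assignment yields a cheap dominating set and a far-from-satisfiable instance forces every dominating set to be large. (d) The constant gap surviving (b)--(c) is amplified past $\alpha$ using the logarithmic slack intrinsic to covering: a partition-system / two-prover product construction in the spirit of the Lund--Yannakakis and Feige analyses, with all of its parameters (number of product rounds, size of the partition system, label-alphabet size) chosen as constants depending only on $\alpha$. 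Steps (a)--(c) are gap-preserving and change the degree only by constant factors, so the final instance has degree at most some $B=B(\alpha)$, and I would then invoke the encoding of the first paragraph.

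The hard part, and the reason Theorem~\ref{thm:gapSChard} is a theorem rather than a one-line observation, is step (d): amplifying the gap while keeping the local structure bounded. Every generic amplification tool --- parallel repetition of a two-prover game, direct products of CSPs, powers of a Set Cover instance --- drives the gap up but simultaneously drives the arity (equivalently the set size) up without bound, so one cannot simply take an $\ell$-fold power of a bounded instance. The fix is to amplify ``along'' a bounded-degree object: either repeat a bounded-degree Label Cover game using a constant-degree expander to control the number of new constraints incident to each variable, or feed a Label Cover instance of small constant soundness and constant alphabet into a covering reduction in which each set meets only $O(B)$ ground elements, tracking precisely how the resulting set size depends on the alphabet size and the partition-system parameters. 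This accounting is the technical heart of~\cite{Trevisan01,ChlebikC08}; once it is carried out one reads off a constant $d=d(\alpha)$ for which \gapsc{} is \NPp-hard, and the theorem follows.
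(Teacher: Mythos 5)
This theorem is imported into the paper as a black box: it is stated with a bare citation to \cite{Trevisan01,ChlebikC08} and no proof is given, so there is no internal argument to compare your proposal against. Your interface reduction is correct and is the standard one: encoding a graph $H$ of maximum degree $B$ as the $(B+1)$-Set Cover instance $\bigl(\{N_H[v]:v\in V(H)\},V(H)\bigr)$ preserves the optimum exactly (every dominating set of size $s$ gives a set cover of size at most $s$ and vice versa), so an \NPp-hard $\alpha$-gap for Dominating Set on degree-$\leq B$ graphs yields an \NPp-hard instance of \gapsc{} with $d=B+1$; and the required gap hardness for every constant $\alpha$, with $B=B(\alpha)$ a constant, is precisely what the cited works establish, matching the greedy $O(\ln d)$ approximation with an $\Omega(\ln d)$ inapproximability bound. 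Your four-step sketch, and especially your identification of the crux, is a fair high-level account of that line of work: generic amplification devices (parallel repetition, direct products, powers of a set system) inflate set sizes unboundedly, so the soundness, alphabet, and partition-system parameters in the Label-Cover-to-covering reduction must all be fixed as constants depending only on $\alpha$ so that the output set size stays bounded while the gap exceeds $\alpha$. Be aware, though, that what you have written is an accurate attribution plus a roadmap of the cited proof; step (d) carries essentially all of the technical content and is deferred entirely to \cite{Trevisan01,ChlebikC08}. Since the paper itself treats the statement as an external result, that is the appropriate level of detail here, but it is not a self-contained argument.
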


To show a  lower bound  of $\alpha$-approximate compression for \Setcoverparam, 
our aim here is to give co-nondeterministic $\alpha$-gap cross composition from 
\gapsc.
To give a co-nondeterministic cross composition algorithm it is enough to give a randomized cross
composition algorithm which is always correct when it returns a \No{} instance. We give a formal 
proof about this after the following lemma.  

\begin{lemma}
\label{lem:rand:SC}
Given $t$ instances of \gapsc, $((\SSS_1,U_1),r), \ldots, ((\SSS_t,U_t),r)$ of size $s$ each, 
$\vert U_1\vert=\cdots =\vert U_t\vert=n$, and $\vert \SSS_1 \vert=\cdots =\vert \SSS_t\vert=m$, 
there is a randomized polynomial 
time algorithm (i.e, polynomial in $t\cdot s$) with one sided error, which outputs an instance $(\SSS,U)$ of \Setcover{} with following guarantees. 
 \begin{itemize}
 \setlength{\itemsep}{-2pt}
\item[$(a)$]  if 
$((\SSS_i,U_i),r)$ is an \Yes{} instance of \gapsc\   
for some $i \in [t]$, then 
$$\Pr[OPT_{\mbox{\SC}}(\SSS,U)\leq {r}]=1,$$
\item[$(b)$]  if 
$((\SSS_i,U_i),r)$ is a \No{} instance of \gapsc\ 
for all $i \in [t]$, then 
$$\Pr[OPT_{\mbox{\SC}}(\SSS,U )> {r\alpha}]>0, \mbox{and}$$
\item[$(c)$] $\vert U \vert =n^{2d}\cdot 4^d \cdot 2 \log \binom{m\cdot t}{r\cdot \alpha}$.
\end{itemize}
\end{lemma}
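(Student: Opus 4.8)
\medskip\noindent\textbf{Proof proposal.}
The plan is to regard the claimed algorithm as the combinatorial engine of a co-nondeterministic OR-cross-composition (Definition~\ref{def:conongapcrossCompositionmin}) from \gapsc{} into \SCparam{}. Outputting the disjoint union of the $t$ instances is useless here, since its universe has $nt$ elements whereas the composition must keep $|U|$ polynomial in $n$, $d$ and $\log t$; the whole difficulty is to \emph{fold} the $t$ instances onto one shared universe. Since this cannot be done deterministically without losing faithfulness, we fold at random, arranging that the fold \emph{never} creates a spurious small cover on a \Yes{}-input (giving $(a)$ with probability exactly $1$) and only \emph{fails to destroy} one on an all-\No{} input with probability strictly below $1$ (giving $(b)$) --- exactly the asymmetry a co-nondeterministic composition needs. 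First I would fix the set family: take $\SSS=\SSS_1\uplus\cdots\uplus\SSS_t$, where $S\in\SSS_i$ contributes a private copy $S^{(i)}$, so $|\SSS|=mt$; any $\FF\subseteq\SSS$ with $|\FF|\le r\alpha$ splits uniquely as $\FF=\biguplus_i\FF_i$ with $\FF_i\subseteq\SSS_i$, and the number of such candidate subfamilies is at most $N:=\binom{mt}{r\alpha}$ (up to lower-order factors), which is the quantity that drives the universe size.

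Next I would take the universe to be a product $U=A\times B$ of two gadgets. The \emph{simulation gadget} $A$ has $|A|=(2n)^{2d}=n^{2d}4^{d}$ elements: exploiting that each set of each $\SSS_i$ has at most $d$ elements, so its ``footprint'' on $A$ is combinatorially controlled, $A$ is designed so that for every single instance $i$ the copies $\{S^{(i)}:S\in\SSS_i\}$ restricted to $A$ behave exactly like $(\SSS_i,U_i)$ --- a subfamily $\FF_i\subseteq\SSS_i$ covers $A$ using $\SSS_i$-copies only iff $\FF_i$ covers $U_i$, and a size-$\le r$ cover of $U_i$ lifts to a size-$\le r$ cover of $A$ --- while a subfamily touching two or more instances can cover $A$ only by realising a combinatorially rigid ``gluing'' pattern. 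The \emph{amplification gadget} $B$ consists of $2\log\binom{mt}{r\alpha}$ independent coordinates; at coordinate $\ell$ we sample an independent random decoration of the copies $S^{(i)}$, and $(a,\ell)$ is covered by $S^{(i)}$ iff $a$ lies in the footprint of $S^{(i)}$ \emph{and} the decoration at $\ell$ is compatible with $a$. The decorations are chosen so that a genuine single-instance cover stays compatible at every coordinate \emph{deterministically}, whereas the rigid gluing pattern a mixed subfamily must realise gets desynchronised by the random decorations. For $(a)$: if $((\SSS_{i^\star},U_{i^\star}),r)$ is a \Yes{}-instance, a size-$\le r$ cover $\FF^\star$ of $U_{i^\star}$ lifts to a family that covers $A$ by the design of $A$ and stays compatible at every coordinate of $B$ by the design of the decorations, hence covers $U=A\times B$ with at most $r$ sets; thus $OPT_{\mbox{\SC}}(\SSS,U)\le r$ with probability $1$.

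For $(b)$, assume every instance is a \No{}-instance. A \emph{pure} $\FF_i$ with $|\FF_i|\le r\alpha<OPT_{\mbox{\SC}}(\SSS_i,U_i)$ fails to cover $U_i$, hence (deterministically, by the design of $A$) fails to cover $A$, hence fails to cover $U$. So it suffices to bound the probability that some \emph{mixed} $\FF$ with $|\FF|\le r\alpha$ covers $U$. The crux is that for a fixed mixed $\FF$ the gluing pattern it would have to realise on $A$ is desynchronised at any given coordinate $\ell$ with probability at least $1/2$ over the random decoration at $\ell$; as the $2\log N$ coordinates are independent, $\Pr[\FF\text{ covers }U]\le 2^{-2\log N}=N^{-2}$, and a union bound over the $\le N$ candidate mixed subfamilies yields $\Pr[\text{some mixed small cover exists}]\le N^{-1}<1$, so $\Pr[OPT_{\mbox{\SC}}(\SSS,U)>r\alpha]>0$ --- the factor $2$ in $2\log\binom{mt}{r\alpha}$ is precisely the slack that makes this ``$<1$'' strict. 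Finally $(c)$ is the tally $|U|=|A|\cdot|B|=n^{2d}\,4^{d}\cdot 2\log\binom{mt}{r\alpha}$; since $d$ is the constant of Theorem~\ref{thm:gapSChard} we have $(2n)^{2d}=\mathrm{poly}(n)$ and $\log\binom{mt}{r\alpha}\le r\alpha\log(mt)=\mathrm{poly}(s,\log t)$, so the whole construction --- sampling $\mathrm{poly}(t,s)$ random bits and writing out $\SSS$ and $U$ --- runs in time polynomial in $t\cdot s$.

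The main obstacle is the joint design of the two gadgets: $A$ must reduce $d$-Set Cover \emph{faithfully} inside only $(2n)^{2d}$ elements \emph{and} pin mixing down to a pattern rigid enough that a single random coordinate catches it with constant probability; $B$ must keep \emph{every} genuine single-instance cover compatible with certainty (so $(a)$ holds with probability $1$, not merely with high probability) while catching \emph{each} of the $\le N$ mixed candidates with probability $\ge 1/2$ per coordinate. Establishing this per-coordinate bound uniformly over all mixed families --- in particular independently of how $\FF$ spreads its $\le r\alpha$ sets across instances --- using only $O(\log N)$ coordinates is the delicate heart of the proof.
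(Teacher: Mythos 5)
There is a genuine gap. Your plan hinges on a two-gadget product universe $U=A\times B$: a \emph{deterministic} ``simulation gadget'' $A$ of size $(2n)^{2d}$ that is supposed to faithfully encode each instance of $d$-\Setcover{} (so that pure subfamilies fail to cover $A$ outright) while also pinning any ``mixed'' subfamily to a rigid gluing pattern, and an ``amplification gadget'' $B$ of $2\log\binom{mt}{r\alpha}$ random coordinates each catching a fixed mixed pattern with probability $\ge 1/2$. You never construct $A$, you never make ``gluing pattern'' or ``desynchronised'' precise, and you never prove the per-coordinate $\ge 1/2$ bound; you explicitly flag all of this as the unresolved ``delicate heart of the proof.'' That is exactly where the content of the lemma lies, so as written the proposal does not establish the statement. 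It is also not clear that such a deterministic gadget $A$ exists at the claimed size, and your split into ``pure fails deterministically / mixed fails probabilistically'' is an architectural choice that your construction would have to justify but doesn't.

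The paper's actual construction is considerably simpler and does not require any deterministic simulation gadget or product structure. Take $U$ to be a bare set of $N=n^{2d}\cdot 4^d\cdot 2\log\binom{mt}{r\alpha}$ elements, and for each $u\in U$ and each $i\in[t]$ independently sample $\Gamma_i(u)\in U_i$ uniformly at random; set $S'_{ij}=\bigcup_{w\in S_{ij}}\Gamma_i^{-1}(w)$ and $\SSS=\{S'_{ij}\}$. Part $(a)$ then holds with probability $1$ for the trivial reason that every $u$ satisfies $\Gamma_i(u)\in U_i$: if $\{S_{ij_1},\dots,S_{ij_r}\}$ covers $U_i$, the preimages $\{S'_{ij_1},\dots,S'_{ij_r}\}$ cover all of $U$ (your worry about ``spurious small covers on a \Yes{}-input'' is beside the point; what $(a)$ needs is that a genuine small cover always lifts, which it does). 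For $(b)$ the paper does \emph{not} distinguish pure from mixed families: for \emph{any} $\FF$ of size $r\alpha$, writing $\FF=\biguplus_i\FF_i$ and $X_i=|U_i'|$ for the part of $U_i$ that $\FF_i$ covers, the \No{}-promise gives $X_i<n$ for every $i$, the size bounds give $\sum_i X_i< r\alpha d< nd$, and hence at most $2d$ indices have $X_i>n/2$. Combining $1-X_i/n\ge 1/n$ on those with Fact~\ref{fact2} on the rest yields, for each fixed $u$, $\Pr[u\text{ uncovered}]\ge n^{-2d}4^{-d}$; independence across the $N$ elements of $U$ then gives $\Pr[\FF\text{ covers}]\le(1-n^{-2d}4^{-d})^N\le e^{-2\log\binom{mt}{r\alpha}}<\binom{mt}{r\alpha}^{-1}$, and a union bound over all $\le\binom{mt}{r\alpha}$ choices of $\FF$ finishes. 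The key estimates you are missing are precisely the constraint $\sum_i X_i<nd$ and the $(1-y)\ge(1/4)^y$ trick; once one has them, the whole two-gadget machinery is unnecessary.
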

\begin{proof}
We design an algorithm ${\cal A}$ with properties mentioned in the statement of the lemma. 
We know that $n=\vert U_1\vert =\ldots =\vert U_t \vert$.  
For any $i\in [t]$, let $\SSS_i=\{\SSS_{i1},\ldots,\SSS_{im}\}$. 
Algorithm ${\cal A}$ creates a universe 
$U$ with $N=n^{2d}\cdot 4^d \cdot 2 \log \binom{m\cdot t}{r\cdot \alpha}$ elements. Now we describe the random process by which we construct the set family $\SSS$. 
\begin{quote}
For each $u\in U$ and $i\in [t]$, {\em uniformly at random} assign an element 
from $U_i$ to $u$. 
\end{quote}
That is, in this random process $t$ elements are assigned to each element 
$u\in U$, one from each $U_i$. We use $\Gamma_i : U \rightarrow U_i$ to represent the random assignment. That is, for each $u\in U$, $i\in [t]$, 
$\Gamma_i(u)$ denotes the element in $U_i$ that is assigned to $u$. Observe that an element 
$w \in U_i$  can be assigned to several elements of $U$. In other words,  the set  $
\Gamma_i^{-1}(w)$ can have arbitrary size.
For each $S_{ij}, i\in [t], j\in [m]$, algorithm ${\cal A}$, creates a set 
$$S'_{ij}=\bigcup_{w\in S_{ij}} \Gamma_i^{-1}(w).$$ 
Notice that $\vert S'_{ij}\vert$ need not be 
bounded by any function of $d$. Let $\SSS=\{S'_{ij}  : i\in [t],j\in[m]\}$. Algorithm ${\cal A}$ outputs $(\SSS,U)$. 
An illustration is given in Figure~\ref{fig:composition}.

 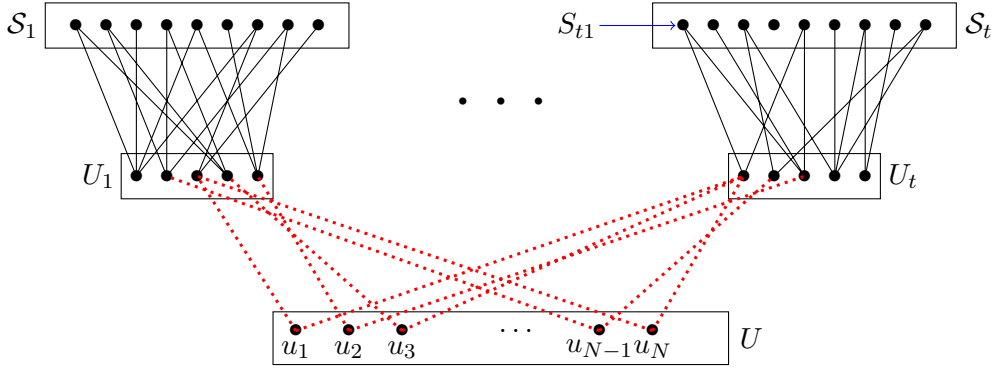
\begin{figure}
 \centering

\begin{tikzpicture}[scale=1]

\draw [] (-1,2) rectangle (3,2.6);
\draw [] (0,0) rectangle (2,0.6);

\node[] (a) at (-0.3,0.3) {$U_1$};
\node[] (a) at (0.2,0.3) {$\bullet$};
\node[] (a) at (0.6,0.3) {$\bullet$};
\node[] (a) at (1,0.3) {$\bullet$};
\node[] (a) at (1.4,0.3) {$\bullet$};
\node[] (a) at (1.8,0.3) {$\bullet$};

\node[] (a) at (-1.3,2.3) {$\SSS_1$};
\node[] (a) at (-0.6,2.3) {$\bullet$};
\node[] (a) at (-0.2,2.3) {$\bullet$};
\node[] (a) at (0.2,2.3) {$\bullet$};
\node[] (a) at (0.6,2.3) {$\bullet$};
\node[] (a) at (1,2.3) {$\bullet$};
\node[] (a) at (1.4,2.3) {$\bullet$};
\node[] (a) at (1.8,2.3) {$\bullet$};
\node[] (a) at (2.2,2.3) {$\bullet$};
\node[] (a) at (2.6,2.3) {$\bullet$};

\draw 
(-0.6,2.3) --(0.2,0.3)
(-0.2,2.3) --(1.4,0.3)
(0.2,2.3) -- (0.2,0.3)
(0.6,2.3) --(0.6,0.3)
(1,2.3) --(1.8,0.3)
(1.4,2.3)-- (1.8,0.3)
(1.8,2.3) --(1,0.3)
(2.2,2.3) -- (0.6,0.3)
(-0.6,2.3) --(1.4,0.3)
(-0.2,2.3) --(0.6,0.3)
(0.6,2.3) -- (1.4,0.3)
(1,2.3) --(0.2,0.3)
(1.8,2.3)--(0.2,0.3) 
(2.2,2.3) --(1.8,0.3)
(2.6,2.3) --(1,0.3)
;

\node[] (a) at (4.5,1.3) {\tiny$\bullet$};
\node[] (a) at (5.5,1.3) {\tiny$\bullet$};
\node[] (a) at (5,1.3) {\tiny$\bullet$};


\draw [] (7,2) rectangle (11,2.6);
\draw [] (8,0) rectangle (10,0.6);

\node[] (a) at (8.2,0.3) {$\bullet$};
\node[] (a) at (8.6,0.3) {$\bullet$};
\node[] (a) at (9,0.3) {$\bullet$};
\node[] (a) at (9.4,0.3) {$\bullet$};
\node[] (a) at (9.8,0.3) {$\bullet$};
\node[] (a) at (10.3,0.3) {$U_t$};

\node[] (a) at (6,2.3) {$S_{t1}$};

\node[] (a) at (7.4,2.3) {$\bullet$};
\node[] (a) at (7.8,2.3) {$\bullet$};
\node[] (a) at (8.2,2.3) {$\bullet$};
\node[] (a) at (8.6,2.3) {$\bullet$};
\node[] (a) at (9,2.3) {$\bullet$};
\node[] (a) at (9.4,2.3) {$\bullet$};
\node[] (a) at (9.8,2.3) {$\bullet$};
\node[] (a) at (10.2,2.3) {$\bullet$};
\node[] (a) at (10.6,2.3) {$\bullet$};
\node[] (a) at (11.3,2.3) {$\SSS_t$};

\draw[blue,->] (6.3,2.3)--(7.3,2.3);

\draw
(7.4,2.3) --(8.2,0.3)
(7.8,2.3) --(9,0.3)
(8.2,2.3) --(8.6,0.3)
(9,2.3) --(9,0.3)
(9.4,2.3)-- (9.4,0.3)
(9.8,2.3) --(9.8,0.3)
(10.2,2.3)--(9.8,0.3)
(10.6,2.3)--(9.4,0.3)
%
(7.4,2.3) --(9,0.3)
(8.2,2.3) --(9.4,0.3)
(9,2.3) --(8.2,0.3)
(9.8,2.3) --(9.4,0.3)
(10.6,2.3)--(8.6,0.3)
;

\draw [] (2,-2.2) rectangle (8,-1.5);
\node[] (a) at (2.3,-1.75) {$\bullet$};
\node[] (a) at (3,-1.75) {$\bullet$};
\node[] (a) at (3.7,-1.75) {$\bullet$};
\node[] (a) at (5.2,-1.75) {$\ldots$};
\node[] (a) at (6.3,-1.75) {$\bullet$};
\node[] (a) at (7,-1.75) {$\bullet$};
\node[] (a) at (8.3,-1.85) {$U$};

\node[] (a) at (2.3,-2) {$u_1$};
\node[] (a) at (3,-2) {$u_2$};
\node[] (a) at (3.7,-2) {$u_3$};
\node[] (a) at (6.3,-2) {$u_{N-1}$};
\node[] (a) at (7,-2) {$u_{N}$};

\draw[red,line width=0.4mm,dotted]
(2.3,-1.75)-- (8.2,0.3)
(3,-1.75) --(9,0.3)
(3.7,-1.75)-- (8.2,0.3)
(6.3,-1.75) --(8.6,0.3)
(7,-1.75)--(8.2,0.3)
(2.3,-1.75)-- (1,0.3) 
(3,-1.75) --(1.8,0.3) 
(3.7,-1.75)-- (1.4,0.3) 
(6.3,-1.75) --(0.6,0.3) 
(7,-1.75)--(1,0.3) 
;

\end{tikzpicture}
\caption{An illustration of proof of Lemma~\ref{lem:rand:SC}. Every set and every element is represented using bullets. 
An element in a set is represented using an edge between the element and the set.  
The random assignment to each element in $U$ is represented using dotted lines. 
The set $S'_{t1}$ created from $S_{t1}$ is $\{u_1,u_2,u_3,u_N\}$.}
\label{fig:composition}
\end{figure}

Now we prove the correctness of the algorithm. Suppose there exists  $i\in [t]$ such that $((\SSS_i,U_i),r)$ is a \Yes{} instance of \gapsc. That is, there exist $S_{ij_1},\ldots,S_{ij_r}\in \SSS_i$ such that $S_{ij_1}\cup \cdots\cup S_{ij_r}=U_i$. 
We know that for each $u\in U$, there is a $w\in U_i$ such that $\Gamma_i(u)=w$.  Since 
$S_{ij_1}\cup \cdots\cup S_{ij_r}=U_i$ and for each $u\in U$, there is a $w\in U_i$ with $\Gamma_i(u)=w$, we can conclude that 
$$S'_{ij_1}\cup \cdots\cup S'_{ij_r}=\bigcup_{\ell\in[r],w\in S_{ij_\ell}} \Gamma_i^{-1}(w)=\bigcup_{w\in U_i} \Gamma_i^{-1}(w)= U.$$
This implies that $\{S'_{ij_1}, \ldots, S'_{ij_r}\}$ is a set cover of $(\SSS,U)$. This proves condition $(a)$ of the lemma. 

Now we prove condition $(b)$. In this case, for all $i \in [t]$, 
$((\SSS_i,U_i),r)$ is a \No{} instance of \gapsc. That is, for any $i\in [t]$, $(\SSS_i,U_i)$ does not 
have a set cover of cardinality at most $r \alpha$. Let $\SSS_{\sf input}=\bigcup_{i\in [t]} \SSS_i$. 
Each set in $S_{ij}' \in \SSS$, $i\in[t],j\in[m]$ is a random variable defined by $\bigcup_{w\in S_{ij}} \Gamma_i^{-1}(w).$ We call $S_{ij}'$ a {\em set-random variable}. Thus, $\SSS$ is a set of $mt$-set random variables where the domain of each set-random variable is the power set of $U$ (that is, $2^{U}$). 
Thus, to show that $\Pr[OPT_{\mbox{\SC}}(\SSS,U )> {r\alpha}]>0$, we need to show that probability of 
union of any $\alpha r$ set-random variables covering $U$ is strictly less than 
$\frac{1}{\binom{m t}{r\alpha}}$. For an ease of presentation, for any $S_{ij}\in \SSS_{\sf input}$, we call $\bigcup_{w\in S_{ij}} \Gamma_i^{-1}(w)$, as ${\sf Image}(S_{ij})$. Observe that ${\sf Image}(S_{ij})$ is also a random variable and it is same as $S_{ij}'$. For a subset $\FF \subseteq \SSS_{\sf input}$, by ${\sf Image}(\FF)$ we mean $\{{\sf Image}(S)~|~S\in \FF\} $. 
Now we are ready to state and prove our main claim. 

\begin{claim} 
\label{claim:PrFsetcover}
For any $\FF\subseteq \SSS_{\sf input}$ of cardinality $r \alpha$, $\Pr[{\sf Image}(\FF) \mbox{ is a set cover of }(\SSS,U)]<\frac{1}{\binom{m t}{r \alpha}}$.   
\end{claim}
\begin{proof}
We can partition $\FF = \FF_1\uplus \ldots \uplus \FF_t$ such that 
$\FF_i=\FF \cap \{S_{i1},\ldots , S_{im}\}$. Similarly, we can partition ${\sf Image}(\FF)$ into ${\sf Image}(\FF_1)\uplus \ldots \uplus {\sf Image}(\FF_t)$ such that ${\sf Image}(\FF_i)={\sf Image}(\FF) \cap \{S'_{i1},\ldots , S'_{im}\}$. 
Le   
$U_i'$ be the subset of elements of $U_i$ covered by the sets in $\FF_i$. 
Let $X_i=\vert U'_i\vert$.   
Because of our assumption  that $((\SSS_i,U_i),r)$ is a \No{} 
instance of \gapsc, we have that 
the subset $U'_i$ covered by $\FF_i$ is a strict subset of $U_i$ and hence 
\begin{equation}
\mbox{for all } i\in [t], X_i<n \label{eqn:SCXi} 
\end{equation}
 Since  
$\vert \bigcup_{i\in[t]}\FF_i\vert=\vert \FF \vert= r \alpha$ and the cardinality of each set in 
$\FF$ is at most $d$, we have  
\begin{equation}
\sum_{i\in [t]} X_i<r \alpha  d < nd \label{eqn:SCXiSUM} 
\end{equation} 
Since $((\SSS_i,U_i),r)$ is a \No{} instance of \gapsc\ for any $i\in [t]$, 
we have $r \alpha <n$ and hence the last inequality of Equation~\ref{eqn:SCXiSUM} follows. 
 %
Towards bounding the probability mentioned in the claim, we first lower bound the following probability, 
$\Pr[u \mbox{ is not covered by }{\sf Image}(\FF)]$, for any fixed $u\in U$. 
\begin{eqnarray}
\Pr[u \mbox{ is not covered by }{\sf Image}(\FF)] &=& \bigwedge_{i\in [t]} \Pr[u \mbox{ is not covered by }{\sf Image}(\FF_i)] \nonumber\\
&=&\bigwedge_{i\in [t]} \left(1-\Pr[u \mbox{ is covered by }{\sf Image}(\FF_i)]\right) \nonumber\\ 
&=&\bigwedge_{i\in [t]} \left(1-\Pr[\Gamma_i(u)\in U'_i]\right) \nonumber\\ 
&=&\prod_{i\in [t]} \left(1-\frac{\vert U_i'\vert}{n}\right)  ~~(\mbox{Because, }\forall w\in U_i, \Pr[\Gamma_i(u)=w]=\frac{1}{n})\nonumber\\
&=&\prod_{i\in [t]} \left(1-\frac{X_i}{n}\right)\nonumber\\
&=&\prod_{\substack{i\in[t]\; \mbox{ such that } \\ X_i\leq \frac{n}{2}}} \left(1-\frac{X_i}{n}\right) \cdot \prod_{\substack{i\in[t]\; \mbox{ such that } \\ X_i > \frac{n}{2}}} \left(1-\frac{X_i}{n}\right)\label{eqn:SCPr1}
\end{eqnarray}
We know, by Equation~\ref{eqn:SCXiSUM}, that  $\sum_{i\in [t]} X_i< n  d$. This implies that 
the number of $X_i$'s  such that $X_i> \frac{n}{2}$ is at most 
$2d$.
By Equation~\ref{eqn:SCXi}, we have that for any 
$i\in [t]$,  
$\left(1-\frac{X_i}{n}\right)\geq \left(1-\frac{n-1}{n}\right)=\frac{1}{n}$.  
Since the number of $X_i$'s with  $X_i> \frac{n}{2}$ is at most $2d$ and $\left(1-\frac{X_i}{n}\right)\geq \frac{1}{n}$, we can rewrite Equation~\ref{eqn:SCPr1}, as follows. 
\begin{eqnarray}
\Pr[u \mbox{ is not covered by }{\sf Image}(\FF)]&\geq &\left(\prod_{\substack{i\in [t]\; \mbox{ such that }  \\X_i\leq \frac{n}{2}}} \left(1-\frac{X_i}{n}\right)\right) \cdot \frac{1}{n^{2d}}\nonumber\\
&\geq& \frac{1}{n^{2d}} \prod_{\substack{i\in [t]\; \mbox{ such that }  \\X_i\leq \frac{n}{2}}} \left(\frac{1}{4}\right)^{\frac{X_i}{n}} \qquad\qquad\quad(\mbox{By Fact~\ref{fact2}})\nonumber\\
&\geq&  \frac{1}{n^{2d}} \cdot \left(\frac{1}{4}\right)^{\frac{\sum X_i}{n}}\nonumber\\
&\geq&  \frac{1}{n^{2d}} \cdot \left(\frac{1}{4}\right)^{d}\qquad\qquad\qquad\qquad(\mbox{By Equation~\ref{eqn:SCXiSUM}}) \label{eqn:u_not_covered}
\end{eqnarray}
Since the set of events ``$u$ is covered by ${\sf Image}(\FF)$'', where $u\in U$, are independent events, we have
\begin{eqnarray}
\Pr[{\sf Image}(\FF) \mbox{ is a set cover of }(\SSS,U)]&\leq & \prod_{u\in U}\Pr[\mbox{$u$ is covered by }{\sf Image}(\FF)]\nonumber\\
&\leq & \prod_{u\in U} \left( 1-\frac{1}{n^{2d}\cdot 4^d}\right) \nonumber\qquad\qquad\qquad (\mbox{By Equation~\ref{eqn:u_not_covered}})\\
&\leq&\prod_{u\in U} e^ \frac{-1}{n^{2d}\cdot 4^d} \nonumber \\
&<&\frac{1}{\binom{m t}{r\alpha}} \qquad\qquad (\mbox{Because }\vert U \vert=n^{2d}\cdot 4^d \cdot 2 \log \binom{m t}{r\alpha})\nonumber
\end{eqnarray}
This completes the proof of the claim.
\end{proof}
Since the number of subsets of cardinality $r \alpha$ of $\SSS_{\sf input}$, is at most $\binom{m t}{r \alpha}$, by 
Claim~\ref{claim:PrFsetcover} and union bound we get that   
\[\Pr[\mbox{$\exists$ a set $\FF\subseteq \SSS_{\sf input}$ of cardinality $r \alpha$ such that } 
{\sf Image}(\FF) \mbox{ is a set cover of }(\SSS,U)]<1.\] 
This completes the proof of condition $(b)$. Condition $(c)$ trivially follows from the construction of $U$. 
\end{proof}

Now we will use the construction given in Lemma~\ref{lem:rand:SC} to prove the main lemma of this section.

\begin{lemma}\label{lemm:sc_composition}
\gapsc\  co-nondeterministically $\alpha$-gap cross composes to \Setcoverparam{} for any $\alpha\geq 1$.
\end{lemma}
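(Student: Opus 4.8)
The plan is to turn the randomized construction from Lemma~\ref{lem:rand:SC} into a co-nondeterministic $\alpha$-gap cross composition in the sense of Definition~\ref{def:conongapcrossCompositionmin}. First I would fix the polynomial equivalence relation $R$: two instances $((\SSS,U),r)$ and $((\SSS',U'),r')$ of \gapsc{} are equivalent if they have the same universe size $n=|U|=|U'|$, the same number of sets $m=|\SSS|=|\SSS'|$, the same maximum set size bound $d$, and the same target value $r=r'$. Checking this is polynomial, and for any finite set of instances of total size bounded by $s$, the number of equivalence classes is bounded by a polynomial in $s$ (since $n,m,r,d \le s$). Also, we may assume $t$ instances $((\SSS_1,U_1),r),\ldots,((\SSS_t,U_t),r)$ are all padded to the same size $s$. (If, as usual in cross composition arguments, the given strings are not well-formed instances of \gapsc{} — i.e., do not fall into any equivalence class guaranteeing the promise — we can hard-code a trivial \No{} output; this is a standard technicality I would mention but not dwell on.)

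The main step is the composition algorithm itself. On input the $t$ equivalent instances, the algorithm runs the randomized procedure $\AAA$ from Lemma~\ref{lem:rand:SC}, but replaces each random coin flip by a nondeterministic guess. Concretely: for each element $u$ of the new universe $U$ (of size $N = n^{2d}\cdot 4^d\cdot 2\log\binom{mt}{r\alpha}$) and each $i\in[t]$, the algorithm \emph{nondeterministically guesses} an element $\Gamma_i(u)\in U_i$, then constructs the sets $S'_{ij}=\bigcup_{w\in S_{ij}}\Gamma_i^{-1}(w)$ as in the lemma, and outputs the \Setcover{} instance $(\SSS,U)$ together with $r$. This runs in time polynomial in $t\cdot s$, since $N$ is polynomial in $s$ and $\log\binom{mt}{r\alpha}$ is polynomial in $s+\log t$. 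I would then verify the three conditions of Definition~\ref{def:conongapcrossCompositionmin}. Condition (iii) — that the parameter, which here is the universe size $|U|=N$, is bounded by a polynomial in $\log t + \max_i|x_i| = \log t + s$ — follows directly from the formula for $N$ once we note $\log\binom{mt}{r\alpha}\le r\alpha\log(mt) = \mathrm{poly}(s + \log t)$. Condition (i): if some $((\SSS_i,U_i),r)$ is a \Yes{} instance, then by part $(a)$ of Lemma~\ref{lem:rand:SC} \emph{every} assignment $\Gamma$ (hence every computation path) yields $OPT_{\SC}(\SSS,U)\le r$, so the required universal statement over computation paths holds. Condition (ii): if all $((\SSS_i,U_i),r)$ are \No{} instances, then by part $(b)$ of Lemma~\ref{lem:rand:SC} the probability over $\Gamma$ that $OPT_{\SC}(\SSS,U)>r\alpha$ is strictly positive, which means \emph{at least one} assignment $\Gamma$ — i.e.\ at least one computation path — achieves $OPT_{\SC}(\SSS,U)>r\alpha$. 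That is exactly the existential requirement on computation paths. So \gapsc{} co-nondeterministically $\alpha$-gap cross composes to \Setcoverparam{}.

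The only subtlety — and I would flag this as the one place needing a sentence of care rather than real difficulty — is matching the cross-composition formalism exactly: the definition wants the algorithm to output a single instance $(y,k)$ of $\Pi$ where $k$ is ``the parameter'', and for \Setcoverparam{} the parameter is $|U|$, not a separately specified integer; likewise the threshold $r$ plays the role of the real number in the definition, and we must check that ``$OPT(y,k)\le r$ on all paths'' resp.\ ``$OPT(y,k)>r\alpha$ on some path'' are literally conditions (i) and (ii) with the roles of \Yes{}/\No{} as required. This is immediate from the definitions, but worth stating to avoid any gap. There is genuinely no hard calculation left here — all the probabilistic work (the union bound, Fact~\ref{fact1}, Fact~\ref{fact2}, and the counting $\binom{mt}{r\alpha}$) has already been discharged in the proof of Lemma~\ref{lem:rand:SC}; the present lemma is purely the derandomization-via-co-nondeterminism packaging step. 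Consequently, combining Lemma~\ref{lemm:sc_composition} with Theorem~\ref{thm:gapSChard} (which supplies, for each $\alpha\ge1$, a constant $d$ making \gapsc{} \NP-hard) and Theorem~\ref{thm:gapcrossComposition} (via the remark that co-nondeterministic gap cross compositions suffice, since Lemma~\ref{lemma:complementary} allows the first player to be co-nondeterministic) will yield the desired corollary that \Setcoverparam{} has no polynomial-size $\alpha$-approximate compression for any $\alpha\ge1$ unless \NP{}$\subseteq$\coNPbypoly{}.
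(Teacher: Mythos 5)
Your proposal is correct and follows essentially the same route as the paper's proof: fix the polynomial equivalence relation on universe size, number of sets, and target value $r$; replace the random bits of the algorithm $\AAA$ from Lemma~\ref{lem:rand:SC} by nondeterministic guesses; and read off conditions (i), (ii), (iii) of Definition~\ref{def:conongapcrossCompositionmin} from parts $(a)$, $(b)$, $(c)$ of that lemma. The one small variation is that for condition (iii) you bound $\log\binom{mt}{r\alpha}\le r\alpha\log(mt)=\mathrm{poly}(s+\log t)$ directly, whereas the paper additionally invokes $t\in n^{\OO(1)}$; your version is slightly more careful but the conclusion is the same.
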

\begin{proof}
First we make the following polynomial equivalence relation: two instances $((\SSS,U),r)$ and $((\SSS',U'),r')$ of  \gapsc\  are in the same equivalence class if $\vert \SSS\vert=\vert\SSS'\vert, \vert U \vert=\vert U'\vert$ and 
$r=r'$. 
Towards proving the lemma we need to design an algorithm ${\cal B}$ with the properties of Definition~\ref{def:conongapcrossCompositionmin}.   
Let  $((\SSS_1,U_1),r), \ldots, ((\SSS_t,U_t),r)$ be instances of \gapsc\ of size $s$ each, 
$\vert U_1\vert=\cdots =\vert U_t\vert=n$, and $\vert \SSS_1 \vert=\cdots =\vert \SSS_t\vert=m$.  
 Here, $t$ is polynomially bounded  in $s$.  Since $((\SSS_i,U_i),r)$, $i\in [t]$, are instances 
of  \gapsc,   $m\leq \binom{n}{d}$ and hence $t\leq n^c$ for some constant $c$.  
Now ${\cal B}$ runs the algorithm ${\cal A}$ mentioned in the Lemma~\ref{lem:rand:SC}, but instead of using the 
random bits it nondeterministically guesses these bits while running ${\cal A}$. 
Algorithm ${\cal B}$ returns $(\SSS,U)$ and $r$ as output, where $(\SSS,U )$ is the output of ${\cal A}$.  

If there exists $i\in [t]$ such that $((\SSS_i,U_i),r)$ 
is a \Yes{} instance of \gapsc, by condition $(a)$ of Lemma~\ref{lem:rand:SC}, we can 
conclude that $OPT_{\mbox{\SCparam}}((\SSS,U),\vert U \vert)\leq r$ and hence satisfies property $(i)$ of 
Definition~\ref{def:conongapcrossCompositionmin}. Suppose  $((\SSS_i,U_i),r)$ is a \No{} instance for all 
$i\in [t]$. Because of condition $(b)$ of Lemma~\ref{lem:rand:SC}, there is a choice of random bits $B$ 
such that if ${\cal A}$ uses $B$ as the random bits then  
$OPT_{\mbox{\SC}}(\SSS, U)> r \alpha$. 
Hence, for the nondeterministic guess $B$ of the algorithm ${\cal B}$, we get that 
$OPT_{\mbox{\SCparam}}((\SSS,U),\vert U \vert)=OPT_{\mbox{\SC}}(\SSS, U)> r \alpha$. This proves property $(ii)$  of 
Definition~\ref{def:conongapcrossCompositionmin}.
By condition $(c)$ of Lemma~\ref{lem:rand:SC}, and the facts that $m,t \in n^{\OO(1)}$, we get that $\vert U \vert = n^{\OO(1)}$. This 
implies the property $(iii)$ of   Definition~\ref{def:conongapcrossCompositionmin}. This 
completes the proof of the lemma. 
\end{proof}

Theorems~\ref{thm:gapcrossComposition} and \ref{thm:gapSChard}, and Lemma~\ref{lemm:sc_composition} yields the following theorem. 
\begin{theorem}
\Setcoverparam{} does not have a polynomial size $\alpha$-approximate 
compression 
 for any $\alpha \geq 1$, unless \NPp $\subseteq$ \coNPbypoly. 
\end{theorem}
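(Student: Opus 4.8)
The plan is to derive the theorem as a short corollary of the three ingredients already assembled in this section: the \NPp-hardness of \gapsc\ (Theorem~\ref{thm:gapSChard}), the co-nondeterministic $\alpha$-gap cross composition from \gapsc\ into \Setcoverparam{} (Lemma~\ref{lemm:sc_composition}), and the lower-bound machinery of Lemma~\ref{lem:gapcrossComposition}, whose proof via the Complementary Witness Lemma (Lemma~\ref{lemma:complementary}) remains valid when the first player of the oracle communication protocol is co-nondeterministic, as observed in the remark following Theorem~\ref{thm:gapcrossComposition}. Concretely, I would fix an arbitrary real $\alpha \geq 1$, use Theorem~\ref{thm:gapSChard} to pin down a constant $d = d(\alpha)$ for which \gapsc\ is \NPp-hard, record the routine fact that \Setcoverparam{} is a \emph{nice} parameterized optimization problem (given $(\SSS,U)$ and $\FF$ one checks in polynomial time whether $\bigcup_{S\in\FF}S = U$ and, if so, outputs $|\FF|$, so \SCparam\ is polynomial-time computable), and then invoke Lemma~\ref{lemm:sc_composition} to obtain the co-nondeterministic $\alpha$-gap cross composition (Definition~\ref{def:conongapcrossCompositionmin}) from \gapsc\ into \Setcoverparam{}.

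With these in hand, the argument mirrors the proof of Lemma~\ref{lem:gapcrossComposition} for minimization problems. Assuming for contradiction that \Setcoverparam{} has a polynomial-size $\alpha$-approximate compression (Definition~\ref{def:approxcompression}) into some problem $\Pi'$, I would build an oracle communication protocol deciding whether at least one of $t(s)$ given \gapsc-instances of size $\leq s$ is a \Yes{} instance: the first player nondeterministically runs the cross-composition algorithm to produce $((\SSS,U),|U|)$ and the threshold $r$, applies the reduction algorithm of the compression to $(\SSS,U)$ to get a $\Pi'$-instance of size polynomial in $|U|$ and hence in $s+\log t(s)$ (using property $(iii)$ of Definition~\ref{def:conongapcrossCompositionmin}, with the polynomial $t$ chosen as the composition of the parameter-size bound and the compression-size bound, as in the proof of Lemma~\ref{lem:gapcrossComposition}), ships it to the oracle, receives an optimal solution, runs the solution-lifting algorithm, and answers ``yes'' iff the lifted cover has size at most $r\alpha$.

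Correctness and cost would be checked as follows. Because solution lifting applied to an optimal oracle answer yields a cover of size at most $\alpha\cdot OPT_{\mbox{\SC}}(\SSS,U)$, property $(i)$ of Definition~\ref{def:conongapcrossCompositionmin} forces the answer to be ``yes'' on \emph{every} computation path when some input is a \Yes{} instance, while property $(ii)$ forces the answer to be ``no'' on at least one path when all inputs are \No{} instances; this is exactly the acceptance behaviour permitted for a co-nondeterministic first player in Lemma~\ref{lemma:complementary}. All of the first player's work is polynomial time, and the communication cost is polynomial in $s+\log t(s)$, hence $\OO(t(s)\log t(s))$ since $t$ is a polynomial. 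Lemma~\ref{lemma:complementary} then puts (the \Yes-set of) \gapsc\ in \coNPbypoly, and \NPp-hardness of \gapsc\ upgrades this to \NPp $\subseteq$ \coNPbypoly.

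I expect the only delicate point here — everything else being bookkeeping once Lemma~\ref{lemm:sc_composition} is available — to be the interface between the \emph{promise} problem \gapsc\ and the lower-bound framework, which is phrased for honest languages: one must argue that, since a gap cross composition is only ever fed instances obeying the promise, the \NPp-hard reduction supplied by Theorem~\ref{thm:gapSChard} may be routed through the language of \Yes-instances of \gapsc, so that ``$x_i \notin L$'' may legitimately be read as ``$x_i$ is a \No{} instance of \gapsc''. If instead one wished to prove the theorem without leaning on the preceding lemmas, the genuine obstacle would be Lemma~\ref{lemm:sc_composition} itself — the randomized OR-construction of Lemma~\ref{lem:rand:SC}, whose universe must be made large enough ($|U| = n^{2d}\cdot 4^d\cdot 2\log\binom{mt}{r\alpha}$) that a union bound over all size-$r\alpha$ subfamilies still fails with positive probability, together with its derandomization by replacing random bits with co-nondeterministic guesses.
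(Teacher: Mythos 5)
Your proposal is correct and is essentially the paper's own argument: the paper states the theorem as an immediate consequence of Theorem~\ref{thm:gapcrossComposition}, Theorem~\ref{thm:gapSChard}, and Lemma~\ref{lemm:sc_composition}, relying on the remark that Lemma~\ref{lemma:complementary} tolerates a co-nondeterministic first player, which is precisely the chain you spell out. You have simply unpacked the oracle communication protocol that is hidden inside the cited lemmas, verified the niceness of \SCparam, and checked the acceptance threshold $\leq r\alpha$ for a minimization problem; your side-remark about routing the promise problem \gapsc\ through its \Yes-language is a legitimate point the paper leaves implicit, and your treatment of it is fine.
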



\newcommand{\Hittingsetparam}{\Hittingset$/n$}
\newcommand{\HSparam}{\HS$/n$}

\section{Hitting Set}\label{sec:hitSet}
In this section we show that a parameterized optimization version of \Hittingset{} does not admit an $\OO( 2^{\log^c n})$-approximate kernel 
of polynomial size for any $c<1$, unless CNF-SAT can be solved in slightly subexponential time, where  universe size $n$ 
of the input instance is the parameter.
Compare to \Setcover\ our result in this section are much more stronger, but unlike \Setcover\ here we can only rule out an existence of an approximate kernel and not an approximate compression.    
%
The input of \Hittingset{} is a family ${\cal S}$ of subsets of a universe $U$ and the objective is to choose a minimum 
cardinality subset $X\subseteq U$ such that for all $S\in \SSS$, $S\cap X\neq \emptyset$.  Such a subset $X$ 
is called a {\em hitting set} of $(\SSS,U)$. Since 
the parameter used here is a structural parameter, both \Hittingset~(\HS) and its parameterized version 
\Hittingsetparam~(\HSparam) can be defined as follows. 
\[
    \mbox{\HSparam}((\SSS,U),\vert U \vert,X)=\mbox{\HS}((\SSS,U),X) =
\begin{cases}
     \vert X\vert & \text{if $X$ is a hitting set of  $(\SSS,U)$} \\
    \infty & \text{otherwise}
\end{cases}
\]

The following lemma shows that in fact  \Hittingset{} is same as \Setcover\ but with  a different parameter.  

\begin{lemma}
\label{lem:SCeqHS}
Let $(\SSS,U)$ be an instance of \Hittingset. Let  $F_u=\{S\in \SSS : u\in S\}$ for all $u\in U$ and let $\FF=\{F_u : u\in U\}$. 
Then $OPT_{\mbox{\HS}}(\SSS,U)=OPT_{\mbox{\SC}}(\FF,\SSS)$
\end{lemma}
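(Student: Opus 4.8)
The plan is to exhibit an explicit bijection between the hitting sets of $(\SSS,U)$ and the set covers of $(\FF,\SSS)$ that preserves cardinality, and then conclude equality of the two optima. The natural correspondence is $u \leftrightarrow F_u$: a subset $X \subseteq U$ maps to the subfamily $\{F_u : u \in X\} \subseteq \FF$, and conversely a subfamily $\{F_u : u \in X\}$ of $\FF$ maps back to $X$. First I would note this map is a well-defined bijection between subsets of $U$ and subfamilies of $\FF$ provided the $F_u$ are distinct; if two elements $u \neq u'$ satisfy $F_u = F_{u'}$ one must be slightly careful, but since $\FF$ is defined as a set, this only helps us (both optima are unaffected by identifying such elements), so I would either add a sentence observing that collapsing duplicates changes neither side, or simply argue at the level of cardinalities directly. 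To keep things clean I would phrase the argument as: for every $X \subseteq U$, $X$ is a hitting set of $(\SSS,U)$ if and only if $\{F_u : u \in X\}$ is a set cover of $(\FF,\SSS)$.

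The core of the proof is this equivalence, which unwinds the definitions. Fix $X \subseteq U$. The family $\{F_u : u \in X\}$ covers $(\FF,\SSS)$ means $\bigcup_{u \in X} F_u = \SSS$, i.e. every $S \in \SSS$ lies in $F_u$ for some $u \in X$. By definition of $F_u$, the statement ``$S \in F_u$'' is exactly ``$u \in S$''. Hence $\bigcup_{u \in X} F_u = \SSS$ is equivalent to: for every $S \in \SSS$ there exists $u \in X$ with $u \in S$, i.e. $S \cap X \neq \emptyset$ for all $S \in \SSS$, which is precisely the statement that $X$ is a hitting set of $(\SSS,U)$. This is a short chain of logical equivalences and should be the bulk of what is written.

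From the equivalence, the conclusion follows immediately: given an optimal hitting set $X^\star$ of $(\SSS,U)$, the subfamily $\{F_u : u \in X^\star\}$ is a set cover of $(\FF,\SSS)$ of size at most $|X^\star| = OPT_{\mbox{\HS}}(\SSS,U)$, so $OPT_{\mbox{\SC}}(\FF,\SSS) \leq OPT_{\mbox{\HS}}(\SSS,U)$. Conversely, given an optimal set cover $\CC^\star$ of $(\FF,\SSS)$, write $\CC^\star = \{F_u : u \in X\}$ for some $X \subseteq U$ with $|X| \leq |\CC^\star|$ (choosing one preimage per set), then $X$ is a hitting set of $(\SSS,U)$ of size at most $OPT_{\mbox{\SC}}(\FF,\SSS)$, giving the reverse inequality. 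Combining the two inequalities yields $OPT_{\mbox{\HS}}(\SSS,U) = OPT_{\mbox{\SC}}(\FF,\SSS)$.

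There is no real obstacle here — the only subtlety, as noted, is the possibility of repeated sets $F_u = F_{u'}$, which makes the element-to-set map non-injective; I expect to handle this in one clause by observing that in the cover direction we only need $|X| \le |\CC^\star|$ (pick any preimage), and in the hitting-set direction the image subfamily has size at most $|X^\star|$, so duplicates never hurt either inequality. The proof is essentially a one-paragraph unfolding of definitions, so the ``hard part'' is merely stating it cleanly rather than any genuine difficulty.
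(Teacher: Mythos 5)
Your proof is correct and follows essentially the same route as the paper's: both establish the equivalence $S \in F_u \iff u \in S$ and use it to map hitting sets to set covers (and back) under $u \leftrightarrow F_u$. You are actually slightly more careful than the paper on the one subtle point — the paper's reverse direction takes $X = \{u : F_u \in \FF'\}$, which could exceed $|\FF'|$ when distinct elements $u \neq u'$ yield $F_u = F_{u'}$, whereas your "pick one preimage per set" explicitly keeps $|X| \le |\FF'|$; this detail is needed to conclude the inequality $OPT_{\mbox{\sc SC}} \ge OPT_{\mbox{\sc HS}}$, and the paper leaves it implicit.
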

\begin{proof}
Let $X\subseteq U$ be  a hitting set of $(\SSS,U)$. 
Consider the set $\FF_X=\{F_u\in \FF : u\in X\}$. Since $X$ is a hitting set of $\SSS$, for 
any $S\in \SSS$, there is an element $u\in X$ such that $S\in F_u$. This implies that $\FF_X$ 
is a set cover of $(\FF,\SSS)$. 

Let $\FF'\subseteq \FF$ be a set cover of $(\FF,\SSS)$. Let $X=\{u\in U : F_u\in \FF' \}$.
Since $\FF'$ is a set cover of $(\FF,\SSS)$, for any $S\in \SSS$, there is a set $F_u\in \FF'$ such that 
$S\in F_u$. This implies that $X$ is a hitting set of $(\SSS,U)$. 
This completes the proof of the lemma. 
\end{proof}

The following Lemma follows from the $\OO(\log n)$-approximation algorithm of 
\Setcover~\cite{Chvatal79} and Lemma~\ref{lem:SCeqHS}

\begin{lemma}[\cite{Chvatal79}]
\label{lem:HSapproxalgo}
There is a polynomial time algorithm which given an instance $(\SSS,U)$ of \Hittingset, outputs a hitting set 
of cardinality bounded by $\OO(OPT_{\mbox{\HS}}(\SSS,U)\cdot \log \vert \SSS\vert)$.  
\end{lemma}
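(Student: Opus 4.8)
The plan is simply to chain Lemma~\ref{lem:SCeqHS} together with Chv\'atal's classical greedy approximation algorithm for \Setcover{}. First I would, given an instance $(\SSS,U)$ of \Hittingset{}, build the dual \Setcover{} instance exactly as in the hypothesis of Lemma~\ref{lem:SCeqHS}: for each $u\in U$ set $F_u=\{S\in\SSS : u\in S\}$, and let $\FF=\{F_u : u\in U\}$. This is a \Setcover{} instance whose \emph{universe} is $\SSS$, so the relevant universe size is $\vert\SSS\vert$, and the construction is clearly carried out in polynomial time.

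Next I would invoke the greedy algorithm of Chv\'atal~\cite{Chvatal79}, which on a \Setcover{} instance with universe of size $N$ runs in polynomial time and returns a set cover of cardinality at most $\OO(\log N)$ times the optimum. Applied to $(\FF,\SSS)$ this yields a subfamily $\FF'\subseteq\FF$ covering $\SSS$ with $\vert\FF'\vert=\OO(OPT_{\mbox{\SC}}(\FF,\SSS)\cdot\log\vert\SSS\vert)$.

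Finally I would translate $\FF'$ back into a hitting set using the second half of the proof of Lemma~\ref{lem:SCeqHS}: the set $X=\{u\in U : F_u\in\FF'\}$ is a hitting set of $(\SSS,U)$ with $\vert X\vert\leq\vert\FF'\vert$. Combining this with the identity $OPT_{\mbox{\HS}}(\SSS,U)=OPT_{\mbox{\SC}}(\FF,\SSS)$ supplied by Lemma~\ref{lem:SCeqHS}, we obtain $\vert X\vert=\OO(OPT_{\mbox{\HS}}(\SSS,U)\cdot\log\vert\SSS\vert)$, and every step above is polynomial time, which proves the lemma. There is essentially no obstacle here; the only point worth stating carefully is that the approximation factor of the \Setcover{} greedy algorithm is measured in the universe size of the instance it is run on, which in the dual instance is $\vert\SSS\vert$ (not $\vert U\vert$), so the logarithmic overhead comes out as $\log\vert\SSS\vert$ exactly as claimed.
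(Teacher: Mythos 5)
Your proof is correct and is exactly the argument the paper has in mind: the paper simply asserts that the lemma ``follows from the $O(\log n)$-approximation algorithm of \Setcover~\cite{Chvatal79} and Lemma~\ref{lem:SCeqHS}'' without spelling out the details, and your proposal fills in precisely those details in the intended way. Your remark that the logarithm is in the universe size of the dual instance, which is $|\SSS|$ rather than $|U|$, is the one subtle point and you handle it correctly.
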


The following theorem is a slight weakening of a result by Nelson~\cite{Nelson07}, 
which we use to prove our theorem. 

\begin{theorem}[\cite{Nelson07}]
\label{thm:hittingsethard}
For any $c<1$, 
\Hittingset{} has no polynomial time $\OO(2^{\log^c n})$-approximation unless 
CNF-SAT with $n$-variables can be solved in time $2^{ \OO(  2^{     \log^{1-1/(\log \log n)^{{1}/{3}}} n   }  )  }$. 
\end{theorem}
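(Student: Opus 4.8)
The plan is not to re-establish the inapproximability of \Hittingset{} from first principles, but to obtain the stated subexponential bound by carefully tracking parameters in the hardness result of Nelson~\cite{Nelson07}. In its \Setcover{} formulation, Nelson's reduction takes a CNF-SAT formula $\varphi$ on $N$ variables together with a target ratio $\rho$ and produces a \Setcover{} instance with a promised gap of $1$ versus $\rho$, in which the number of sets $m$ and the running time of the reduction are both bounded by $2^{g(\rho)}\cdot N^{O(1)}$ for an explicit, slowly growing function $g$ (morally a polynomial of $\mathrm{polylog}(\rho)$). Dualizing via Lemma~\ref{lem:SCeqHS}, this is exactly a \Hittingset{} instance whose \emph{universe size} $n$ equals that number of sets $m$; hence a polynomial-time $\rho$-approximation for \Hittingset{} parameterized by $n$ would decide $\varphi$ in time $\mathrm{poly}(n)=(2^{g(\rho)}N^{O(1)})^{O(1)}$.

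First I would fix, for the given $c<1$, the ratio $\rho=2^{(\log n)^{1-1/(\log\log n)^{1/3}}}$ (note $\rho=\OO(2^{(\log n)^c})$ for all large $n$, since the exponent tends to $1$) and invert the size bound $n\le 2^{g(\rho)}N^{O(1)}$. The curious correction term $1/(\log\log n)^{1/3}$ is engineered precisely so that $g(\rho)$ stays safely below $\log n$, so that the inversion yields $\log N\ge (\log n)^{\,1-1/(\log\log n)^{1/3}-o(1)}$, equivalently $n\le 2^{O(2^{\log^{1-1/(\log\log N)^{1/3}} N})}$ after re-parameterizing everything in terms of $N$. Second, I would combine this with the hypothetical $\OO(2^{(\log n)^c})$-approximation: running Nelson's reduction and then the approximation algorithm decides CNF-SAT on $N$ variables in time $\mathrm{poly}(n)=2^{O(2^{\log^{1-1/(\log\log N)^{1/3}} N})}$, the claimed ``slightly subexponential'' bound. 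Third, I would check that the passage from a promise-gap decider to an ordinary approximation \emph{algorithm} is innocuous: a $\rho$-approximation for \Hittingset{} distinguishes the two sides of the gap directly (the returned feasible hitting set has value at most $\rho r$ on yes-instances and strictly more than $\rho r$ on no-instances), and if needed the folklore self-reduction (binary search on the target size) and the duality of Lemma~\ref{lem:SCeqHS} handle the optimization-versus-decision and \Setcover-versus-\Hittingset{} bookkeeping.

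The single real obstacle is in the first step: one must pin down the exact form of the size/time blow-up $g(\rho)$ in Nelson's construction --- which is a chain PCP $\to$ Label Cover $\to$ \Setcover{} (Feige/Lund--Yannakakis style) followed by a \emph{size-efficient gap amplification}, then dualized --- and verify that the choice $\rho=2^{(\log n)^{1-1/(\log\log n)^{1/3}}}$ lies inside the regime where that chain still runs in the required time. If instead one wanted a self-contained proof, the genuine work is exactly that gap-amplification step: starting from constant-gap APX-hardness of \Setcover{} one must raise the gap to $\rho$ while keeping the universe of size only $N^{O(1)}\cdot 2^{\mathrm{polylog}(\rho)}$. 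This is morally the same ``many instances packed into a tiny universe'' phenomenon exploited by the randomized composition of Lemma~\ref{lem:rand:SC}, except deployed to multiply the gap rather than to take an OR of instances, and getting the size bound right there is the delicate part.
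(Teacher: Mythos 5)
The paper does not prove this statement: it is cited verbatim (with a remark that it is a ``slight weakening'') from Nelson~\cite{Nelson07}, and then used as a black box in the subsequent kernel-lower-bound theorem. So there is no in-paper proof for your sketch to be compared against. Your plan is a reasonable account of how one would extract the theorem from Nelson's argument — the chain CNF-SAT $\to$ PCP/Label-Cover $\to$ \Setcover{} with size-efficient gap amplification, then dualized via the equivalent of Lemma~\ref{lem:SCeqHS} so that the \Hittingset{} universe size matches the number of sets in the \Setcover{} instance — and you correctly flag that the load-bearing step is controlling the function $g(\rho)$ that governs the universe/number-of-sets blowup, which you neither pin down nor verify. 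That is the entire content of Nelson's paper, so as a ``proof'' of the theorem your text is only a pointer to the place where the work lives.

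Two smaller remarks. First, your parenthetical ``note $\rho=\OO(2^{(\log n)^c})$ for all large $n$'' has the inequality backwards: since $1-1/(\log\log n)^{1/3}\to 1$ while $c<1$ is fixed, for large $n$ one has $(\log n)^c\le(\log n)^{1-1/(\log\log n)^{1/3}}$, so it is $2^{(\log n)^c}=\OO(\rho)$. That is in fact the direction you need — a hypothetical algorithm with ratio $\OO(2^{(\log n)^c})$ is, for large $n$, an even better approximation than ratio $\rho$, so it triggers Nelson's lower bound — so the argument is fine, only the stated comparison is flipped. Second, your ``morally the same as Lemma~\ref{lem:rand:SC}'' aside is a pleasant observation but should not be mistaken for evidence: the randomized composition in Section~\ref{sec:setCover} packs an OR of instances into a small universe for a \emph{kernel} lower bound, whereas Nelson's amplification multiplies the \emph{gap} for an \emph{approximation} lower bound; the parameter regimes and the objects being controlled are different, and neither argument subsumes the other.
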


The assumption used in Theorem~\ref{thm:hittingsethard}, implies the Exponential Time Hypothesis (ETH)
of Impagliazzo, Paturi and Zane~\cite{ImpagliazzoPZ01} and hence it is 
weaker than ETH.  
\begin{theorem} 
For any $c<1$, 
\Hittingsetparam\ does not admits a $\OO(2^{\log^{c} n} )$-approximate kernel, unless  CNF-SAT with $n$-variables can be solved in time 
$2^{ \OO(  2^{     \log^{1-1/(\log \log n)^{{1}/{3}}} n   }  )  }$. 
\end{theorem}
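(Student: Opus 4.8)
The plan is to show the contrapositive: if \Hittingsetparam\ admits an $\OO(2^{\log^c n})$-approximate kernel of polynomial size for some $c < 1$, then we can build a polynomial time $\OO(2^{\log^{c'} n})$-approximation algorithm for \Hittingset{} for some $c' < 1$, which by Theorem~\ref{thm:hittingsethard} refutes the stated subexponential bound on CNF-SAT. The key point is that the parameter here is the universe size $n = |U|$, so a polynomial size kernel has at most $p(n)$ elements and sets for some fixed polynomial $p$, and in particular its universe has size at most $p(n)$.

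Concretely, given an instance $(\SSS, U)$ with $|U| = n$, first I would run the reduction algorithm of the assumed $\alpha$-approximate kernel with $\alpha = \OO(2^{\log^c n})$ to obtain a reduced instance $(\SSS', U')$ with $|U'| + |\SSS'| \leq p(n)$. On this reduced instance I would run the $\OO(\log |\SSS'|)$-approximation algorithm of Lemma~\ref{lem:HSapproxalgo}, obtaining a hitting set $X'$ with $\frac{\mbox{\HS}(\SSS',U',X')}{OPT(\SSS',U')} \leq \OO(\log |\SSS'|) \leq \OO(\log p(n)) = \OO(\log n)$, since $p$ is a polynomial. Then I would feed $X'$ to the solution lifting algorithm to get a hitting set $X$ of $(\SSS, U)$. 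By the definition of an $\alpha$-approximate kernel for a minimization problem (Definition~\ref{def:polyTimePreProcessAppx}), we get
\[
\frac{\mbox{\HS}(\SSS,U,X)}{OPT(\SSS,U)} \leq \alpha \cdot \frac{\mbox{\HS}(\SSS',U',X')}{OPT(\SSS',U')} \leq \OO(2^{\log^c n}) \cdot \OO(\log n).
\]
(One should be slightly careful with the ``capping'' subtleties in the optimization formalization, but since \HSparam\ as defined does not cap the objective — it is exactly $|X|$ for valid hitting sets — the plain ratio bound applies, and a malformed $X'$ from the kernel simply does not arise because the lifting algorithm always returns a genuine hitting set when given one.)

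It remains to check that $\OO(2^{\log^c n} \cdot \log n)$ can be absorbed into $\OO(2^{\log^{c'} n})$ for some constant $c' < 1$. Since $c < 1$, pick any $c'$ with $c < c' < 1$; then $\log n = 2^{\log \log n} = 2^{o(\log^{c'-c} n)}$ is eventually dominated, so $2^{\log^c n} \cdot \log n \leq 2^{\log^{c'} n}$ for all sufficiently large $n$. For the finitely many small $n$ the approximation ratio is bounded by a constant, so the overall algorithm is an $\OO(2^{\log^{c'} n})$-approximation running in polynomial time (the kernelization reduction, the $\log$-approximation, and the solution lifting are each polynomial). Applying Theorem~\ref{thm:hittingsethard} with $c'$ in place of $c$ then gives that CNF-SAT on $n$ variables is solvable in time $2^{\OO(2^{\log^{1-1/(\log\log n)^{1/3}} n})}$, completing the contrapositive.

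The main obstacle — and really the only place that needs genuine care — is matching the quantitative growth rates: one must verify that the polynomial blow-up in the kernel size only costs an extra $\OO(\log n)$ factor in the approximation ratio (rather than, say, something polynomial in $n$), which is exactly why Lemma~\ref{lem:HSapproxalgo} giving a $\log |\SSS'|$-approximation is the right tool, and that composing an $\OO(2^{\log^c n})$ factor with an $\OO(\log n)$ factor stays within the $\OO(2^{\log^{c'} n})$ regime for a slightly larger exponent $c' < 1$. Everything else is a routine chaining of the definitions.
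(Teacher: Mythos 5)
Your proof is correct and follows essentially the same route as the paper: run the assumed $\OO(2^{\log^c n})$-approximate kernel's reduction algorithm, apply the greedy $\OO(\log |\SSS'|)$-approximation of Lemma~\ref{lem:HSapproxalgo} on the polynomial-size reduced instance to pick up only an extra $\OO(\log n)$ factor, lift the solution back, and absorb $2^{\log^c n}\cdot \log n$ into $2^{\log^{c'} n}$ for a slightly larger $c'<1$ before invoking Theorem~\ref{thm:hittingsethard}. The only differences are cosmetic — you spell out the growth-rate absorption and the capping remark explicitly, both of which the paper leaves implicit.
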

\begin{proof}
Suppose there is a  
$\OO(2^{\log^{c} n} )$-approximate kernel ${\cal A}$ for \Hittingsetparam{}  for some  
$c<1$. Then, we argue that we can solve CNF-SAT on $n$ variables in time 
$2^{ \OO(  2^{     \log^{1-1/(\log \log n)^{{1}/{3}}} n   }  )  }$.
Towards that, by Theorem~\ref{thm:hittingsethard}, 
it is enough to give a 
$\OO(2^{\log^{c'} n} )$
-approximation algorithm for \Hittingset{} for some $c'<1$, where 
$n$ is the cardinality of the universe in the input instance. 

Fix a constant $c'$ such that $c<c'<1$. 
We design a 
$\OO(2^{\log^{c'} n} )$-approximation algorithm for \Hittingset{} using ${\cal A}$.  
Let $(\SSS,U)$ be an instance of \HS{} and let $\vert U \vert=n$. 
Let ${\cal R}_{\cal A}$ and ${\cal L}_{\cal A}$ be the reduction algorithm and solution lifting algorithm 
of ${\cal A}$ respectively. We run the algorithm ${\cal R}_{\cal A}$ on $((\SSS,U),n)$ and let $((\SSS',U'),\vert U'\vert)$ be the 
output of ${\cal R}_{\cal A}$. We know that $\vert \SSS'\vert + \vert U'\vert =n^{\OO(1)}$. Then, by Lemma~\ref{lem:HSapproxalgo}, 
we compute a hitting set $W$ of $(\SSS',U')$, of cardinality bounded by $\OO(OPT_{\mbox{\HS}}(\SSS',U') \cdot \log n)$. 
Then, by using solution lifting algorithm ${\cal L}_{\cal A}$, we compute a hitting set $X$ of $((\SSS,U),n)$. By the 
property of 
$\OO(2^{\log^{c} n} )$
-approximate kernel ${\cal A}$, we can conclude 
that the cardinality of $X$ is bounded by 
$\OO(2^{\log^{c} n} \cdot \log n \cdot OPT_{\mbox{\HSparam}}((\SSS,U), n))=\OO(2^{\log^{c'} n}  \cdot OPT_{\mbox{\HSparam}}((\SSS,U), n))$. 
This implies that $X$ is a 
$\OO(2^{\log^{c'} n} )$
-approximate solution of $(\SSS,U)$.  
This completes the proof of the theorem.  
\end{proof}

%
\section{Conclusion and Discussions}\label{sec:conclusion}
In this paper we have set up a framework for studying lossy kernelization, and showed that for several problems it is possible to obtain approximate kernels with better approximation ratio than that of the best possible approximation algorithms, and better size bound than what is achievable by regular kernels. We have also developed methods for showing lower bounds for approximate kernelization. There are plenty of problems that are waiting to be attacked within this new framework. Indeed, one can systematically go through the list of all parameterized problems, and investigate their approximate kernelization complexity. For problems that provably do not admit polynomial size kernels but do admit constant factor approximation algorithms, one should search for PSAKSes. For problems with PSAKSes one should search for efficient PSAKSes. For problems with no polynomial kernel and no constant factor approximation, one may look for a constant factor approximate kernel of polynomial size. For problems that do have polynomial kernels, one can search for approximate kernels that are even smaller. We conclude with a list of concrete interesting problems.

\begin{itemize}
\setlength\itemsep{-2pt}
\item Does \cvc{}, {\sc Disjoint Factors} or {\sc Disjoint Cycle Packing} admit an EPSAKS?
\item Does {\sc Edge Clique Cover} admit a constant factor approximate kernel of polynomial size?
\item Does {\sc Directed Feedback Vertex Set} admit a  constant factor approximate kernel of polynomial size?
\item Does {\sc Multiway Cut} or {\sc Subset Feedback Vertex Set} have a PSAKS?
\item Does {\sc Disjoint Hole Packing} admit a PSAKS? Here a {\em hole} in a graph $G$ is an induced cycle of length $4$ or more.
\item Does {\sc Optimal Linear Arrangement} parameterized by vertex cover admit a constant factor approximate kernel of polynomial size, or even a PSAKS? 
\item Does {\sc Maximum Disjoint Paths} admit a constant factor approximate kernel, or even a PSAKS? Here the input is a graph $G$ together with a set of vertex pairs $(s_1, t_1), (s_2, t_2)$, $\ldots$, $(s_\ell, t_\ell)$. The goal is to find a maximum size subset $R \subseteq \{1, \ldots, \ell\}$ and, for every $i \in R$ a path $P_i$ from $s_i$ to $t_i$, such that for every $i,j \in R$ with $i \neq j$ the paths $P_i$ and $P_j$ are vertex disjoint. What happens to this problem when input is restricted to be a planar graph? Or a graph excluding a fixed graph $H$ as a minor? What about chordal graphs, or interval graphs?

\item It is known that {\sc $d$-Hitting Set} admits a kernel if size $\OO(k^{d})$, this kernel is also a strict $1$-approximate kernel. {\sc $d$-Hitting Set} also admits a factor $d$-approximation in polynomial time, this is a $d$-approximate kernel of constant size. Can one interpolate between these two extremes by giving an $\alpha$-approximate kernel of size $\OO(k^{f(\alpha)})$ with $f(1) = d$, $f(d) = \OO(1)$, and $f$ being a continuous function?

\item Our lower bound for approximate kernelization of \Hittingset{} parameterized by universe size $n$ does not apply to compressions. Can one rule out polynomial size constant factor approximate compressions of  \Hittingset{} parameterized by universe size $n$ assuming \NP{} $\not\subseteq$ \coNPbypoly{} or another reasonable complexity theoretic assumption?

\item One may extend the notion of approximate kernels to approximate Turing kernels~\cite{CyganFKLMPPS15} in a natural way. Does {\sc Independent Set} parameterized by treewidth admit a polynomial size approximate Turing kernel with a constant approximation ratio? What about a Turing PSAKS?

\item Does {\sc Treewidth} admit an constant factor approximate kernel of polynomial size? Here even a Turing kernel (with a constant factor approximation) would be very interesting.

\item What is the complexity of approximate kernelization of {\sc Unique Label Cover}?~\cite{AroraBS15,Khot02a}

\item The notion of $\alpha$-gap cross compositions can be modified to ``AND $\alpha$-gap cross compositions'' in the same way that AND-compositions relate to OR-compositions~\cite{BodlaenderDFH09}. In order to directly use such ``AND $\alpha$-gap cross compositions'' to show lower bounds for approximate kernelization, one needs an analogue of Lemma~\ref{lemma:complementary} for the problem of deciding whether {\em all} of the $t(s)$ inputs belong to $L$. This is essentially a strengthening of the AND-distillation conjecture~\cite{BodlaenderDFH09,Drucker12} to oracle communication protocols (see the conclusion section of Drucker~\cite{Drucker12}, open question number $1$). Can this strengthening of the AND-distillation conjecture be related to a well known complexity theoretic assumption?

\end{itemize}

\noindent 
{\bf Acknowledgement.} The authors thank D\'{a}niel Marx for enlightening discussions on related work in the literature, and  Magnus Wahlstr\"{o}m for pointing out the remark
about randomized pre-processing algorithms following Definition~\ref{def:polyTimePreProcess}.



\bibliographystyle{siam}
\bibliography{apprx}


\end{document}